\pdfoutput=1
\documentclass[acmsmall,screen,nonacm]{acmart}
\usepackage[utf8]{inputenc}
\usepackage{tikz}

\usepackage[bibliography=common]{apxproof}
\usepackage{array,booktabs}
\usepackage{relsize}
\usetikzlibrary{decorations.markings}
\usetikzlibrary{decorations.pathreplacing,calc}
\usepackage[shortlabels]{enumitem}
\usepackage{stackengine}
\usepackage{nicefrac}
\stackMath

\AtBeginDocument{%
  \providecommand\BibTeX{{%
    \normalfont B\kern-0.5em{\scshape i\kern-0.25em b}\kern-0.8em\TeX}}}

\acmJournal{TOCL}
\acmVolume{37}
\acmNumber{4}
\acmArticle{111}
\acmMonth{8}




\newtheoremrep{theorem}{Theorem}[section]
\newtheoremrep{proposition}[theorem]{Proposition}
\newtheoremrep{lemma}[theorem]{Lemma}
\newtheoremrep{example}[theorem]{Example}
\newtheoremrep{claim}[theorem]{Claim}
\newtheoremrep{observation}[theorem]{Observation}

\definecolor{mygreen}{rgb}{0.09, 0.45, 0.27}

\renewcommand{\phi}{\varphi}
\renewcommand{\epsilon}{\varepsilon}
\renewcommand{\leq}{\leqslant}
\renewcommand{\geq}{\geqslant}
\newcommand\defeq{\stackrel{\mathrm{def}}{=}}


\newcommand{\dom}{\mathsf{dom}}

\newcommand{\consts}{\mathsf{Consts}}
\newcommand{\nulls}{\mathsf{Nulls}}
\newcommand{\N}{\mathbb{N}}
\newcommand{\arity}{\mathsf{arity}}
\renewcommand{\Pr}{\mathsf{Pr}}
\newcommand{\surj}{\mathsf{surj}}
\newcommand{\sig}{\mathsf{sig}}


\newcommand{\countcompls}{\text{$\#$}\text{\sf Comp}}
\newcommand{\countvals}{\text{$\#$}\text{\sf Val}}
\newcommand{\ucountcompls}{\text{$\#$}\text{{\sf Comp}$^{\textit{u}}$}}
\newcommand{\ucountvals}{\text{$\#$}\text{{\sf Val}$^{\textit{u}}$}}
\newcommand{\ccountcompls}{\text{$\#$}\text{{\sf Comp}$_{\textit{Cd}}$}}
\newcommand{\ccountvals}{\text{$\#$}\text{{\sf Val}$_{\textit{Cd}}$}}
\newcommand{\cucountcompls}{\text{$\#$}\text{{\sf Comp}$^{\textit{u}}_{\textit{Cd}}$}}
\newcommand{\cucountvals}{\text{$\#$}\text{{\sf Val}$^{\textit{u}}_{\textit{Cd}}$}}

\newcommand{\hamsubgraphs}{\mathrm{\#HamSubgraphs}}
\newcommand{\shpf}{\mathrm{\#PF}}
\newcommand{\ksat}{\mathrm{\#k3SAT}}

\newcommand{\mc}{\mathrm{MC}}

\newcommand{\IS}{\text{\rm IS}}
\newcommand{\VC}{\mathrm{VC}}
\newcommand{\sm}{\mathrm{SM}}
\newcommand{\checkp}{\mathrm{check}}
\newcommand{\I}{\mathcal{I}}
\newcommand{\T}{\mathrm{T}}
\newcommand{\s}{\mathrm{s}}
\newcommand\mA{\mathbf{A}}
\newcommand\mZ{\mathbf{Z}}
\newcommand\mT{\mathbf{T}}
\newcommand\mD{\mathbf{D}}
\newcommand\mV{\mathbf{V}}
\newcommand\mC{\mathbf{C}}

\newcommand{\ff}{\mathrm{f}}
\newcommand{\params}{\mathsf{params}}

\newcommand{\sjfbcq}{\text{\rm sjfBCQ}}
\newcommand{\sjfbcqs}{\text{\rm sjfBCQs}}
\newcommand{\pr}{\leq_{\mathrm{par}}^{\mathrm{p}}}

\newcommand{\ptime}{\text{\rm P}}
\newcommand{\fp}{\text{\rm FP}}
\newcommand{\rk}{\mathrm{rk}}
\newcommand{\shp}{\text{$\#$}\text{\rm P}}
\newcommand{\spanl}{\text{\rm SpanL}}
\newcommand{\spanp}{\text{\rm SpanP}}
\newcommand{\nl}{\text{\rm NL}}
\newcommand{\np}{\text{\rm NP}}

\newcommand{\up}{\text{\rm UP}}
\newcommand{\rp}{\text{\rm RP}}
\newcommand{\bpp}{\text{\rm BPP}}
\newcommand{\spp}{\text{\rm SPP}}
\newcommand{\CC}{\mathcal{C}}
\newcommand{\sIS}{\text{$\#$}\text{\rm IS}}
\newcommand{\sVC}{\text{$\#$}\text{\rm VC}}
\newcommand{\tr}{\leq_{\mathrm{T}}^{\mathrm{p}}}
\newcommand{\acc}{\text{\rm accept}}
\newcommand{\rej}{\text{\rm reject}}
\newcommand{\gap}{\text{\rm gap}}
\newcommand{\gapp}{\text{\rm GapP}}
\newcommand{\gapspanp}{\text{\rm GapSpanP}}

\newcommand{\boundellipse}[3]
{(#1) ellipse (#2 and #3)
}

\begin{document}

\title[The complexity of counting problems over incomplete databases]{The Complexity of  
Counting Problems \\over Incomplete Databases}


\author{Marcelo Arenas}
\affiliation{\institution{Universidad Cat\'olica \& IMFD Chile}}
\additionalaffiliation{%
 Department of Computer Science \& Institute for Mathematical and Computational Engineering, School of Engineering, 
  Pontificia Universidad Cat\'olica de Chile
  }
\email{marenas@ing.puc.cl}

\author{Pablo Barceló}
\affiliation{\institution{Universidad Cat\'olica \& IMFD Chile}}
\additionalaffiliation{%
Institute for Mathematical and Computational Engineering, School of Engineering and Faculty of Mathematics, Pontificia 
  Universidad Cat\'olica de Chile
  }
\email{pbarcelo@uc.cl}

\author{Mikaël Monet}
\affiliation{%
  \institution{Univ. Lille, Inria, CNRS, Centrale Lille, UMR 9189 CRIStAL, F-59000 Lille, France}
  }
\email{mikael.monet@inria.fr}


\begin{abstract}
We study the complexity of various fundamental counting problems that arise in
the context of incomplete databases, i.e., relational databases that can
contain unknown values in the form of labeled nulls. Specifically, we assume
that the domains of these unknown values are finite and, for a Boolean
query~$q$, we consider the following two problems: given as input an incomplete
database~$D$, (a) return the number of completions of~$D$ that satisfy~$q$; or
(b) return the number of valuations of the nulls of~$D$ yielding a
completion that satisfies~$q$.  We obtain dichotomies between~\#P-hardness and
polynomial-time computability for these problems when~$q$ is a self-join--free
conjunctive query, and study the impact on the complexity of the following two
restrictions: (1) every null occurs at most once in~$D$ (what is called
\emph{Codd tables}); and (2) the domain of each null is the same.  Roughly
speaking, we show that counting completions is much harder than counting
valuations: for instance, while the latter is always in~\#P, we prove that the
former is not in~\#P under some widely believed theoretical complexity
assumption. Moreover, we find that both (1) and (2) can reduce the complexity
of our problems. We also study the approximability of these problems and show
that, while counting valuations always has a fully polynomial-time randomized
approximation scheme (FPRAS), in most cases counting completions does not.
Finally, we consider more expressive query languages and situate our problems
with respect to known complexity classes.

\end{abstract}

\setcopyright{acmlicensed}
\acmJournal{TOCL}
\acmYear{2021} \acmVolume{1} \acmNumber{1} \acmArticle{1} \acmMonth{1} \acmPrice{15.00}\acmDOI{10.1145/3461642}
\begin{CCSXML}
<ccs2012>
<concept>
<concept_id>10003752.10003777.10003787</concept_id>
<concept_desc>Theory of computation~Complexity theory and logic</concept_desc>
<concept_significance>500</concept_significance>
</concept>
<concept>
<concept_id>10003752.10010070.10010111.10011736</concept_id>
<concept_desc>Theory of computation~Incomplete, inconsistent, and uncertain databases</concept_desc>
<concept_significance>500</concept_significance>
</concept>
<concept>
<concept_id>10002950.10003624.10003633.10010918</concept_id>
<concept_desc>Mathematics of computing~Approximation algorithms</concept_desc>
<concept_significance>300</concept_significance>
</concept>
</ccs2012>
\end{CCSXML}

\ccsdesc[500]{Theory of computation~Complexity theory and logic}
\ccsdesc[500]{Theory of computation~Incomplete, inconsistent, and uncertain databases}
\ccsdesc[300]{Mathematics of computing~Approximation algorithms}

\keywords{Incomplete databases, closed-world assumption, counting complexity, Fully Polynomial-time Randomized Approximation Scheme (FPRAS).}

\maketitle

\begin{toappendix}
\begin{center}
{\Huge Appendix}
\end{center}
\vspace{1cm}
\end{toappendix}

\section{Introduction}
\label{sec:introduction}
\noindent 
\paragraph{{\bf Context}}
In the database literature, {\em incomplete databases} are often used to
represent missing information in the data; see,
e.g.,~\cite{abiteboul1995foundations,van1998logical,libkin2014incomplete}.
These are traditional relational databases whose active domain can contain both
constants and \emph{nulls}, the latter representing unknown
values~\cite{imielinski1984incomplete}.  There are many ways in which one can
define the semantics of such a database, each being equally meaningful
depending on the intended application.  Under the so called \emph{closed-world
assumption}~\cite{abiteboul1995foundations,reiter1978closed}, a standard,
complete database~$\nu(D)$ is obtained from an incomplete database~$D$ by
applying a {\em valuation}~$\nu$ which replaces each null~$\bot$ in~$D$ with a
constant~$\nu(\bot)$.  The goal is then to reason about the space formed by all
valuations~$\nu$ and completions~$\nu(D)$ of~$D$.

Decision problems related to querying incomplete databases have been well
studied already.  Consider for instance the problem {\sf
Certainty}$(q(\bar{x}))$ which, for a fixed query~$q(\bar{x})$, takes as input
an incomplete database~$D$ and a tuple~$\bar{a}$ and asks whether~$\bar{a}$ is
an answer to~$q$ for every possible completion of~$D$.  By now, we have a deep
understanding of the complexity of these kind of decision problems for
different choices of query languages, including conjunctive queries (CQs) and
FO queries~\cite{imielinski1984incomplete,AbiteboulKG91}.  However, having the
answer to this question is sometimes of little help: what if it is not the case
that~$q$ is certain on~$D$? Can we still infer some useful information? This
calls for new notions that could be used to measure the certainty with
which~$q$ holds, notions which should be finer than those previously
considered. This is for instance what the recent work in
\cite{libkin2018certain} does by introducing a notion of~\emph{best answer},
which are those tuples~$\bar a$ for which the set of completions of~$D$ over
which~$q(\bar a)$ holds is maximal with respect to set inclusion.

A fundamental complementary approach to address this issue can be obtained by
considering some {\em counting problems} related to incomplete databases; more
specifically,  determining the number of completions/valuations of an
incomplete database that satisfy a query~$q$.  These problems are relevant as
they tell us, intuitively, how close is~$q$ from being certain over~$D$, i.e.,
what is the level of {\em support} that~$q$ has over the set of
completions/valuations of~$D$.  Surprisingly, such counting problems do not
seem to have been studied for incomplete databases. A reason for this omission
in the literature might be that, in general, it is assumed that the domain over
which nulls can be interpreted is infinite, and thus incomplete databases might
have an infinite number of completions/valuations.  However, in many scenarios
it is natural to assume that the domain over which nulls are interpreted is
finite, in particular when dealing with uncertainty in
practice~\cite{andritsos2006clean,benjelloun2008databases,FGJK08,AntovaKO09,SBHNW09,ASUW10}.  By
assuming this we can ensure that the number of completions and valuations are
always finite, and thus that they can be counted.  This is the setting that we
study.

\paragraph{{\bf Problems studied}}
We focus on the problems \countcompls$(q)$ and \countvals$(q)$ for a Boolean
query~$q$, which take as input an incomplete database~$D$ together with a
finite set~$\dom(\bot)$ of constants for every null~$\bot$ occurring in~$D$,
and ask the following: How many completions, resp., valuations, of~$D$
satisfy~$q$? More formally, a \emph{valuation} of~$D$ is a mapping~$\nu$ that
associates to every null~$\bot$ of~$D$ a constant~$\nu(\bot)$ in~$\dom(\bot)$.
Then, given a valuation~$\nu$ of~$D$, we denote by~$\nu(D)$ the database that
is obtained from~$D$ after replacing each null~$\bot$ with~$\nu(\bot)$, and we
call such a database a \emph{completion}.  Besides, in this article we consider
set semantics, that is, we remove repeated tuples from~$\nu(D)$.
For~\countcompls$(q)$ we count all databases of the form~$\nu(D)$ such that $q$
holds in~$\nu(D)$. Instead, for \countvals$(q)$ we count the number of
valuations~$\nu$ such that~$q$ holds in $\nu(D)$. It is easy to see that these
two values can differ, as a completion might be obtained from two different
valuations, i.e., there might exist two distinct valuations~$\nu,\nu'$ such
that~$\nu(D) = \nu'(D)$. We think that both problems are meaningful: while
\countcompls$(q)$ determines the support for~$q$ over the databases represented
by~$D$, we have that \countvals$(q)$ further refines this by incorporating the
support for a particular completion that satisfies~$q$ over the set of
valuations for~$D$.  

\begin{table*}[t]
  \centering
  \begin{tabular}{|l|m{0.17\textwidth}|m{0.25\textwidth}|m{0.17\textwidth}|m{0.17\textwidth}|}
				\hline
				&  \multicolumn{2}{ c| }{{\bf Counting valuations}} & \multicolumn{2}{ c| }{{\bf Counting completions}} \\
				\hline
				\hline
				& {\bf Non-uniform} & {\bf Uniform} & {\bf Non-uniform} & {\bf Uniform} \\
				\hline
				{\bf Naïve} &
				\begin{tabular}{l} $R(x,x)$ \\ $R(x) \land  S(x)$ \end{tabular}
& \begin{tabular}{l} \vspace{-4pt} \\$R(x,x)$ \\ $R(x) \land S(x,y) \land T(y)$ \\ $R(x,y) \land S(x,y)$ \\ \vspace{-4pt} \end{tabular} 
& \begin{tabular}{l} $R(x)$ \end{tabular}
& \begin{tabular}{l} $R(x,x)$ \\ $R(x,y)$ \end{tabular}\\
\hline
				{\bf Codd} &
				\begin{tabular}{l} $R(x) \land S(x)$ \end{tabular} 
				&
				\begin{tabular}{l}  $R(x) \land S(x,y) \land T(y)$ \\ $R(x,y) \land S(x,y)$ \\ \end{tabular} 
				& \begin{tabular}{l} $R(x)$ \end{tabular} 
				&
				\begin{tabular}{l}  $R(x,x)$ \\ $R(x,y)$ \end{tabular}\\
				\hline
			\end{tabular}
		\caption{Our dichotomies for counting valuations and completions of \sjfbcqs. For each of the eight cases, if an \sjfbcq~$q$ contains a pattern mentioned in that case, then the problem is \shp-hard (and~\shp-complete for counting valuations, as well as for counting completions over Codd tables). In turn, for each case 
		if an \sjfbcq\ $q$ does not have any of the patterns mentioned in that case, then the problem is in \fp.}
	\label{tab:dichos-count-intro}
\end{table*}

\begin{table*}[t]
  \centering
  \begin{tabular}{|l|m{0.17\textwidth}|m{0.25\textwidth}|m{0.17\textwidth}|m{0.19\textwidth}|}
				\hline
				&  \multicolumn{2}{ c| }{{\bf FPRAS for counting valuations}} & \multicolumn{2}{ c| }{{\bf FPRAS for counting completions}} \\
				\hline
				\hline
				& {\bf Non-uniform} & {\bf Uniform} & {\bf Non-uniform} & {\bf Uniform} \\
				\hline
				{\bf Naïve} &
				\begin{tabular}{l} \text{Always} \end{tabular}
& \begin{tabular}{l}  \text{Always}  \end{tabular} 
& \begin{tabular}{l} \text{Never} \end{tabular}
& \begin{tabular}{l} \text{Only when $q$ has} \\ 
\text{only unary atoms} \end{tabular}\\
\hline
				{\bf Codd} &
				\begin{tabular}{l} \text{Always} \end{tabular} 
				&
				\begin{tabular}{l} \text{Always} \end{tabular} 
				& \begin{tabular}{l} \text{Never} \end{tabular} 
				&
				\begin{tabular}{c} ? \end{tabular}\\
				\hline
			\end{tabular}
		\caption{Our results on the existence of FPRAS for solving the problems studied in the article (assuming $\np \neq \rp$).}
	\label{tab:fpras-count}
\end{table*}
We deal with the \emph{data complexity} of the problems \countcompls$(q)$ and
\countvals$(q)$, focusing on obtaining dichotomy results for them  in terms of
counting complexity classes, as well as studying the existence of randomized
algorithms that approximate their results under probabilistic guarantees.  For
the dichotomies, we concentrate on self-join-free Boolean conjunctive queries
(\sjfbcqs). This assumption simplifies the mathematical analysis, while at the
same time defines a setting which is rich enough for many of the theoretical
concepts behind these problems to appear in full force.  Notice that a similar
assumption is used in several works that study counting problems over
probabilistic and inconsistent databases; see,
e.g.,~\cite{dalvi2011queries,maslowski2013dichotomy}. To simplify further the presentation, in the bulk of the article
we mainly consider self-join--free Boolean conjunctive queries that do not contain constants; however, we explain later (in Section~\ref{sec:extensions})
how our results can be extended to queries that can contain constants and free variables.

To refine our analysis, we study two restrictions of the problems
\countcompls$(q)$ and \countvals$(q)$ based on natural modifications of the
semantics, and analyze to what extent these restrictions simplify our problems.
For the first restriction we consider incomplete databases in which each null
occurs exactly once, which corresponds to the well-studied setting of {\em Codd
tables} -- as opposed to {\em naive tables} where nulls are allowed to have
multiple occurrences. We denote the corresponding problems by~$\ccountvals(q)$
and~$\ccountcompls(q)$.  For the second restriction, we consider {\em uniform}
incomplete databases in which all the nulls share the same domain -- as opposed
to the basic {\em non-uniform} setting in which all nulls come equipped with
their own domain.  We denote the corresponding problems by~$\ucountvals(q)$
and~$\ucountcompls(q)$.  When both restrictions are in place, we denote the
problems by~$\cucountvals(q)$ and~$\cucountcompls(q)$.

\paragraph{{\bf Our dichotomies for exact counting}} We provide 
complete characterizations of the complexity of counting valuations and
completions satisfying a given~\sjfbcq\ $q$, when the input is a Codd table or
a naive table, and is a non-uniform or a uniform incomplete database (hence we
have~eight cases in total). 
Our eight dichotomies express that these problems are  either tractable
or~\shp-hard, and that the tractable cases can be fully characterized by the
absence of certain forbidden {\em patterns} in~$q$. In essence, a pattern is
simply an \sjfbcq\ which can be obtained from~$q$  
by deleting atoms 
and occurrences of variables (the exact definition of this notion is given in
Section \ref{sec:countvals-sjfcqs}).  Our characterizations are presented in
Table~\ref{tab:dichos-count-intro}.  By analyzing this table we can draw some
important conclusions as explained next.

\medskip
\noindent \underline{\countcompls$(q)$ and \countvals$(q)$ are computationally
difficult:} For very few \sjfbcqs~$q$  the aforementioned problems can be
solved in polynomial time. Take as an example the uniform setting over naive
tables.  Then \ucountvals$(q)$ is~\shp-hard as long as~$q$ contains the
pattern~$R(x,x)$, or~$R(x)\land S(x,y)\land T(y)$, or $R(x,y) \wedge S(x,y)$.
That is, as long as there is an atom in~$q$ that contains a repeated
variable~$x$,  or a pair~$(x,y)$ of variables that appear in an atom and
both~$x$ and~$y$ appear in some other atoms in~$q$.  By contrast, for this same
setting, \ucountcompls$(q)$ is~\shp-hard as long as~$q$ contains the
pattern~$R(x,y)$ or~$R(x,x)$, that is, as long as there is an atom in~$q$ that
is not of arity one. 

\smallskip
\noindent \underline{\countvals$(q)$ is always easier than \countcompls$(q)$:}
In all of the possible versions of our problem, the tractable cases for
\countvals$(q)$ are a strict superset of the ones for \countcompls$(q)$. For
instance, $\cucountcompls(\exists x \exists y\, R(x,y))$ is hard, while
$\cucountvals(\exists x \exists y\, R(x,y))$ is tractable.

\smallskip
\noindent \underline{Even counting completions is hard:} While counting the
total number of valuations for an incomplete database can always be done in
polynomial time, observe from Table \ref{tab:dichos-count-intro} that the
problem $\cucountcompls(\exists x \exists y\, R(x,y))$ is~\shp-hard, and thus
that simply counting the completions of a uniform Codd table with a single
binary relation~$R$ is~\shp-hard.  Moreover, we show that in the non-uniform
case a single unary relation suffices to obtain~\shp-hardness. 

\smallskip
\noindent \underline{Codd tables help but not much:} We show that counting
valuations is easier for Codd tables than for naive tables. In particular,
there is always an \sjfbcq~$q$ such that counting  the valuations that
satisfy~$q$ is~\shp-hard, yet it becomes tractable when restricted to the case
of Codd tables.  However, for counting completions, both in the uniform and
non-uniform setting, the sole restriction to Codd tables presents no benefits:
for every \sjfbcq~$q$, we have that $\countcompls(q)$ (resp.,
$\ucountcompls(q)$) is \shp-hard if and only if  $\ccountcompls(q)$ (resp.,
$\cucountcompls(q)$) is \shp-hard.

\smallskip
\noindent \underline{Non-uniformity complicates things:} All versions of our
problems become harder in the non-uniform setting.  This means that in all
cases there is an \sjfbcq~$q$ for which counting valuations is tractable on
uniform incomplete databases, but becomes~\shp-hard assuming non-uniformity,
and the same holds for counting completions. 

\paragraph{{\bf Our dichotomies for approximate counting}} 
Although \countvals$(q)$ can be~\shp-hard, we prove that good
randomized approximation algorithms can be designed for this problem.  More
precisely, we give a general condition under which~$\countvals(q)$ admits a
{\em fully polynomial-time randomized approximation
scheme}~\cite{jerrum1986random} (FPRAS).  This condition applies in particular
to all \emph{unions of Boolean conjunctive queries}.  Remarkably, we show that
this no longer holds for \countcompls$(q)$; more precisely, there exists an
\sjfbcq~$q$ such that \countcompls$(q)$ does not admit an FPRAS under a widely
believed complexity theoretical assumption. More surprisingly, even counting
the completions of a uniform incomplete database containing a single binary
relation does not admit an FPRAS under such an assumption (and in the
non-uniform case, a single unary relation suffices).  Generally, for \sjfbcqs,
we obtain seven dichotomies for our problems between polynomial-time
computability of exact counting and non admissibility of an FPRAS.  The only
case that we did not completely solve is that of~$\cucountcompls(q)$.  Our
dichotomies for approximate counting are illustrated in
Table~\ref{tab:fpras-count}.

\paragraph{\bf Beyond \shp} 
It is easy to see that the problem of counting valuations is always in \shp,
provided that the model checking problem for~$q$ is in~\ptime.  This is no
longer the case for counting completions, and in fact we show that, under a
complexity theoretical assumption, there is an \sjfbcq\ $q$ for which
\ucountcompls$(q)$ is not in~\shp.  This does not hold if restricted to Codd
tables, however, as we prove that \ccountcompls$(q)$ is always in \shp\ when
the model checking problem for~$q$ is in~\ptime. 

For reasons that we explain in the article, a suitable complexity class for the
problem \countcompls$(q)$ is \spanp, which is defined as the class of counting
problems that can be expressed as the number of different accepting outputs of
a nondeterministic Turing machine running in polynomial time.  While we have
not managed to prove that there is an \sjfbcq\ $q$ for which \countcompls$(q)$
is \spanp-complete, we show that this is the case for the problem
of counting completions for the negation of an \sjfbcq, even in the uniform
setting; that is, we show that \ucountcompls$(\neg q)$ is \spanp-complete for
some \sjfbcq\ $q$. Finally, we also show that~\spanp\ is the right complexity
class for counting valuations of queries for which model checking is in~\np.

\paragraph{\bf Extension to queries with constants and free variables.}

As we said already, for pedagogical reasons we mostly present our results by
considering queries that are Boolean and that do not have constants.  In
Section~\ref{sec:extensions} however, we explain how to extend these results to
the case of queries that have free variables and that can contain constants.
For the case of a query~$q(\bar x)$ with free variables~$\bar x$, our counting
problems are defined in the expected way; for instance the
problem~$\countvals(q(\bar x))$ takes as input an incomplete database~$D$, a
tuple of constants~$\bar a$ of same arity as~$\bar x$, and it outputs the
number of valuations~$\nu$ of~$D$ such that~$\bar a$ in an answer to~$q(\bar
x)$ on~$\nu(D)$. We then extend our dichotomies and approximation results in
this setting.\\

The current article extends the conference article~\cite{arenas2020counting} in the following ways:
\begin{itemize}
  \item In~\cite{arenas2020counting} we left open the dichotomy for~$\cucountvals(q)$, i.e., for counting valuations of~\sjfbcqs\ for Codd tables under the uniform setting. 
 We close this case here, by finding one more hard pattern (namely, the pattern~$\exists x,y \, R(x,y) \land S(x,y)$) and showing that the problem can be solved in polynomial time for all other queries. 
  \item We added Section~\ref{sec:extensions} which explains how our framework can be extended to handle queries with constants and free variables;
  \item Proposition~\ref{prp:check-completion-np-c}, which establishes the \np-completeness of checking if a set of facts is a possible completion of an incomplete database, is new;
  \item Finally, full proofs of most results are included in the body of the article.
\end{itemize}

\paragraph{{\bf Organization of the article}}

		We start with the main terminology used in the article in
Section~\ref{sec:preliminaries}, and then present in
Section~\ref{sec:countvals-sjfcqs} our four dichotomies on~\countvals$(q)$
when~$q$ is an \sjfbcq, and the input incomplete database can be Codd or not,
and the domain can be uniform or not. We then establish the four dichotomies
on~\countcompls$(q)$ in Section~\ref{sec:countcompls-sjfcqs}. In
Section~\ref{sec:approx}, we study the approximability complexity of our
problems. We then give in Section~\ref{sec:misc} some general considerations
about the exact complexity of the problem~\countcompls$(q)$ going beyond \shp.
We explain in Section~\ref{sec:extensions} how to extend our results to queries
with constants and free variables. In Section~\ref{sec:related}, we discuss
related work and explain the differences with the problems considered in this
article. Last, we provide some conclusions and mention possible directions for
future work in Section~\ref{sec:conclusion}.

\section{Preliminaries}
\label{sec:preliminaries}
\paragraph{{\bf Relational databases and conjunctive queries}}
A \emph{relational schema}~$\sigma$ is a finite non-empty set of relation
symbols written $R$, $S$, $T$, \dots, each with its associated \emph{arity},
which is denoted by $\arity(R)$.  Let~$\consts$ be a countably infinite set of
constants. A {\em database}~$D$ over~$\sigma$ is a set of {\em facts} of the
form $R(a_1, \ldots, a_{\arity(R)})$ with~$R \in \sigma$, and where each
element~$a_i \in \consts$. For~$R\in \sigma$, we denote by~$D(R)$ the subset
of~$D$ consisting of facts over~$R$. Such a set is usually called a
\emph{relation} of~$D$.

A Boolean query~$q$ is a query that a database~$D$ can \emph{satisfy}
(written~$D \models q$) or not (written~$D \not\models q$).  
If~$q$ is a Boolean query, then~$\lnot q$ is the Boolean query such that~$D\models \lnot q$
if and only if~$D\not\models q$.
A {\em Boolean
conjunctive query} (BCQ) over~$\sigma$ is an FO formula of the form
\begin{equation} \label{eq:cq}  \exists \bar x \, \big(R_1(\bar x_1) \land   \,
\dots \, \land R_m(\bar x_m)\big), \end{equation} where all variables are
existentially quantified, and where for each $i \in [1,m]$, we have that $R_i$
is a relation symbol in~$\sigma$ and~$\bar x_i$ is a tuple of variables with
$|\bar x_i| = \arity(R_i)$.  To avoid trivialities, we will always assume
that~$m \geq 1$, i.e., the query has at least one atom, and also
that~$\arity(R_i) \geq 1$ for all atoms.  
Observe that we do not allow constants to appear in the query (but we will come back to this issue in Section~\ref{sec:extensions}).
For simplicity, we typically write a
BCQ~$q$ of the form \eqref{eq:cq} as $$R_1(\bar x_1)  \land   \, \dots \, \land
R_m(\bar x_m),$$ and it will be implicitly understood that all variables in~$q$
are existentially quantified.  As usual, we define the semantics of a BCQ in
terms of {\em homomorphisms}.  A homomorphism from~$q$ to a database~$D$ is a
mapping from the variables in~$q$ to the constants used in~$D$ such that
$\{R_1(h(\bar x_1)),\dots,R_m(h(\bar x_m))\} \subseteq D$.  Then, we have~$D
\models q$ if there exists a homomorphism from~$q$ to~$D$.   A
\emph{self-join--free BCQ} (\sjfbcq) is a BCQ such that no two atoms use the
same relation symbol.

\begin{figure*}[t]
	\begin{tabular}{c|cccccc}
		\toprule
		$(\nu(\bot_1),\nu(\bot_2))$ & $(a,a)$ & $(a,b)$ & $(b,a)$ & $(b,b)$ & $(c,a)$ & $(c,b)$ \\
		\hline
		$\nu(D)$

		&
		\begin{tabular}[t]{cc}
			\toprule
			\multicolumn{2}{c}{$S$} \\
			\midrule
			$a$ & $b$ \\
			$a$ & $a$ \\
			\bottomrule
		\end{tabular}

		& 
		\begin{tabular}[t]{cc}
			\toprule
			\multicolumn{2}{c}{$S$} \\
			\midrule
			$a$ & $b$ \\
			$a$ & $a$ \\
			\bottomrule
		\end{tabular}

		& 
		\begin{tabular}[t]{cc}
			\toprule
			\multicolumn{2}{c}{$S$} \\
			\midrule
			$a$ & $b$ \\
			$b$ & $a$ \\
			$a$ & $a$ \\
			\bottomrule
		\end{tabular}

		& 
		\begin{tabular}[t]{cc}
			\toprule
			\multicolumn{2}{c}{$S$} \\
			\midrule
			$a$ & $b$ \\
			$b$ & $a$ \\
			\bottomrule
		\end{tabular}

		& 
		\begin{tabular}[t]{cc}
			\toprule
			\multicolumn{2}{c}{$S$} \\
			\midrule
			$a$ & $b$ \\
			$c$ & $a$ \\
			$a$ & $a$ \\
			\bottomrule
		\end{tabular}

		&
		\begin{tabular}[t]{cc}
			\toprule
			\multicolumn{2}{c}{$S$} \\
			\midrule
			$a$ & $b$ \\
			$c$ & $a$ \\
			\bottomrule
		\end{tabular} \\[2.2cm]
		\hline
		$\nu(D) \models q$? & Yes & Yes & Yes & No & Yes & No \\
		\bottomrule

	\end{tabular}
\caption{The~six valuations of the (non-uniform) incomplete database~$D = (T,\dom)$ with $T=\{S(a,b),S(\bot_1,a),S(a,\bot_2)\}$ from Example~\ref{expl:problems}, and their corresponding completions.
	The Boolean conjunctive query~$q$ is $\exists x \, S(x,x)$.}
\label{fig:example-completions}
\end{figure*}

\paragraph{{\bf Incomplete databases}}
Let~$\nulls$ be a countably infinite set of nulls (also called \emph{labeled}
or \emph{marked} nulls in the literature), which is disjoint with~$\consts$. An
{\em incomplete database} over schema~$\sigma$ is a pair~$D = (T,\dom)$,
where~$T$ is a database over~$\sigma$ whose facts contain elements in~$\consts
\cup \nulls$, and where $\dom$ is a function that associates to every
null~$\bot$ occurring in~$D$ a subset~$\dom(\bot)$ of~$\consts$.
Intuitively,~$T$ is a database that can mention both constants and nulls,
while~$\dom$ tells us where nulls are to be interpreted.  Following the
literature, we call~$T$ a {\em naive table} \cite{imielinski1984incomplete}.

An incomplete database~$D = (T,\dom)$ can represent potentially many complete
databases, via what are called \emph{valuations}.  A valuation of~$D$ is simply
a function~$\nu$ that maps each null~$\bot$ occurring in~$T$ to a
constant~$\nu(\bot) \in \dom(\bot)$.  Such a valuation naturally defines a
\emph{completion of~$D$}, denoted by~$\nu(T)$, which is the complete database
obtained from~$T$ by substituting each null~$\bot$ appearing in~$T$
by~$\nu(\bot)$.  It is understood, since a database is a \emph{set} of facts,
that~$\nu(T)$ does not contain duplicate facts.  By paying attention to
completions of incomplete databases that are generated exclusively by applying
valuations to them, we are sticking to the so called {\em closed-world}
semantics of incompleteness~\cite{abiteboul1995foundations,reiter1978closed}.
This means that the databases represented by an incomplete database~$D =
(T,\dom)$ are not open to adding facts that are not ``justified'' by the facts
in~$T$. 

\begin{example}
\label{expl:completion}
	{\em Let~$D = (T,\dom)$ be the incomplete database consisting of the
naive table $T = \{S(\bot_1,\bot_1),S(a,\bot_2)\}$, and where $\dom(\bot_1) =
\{a,b\}$ and~$\dom(\bot_2) = \{a,c\}$.  Let~$\nu_1$ be the valuation
mapping~$\bot_1$ to~$b$ and~$\bot_2$ to~$c$.  Then~$\nu_1(T)$ is
$\{S(b,b),S(a,c)\}$.  Let~$\nu_2$ be the valuation mapping both~$\bot_1$
and~$\bot_2$ to~$a$.  Then~$\nu_2(T)$ is~$\{S(a,a)\}$. On the other hand, the
function~$\nu$ mapping~$\bot_1$ and~$\bot_2$ to~$b$ is not a valuation of~$D$,
because~$b\notin \dom(\bot_2)$.} \qed 
\end{example}

When every null occurs at most once in~$T$, then~$D$ is what is called a
\emph{Codd table}~\cite{codd1975understanding}; for instance, the incomplete
database in Example~\ref{expl:completion} is not a Codd table because~$\bot_1$
occurs twice.  We also consider {\em uniform} incomplete databases in which the
domain of every null is the same.  Formally, a uniform incomplete database is a
pair~$D = (T,\dom)$, where~$T$ is a database over~$\sigma$ and~$\dom$ is a
subset of~$\consts$. The difference now is  that a valuation~$\nu$ of~$D$ must
simply satisfy~$\nu(\bot) \in \dom$ for every null of~$D$.

We will often abuse notation and use~$D$ instead of~$T$; for instance, we
write~$\nu(D)$ instead of~$\nu(T)$, or~$R(a,a) \in D$ instead of~$R(a,a) \in
T$, or again~$D(R)$ instead of~$T(R)$.

\paragraph*{\bf Counting problems on incomplete databases.}
We will study two kinds of counting problems for incomplete databases: problems
of the form $\countvals(q)$, that count the number of \emph{valuations}~$\nu$
that yield a completion~$\nu(D)$ satisfying a given BCQ~$q$, and problems of
the form~$\countcompls(q)$, that count the number of \emph{completions} that
satisfy~$q$.  The query~$q$ is assumed to be fixed, so that each query gives
rise to different counting problems, and we are considering the \emph{data
complexity}~\cite{vardi1982complexity} of these problems.

Before formally introducing our problems, let us observe that they are well
defined if we assume that the set of constants to which a null can be mapped to
is finite.  Hence, for the (default) case of an incomplete database~$D = (T,
\dom)$, we assume that~$\dom(\bot)$ is always a finite subset of~$\consts$.
Similarly, for the case of a uniform incomplete database~$D = (T, \dom)$, we
assume that~$\dom$ is a finite subset of~$\consts$.  Finally, given a Boolean
query~$q$, we use notation $\sig(q)$ for the set of relation symbols occurring
in~$q$.  With these ingredients, we can define our problems for the (default)
case of incomplete naive tables and a Boolean query~$q$. 

\medskip
\begin{center}
\fbox{\begin{tabular}{lp{6.1cm}}
\small{PROBLEM} : & $\countvals(q)$ 
\\
{\small INPUT} : & An incomplete database $D$ over $\sig(q)$ 
\\
{\small OUTPUT} : & Number of valuations $\nu$ of $D$ with $\nu(D) \models q$ 
\end{tabular}}

\medskip
\fbox{\begin{tabular}{lp{6.1cm}}
\small{PROBLEM} : & $\countcompls(q)$
\\
{\small INPUT} : & An incomplete database $D$ over $\sig(q)$ 
\\
{\small OUTPUT} : & Number of completions $\nu(D)$ of $D$ with $\nu(D) \models q$
\end{tabular}}
\end{center}
\medskip

We also consider the uniform variants of these problems, in which the input~$D$
is a uniform incomplete database over $\sig(q)$, and the restriction of these
problems where the input is a Codd table instead of a naive table.  We then use
the terms $\ucountvals(q)$, $\ucountcompls(q)$ when restricted to the uniform
case, $\ccountvals(q)$, $\ccountcompls(q)$ when restricted to Codd tables, and
$\cucountvals(q)$, $\cucountcompls(q)$ when both restrictions are applied.

As we will see, even though the problems \countvals$(q)$ and \countcompls$(q)$
look similar, they are of a different computational nature; this is because two
distinct valuations can produce the same completion of an incomplete database.
We illustrate this phenomenon in the following example.

\begin{example}
\label{expl:problems}
	{\em Let~$q$ be the Boolean conjunctive query~$\exists x \, S(x,x)$,
and~$D$ be the (non-uniform) incomplete database~$D = (T,\dom)$, with $T \, =
\, \{S(a,b)$, $S(\bot_1,a),S(a,\bot_2)\}$,  $\dom(\bot_1)=\{a,b,c\}$ and
$\dom(\bot_2)=\{a,b\}$.  We have depicted in
Figure~\ref{fig:example-completions} the~six valuations of~$D$ together with
the completions that they define.  Out of these six valuations~$\nu$, only four
are such that~$\nu(D) \models q$, so we have	$\countvals(q)(D) = 4$.
Moreover, there are only~$3$ distinct completions of~$D$ that satisfy~$q$ --
because the first two are the same -- so $\countcompls(q)(D) = 3$.} \qed
\end{example}

\paragraph*{\bf Counting complexity classes.}

Given two problems~$A,B$, we write $A \leq_{\mathrm{T}}^{\mathrm{p}} B$
when~$A$ reduces to~$B$ under polynomial-time Turing reductions.  When both~$A$
and~$B$ are counting problems, we write~$A \pr B$ when~$A$ can be reduced
to~$B$ under polynomial-time \emph{parsimonious} reductions, i.e., when there
exists a polynomial-time computable function~$f$ that transforms an input~$x$
of~$A$ to an input~$f(x)$ of~$B$ such that~$A(x)=B(f(x))$.  We say that a
counting problem is in~\fp\ when it can be solved in polynomial time.  We will
consider the counting complexity class \shp~\cite{valiant1979complexity} of
problems that can be expressed as the number of accepting paths of a
nondeterministic Turing machine running in polynomial time.
Following~\cite{valiant1979complexity,V79}, we define \#P-hardness using Turing
reductions.  It is clear that $\fp \subseteq \shp$. Moreover, this inclusion is
widely believed to be strict. Therefore, proving that a counting problem is
\shp-hard implies that it cannot be solved in polynomial time under such an
assumption.

\section{Dichotomies for counting valuations}
\label{sec:countvals-sjfcqs}
\begin{toappendix}
\label{apx:countvals-sjfcqs}
\end{toappendix}
In this section, for a fixed \sjfbcq\ q, we study the complexity of the
problem of computing, given as input an incomplete database $D$, the number of
valuations $\nu$ of $D$ such that $\nu(D)$ satisfies $q$.  Recall that we have
four cases to consider for this problem depending on whether we focus on naive
or on Codd tables, where nulls are restricted to appear at most once, and
whether we focus on non-uniform or uniform incomplete databases, where nulls
are restricted to have the same domain.  Our specific goal then is to
understand whether the problem is tractable (in \fp) or \shp-hard in these
scenarios, depending on the shape of $q$.

To this end, the shape of an \sjfbcq\ $q$ will be characterized by the presence
or absence of certain specific patterns.  In the following definition, we
introduce the necessary terminology to formally talk about the presence of a
pattern in a query. 

\begin{definition}
\label{def:pattern}
	Let~$q,q'$ be \sjfbcqs.  We say that~$q'$ is a \emph{pattern} of~$q$ if
$q'$ can be obtained from~$q$ by using an arbitrary number of times and in any
order the following operations: deleting an atom, deleting an occurrence of a
variable, renaming a relation to a fresh one, renaming a variable to a fresh
one, and reordering the variables in an atom.\footnote{\label{foot:trivial}We
remind the reader that we assume all sjfBCQs to contain at least one atom and
that all atoms must contain at least one variable.} \qed
\end{definition}

\begin{example}
\label{expl:pattern}
 {\em Recall that we always omit existential quantifiers in Boolean queries.
Then we have that $q' = R'(u,u,y) \land  S'(z)$ is a pattern of~$ q = R(u,x,u)
\land  S'(y,y) \land  T(x,s,z,s)$. Indeed, $q'$ can be obtained from $q$ by
deleting atom $T(x,s,z,s)$, renaming $R(u,x,u)$ as $R'(u,x,u)$ to
obtain~$R'(u,x,u) \land S'(y,y)$, reordering the variables in~$R'(u,x,u)$ to
obtain~$R'(u,u,x) \land S'(y,y)$, renaming variable~$y$ into~$z$ to
obtain~$R'(u,u,x) \land S'(z,z)$, deleting the second variable occurrence
in~$S'(z,z)$ to obtain~$R'(u,u,x) \land S'(z)$, and finally renaming
variable~$x$ into~$y$ to obtain~$q'$.} \qed
\end{example}

We point out that in Definition~\ref{def:pattern}, the important parts are
those about deleting atoms and variable occurrences. The parts about reordering
variable occurences inside an atom and about renaming relations and variables
to fresh ones have obviously no effect on the complexity of the
problem\footnote{This is in particular because the conjunctive queries we
consider have no self-joins (otherwise, reordering variables inside an atom
could change the complexity).}; these are only here to allow us to formally
say, for instance, that “$R(x)$ is a pattern of~$R(y)$”, or that~“$S(x,u,x)$ is
a pattern of~$T(w,z,z)$” (as these are, in essence, the same queries).

In the following general lemma, we show that if $q'$ is a pattern of $q$, then
each of the problems considered in this section is as hard for $q$ as it is
for $q'$. Recall in this result that unless stated otherwise, our problems are
defined for naive tables under the non-uniform setting.

\begin{lemma}
\label{lem:pattern-parsimonious}
	Let~$q,q'$ be \sjfbcqs\ such that~$q'$ is a pattern of~$q$.  Then we
have~$\countvals(q') \pr \countvals(q)$.  Moreover, the same results hold if we
restrict to Codd tables, and/or to the uniform setting.
\end{lemma}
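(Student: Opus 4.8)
The plan is to reduce to the case of a single "elementary" pattern operation. Since the pattern relation in Definition~\ref{def:pattern} is defined as an arbitrary composition of five basic operations, and parsimonious reductions compose, it suffices to show: if $q'$ is obtained from $q$ by a single application of one of the five operations, then $\countvals(q') \pr \countvals(q)$. The operations "renaming a relation to a fresh one", "renaming a variable to a fresh one", and "reordering variables in an atom" are trivial: they yield isomorphic queries (up to renaming of symbols), so the identity-like map on incomplete databases (composed with the induced renaming of relation symbols) is a parsimonious reduction in both directions. So the real content is in the two operations "deleting an atom" and "deleting an occurrence of a variable". I will handle each, and in each case I will check that the constructed reduction preserves the three restrictions (Codd, uniform, or both).

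First I would treat \emph{deleting an atom}. Suppose $q = q' \wedge R(\bar x)$ where $R \notin \sig(q')$ (this is where self-join-freeness is used: $R$ does not reappear in $q'$). Given an incomplete database $D$ over $\sig(q') = \sig(q) \setminus \{R\}$, form $D'$ over $\sig(q)$ by adding to $D$ a single fresh fact $R(c, \dots, c)$ for a brand-new constant $c$, with no nulls; keep $\dom$ unchanged. Then every valuation $\nu$ of $D$ is in bijection with a valuation of $D'$ (the nulls are the same), and $\nu(D') \models q$ iff $\nu(D) \models q'$, because the added $R$-atom is always satisfiable by mapping every variable of $R(\bar x)$ to $c$, and since $R$ occurs nowhere else in $q$ this choice constrains nothing in $q'$. (If $R$ already occurs in $D$ we instead just need to ensure at least one totally-constant tuple is present; adding a fresh all-$c$ tuple does this without affecting $q'$ since $R \notin \sig(q')$.) This construction adds no new nulls, so it preserves the Codd restriction; and it is compatible with the uniform setting, where we simply leave $\dom \subseteq \consts$ untouched (the new fact has no nulls). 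Hence $\countvals(q')(D) = \countvals(q)(D')$.

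Next I would treat \emph{deleting an occurrence of a variable}, which is the main obstacle. Here $q$ has an atom $R(\dots, x, \dots)$ in some position $i$, and $q'$ replaces it by $R(\dots, \star, \dots)$ with that position removed — but note the arity of $R$ changes, so really $q'$ uses a relation $R$ of arity $\arity(R) - 1$. The reduction must take an incomplete database $D'$ over $\sig(q')$ and build $D$ over $\sig(q)$ by "padding" the $R$-relation in the $i$-th coordinate. The natural idea: for each fact $R(a_1,\dots,a_{n-1}) \in D'$, put into $D$ the fact $R(a_1,\dots,a_{i-1}, d, a_i, \dots, a_{n-1})$ for a single fixed fresh constant $d$ (again with no null in that slot). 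One must argue that a homomorphism of $q$ into $\nu(D)$ exists iff a homomorphism of $q'$ into $\nu(D')$ exists: the forward direction is immediate by forgetting the image of $x$; for the backward direction, given a homomorphism $h'$ of $q'$, extend it by setting $h(x) = d$ — this works because $x$ occurs \emph{only} in that one position of that one atom of $q'$ after the deletion is undone... but that is exactly the subtlety: $x$ may occur elsewhere in $q$. The correct statement of the operation is that we delete \emph{one occurrence} of $x$, so if $x$ occurs elsewhere we must be careful. The clean way is: if after deletion $x$ no longer appears in $q'$ at all, the argument above works with $h(x) = d$; if $x$ still appears in $q'$, then "deleting an occurrence of $x$ from an atom" where $x$ also occurs elsewhere is equivalent to first renaming that occurrence to a fresh variable $x'$ (which does not change complexity, as noted) and then we are in the case where $x'$ occurs only in that atom, so again padding with a fresh constant $d$ works. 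In all cases the padding introduces no nulls, so Codd-ness is preserved, and $\dom$ is copied verbatim so uniformity is preserved; the valuations of $D$ and $D'$ are in bijection and satisfaction is preserved coordinatewise, giving $\countvals(q')(D') = \countvals(q)(D)$.

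Finally I would assemble these: given an arbitrary pattern chain $q = q_0 \to q_1 \to \cdots \to q_k = q'$ of elementary steps, compose the reductions in reverse to get $\countvals(q') \pr \countvals(q)$, noting that each elementary reduction preserves the Codd and/or uniform restrictions, so the composition does as well. The main thing to get right in the write-up is the bookkeeping for the "delete an occurrence of a variable" step when the variable survives elsewhere in the query — reducing it to the fresh-variable case via the (complexity-neutral) renaming operation, as flagged in the discussion after Definition~\ref{def:pattern}.
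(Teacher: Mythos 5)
There is a genuine gap, and it sits in the one place where the construction actually matters: you pad deleted atoms and deleted variable positions with a \emph{single fresh constant}, whereas the reduction only works if you pad with \emph{every constant of the active domain} (all constants appearing in~$D'$ or in the domain of some null). The fresh-constant choice breaks the backward direction of the correctness argument whenever the deleted material involves a variable that is still constrained elsewhere in~$q$. Concretely, take $q = R(x)\land S(x)\land T(x)$ and $q' = R(x)\land S(x)$, obtained by deleting the atom $T(x)$; this is exactly the kind of elementary step the dichotomy proofs need. Given $D' = \{R(\bot), S(\bot')\}$ with $\dom(\bot)=\dom(\bot')=\{1\}$, your construction yields $D = D' \cup \{T(c)\}$ for a fresh~$c$; then $\countvals(q')(D')=1$ but $\countvals(q)(D)=0$, since a homomorphism must send~$x$ to a single element lying in all three relations and~$c$ appears only in~$T$. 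The same failure occurs for deleting a variable occurrence when the variable survives: for $q=R(x,x)$ and $q'=R(x)$, padding the second column with a fresh~$d$ produces facts $R(a,d)$ that can never witness $R(x,x)$. Your proposed repair for that case --- first rename that occurrence to a fresh variable~$x'$ --- is not sound: renaming a \emph{single} occurrence is not one of the complexity-neutral operations (those rename all occurrences of a variable at once), and it genuinely changes the query ($R(x,x)$ is a hard pattern while $R(x,x')$ is trivially tractable), so you end up building a reduction to the wrong problem.

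The fix is what the paper does: let~$A$ be the set of constants occurring in~$D'$ or in the domain of some null; replace each tuple of a surviving relation by \emph{all} tuples obtained by filling the deleted positions with every element of~$A$, and fill each deleted relation with every possible fact over~$A$. Then any homomorphism witnessing $\nu(D')\models q'$ extends to one witnessing $\nu(D)\models q$ because every required value is available in the padded columns, and the converse holds by restriction. This construction still adds no nulls and leaves all domains untouched, so your observations about preserving the Codd and uniform restrictions, and about the renaming/reordering operations being harmless, carry over unchanged; your decomposition into elementary steps is also fine (the paper handles all deletions in one pass, which is only a presentational difference).
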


\begin{proof}
	We first present the proof for $\countvals(q') \pr \countvals(q)$, that
is, for naive tables in the non-uniform setting.  First of all, observe that we
can assume without loss of generality that we did not reorder the variables in
the atoms nor renamed relation names or variables by fresh ones, because, as
mentioned above, this does not change the complexity of the
problem.\footnote{Formally, one can first check that we can assume without loss
of generality that~$q'$ was obtained from~$q$ by first deleting some atoms and
variable occurences to obtain a query~$q''$, and then performing some renamings
and variable reorderings to obtain~$q'$ (that is, we can always push the
renaming and reordering parts at the end of the transformation). But then,
since~$\countvals(q'')$ and~$\countvals(q')$ are obviously the same problem, we
can assume~$q'=q''$.} We can then write~$q$ as $R_1(\overline{x_1}) \land
\ldots \land  R_m(\overline{x_m})$ and $q'$ as $R_{j_1}(\overline{x'_{j_1}})
\land  \ldots \land  R_{j_p}(\overline{x'_{j_p}})$, where $1 \leq j_1 < \ldots
< j_p \leq m$ and $\overline{x'_{j_k}}$ is obtained from $\overline{x_{j_k}}$
by deleting some variable occurrences but not all\footnote{See
Footnote~\ref{foot:trivial}.}, and the other atoms have been deleted.  Let~$D'$
be an incomplete database input of~$\countvals(q')$.  Let~$A$ be the set of
constants that are appearing in~$D'$ or are in a domain of some null occurring
in~$D'$.  For~$1 \leq k \leq p$, we construct the relation~$D(R_{j_k})$ from
the relation~$D'(R_{j_k})$.  Let us assume that~$\overline{x_{j_k}}$ is the
tuple $(x_1, \ldots, x_r)$ (with some variables possibly being equal).  We
initialize~$D(R_{j_k})$ to be empty, and then for every tuple~$\overline{t'}$
in $D'(R_{j_k})$ we add to~$D(R_{j_k})$ all the tuples~$\overline{t}$ that can
be obtained from~$\overline{t'}$ in the following way for $1 \leq i \leq r$:
\begin{itemize}
	\item[\bf a)] If $x_i$ is a variable occurrence that has not been deleted from~$\overline{x_{j_k}}$,
		then copy the element (constant or null) of $\overline{t'}$ corresponding to that variable occurrence to the $i$-th position of~$\overline{t}$;
	\item[\bf b)] Otherwise, if $x_i$ is a variable occurrence that has been deleted from~$\overline{x_{j_k}}$,
			then fill the~$i$-th position of~$\overline{t}$ 
			with every possible constant from~$A$.
\end{itemize}
		Then we construct the relations~$D(R_i)$ where~$R_i$ does not
appear in~$q'$ (this can happen if we have deleted the atom
$R_i(\overline{x_i})$) by filling it with every possible~$R_i$-fact over~$A$.
We leave the domains of all nulls unchanged.  The whole construction can be
performed in polynomial time (this uses the fact that $q$ is assumed to be
fixed, so that the arities of the relations mentioned in $q$ are fixed).
Hence, it only remains to be checked that~$\countvals(q')(D') =
\countvals(q)(D)$, that is, that the reduction works and is indeed
parsimonious.  It is clear that the valuations of~$D'$ are exactly the same as
the valuations of~$D$ (because they have the same sets of nulls).  Hence it is
enough to verify that for every valuation~$\nu$, we have~$\nu(D') \models q'$
if and only if~$\nu(D) \models q$.  Let~$h'$ be a homomorphism from~$q'$ to
$\nu(D')$ witnessing that~$\nu(D') \models q'$ (i.e., we have $h'(q) \subseteq
\nu(D')$).  Then~$h'$ can clearly be extended in the expected way into a
homomorphism~$h$ from~$q$ to~$\nu(D)$: this is in particular thanks to the fact
that we filled the missing columns with every possible constant.  Conversely,
let $h$ be a homomorphism from~$q$ to $\nu(D)$ witnessing that~$\nu(D) \models
q$.  Then the restriction~$h'$ of~$h$ to the variables occurring in~$q'$ is
such that~$h(q') \subseteq \nu(D')$, hence we have~$\nu(D') \models q'$.  This
concludes the proof for the case of naive tables in the non-uniform setting.
For the cases of Codd tables and/or for the uniform setting, the reduction is
exactly the same. Indeed, the domains of the nulls are unchanged, and it is
clear that the presented construction preserves the property of being a Codd
table.
\end{proof}

	The idea is then to show the \shp-hardness of our problems for some
simple patterns, which then we combine with
Lemma~\ref{lem:pattern-parsimonious} and with some tractability proofs to
obtain the desired dichotomies.  Our findings are summarized in the first two
columns of Table \ref{tab:dichos-count-intro} in the introduction.  We first
focus on the two dichotomies for the non-uniform setting in
Section~\ref{subsec:countvals-non-uniform}, and then we move to the case of
uniform incomplete databases in Section~\ref{subsec:countvals-uniform}. We
explicitly state when a \shp-hardness result holds even in the restricted
setting in which there is a fixed domain over which nulls are interpreted. In
other words, when there is a fixed domain $A$ such that the incomplete
databases used in the reductions are of the form $D = (T,\dom)$ and $\dom(\bot)
\subseteq A$, for each null $\bot$ of $T$.

\subsection{The complexity on the non-uniform case}
\label{subsec:countvals-non-uniform}

In this section, we study the complexity of the problems
$\countvals(q)$ and~$\ccountvals(q)$, providing dichotomy results in both
cases. We start by proving the \shp-hardness results needed for these
dichotomies. We first show that $\countvals(R(x,x))$ is \shp-hard by actually
proving that hardness holds already in the uniform case.

\begin{proposition}
\label{prp:Rxx-hard}
\begin{sloppypar}
	$\ucountvals(R(x,x))$ is \shp-hard. 
This holds even in the restricted setting in which all nulls are interpreted
over the same fixed domain~$\{1,2,3\}$.
\end{sloppypar}
\end{proposition}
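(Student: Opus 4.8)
The plan is to reduce from a known $\#\mathrm{P}$-hard counting problem, and the natural candidate is counting proper $3$-colorings of a graph (or equivalently the number of homomorphisms into $K_3$), which is $\#\mathrm{P}$-hard. Given an undirected graph $G = (V, E)$, I would build a uniform incomplete database $D$ over the single binary relation $R$, with a null $\bot_v$ for each vertex $v \in V$, all ranging over the fixed domain $\{1,2,3\}$. The table $T$ should encode the edges of $G$ in such a way that a valuation $\nu$ produces a completion $\nu(D)$ satisfying $\exists x\, R(x,x)$ exactly when $\nu$ is \emph{not} a proper coloring — i.e., when some edge is monochromatic. Concretely, for each edge $\{u,v\} \in E$ I would add a fact that becomes a ``diagonal'' fact $R(c,c)$ precisely when $\nu(\bot_u) = \nu(\bot_v)$.

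The cleanest way to do this is a small gadget. For each edge $e = \{u,v\}$, introduce a fresh null and encode, using a handful of facts whose first coordinate is $\bot_u$ and whose second coordinate is $\bot_v$ (and auxiliary facts), the property ``$\nu(\bot_u) = \nu(\bot_v)$ iff some fact of the form $R(a,a)$ is generated.'' Since $R$ is binary and the only query is $\exists x\, R(x,x)$, I would aim to make the fact $R(\bot_u, \bot_v)$ itself do the work: this fact is diagonal in $\nu(D)$ iff $\nu(\bot_u) = \nu(\bot_v)$. So simply putting $R(\bot_u, \bot_v)$ in $T$ for every edge $\{u,v\}$ (with some fixed orientation) already gives: $\nu(D) \models \exists x\, R(x,x)$ iff there is a monochromatic edge iff $\nu$ is not a proper $3$-coloring. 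Hence the number of valuations satisfying $q$ equals $3^{|V|}$ minus the number of proper $3$-colorings of $G$. Since $3^{|V|}$ is computable in polynomial time and proper-$3$-coloring-counting is $\#\mathrm{P}$-hard, this gives a polynomial-time Turing reduction to $\ucountvals(R(x,x))$, establishing $\#\mathrm{P}$-hardness; all nulls are interpreted over $\{1,2,3\}$, as required. One should double-check the set-semantics subtlety (duplicate facts are removed from $\nu(D)$), but this is harmless here since the existence of \emph{some} diagonal fact is a set-level property.

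The main obstacle I anticipate is not the reduction itself but making sure the orientation / loop issue is handled: if $G$ has no loops the fact $R(\bot_u,\bot_v)$ is fine, but I should confirm that $\#\mathrm{3\text{-}Coloring}$ remains $\#\mathrm{P}$-hard on simple loopless graphs (it does). A secondary point to verify carefully is that the reduction is genuinely polynomial-time in the size of $G$ (immediate, since $|T| = |E|$ and the domain has size $3$), and that I am allowed a Turing reduction rather than a parsimonious one — which the paper explicitly permits, since $\#\mathrm{P}$-hardness is defined via Turing reductions here. If a parsimonious-style argument were desired one could instead reduce from a problem whose value is directly $3^{|V|}$ minus something, but for $\#\mathrm{P}$-hardness the subtraction trick via a Turing reduction suffices.
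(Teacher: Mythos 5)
Your proposal is correct and follows essentially the same route as the paper: a Turing reduction from counting proper $3$-colorings, with a null $\bot_v$ per vertex over the fixed domain $\{1,2,3\}$ and a fact $R(\bot_u,\bot_v)$ per edge, so that non-satisfying valuations are exactly the proper colorings (the paper adds both orientations of each edge, but as you note one suffices for the query $R(x,x)$). Subtracting from the total number of valuations, computable in polynomial time, completes the argument exactly as in the paper.
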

\begin{proof}

We reduce from the problem of counting the number of
$3$-colorings of a graph~$G=(V,E)$, which is
\shp-hard~\cite{jaeger1990computational}.  For every node~$v\in V$ we have a
null~$\bot_v$, and for every edge~$\{u,v\} \in E$ we have the
facts~$R(\bot_v,\bot_u)$ and $R(\bot_u,\bot_v)$.  The domain  of the nulls
is~$\{1,2,3\}$.  It is then clear that the number of valuations of the
constructed database that do not satisfy~$R(x,x)$ is exactly the number of
$3$-colorings of~$G$. Since the total number of valuations can be computed in
PTIME, this concludes the reduction.
	\end{proof}

The next pattern that we consider is~$R(x) \land S(x)$.	This time, we
can show \shp-hardness of the problem even for Codd databases.

\begin{proposition}
\label{prp:RxSx-hard}
	$\ccountvals(R(x) \land S(x))$ is \shp-hard.
\end{proposition}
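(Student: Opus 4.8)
The plan is to prove \shp-hardness of $\ccountvals(R(x) \land S(x))$ by a parsimonious (or at least Turing) reduction from a known \shp-hard problem. The natural candidate here is counting something like the number of independent sets, or—perhaps more directly—counting satisfying assignments of a monotone formula, or counting the number of subsets avoiding a bipartite-type constraint. Actually, the cleanest route: reduce from \#\text{BIS} or from counting independent sets in bipartite graphs, or even simpler, reduce from the permanent / from \#\text{PerfectMatching}\text{-style} problems. But given the shape $R(x)\land S(x)$, I expect the slickest reduction is from counting the number of set covers, or from \sVC/\sIS. Let me think about what structure a Codd table gives us: each null appears once, so in the relation $R$ we have some facts $R(\bot_1),\dots,R(\bot_k)$ and constants $R(a_1),\dots$, and similarly in $S$. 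A valuation picks a value for each null from its own domain; the query $R(x)\land S(x)$ holds iff some constant appears in both relations after substitution.

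First I would set up the combinatorial skeleton. Since we are counting \emph{valuations} $\nu$ such that $\nu(D)\models \exists x\,(R(x)\land S(x))$, it is easier to count the complementary valuations, i.e.\ those $\nu$ for which $\nu(D)$ has \emph{disjoint} $R$- and $S$-columns, and then subtract from the (polynomial-time computable) total number of valuations. So the heart of the reduction is: given a combinatorial instance, build a Codd table where the valuations making $R$ and $S$ share no common value are in bijection with the objects we want to count. A promising target is the problem of counting the independent sets of a bipartite graph, or—closer to the atomic structure—counting the number of ways to 2-color / the number of antichains. I would actually aim for a reduction from counting the number of independent sets in a graph, or from \#\text{Bipartite Independent Sets}, encoding: nulls on the $R$-side correspond to vertices of one part, nulls on the $S$-side to vertices of the other part (or to the same vertices), with carefully chosen per-null domains so that a "conflict" (shared value) encodes an edge being violated.

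Concretely, the key steps in order: (1) Fix the source problem and recall its \shp-hardness (cite, e.g., \cite{valiant1979complexity} for the permanent or an appropriate reference for the chosen problem). (2) Describe the construction of the Codd table $D=(T,\dom)$: for each element of the source instance introduce one null in $R$ and/or $S$, and design the domains $\dom(\bot)$ so that the number of "non-satisfying" valuations equals the quantity to be counted; one natural gadget is to give each $R$-side null a domain and each $S$-side null a domain engineered so that forbidding overlap encodes the constraint. (3) Verify the construction is polynomial-time and produces a Codd table (each null occurs once — immediate by construction). (4) Prove the counting identity: establish a bijection between valuations $\nu$ with $\nu(D)\not\models R(x)\land S(x)$ and the objects counted by the source problem, hence $\ccountvals(R(x)\land S(x))(D) = (\text{total valuations}) - (\text{source count})$; since the total is computable in PTIME, this is a valid Turing reduction. (5) Conclude \shp-hardness.

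The main obstacle I anticipate is designing the domains cleverly enough that a Codd table (only one occurrence per null!) still has enough expressive power: with naive tables one could reuse a null to force equality, but here every value choice is independent, so all the "coupling" must come from the overlap condition between the $R$-column and the $S$-column and from the choice of the finite domains. I expect the right trick is to make each null's domain essentially a small set of "colors" and to use pairs of nulls (one in $R$, one in $S$) sharing overlapping domains to simulate a constraint, reducing from a \#-hardness result about proper colorings or independent sets where the per-constraint structure is local. Getting the bookkeeping so that the map from valuations to source objects is exactly one-to-one (rather than many-to-one, which would only give a Turing reduction with extra work) will require care, but a Turing reduction suffices for \shp-hardness as defined in the paper, so even a well-controlled many-to-one correspondence with computable fibers would be acceptable.
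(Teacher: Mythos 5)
There is a genuine gap here: you never actually construct the gadget. The entire argument rests on step (2) — ``design the domains $\dom(\bot)$ so that the number of non-satisfying valuations equals the quantity to be counted'' — and this is precisely the part that does not work out the way you hope for the source problems you float (\sIS, \#BIS, colorings). The obstruction is the one you half-name at the end: a valuation assigns each null a single value, so ``not satisfying $R(x)\land S(x)$'' means the \emph{sets of values} realized in $R$ and in $S$ are disjoint, and the map from valuations to ``selected subsets'' of vertices is many-to-one with fibers that depend on how many nulls land on each value. For independent-set--style encodings this is not a bookkeeping nuisance but the whole difficulty: the paper needs exactly this kind of reduction for the \emph{uniform} Codd case (Proposition~\ref{prp:RxSxyTy-hard-codd}), and there it must count valuations by the image of the null set, set up a linear system whose matrix has entries built from surjection numbers $\surj_{a\to i}$, and invert it — i.e., polynomial interpolation via $(n+1)^2$ oracle calls. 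Your sketch budgets none of that machinery, and without it the claimed bijection (or even a correspondence with polynomial-time computable fibers) is not established.

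The paper's actual proof avoids the problem by choosing a different source: the \shp-hard problem $\#${\sf Repairs}$(R'(\underline{y},x)\land S'(\underline{z},x))$ under primary keys from \cite{maslowski2013dichotomy}. There, each \emph{block} of key-violating tuples becomes one fresh null whose domain is the set of second-attribute values occurring in that block; choosing a repair is literally choosing one value per null, so the correspondence between repairs and valuations is a bijection and the reduction is parsimonious with no subtraction, no interpolation, and no fiber-counting. If you want to salvage your route, you would either have to import the surjection-matrix interpolation argument (overkill here, since per-null domains are available in the non-uniform setting), or switch to a source problem whose objects are already ``one independent choice per block,'' which is what counting repairs under primary keys provides.
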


\begin{proof}
We start by recalling the setting of {\em consistent query answering under key
constraints}.  Intuitively, in this case we are given a set~$\Sigma$ of keys
and a database~$D$ that does not necessarily satisfy~$\Sigma$. Then the task is
to reason about the set of all {\em repairs} of~$D$ with respect to~$\Sigma$
\cite{ABC99}.  In our context, this means that one wants to count the number of
repairs of~$D$ with respect to~$\Sigma$ that satisfy a given CQ~$q$.  When~$q$
and~$\Sigma$ are fixed, we call this problem~$\#${\sf Repairs}$(q,\Sigma)$;
see, e.g., \cite{maslowski2013dichotomy}.  We formalize these notions below. 

Here we focus on the case when $\Sigma$ is a set of {\em primary} keys.  Recall
that this means that each relation name $R \in  \sigma$ of arity $n$ comes
equipped with its own key, i.e., ${\sf key}(R) = A$, where $A = \emptyset$ or
$A = [1,\ldots,p]$ for some $p \in \{1,\dots,n\}$.  Henceforth, $D$ is {\em
inconsistent} with respect to $\Sigma$ if there is a relation name $R \in
\sigma$ and facts $R(\bar a),R(\bar b) \in D$ with~$\bar{a} \neq \bar{b}$ such
that $$\text{${\sf key}(R) = A \quad$ and $\quad \pi_A(\bar a) = \pi_A(\bar
b)$.}$$   In this case we say that the pair $(R(\bar a), R(\bar b))$ is {\em
key-violating}.   Let us define a {\em block} in a database $D$ with respect to
a set $\Sigma$ of primary keys to be any maximal set $B$ of facts from $D$ such
that the facts in $B$ are pairwise key-violating.  A {\em repair} of $D$ with
respect to $\Sigma$ is a subset $D'$ of $D$ that is obtained by choosing
exactly one tuple from each block of $D$ with respect to $\Sigma$.  

Let us consider a schema $\sigma$ with two binary relations $R'$ and $S'$, such
that ${\sf key}(R') = {\sf key}(S') = \{1\}$.  That is, the first attribute of
both $R'$ and $S'$ defines a key over such relations.  We define this set of
keys over $\sigma$ to be $\Sigma$.  Also, let $q = \exists x,y,z (R'(y,x)
\wedge S'(z,x))$.  For simplicity, we write the pair~$(q,\Sigma)$
as~$R'(\underbar{y},x) \land S'(\underbar{z},x)$.  The problem $\#${\sf
Repairs}$(R'(\underbar{y},x) \land S'(\underbar{z},x))$, which given a database
$D'$ over schema $\sigma$ aims at computing the number of repairs of $D'$ under
$\Sigma$ that satisfy $q$, is known to be \shp-complete
\cite{maslowski2013dichotomy}.\footnote{To see
that~\cite{maslowski2013dichotomy} establishes the hardness
of~$q=R'(\underbar{y},x) \land S'(\underbar{z},x)$, first apply their rewrite
rule R7 (from Fig.~6) to obtain~$q'=R'(\underbar{y},x) \land
S'(\underbar{x},x)$, then apply rewrite rule R10 to
obtain~$q''=R'(\underbar{y},x) \land S'(\underbar{x},a)$. Then,~$q''$ is
\shp-hard by Lemma 19, and so is~$q$ by Lemma 7.}

Now, observe that the \shp-hardness of $\ccountvals(R(x) \land S(x))$ easily
follows from the hardness of the problem $\#${\sf Repairs}$(R'(\underbar{y},x)
\land S'(\underbar{z},x))$.  In fact, let~$D'$ be a database with binary
relation~$R',S'$. We construct an incomplete Codd database~$D$ with unary
relations~$R,S$ as follows. For every constant~$a$ that appears in the first
attribute of~$R'$, we have a tuple~$R(\bot)$ in~$D$, where~$\bot$ is a fresh
null, and we set~$\dom(\bot) = \{b \mid R'(a,b) \in D'\}$.  For every
constant~$a$ that appears in the first attribute of~$S'$, we have a
tuple~$S(\bot)$ in~$D$, where~$\bot$ is a fresh null, and we set~$\dom(\bot) =
\{b \mid S'(a,b) \in D'\}$. It is then clear that the number of repairs of~$D'$
that satisfy~$R'(\underbar{y},x) \land S'(\underbar{z},x)$ is equal to the
number of valuations of~$D$ that satisfy~$R(x) \land S(x)$, thus concluding the
proof.  We point out here that another proof of
Proposition~\ref{prp:RxSx-hard}, that uses different techniques, can be found
in the conference version of the article~\cite{arenas2020counting} (the proof
that we presented here is shorter).
	\end{proof}

Already with Propositions~\ref{prp:RxSx-hard} and~\ref{prp:Rxx-hard},
we have all the relevant hard patterns for the non-uniform setting.  We start
by proving our dichotomy result for naive tables, which is our default case.

\begin{theorem}[dichotomy]
\label{thm:countvals-sjfcqs}
Let~$q$ be an \sjfbcq.  If $R(x,x)$ or $R(x) \land S(x)$ is a
pattern of $q$, then $\countvals(q)$ is \shp-complete. Otherwise,
$\countvals(q)$ is in \fp.
\end{theorem}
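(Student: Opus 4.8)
The plan is to prove the two directions separately. For the hardness direction, suppose that $R(x,x)$ or $R(x)\land S(x)$ is a pattern of $q$. Then by Propositions~\ref{prp:Rxx-hard} and~\ref{prp:RxSx-hard} the corresponding problem $\countvals(R(x,x))$ or $\ccountvals(R(x)\land S(x))$ is \shp-hard, and since \shp-hardness of the Codd version implies \shp-hardness of the naive version, in both cases $\countvals$ of the pattern is \shp-hard. Applying Lemma~\ref{lem:pattern-parsimonious} with $q'$ the pattern, we get a parsimonious reduction $\countvals(q')\pr\countvals(q)$, so $\countvals(q)$ is \shp-hard. It remains \shp-membership: since $q$ is an \sjfbcq, model checking for $q$ is in \ptime, so a nondeterministic machine can guess a valuation $\nu$ of $D$ (each null mapped to an element of its domain, which is part of the input and hence polynomial-size) and accept iff $\nu(D)\models q$; the number of accepting paths is exactly $\countvals(q)(D)$. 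Hence $\countvals(q)$ is \shp-complete.

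For the tractability direction, I would first analyze the structural meaning of ``neither $R(x,x)$ nor $R(x)\land S(x)$ is a pattern of $q$''. Writing $q = R_1(\bar x_1)\land\cdots\land R_m(\bar x_m)$, the absence of the pattern $R(x,x)$ means no atom of $q$ has a repeated variable, i.e.\ every atom is a tuple of distinct variables. The absence of $R(x)\land S(x)$ means that no variable of $q$ occurs in two distinct atoms: indeed, if $x$ occurred in atoms $R_i$ and $R_j$ with $i\ne j$, we could delete all other atoms, delete all variable occurrences except that one $x$ in each of $R_i$ and $R_j$, and rename to get $R(x)\land S(x)$ as a pattern. So the tractable case is exactly: $q$ is a conjunction of atoms, each over pairwise distinct variables, and the variable sets of distinct atoms are pairwise disjoint. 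In other words, $q$ is (up to renaming) of the form $R_1(\bar y_1)\land\cdots\land R_m(\bar y_m)$ where the $\bar y_i$ are tuples of distinct variables and $\mathrm{Vars}(\bar y_i)\cap\mathrm{Vars}(\bar y_j)=\emptyset$ for $i\ne j$.

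For such a $q$, satisfaction factorizes: $\nu(D)\models q$ iff $\nu(D)(R_i)\neq\emptyset$ for every $i\in[1,m]$ (since an atom of distinct variables over a nonempty relation always has a homomorphism, and distinct atoms share no variables). Thus a valuation $\nu$ \emph{fails} $q$ iff there is some $i$ with $\nu(D)(R_i)=\emptyset$. Since $q$ has no self-joins, the relations $R_1,\dots,R_m$ are distinct symbols, so the events ``$\nu(D)(R_i)=\emptyset$'' depend on disjoint sets of nulls (the nulls occurring in the $R_i$-facts of $T$), hence are independent across $i$. By inclusion–exclusion over which subset of relations is made empty — or more directly, since the relation $D(R_i)$ is emptied only when every fact $R_i(\bar t)$ of $T$ has all its nulls valuated so that it collides with... — wait, actually a relation $D(R_i)$ becomes empty under $\nu$ only if $T(R_i)$ itself is empty, so the only subtlety is whether $T(R_i)$ already contains a constant-only fact. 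I would handle it cleanly as follows: for each $i$, compute $N_i\defeq$ the number of valuations of the nulls occurring in $T(R_i)$-facts for which $\nu(D)(R_i)=\emptyset$; this is a product over the $R_i$-facts containing nulls of simple per-null counts when facts do not interact, but in general one must count valuations of these nulls that kill \emph{every} $R_i$-fact, which — since killing a fact $R_i(\bar t)$ is impossible when $\bar t$ has no nulls, and otherwise requires no constraint be satisfiable — is actually $0$ unless $T(R_i)=\emptyset$. So in fact $\countvals(q)(D)$ equals the total number of valuations if every $T(R_i)$ is nonempty, and $0$ otherwise; either way it is computable in polynomial time. The main obstacle I anticipate is getting this last combinatorial count exactly right when a relation's facts do share nulls (so that emptying the relation requires the facts to be ``identified'', which cannot happen — a fact is never removed from $\nu(D)$, only possibly merged with another identical fact, and merging never empties a relation). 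Once one observes that set semantics never deletes a relation's only fact, the polynomial-time algorithm is immediate, and combined with the hardness direction and Lemma~\ref{lem:pattern-parsimonious} this yields the dichotomy.
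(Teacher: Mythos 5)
Your proposal is correct and follows essentially the same route as the paper: hardness via Propositions~\ref{prp:Rxx-hard} and~\ref{prp:RxSx-hard} together with Lemma~\ref{lem:pattern-parsimonious}, membership in \shp\ by guessing a valuation and model-checking, and tractability from the observation that the absence of both patterns forces every variable to occur exactly once, so that every valuation satisfies $q$ unless some relation of $D$ is empty, making the answer either $0$ or the product of the domain sizes of the nulls. Your brief detour into counting valuations that ``empty'' a relation is correctly resolved by your own observation that set semantics never deletes facts, which is exactly the point the paper relies on implicitly.
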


\begin{proof}
The \shp-hardness part of the claim follows from the last two
propositions and from Lemma~\ref{lem:pattern-parsimonious}.  We explain why the
problems are in \shp\ right after this proof.  When~$q$ does not have any of
these two patterns then all variables have exactly one occurrence in~$q$. This
implies that every valuation~$\nu$ of~$D$ is such that~$\nu(D)$ satisfies~$q$
(except when one relation is empty, in which case the result is simply zero).
We can then compute the total number of valuations in FP by simply multiplying
the sizes of the domains of every null in~$D$.
\end{proof}

Notice that in this theorem, the membership of $\countvals(q)$ in \shp\ can be
established by considering a nondeterministic Turing Machine~$M$ that, with
input a non-uniform incomplete database~$D$, guesses a valuation~$\nu$ of~$D$
and verifies whether~$\nu(D)$ satisfies~$q$. This machine works in polynomial
time as we can verify whether~$\nu(D)$ satisfies~$q$ in polynomial time
(since~$q$ is a fixed FO query). Then given that $\countvals(q)(D)$ is equal to
the number of accepting runs of~$M$ with input~$D$, we conclude that
$\countvals(q)$ is in \shp. Obviously, the same idea works for codd tables,
that is, $\ccountvals(q)$ is also in \shp. But with this restriction we obtain
more tractable cases, as shown by the following dichotomy result.

\begin{theorem}[dichotomy]
\label{thm:countvals-sjfcqs-codd}
Let~$q$ be an \sjfbcq.  If $R(x) \land S(x)$ is a pattern of $q$, then
$\ccountvals(q)$ is \shp-complete. Otherwise, $\ccountvals(q)$ is in \fp.
\end{theorem}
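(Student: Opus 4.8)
The plan is to establish the three ingredients separately: \shp-hardness in the pattern case, membership in \shp, and polynomial-time solvability in the non-pattern case, the last being where the real work lies.

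For the hard direction, suppose $R(x)\land S(x)$ is a pattern of~$q$. Then the Codd-table version of Lemma~\ref{lem:pattern-parsimonious} gives $\ccountvals(R(x)\land S(x)) \pr \ccountvals(q)$, and Proposition~\ref{prp:RxSx-hard} says $\ccountvals(R(x)\land S(x))$ is \shp-hard, so $\ccountvals(q)$ is \shp-hard. Membership of $\ccountvals(q)$ in \shp\ is obtained exactly as in the remark following Theorem~\ref{thm:countvals-sjfcqs}: a nondeterministic machine guesses a valuation~$\nu$ of the input Codd table (of polynomial size, as each $\dom(\bot)$ is given explicitly) and checks in polynomial time whether $\nu(D)\models q$, so the number of accepting runs equals $\ccountvals(q)(D)$. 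Together with hardness this yields \shp-completeness.

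For the tractable direction I would first pin down what ``$R(x)\land S(x)$ is not a pattern of~$q$'' means. Any pattern of~$q$ with two atoms is obtained by keeping two distinct atoms of~$q$ (atoms are never duplicated by the allowed operations) and then deleting variable occurrences and renaming; since renamings only introduce fresh names, two distinct variables can never be identified, so the two kept atoms must already share a variable in order for $R(x)\land S(x)$ to arise. Hence $R(x)\land S(x)$ is a pattern of~$q$ iff some variable of~$q$ occurs in two distinct atoms; equivalently, writing $q = R_1(\bar x_1)\land\cdots\land R_m(\bar x_m)$, the sets $\mathrm{vars}(\bar x_1),\dots,\mathrm{vars}(\bar x_m)$ are pairwise disjoint. (Repeated variables \emph{within} one atom are allowed, which is why $R(x,x)$ is not forbidden here.) Fix such a~$q$ and an input Codd table~$D$ over $\sig(q)=\{R_1,\dots,R_m\}$. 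The algorithm exploits two independence phenomena. First, because~$D$ is a Codd table, each null of~$D$ occurs in exactly one fact, hence (the $R_i$ being distinct) in exactly one relation $D(R_i)$; letting $N_i$ be the nulls occurring in $D(R_i)$, the $N_i$ partition the nulls of~$D$. Second, since the atoms of~$q$ are variable-disjoint, $\nu(D)\models q$ iff $\nu(D)\models \exists\bar x_i\,R_i(\bar x_i)$ for every~$i$, and the latter depends only on~$\nu$ restricted to~$N_i$. Therefore $\ccountvals(q)(D) = \prod_{i=1}^m v_i$, where $v_i$ is the number of valuations of~$N_i$ under which some fact of $D(R_i)$ becomes ``of the shape of~$\bar x_i$'', i.e.\ has equal entries in all positions carrying the same variable of~$\bar x_i$. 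To compute $v_i$ I use complementary counting: $v_i = V_i - \prod_{f\in D(R_i)}\overline{c_f}$, where $V_i=\prod_{\bot\in N_i}|\dom(\bot)|$ and $\overline{c_f}$ is the number of valuations of the null-set $N_f$ of~$f$ under which~$f$ does \emph{not} acquire the required shape; this product form is valid because, $D$ being Codd, the sets $N_f$ for $f\in D(R_i)$ are pairwise disjoint and partition~$N_i$. Each $\overline{c_f}$ is (number of valuations of $N_f$) minus $c_f$, and $c_f$ itself factors as a product over the equality classes of the positions of~$\bar x_i$: a class contributes~$0$ if~$f$ already holds two distinct constants there, and otherwise contributes the number of common values the nulls of that class may take (the cardinality of the intersection of their domains, or a single admissible value when a constant is present). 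As the arity of~$R_i$ is constant, all these quantities, and hence $\ccountvals(q)(D)$, are computable in polynomial time.

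The only genuinely delicate point is the first step of the tractable direction — correctly characterizing the pattern-free queries as exactly the ones with pairwise variable-disjoint atoms; once that is in place, the Codd restriction makes all the relevant events independent and the count collapses to a product of elementary quantities. I expect the remaining bookkeeping (handling repeated variables within an atom via equality classes, and the empty-relation corner case, where the formula correctly yields~$0$ since then $v_i = 1 - 1 = 0$) to be routine.
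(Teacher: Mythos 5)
Your proposal is correct and follows essentially the same route as the paper's proof: hardness via Proposition~\ref{prp:RxSx-hard} and Lemma~\ref{lem:pattern-parsimonious}, membership via the guess-and-check machine, and tractability by observing that pattern-freeness forces the atoms to be pairwise variable-disjoint, so that the count factors over atoms, each factor is computed by complementary counting over the (null-disjoint, since the table is Codd) tuples, and each per-tuple matching count is a product over equality classes of domain intersections. The only cosmetic difference is that the paper first eliminates constants by replacing each with a fresh null of singleton domain, whereas you fold the constants directly into the per-class computation.
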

\begin{proof}
We only need to prove the tractability claim, since hardness follows from
Proposition~\ref{prp:RxSx-hard} and Lemma~\ref{lem:pattern-parsimonious}.  We
will assume without loss of generality that~$D$ contains no constants, as we
can introduce a fresh null with domain~$\{c\}$ for every constant~$c$ appearing
in~$D$, and the result is again a Codd table, and this does not change the
output of the problem.  Let~$q$ be~$R_1({\bar x_1}) \land  \ldots \land
R_m({\bar x_m})$.  Observe that since~$q$ does not have~$R(x) \land  S(x)$ as a
pattern then any two atoms cannot have a variable in common.  But then,
since~$D$ is a Codd table we have $$\ccountvals(q)(D) \, = \, \prod_{i=1}^m
\ccountvals(R_i({\bar x_i}))(D(R_i)).$$ Hence it is enough to show how to
compute $\ccountvals(R_i({\bar x_i}))(D(R_i))$ for every~$1\leq i \leq m$.
Let~${\bar t_1},\ldots,{\bar t_n}$ be the tuples of~$D(R_i)$.  Let us
write~$\rho({\bar t_j})$ for the number of valuations of the nulls appearing
in~${\bar t_j}$ that do not match~${\bar x_i}$.  Clearly,
$\ccountvals(R_i({\bar x_i}))(D(R_i)) = \prod_{\bot \text{ appears in }D(R_i)}
|\dom(\bot)| - \prod_{j=1}^n \rho({\bar t_j})$, so we only have to show how to
compute~$\rho({\bar t_j})$ for~$1\leq j \leq n$.  Since we can easily compute
the total number of valuations of~${\bar t_j}$, it is enough to show how to
compute the number of valuations of~${\bar t_j}$ that match~${\bar x_i}$.  For
every variable~$x$ that appears in~${\bar x_i}$, compute the size of the
intersection of the domains of the corresponding nulls in~${\bar t_j}$, and denote it~$s_x$.
Then the number of valuations of~${\bar t_j}$ that match~${\bar x_i}$ is
simply~$\prod_{x\text{ appears in }{\bar x_i}} s_x$.  This concludes the proof.
\end{proof}

At this stage, we have completed the first column of
Table~\ref{tab:dichos-count-intro}, and we also know that~$R(x,x)$ is a hard
pattern in the uniform setting for naive tables (but not for Codd tables, by
Theorem~\ref{thm:countvals-sjfcqs-codd}).  In the next section, we treat the
uniform setting.

\subsection{The complexity on the uniform case}
\label{subsec:countvals-uniform}

In this section, we study the complexity of the problems
$\ucountvals(q)$ and~$\cucountvals(q)$, again providing dichotomy results in
both cases. 

\subsubsection{Na\"ive tables}

We start our investigation with the case of naive tables. In
Proposition~\ref{prp:Rxx-hard}, we already showed that $\ucountvals(R(x,x))$ is
\shp-hard. In the following proposition, we identify two other simple queries
for which this problem is still intractable.

\begin{proposition}
\label{prp:RxSxyTy-RxySxy-hard}
	$\ucountvals(R(x) \land S(x,y) \land T(y))$ and $\ucountvals(R(x,y)
\land S(x,y))$ are both \shp-hard. This holds even in the restricted setting in
which all nulls are interpreted over the same fixed domain~$\{0,1\}$.
\end{proposition}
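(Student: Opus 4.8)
The plan is to prove each hardness result by a parsimonious reduction from a known \shp-hard counting problem, exploiting the fact that the domain of every null is fixed to be~$\{0,1\}$, so that counting valuations becomes a genuinely Boolean counting task. For the first query, $R(x) \land S(x,y) \land T(y)$, I would reduce from the problem of counting the satisfying assignments of a monotone 2-CNF formula (equivalently, counting independent sets, or antichains, in a bipartite-like incidence structure), which is \shp-hard. Given such a formula~$\phi$ with variables~$z_1,\dots,z_n$ and clauses~$C_1,\dots,C_m$, introduce one null~$\bot_i$ (with domain~$\{0,1\}$) per variable~$z_i$; each valuation~$\nu$ then corresponds exactly to a truth assignment. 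The idea is to populate~$S$ so that a homomorphism from~$S(x,y)$ into~$\nu(D)$ is possible precisely when~$\phi$ is \emph{falsified}, and then---since we can always compute the total number of valuations ($2^n$) in polynomial time---recover the count of satisfying assignments by subtraction, exactly as in the proof of Proposition~\ref{prp:Rxx-hard}. Concretely, I would use the three atoms to check: $R$ forces~$x$ to take a "true-looking" value via a null, $S$ connects~$x$ to~$y$ along a clause that is simultaneously violated, and $T$ forces~$y$ similarly; the self-join-freeness lets us populate $R$, $S$, $T$ independently without spurious matches. A cleaner alternative, which I would actually try first, is to reduce directly from $\ccountvals(R(x)\land S(x))$ (shown \shp-hard in Proposition~\ref{prp:RxSx-hard}) or from the antichain/independent-set counting problems, engineering the $S(x,y)$ atom so that it "joins" the two unary conditions through the null structure while keeping all domains equal to~$\{0,1\}$.

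For the second query, $R(x,y) \land S(x,y)$, the natural source is again a \shp-hard Boolean counting problem; I would reduce from counting the number of valuations avoiding some forbidden pattern in a graph-coloring-style setup over the two-element domain, or equivalently from \#2-colorability-flavored problems. Given the constraint that both $x$ and $y$ range over~$\{0,1\}$, a homomorphism witnessing $R(x,y)\land S(x,y)$ amounts to finding a pair of values in~$\{0,1\}^2$ that is simultaneously present (as the image of the shared null-pair) in both relations. I would construct $D$ from an instance of, say, \#Bipartite Independent Set or \#Matchings: place nulls $\bot_v$ for vertices, and for each edge put facts into $R$ and $S$ whose tuples over $(\bot_u,\bot_v)$ coincide exactly when the edge is "monochromatic" or "selected," so that the valuations satisfying the query count the bad configurations, and again the good count is obtained by subtracting from $2^{|V|}$.

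The main obstacle I anticipate is controlling \emph{spurious homomorphisms}: because the query variables $x,y$ can be mapped to \emph{any} constants appearing in $\nu(D)$ (not just to the images of a designated pair of nulls), I must ensure that the only way a homomorphism can exist is through the intended null-pair, and not through some accidental alignment of constants or other nulls. With naive tables this is manageable because I am free to repeat the same nulls across many facts, so I would design the relations so that every fact of $R$ (and of $S$) shares its null-tuple with the "gadget" structure, leaving no $R$- or $S$-fact that could be matched in an unintended way; one must be careful that after removing duplicate tuples from $\nu(D)$ (set semantics) the count is still exactly what we want, which requires the gadget facts to remain distinct for distinct gadget components under every relevant valuation. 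A secondary technical point is verifying that each reduction is genuinely \emph{parsimonious} (or at least a clean Turing reduction, which suffices since \shp-hardness here is defined via Turing reductions), and that everything is computable in polynomial time given that $q$ is fixed; both should follow routinely once the gadget is pinned down, along the same lines as the proofs of Propositions~\ref{prp:Rxx-hard} and~\ref{prp:RxSx-hard}.
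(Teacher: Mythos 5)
Your plan is essentially the paper's proof: it reduces both problems from $\sIS$ by introducing one null $\bot_v$ per vertex with uniform domain $\{0,1\}$, encoding each edge $\{u,v\}$ as the facts $S(\bot_u,\bot_v)$ and $S(\bot_v,\bot_u)$, observing that $\nu(D)\not\models q$ exactly when $\{v \mid \nu(\bot_v)=1\}$ is independent, and recovering $\#\IS(G)$ as $2^{|V|}$ minus the answer. The ``spurious homomorphism'' obstacle you flag dissolves immediately under the paper's choice of gadget: the remaining relations consist of the single ground facts $R(1)$ and $T(1)$ (respectively, the single fact $R(1,1)$ for the second query), so a homomorphism exists if and only if $S(1,1)$ appears in the completion, and no further care is needed.
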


\begin{proof}
We reduce both problems from the problem of counting the number of
independent sets in a graph (denoted by \sIS), which is
\shp-complete~\cite{provan1983complexity}.  We start
with~$\ucountvals(R(x)\land  S(x,y)\land  T(y))$.  Let $q = R(x)\land
S(x,y)\land  T(y)$ and~$G=(V,E)$ be a graph.  Then we define an incomplete
database $D$ as follows.  For every node~$v \in V$, we have a null~$\bot_v$,
and the uniform domain is~$\{0,1\}$.  For every edge~$\{u,v\} \in E$, we have
facts~$S(\bot_u,\bot_v)$ and~$S(\bot_v,\bot_u)$ in $D$.  Finally, we have
facts~$R(1)$ and~$T(1)$ in $D$.  For a valuation~$\nu$ of the nulls, consider
the corresponding subset~$S_\nu$ of nodes of~$G$, given by $S_\nu = \{t \in V
\mid \nu(\bot_t) = 1 \}$.  This is a bijection between the valuations of the
database and the node subsets of~$G$.  Moreover, we have that~$\nu(D)
\not\models q$ if and only if~$S_\nu$ is an independent set of~$G$.  Since the
total number of valuations of $D$ is $2^{|V|}$, we have that the number of
independent sets of $G$ is equal to~$2^{|V|} - \ucountvals(q)(D)$.  Hence, we
conclude that $\sIS \tr \ucountvals(q)$.  The idea is similar
for~$\ucountvals(R(x,y)\land  S(x,y))$: we encode the graph with the
relation~$S$ in the same way, and this time we add the fact~$R(1,1)$.
\end{proof}

As shown in the following result, it turns out that the three aforementioned
patterns are enough to fully characterize the complexity of counting valuations
for naive tables in the uniform setting.

\begin{toappendix}
\subsection{Proof of Theorem~\ref{thm:countvals-naive-uniform}}
	\label{apx:countvals-naive-uniform}
	In this section we prove the tractability claim of the following dichotomy theorem.
\end{toappendix}

\begin{theoremrep}[dichotomy]
\label{thm:countvals-naive-uniform}
	Let~$q$ be an \sjfbcq. If $R(x,x)$ or $R(x) \land S(x,y) \land T(y)$
or~$R(x,y) \land S(x,y)$ is a pattern of $q$, then $\ucountvals(q)$ is
\shp-complete.  Otherwise, $\ucountvals(q)$ is in \fp.
\end{theoremrep}

The \shp -completeness part of the claim follows directly from what we have
proved already.  Here, the most challenging part of the proof is actually the
tractability part. We only present a simple example to give an idea of the
proof technique, and defer the full proof to
Appendix~\ref{apx:countvals-naive-uniform}.  We will use the following
definition.  Given~$n,m \in \mathbb{N}$, let us write~$\surj_{n\rightarrow m}$
for the number of surjective functions from~$\{1,\ldots,n\}$
to~$\{1,\ldots,m\}$.  By an inclusion--exclusion argument, one can show
that~$\surj_{n\rightarrow m} = \sum_{i=0}^{m-1} (-1)^i \binom{m}{i} (m-i)^n$
(for instance, see~\cite{stackexchange_surjective}).  It is clear that this can
be computed in FP, when~$n$ and~$m$ are given in unary.

\begin{example}
\label{expl:RxSx}
	{\em Let~$q$ be the \sjfbcq\ $R(x)\land  S(x)$, and~$D$ be an
incomplete database over relations~$R,S$.  Notice that~$q$ does not have any of
the patterns mentioned in Theorem~\ref{thm:countvals-naive-uniform}. We will
show that~$\ucountvals(q)$ is in \fp.  Since~$q$ contains only two unary atoms
we can also assume without loss of generality that the input~$D$ is a Codd
table (otherwise all valuations are satisfying).
	
	Since we can compute in FP the total number of valuations, it is enough
to show how to compute the number of valuations of~$D$ that do not satisfy~$q$.
Let $\dom$ be the uniform domain,~$d$ be its size, $n_R$ (resp., $n_S$) be the
number of nulls in~$D(R)$ (resp., in $D(S)$) and~$C_R$ (resp., $C_S$) be the
set of constants occurring in~$D(R)$ (resp., in $D(S)$), with~$c_R$ (resp.,
$c_S$) its size.  We can assume without loss of generality that~$C_R \cap C_S =
\emptyset$, as otherwise all the valuations are satisfying, and this is
computable in PTIME.  Furthermore, we can also assume that~$C_R \cup C_S
\subseteq \dom$, since we can remove the constants that are not in~$\dom$, as
these can never match.
	
	Let~$M := \dom \setminus(C_R \cup C_S)$, and~$m$ its size (i.e., with
our assumptions we have $m = d-c_R-c_S$).  Fix some subsets~$M' \subseteq M$
and~$R' \subseteq C_R$. The quantity $\surj_{n_R \to |M'|+|R'|}$ then counts
the number of valuations of the nulls of~$D(R)$ that span exactly~$M'\cup R'$.
Moreover, letting~$\nu_R$ be a valuation of the nulls of~$D(R)$ that spans
exactly~$M'\cup R'$, the quantity $(d-c_R-|M'|)^{n_S}$ is the number of ways to
extend~$\nu_R$ into a valuation~$\nu$ of all the nulls of~$D$ so that~$\nu(D)
\not\models q$: indeed, every null of~$D(S)$ can take any value in~$\dom
\setminus(C_R \cup M')$.  The number of valuations of~$D$ that do not
satisfy~$q$ is then (keeping in mind that a null in~$D(R)$ cannot take a value
in~$C_S$): \[ \sum_{\substack{M' \subseteq M \\ R' \subseteq C_R}} \surj_{n_R
\to |M'|+|R'|} \times (d-c_R-|M'|)^{n_S}\] and since the summands only depends
on the sizes of~$M'$ and~$R'$, this is equal to \[ \sum_{\substack{0 \leq m'
\leq m \\ 0 \leq r' \leq c_R}} \binom{m}{m'} \binom{c_R}{r'} \surj_{n_R \to
m'+r'} \times (d-c_R-m')^{n_S}\] This last expression can clearly be computed
in PTIME.\footnote{Note that in the sum we do not need to specify that~$m'+r'
\leq n_R$, as when~$a <b$ we have~$\surj_{a \to b} = 0$.} } \qed
\end{example}

\begin{toappendix}
First, to characterize the queries that do not have these patterns, we will use
the notion of \emph{connectivity graph} of an \sjfbcq~$q$:

\begin{definition}
\label{def:connectivity-graph}
	Let~$q$ be an \sjfbcq.  The \emph{connectivity graph of~$q$} is the
graph~$G_q = (V,E)$ with labeled edges, where~$V$ is the set of atoms of~$q$,
and for every two atoms~$R(\bar{x_i}),S(\bar{y_i})$ of~$q$, if they share a
variable then we have an edge between the corresponding nodes of~$G_q$, that
edge being labeled with the variables in~$\bar{x_i} \cap \bar{y_i}$.
\end{definition}

\begin{example}
	\label{expl:connectivity-graph} {\em
Figure~\ref{fig:connectivity-graph} shows the connectivity graph of the query
\[R_1(x_1,x_1,y_1,t_1),R_2(x_1,y_1,t_2),S_1(x_2,t_3),S_2(x_2,t_4),\\S_3(x_2),T_1(x_3),T_2(x_3),T_3(x_3),T_4(x_3,t_5).\]}\qed
\end{example}

\begin{figure}[H]
	\centering
		\begin{tikzpicture}

\node (r1) at (0,2) {$R_1(x_1,x_1,y_1,t_1)$};
\node (r2) at (0,0) {$R_2(x_1,y_1,t_2)$};

\draw[black,thick] (r1) -- (r2) node[midway,right] {$x_1,y_1$};

\end{tikzpicture}
\qquad
\begin{tikzpicture}
\node (s1) at (0,2) {$S_1(x_2,t_3)$};
\node (s2) at (0,0) {$S_2(x_2,t_4)$};
\node (s3) at (2.3,1) {$S_3(x_2)$};

\draw[black,thick] (s1) -- (s2) node[midway,right] {$x_2$};
\draw[black,thick] (s1) -- (s3) node[midway,right,yshift=.3em] {$x_2$};;
\draw[black,thick] (s2) -- (s3) node[midway,right,yshift=-.5em] {$x_2$};;

\end{tikzpicture}
\qquad
\begin{tikzpicture}
\node (t1) at (0,2) {$T_1(x_3)$};
\node (t2) at (0,0) {$T_2(x_3)$};
\node (t3) at (2,0) {$T_3(x_3)$};
\node (t4) at (2,2) {$T_4(x_3,t_5)$};

\draw[black,thick] (t1) -- (t2) node[midway,left] {$x_3$};
\draw[black,thick] (t2) -- (t3) node[midway,below] {$x_3$};
\draw[black,thick] (t3) -- (t4) node[midway,right] {$x_3$};
\draw[black,thick] (t4) -- (t1) node[midway,above] {$x_3$};

\draw[black,thick] (t1) -- (t3) node[midway,xshift=-1.5em,yshift=.5em] {$x_3$};
\draw[black,thick] (t2) -- (t4) node[midway,xshift=1.5em,yshift=.5em] {$x_3$};

\end{tikzpicture}

	\caption{The connectivity graph~$G_q$ of the \sjfbcq~$q$ from Example~\ref{expl:connectivity-graph}.}
	\label{fig:connectivity-graph}
\end{figure}

The following is then readily observed:

\begin{lemma}
\label{lem:connectivity-graph-clique-naive}
	Let~$q$ be an \sjfbcq\ that does not contain any of the patterns
mentioned in Theorem~\ref{thm:countvals-naive-uniform}.  Then for every
connected component~$C$ of~$G_q$,~$C$ is a clique and there exists a variable
such that all edges of~$C$ are labeled by exactly that variable. 
\end{lemma}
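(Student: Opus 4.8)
The claim has two parts: every connected component $C$ of $G_q$ is a clique, and moreover there is a single variable labeling all edges of $C$. I would prove the contrapositive for each part, extracting a forbidden pattern whenever the conclusion fails.

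First I would establish that each edge of $G_q$ is labeled by exactly one variable. Suppose two atoms $R(\bar x_i), S(\bar y_i)$ share two distinct variables $x,y$. Since $q$ has no $R(x,x)$ pattern, no atom repeats a variable, so $x$ and $y$ each occur once in $\bar x_i$ and once in $\bar y_i$; deleting all other variable occurrences from these two atoms and all other atoms yields the pattern $R(x,y)\land S(x,y)$ — contradiction. So every edge carries a single variable label.

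Next, the clique property: fix a connected component $C$ and suppose it is not a clique, so there are atoms $A_1, A_2, A_3$ forming a path $A_1 - A_2 - A_3$ in $G_q$ with no edge $A_1 - A_3$. Let $x$ be the (single) variable labeling $A_1 - A_2$ and $y$ the one labeling $A_2 - A_3$. Since there is no edge between $A_1$ and $A_3$, they share no variable; in particular $x \neq y$ (else $x=y$ would be shared by $A_1$ and $A_3$). Also $x$ occurs in $A_2$ and $y$ occurs in $A_2$, both exactly once (no repeated variables). Deleting from $A_1$ every occurrence except one copy of $x$, from $A_3$ every occurrence except one copy of $y$, from $A_2$ everything except one $x$ and one $y$, and deleting all other atoms, gives exactly the pattern $R(x)\land S(x,y)\land T(y)$ — contradiction. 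Hence $C$ is a clique.

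Finally, the uniform-label property: I would argue that within the clique $C$, all edge labels coincide. Suppose not; then since $C$ is connected there are two adjacent edges $A_1 - A_2$ (label $x$) and $A_2 - A_3$ (label $y$) with $x \neq y$. But $C$ is a clique, so there is an edge $A_1 - A_3$; let its label be $z$. Now $A_1$ contains $x$ (shared with $A_2$) and $z$ (shared with $A_3$), and $A_1$ has no repeated variable, so $x \neq z$ unless $z = x$; similarly $A_3$ contains $y$ and $z$ so $z \neq y$ unless $z = y$. Whatever $z$ is, it differs from at least one of $x,y$ — say $z \neq x$ — and then $A_1$ contains the two distinct variables $x$ and $z$, so the edge $A_1 - A_3$ would be labeled by $z$ while $A_1$ also carries $x$: extract $R(x,z)\land T(z)$-style atoms... more directly, take the atom among $A_1, A_2$ that contains two distinct variables from $\{x,y,z\}$ (one of them must, by a short case analysis since $x,y,z$ are not all equal) and peel it down together with its two neighbors to get $R(x,y)\land S(x,y)$ or $R(x)\land S(x,y)\land T(y)$. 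Carrying out this case split carefully — there are only a few cases depending on which of $x,y,z$ coincide — yields a forbidden pattern in every case, completing the proof.

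The main obstacle I anticipate is the last step: making the case analysis on the labels $x,y,z$ of a triangle airtight, since one must check that in each configuration one of the three atoms genuinely contains two distinct variables that can be exposed as a pattern, using throughout that no atom repeats a variable (which rules out the degenerate situations). The first two parts are essentially immediate pattern extractions once one observes that self-join-freeness plus the absence of the $R(x,x)$ pattern forbids repeated variables in any atom.
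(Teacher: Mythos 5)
Your first two steps are correct and coincide with the paper's argument: single-variable edge labels follow from excluding $R(x,y)\land S(x,y)$, and the clique property follows from excluding $R(x)\land S(x,y)\land T(y)$ via a non-triangle path $A_1 - A_2 - A_3$ (where non-adjacency of $A_1,A_3$ forces the two labels to be distinct).

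The third step, however, is where your proposal has a genuine gap --- and the gap exists only because you took a detour. You bring in the edge $A_1 - A_3$ guaranteed by the clique property, introduce its label $z$, and then propose a case analysis on which of $x,y,z$ coincide, which you do not carry out and which you yourself flag as the main obstacle. That third edge is a red herring. Once you have two \emph{adjacent} edges $A_1 - A_2$ and $A_2 - A_3$ whose (singleton) labels $x$ and $y$ are distinct, you are already done: $A_1$ contains an occurrence of $x$, $A_2$ contains occurrences of both $x$ and $y$, and $A_3$ contains an occurrence of $y$, so deleting all other atoms and all other variable occurrences (keeping one occurrence of $x$ in $A_1$, one of $x$ and one of $y$ in $A_2$, one of $y$ in $A_3$) and renaming yields exactly the pattern $R(x)\land S(x,y)\land T(y)$. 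This is the very same extraction you performed in the clique step; the only difference is that there you derived $x\neq y$ from the non-adjacency of $A_1$ and $A_3$, whereas here $x\neq y$ is the hypothesis. (You do still need the clique property and single-variable labels for the reduction from ``some two edges of $C$ have different labels'' to ``some two adjacent edges of $C$ have different labels'': if the two offending edges are disjoint, say $\{A_1,A_2\}$ and $\{A_3,A_4\}$, the connecting edge $\{A_2,A_3\}$ of the clique must disagree with one of them, producing an adjacent disagreeing pair.) This is precisely the paper's --- admittedly terse --- argument for the last bullet, so with this one simplification your proof closes completely; no case analysis on $z$ is needed.
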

\begin{proof}
	First, observe that every edge of~$G_q$ must be labeled by exactly one
variable, as otherwise the query~$q$ would contain the pattern~$R(x,y)\land
S(x,y)$.  Let~$C$ be a connected component of~$G_q$. Then we have:
\begin{itemize} 
\item $C$ is a clique. Indeed, assume by contradiction that~$C$
is not a clique. Then, since~$C$ is connected and is not a clique, we can find
3 nodes~$A_1(\overline{x}),A_2(\overline{x'}),A_3(\overline{x''})$ such
that~$A_1(\overline{x})$ is adjacent to~$A_2(\overline{x'})$,
$A_2(\overline{x'})$ is adjacent to~$A_3(\overline{x''})$,
and~$A_1(\overline{x})$ is not adjacent to~$A_3(\overline{x''})$. Let~$X$ be
$\overline{x} \cap \overline{x'}$ and~$Y$ be~$\overline{x'} \cap \overline{x''}$, i.e., the labels
on the two corresponding edges of~$C$. By definition of~$G_q$ and
since~$A_1(\overline{x})$ is not adjacent to~$A_3(\overline{x''})$, we must
have~$X \cap Y = \emptyset$.  But~$X$ and~$Y$ are not empty (again by
definition of~$G_q$), so by picking~$x$ in~$X$ and~$y$ in~$Y$ we see that~$q$
contains the pattern~$R(x)\land  S(x,y)\land  T(y)$, a contradiction.  
\item
There exists a variable that labels every edge of~$C$. Indeed, since every edge
of~$G_q$ is labeled by exactly one variable, and since~$C$ is a clique, if it
was not the case then again we could find the pattern~$R(x)\land  S(x,y)\land
T(y)$ in~$q$.  
\end{itemize}

This concludes the proof.
\end{proof}
	
	For instance, the query from Example~\ref{expl:connectivity-graph} does
not satisfy this criterion, since the edge in the first connected component
of~$G_q$ is labeled by two variables. However if we consider the query
$S_1(x_2,t_3),S_2(x_2,t_4),S_3(x_2),T_1(x_3),T_2(x_3),T_3(x_3),T_4(x_3,t_5)$
(i.e., we remove the first connected component), then it satisfies the
criterion.

We will also use the general fact that for an \sjfbcq~$q$, we can
assume wlog that~$q$ does not contain variables that occur only once:

\begin{lemma}
\label{lem:remove-ear-variables}
	Let~$q$ be an \sjfbcq, and let~$q'$ be the \sjfbcq\ obtained
from~$q$ by deleting all the variables that have only one occurrence in~$q$.
Then~$\ucountvals(q) \leq_{\mathrm{T}}^{\mathrm{p}} \ucountvals(q')$.
\end{lemma}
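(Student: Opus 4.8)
The plan is to give a polynomial-time Turing reduction that makes a single oracle call to $\ucountvals(q')$ and then multiplies the answer by an easily computable power of the domain size. Write $q$ as $R_1(\bar x_1)\land\cdots\land R_m(\bar x_m)$, call a variable of $q$ an \emph{ear variable} if it has exactly one occurrence in $q$, and call a position of an atom an \emph{ear position} if it is occupied by an ear variable; since $q$ is self-join--free, each position of each atom is either an ear position or not, and $q'$ is obtained from $q$ by deleting, in every atom, all ear positions, dropping every atom all of whose positions were ear, and using a fresh relation symbol of the appropriate smaller arity for each atom that survives (self-join--freeness is what makes $q'$ well defined, since each relation symbol then has a single well-defined set of positions to delete). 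First I would dispose of the degenerate case where \emph{every} atom disappears, i.e.\ every variable of $q$ occurs once: then for any incomplete database $D$ and valuation $\nu$ one has $\nu(D)\models q$ iff $D(R_i)\neq\emptyset$ for all $i$ (the pairwise-distinct, once-occurring variables of each atom can be sent to any fact), so $\ucountvals(q)(D)$ is either $0$ or the \fp-computable total number of valuations of $D$, and no oracle call is needed.

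Otherwise, let $Q$ be the set of $i$ such that $R_i$ has at least one non-ear position, so that $q'=\bigwedge_{i\in Q}R_i'(\bar x_i')$ where $\bar x_i'$ is the restriction of $\bar x_i$ to non-ear positions. Given an input uniform incomplete database $D$ over $\sig(q)$ with uniform domain $\dom$ and $d:=|\dom|$, the reduction proceeds as follows: if $D(R_i)=\emptyset$ for some $i\notin Q$, output $0$; otherwise build the uniform incomplete database $D'$ over $\sig(q')$ with the same domain $\dom$ by letting $D'(R_i')$, for $i\in Q$, be the projection of $D(R_i)$ onto the non-ear positions of $R_i$ (a set of tuples over $\consts\cup\nulls$), let $N_{\mathrm{free}}$ be the set of nulls of $D$ that do not occur in $D'$, and output $d^{|N_{\mathrm{free}}|}\cdot\ucountvals(q')(D')$. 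Here one checks that the nulls occurring in $D'$ are precisely the nulls of $D$ not in $N_{\mathrm{free}}$, that the non-ear positions of a $Q$-atom contain only constants and nulls occurring in $D'$, and that the whole construction, including $|N_{\mathrm{free}}|$, is polynomial-time because $q$ (hence all arities involved) is fixed.

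Correctness reduces to the identity $\ucountvals(q)(D)=d^{|N_{\mathrm{free}}|}\cdot\ucountvals(q')(D')$, which I would prove by factoring each valuation $\nu$ of $D$ as a valuation $\nu_{\mathrm{rel}}$ of the nulls of $D'$ together with a valuation $\nu_{\mathrm{free}}$ of the nulls in $N_{\mathrm{free}}$, and showing that for a fixed $\nu_{\mathrm{rel}}$ the number of $\nu_{\mathrm{free}}$ with $(\nu_{\mathrm{rel}}\cup\nu_{\mathrm{free}})(D)\models q$ equals $d^{|N_{\mathrm{free}}|}$ if $\nu_{\mathrm{rel}}(D')\models q'$ and $0$ otherwise; summing over the $\nu_{\mathrm{rel}}$ (which are exactly the valuations of $D'$) then yields the identity. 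For the ``only if'' half, from a homomorphism $h\colon q\to(\nu_{\mathrm{rel}}\cup\nu_{\mathrm{free}})(D)$ I would check that its restriction $h'$ to the variables of $q'$ is a homomorphism $q'\to\nu_{\mathrm{rel}}(D')$: for $i\in Q$ the fact $R_i(h(\bar x_i))$ is $\nu$ applied to some fact of $D(R_i)$, and projecting that fact to non-ear positions---where $\nu$ and $\nu_{\mathrm{rel}}$ agree---gives a tuple of $D'(R_i')$ witnessing $R_i'(h'(\bar x_i'))\in\nu_{\mathrm{rel}}(D')$. Conversely, given $h'\colon q'\to\nu_{\mathrm{rel}}(D')$ and \emph{any} $\nu_{\mathrm{free}}$, I would extend $h'$ to $h\colon q\to(\nu_{\mathrm{rel}}\cup\nu_{\mathrm{free}})(D)$: for $i\in Q$, pick a fact of $D(R_i)$ whose non-ear projection maps under $\nu_{\mathrm{rel}}$ to $h'(\bar x_i')$ and send the ear variables of that atom to the values $\nu$ places in the corresponding ear positions of that fact; for $i\notin Q$, send the distinct variables of $R_i$ to any fact of the nonempty $D(R_i)$. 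Unambiguity of this extension comes precisely from each ear variable having a single occurrence, so no two atoms can disagree on it.

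I expect the only real obstacle to be the bookkeeping: keeping straight that ``ear'' refers to occurring once in the \emph{whole} query (so that deletion never removes a column shared by two atoms), that the nulls of $D'$ are exactly the non-$N_{\mathrm{free}}$ nulls, and that in the non-ear positions of $Q$-atoms no $N_{\mathrm{free}}$-null can appear---so that $\nu$ and $\nu_{\mathrm{rel}}$ genuinely coincide there. Once these points are pinned down, the two homomorphism arguments above and the counting identity are routine, and the reduction clearly also preserves the uniform setting since $D'$ keeps the domain $\dom$.
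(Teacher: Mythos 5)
Your proposal is correct and follows essentially the same route as the paper's proof: project the database onto the columns of the surviving (non-ear) positions, observe that satisfaction of $q$ depends only on the valuation of the nulls appearing in the projected database, and multiply the oracle's answer by $d^{|N_{\mathrm{free}}|}$ (the paper writes this factor as $\prod_{\bot\in S}|\dom(\bot)|$, which coincides in the uniform setting). The paper states this in two lines as ``clear''; your write-up merely supplies the details it omits, including the degenerate case where an atom loses all of its positions, which the paper's conventions on \sjfbcqs\ leave implicit.
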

\begin{proof}
Let~$D$ be an incomplete database input of~$\ucountvals(q)$.  Let~$S$ be set of
nulls~$\bot$ such that:
\begin{itemize}
	\item $\bot$ occurs in a column corresponding to a variable that has been deleted; and
	\item $\bot$ does not occur in a column corresponding to a variable that has not been deleted.
\end{itemize}
Then, letting~$D'$ be the database obtained from~$D$ by projecting out the
columns corresponding to the deleted variables, it is clear that we
have~$\ucountvals(q)(D) = \ucountvals(q')(D') \times \prod_{\bot \in
S}|\dom(\bot)| $, where~$\dom$ is the uniform domain of the nulls.  We note
here that this lemma is also true in the non-uniform setting.
\end{proof}

By Lemma~\ref{lem:connectivity-graph-clique-naive} and
Lemma~\ref{lem:remove-ear-variables}, it is enough to show the tractability
of~$\ucountvals(q)$ when~$q$ is of the form~$C_1(x_1) \land  \ldots \land
C_m(x_m)$, where each~$C_i(x_i)$ is what we call a \emph{basic singleton
query}, i.e., is a conjunction of unary atoms over the same variable~$x_i$.  We
call such an \sjfbcq\ a \emph{conjunction of basic singletons}. For instance,
\[S_1(x_2),S_2(x_2),S_3(x_2),T_1(x_3),T_2(x_3),T_3(x_3),T_4(x_3)\] is such a
query, with~$m=2$.  We will use the following: 

\begin{lemma}
\label{lem:IE}
Let~$q=C_1(x_1) \land  \ldots \land  C_m(x_m)$ be a conjunction of basic
singletons \sjfbcq, and let~$D$ be an incomplete database.  For~$S
\subseteq [m]$, we define~$N_S(D) \defeq |\{\nu \text{ valuation of } D \mid
\nu(D) \not\models \bigvee_{i \in S} C_i(x_i)\}|$.  Then we
have~$\ucountvals(q)(D) = \sum_{S \subseteq [m]} (-1)^{|S|} N_S(D)$.
\end{lemma}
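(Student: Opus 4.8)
The plan is to prove this by a direct inclusion--exclusion argument on the conjuncts of $q$, so the ``conjunction of basic singletons'' structure will not actually be needed here (it is only the decomposition $q = C_1(x_1) \land \ldots \land C_m(x_m)$ that matters).

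First I would fix an incomplete database $D$ and, for each $i \in [m]$, let $A_i$ denote the set of valuations $\nu$ of $D$ such that $\nu(D) \not\models C_i(x_i)$. Since satisfaction of a conjunction is the conjunction of satisfactions, a valuation $\nu$ satisfies $\nu(D) \models q$ if and only if $\nu(D) \models C_i(x_i)$ for every $i \in [m]$, i.e., if and only if $\nu$ belongs to none of the $A_i$. Hence, writing $U$ for the (finite) set of all valuations of $D$, we have $\ucountvals(q)(D) = |U \setminus \bigcup_{i \in [m]} A_i|$.

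Next I would apply the inclusion--exclusion principle to this complement, obtaining
\[
\ucountvals(q)(D) \;=\; \sum_{S \subseteq [m]} (-1)^{|S|} \Bigl| \bigcap_{i \in S} A_i \Bigr|,
\]
with the usual convention that the empty intersection equals $U$. It then remains to identify $\bigl|\bigcap_{i \in S} A_i\bigr|$ with $N_S(D)$ for every $S \subseteq [m]$: a valuation $\nu$ lies in $\bigcap_{i \in S} A_i$ exactly when $\nu(D) \not\models C_i(x_i)$ for all $i \in S$, which is precisely the statement that $\nu(D) \not\models \bigvee_{i \in S} C_i(x_i)$. For $S = \emptyset$ this reads $\nu(D) \not\models \bot$, which holds for every $\nu$, so $N_\emptyset(D) = |U|$, matching the empty-intersection convention; substituting these equalities into the displayed identity yields $\ucountvals(q)(D) = \sum_{S \subseteq [m]} (-1)^{|S|} N_S(D)$, as desired.

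There is no real obstacle in this argument: it is purely the standard inclusion--exclusion identity, and the only point requiring a moment of care is the bookkeeping for $S = \emptyset$ (the empty disjunction being unsatisfiable, hence $N_\emptyset(D)$ counting all valuations). The genuine work of the tractability proof lies elsewhere---namely in showing that each $N_S(D)$ can be computed in $\fp$, presumably via a surjection-counting argument in the spirit of Example~\ref{expl:RxSx}---but that is not part of the statement of Lemma~\ref{lem:IE} itself.
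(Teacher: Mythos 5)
Your proof is correct and is exactly the argument the paper intends: the paper's own proof of Lemma~\ref{lem:IE} consists of the single line ``Direct, by inclusion--exclusion,'' and your write-up simply spells out that standard argument, including the correct handling of $S=\emptyset$. Nothing further is needed.
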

\begin{proof}
	Direct, by inclusion--exclusion.
\end{proof}

Hence, and remembering that we consider data complexity, it is enough to show
how to compute~$N_S(D)$ for every~$S \subseteq [m]$.  The main difficulties in
computing~$N_S(D)$ is that the relations can have nulls in common (since we
consider naive tables), and that they may also have constants; this makes it
technically painful to express a closed-form expression for~$N_S(D)$.  We
explain how to do it next, thus finishing the proof of
Theorem~\ref{thm:countvals-naive-uniform}.

\begin{proposition}
	\label{prp:compute-NSD}
	Let~$q=C_1(x_1) \land  \ldots \land  C_m(x_m)$ be a conjunction of
basic singletons \sjfbcq\ and~$S \subseteq [m]$.  Then, given an
incomplete database~$D$ as input, we can compute~$N_S(D)$ in polynomial time.
\end{proposition}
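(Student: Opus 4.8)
The plan is to express $N_S(D)$ as a sum, over only polynomially many ``profile vectors'', of products of multinomial coefficients and surjection numbers $\surj_{n\to m}$; since each $\surj_{n\to m}$ is computable in polynomial time (as recalled just before Example~\ref{expl:RxSx}), this yields the claimed algorithm. For $i\in[m]$ write $\mathcal{R}_i\subseteq\sig(q)$ for the set of (unary) relation symbols occurring in $C_i(x_i)$; since $q$ is self-join--free these sets are pairwise disjoint. First I would preprocess: if some constant $c$ satisfies $R(c)\in D$ for every $R\in\mathcal{R}_i$ for some $i\in S$, then every valuation $\nu$ has $\nu(D)\models\bigvee_{i\in S}C_i(x_i)$, so we output $N_S(D)=0$; otherwise, every constant not in $\dom$ may be deleted from $D$ without affecting $N_S(D)$ (nulls only take values in $\dom$, so such a constant can never become part of a witness of some $C_i$), and we henceforth assume that every constant occurring in $D$ lies in $\dom$.

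The core idea is that whether a valuation $\nu$ is counted by $N_S(D)$ depends only on a coarse ``profile'' of each constant. For a valuation $\nu$ and $e\in\dom$, put $\tau_\nu(e)\defeq\{R\in\sig(q)\mid R(e)\in\nu(D)\}=\mathsf{base}(e)\cup\bigcup_{\bot\,:\,\nu(\bot)=e}\mathsf{rel}(\bot)$, where $\mathsf{base}(e)\defeq\{R\mid R(e)\in D\}$ and $\mathsf{rel}(\bot)\defeq\{R\mid R(\bot)\in D\}$ (a nonempty subset of $\sig(q)$). Then $\nu(D)\not\models\bigvee_{i\in S}C_i(x_i)$ iff $\tau_\nu(e)\in\mathcal{A}\defeq\{u\subseteq\sig(q)\mid\mathcal{R}_i\not\subseteq u\text{ for all }i\in S\}$ for every $e\in\dom$, and $\tau_\nu(e)$ depends on $\nu$ only through $\mathsf{base}(e)$ and through which \emph{null-types} $\mathsf{rel}(\bot)$ appear among the nulls mapped to $e$. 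I would therefore group the constants of $\dom$ by base type (at most $2^{|\sig(q)|}$ of them, a constant; call them $t_1,\dots,t_L$, with $a_j$ constants of type $t_j$), and group the nulls of $D$ by null-type (again at most $2^{|\sig(q)|}$ of them; call them $r_1,\dots,r_K$, with $k_\ell\le|D|$ nulls of type $r_\ell$). For a base type $t$, call $P\subseteq\{r_1,\dots,r_K\}$ \emph{$t$-allowed} if $t\cup\bigcup_{r\in P}r\in\mathcal{A}$, and let $\mathcal{P}_t$ be the family of $t$-allowed sets (a constant-size, downward-closed family, though downward-closedness is not needed below).

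With this in hand, a valuation counted by $N_S(D)$ is determined, up to the information that matters, by the integers $m_{j,P}$ ($j\in[L]$, $P\in\mathcal{P}_{t_j}$) recording the number of constants of base type $t_j$ that receive exactly the null-types in $P$; these satisfy $\sum_{P\in\mathcal{P}_{t_j}}m_{j,P}=a_j$ for each $j$. Given such a tuple, the number of ways to decide which constants play which role is the product of multinomial coefficients $\prod_{j=1}^{L}\binom{a_j}{(m_{j,P})_{P\in\mathcal{P}_{t_j}}}$, and then, independently for each null-type $r_\ell$, the number of ways to distribute the $k_\ell$ distinct nulls of type $r_\ell$ over $\dom$ so that precisely the constants whose role contains $r_\ell$ receive at least one equals $\surj_{k_\ell\to N_\ell}$ with $N_\ell\defeq\sum_{j}\sum_{P\ni r_\ell}m_{j,P}$. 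Therefore
\[
N_S(D)\;=\;\sum_{(m_{j,P})}\ \Bigl(\prod_{j=1}^{L}\binom{a_j}{(m_{j,P})_{P\in\mathcal{P}_{t_j}}}\Bigr)\cdot\prod_{\ell=1}^{K}\surj_{k_\ell\,\to\,\sum_{j}\sum_{P\ni r_\ell}m_{j,P}},
\]
the sum ranging over all tuples $(m_{j,P})_{j\in[L],\,P\in\mathcal{P}_{t_j}}$ of non-negative integers with $\sum_{P}m_{j,P}=a_j$ for every $j$. There are only constantly many variables $m_{j,P}$ (at most $2^{|\sig(q)|}\cdot2^{2^{|\sig(q)|}}$), each bounded by $|\dom|$, so the sum has polynomially many terms, each a product of constantly many coefficients of integers of size polynomial in $|D|$; hence $N_S(D)$ is computable in polynomial time.

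The step I expect to be the main obstacle is the second one: recognizing that the forbidden condition ``$\tau_\nu(e)\in\mathcal{A}$'' depends only on the finite profile given by the base type of $e$ together with the set of null-types it receives, and then correctly factoring the number of valuations realizing a prescribed profile vector as multinomials (which constants are assigned which role) times surjection numbers (how the nulls of each type are spread out). The bookkeeping is delicate precisely because, for a naive table, a single null can belong to several relations at once, which is why null-types, rather than individual relation symbols, are the right atomic objects to count with.
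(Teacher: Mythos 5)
Your proof is correct, and it reaches the same destination as the paper's argument using the same basic ingredients --- grouping constants and nulls by the set of relations they occur in, counting with surjection numbers $\surj_{n\to m}$, and exploiting the fact that everything depends only on cardinalities --- but it organizes the combinatorics genuinely differently. The paper's proof of Proposition~\ref{prp:compute-NSD} builds a \emph{nested} sum: it fixes a linear order on the null-blocks and, processing them one at a time, sums over partitions of each block's image refined against the intersection lattice $\I(Z_{i-1})$ of all previously chosen sets, then argues (somewhat informally) that each level of the nesting can be converted from a sum over subsets to a sum over sizes. Your proof instead parametrizes directly by the \emph{role} of each domain element --- the set of null-types it receives --- and obtains a single flat sum over polynomially many profile vectors $(m_{j,P})$, in which the count of valuations realizing a fixed role assignment cleanly factorizes as $\prod_\ell \surj_{k_\ell\to N_\ell}$ because the roles decouple the null-types from one another. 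What your version buys is an explicit, verifiable closed-form expression (I checked it reproduces the paper's formula in Example~\ref{expl:RxSx} for $R(x)\land S(x)$ after expanding the factor $(d-c_R-m')^{n_S}$ as a sum of surjections) and a cleaner correctness argument: the partition of the non-satisfying valuations according to the induced role assignment $e\mapsto P_e$, with $P_e\in\mathcal{P}_{\mathsf{base}(e)}$, is exactly the right invariant, and it absorbs the general-$S$ case and the irrelevant relations without the paper's preliminary reduction to $S=[m]$. What the paper's iterative formulation buys, arguably, is that it generalizes mechanically to the dynamic-programming arguments used elsewhere in the appendix, but as a proof of this particular proposition your route is the more transparent one. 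The one point worth stating explicitly in a final write-up is the independence claim underlying the factorization $\prod_\ell\surj_{k_\ell\to N_\ell}$: once the role assignment fixes, for every constant $e$ and every null-type $r_\ell$, whether $e$ must receive at least one null of type $r_\ell$ or none, the admissible restrictions of $\nu$ to the different null-type classes are constrained separately, so the counts multiply; you assert this but it is the crux of why the formula is exact rather than an over- or under-count.
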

\begin{proof}
First, observe that to compute~$N_S(D)$ we can assume without loss of
generality that the input database~$D$ only contains facts over relation names
that occur in some~$C_i(x_i)$, for~$i\in S$.  Indeed, $N_S(D)$ counts the
valuations~$\nu$ of~$D$ that do not satisfy any of the~$C_i(x_i)$ for~$i\in S$,
so that for any~$j \notin S$ we do not care if~$\nu$ satisfies~$C_j(x_j)$ or
not; hence, we could simply multiply the result by the appropriate factor.
Therefore, we can assume that~$S$ is~$[m]$.  We now need to fix some notation.
Let us write the conjunction of basic singleton \sjfbcq~$q$ as \[R_1(x_1)
\land  \ldots \land  R_{m_1}(x_1) \land  R_{m_1 + 1}(x_2) \land  \ldots \land
R_{m_1 + m_2}(x_2) \land  \ldots \land  R_{\sum_{i=1}^{m-1} m_i}(x_m) \land
\ldots \land  R_{\sum_{i=1}^m m_i}(x_m)\] and let~$K$ be the number of atoms
in~$q$, that is, $K \defeq \sum_{i=1}^m m_i$.  Let~$\dom$ be the uniform domain
of the nulls occurring in~$D$ and~$d$ its size.  For~$\mathbf{s} \subseteq
[K]$, we write~$C_\mathbf{s}$ the set of constants that occur in each of the
relations~$D(R_i)$ for~$i\in \mathbf{s}$ but in none of the others, and
write~$c_\mathbf{s}$ the size of that set. We call such a set a \emph{block} of
constants.  Similarly for the nulls, we write~$N_\mathbf{s}$ the set of nulls
that occur in each of the relations~$D(R_i)$ for~$i\in \mathbf{s}$ but in none
of the others (and we call this a block of nulls), and~$n_\mathbf{s}$ for its
size. We can assume wlog that:

\begin{description}
	\item[(a)] For every~$1 \leq i \leq m$, there is no constant that occurs in every~$D(R)$ for~$R$ a relation name in~$C_i(x_i)$.
		Indeed otherwise any valuation would satisfy~$C_i(x_i)$, thus~$N_{[m]}(D)$ would simply be~$0$.
	\item[(b)] Every constant~$c$ appearing in~$D$ is in~$\dom$. Indeed otherwise, with the last item, this constant would have no chance to
		be part of a match, so we could simply remove it (i.e., remove all tuples of the form~$R(c)$ from~$D$).
\end{description}
For a subset~$A \subseteq \dom$, let us write~$A^\complement \defeq \dom
\setminus A$.  Finally, for a set~$Z=\{A_1,\ldots,A_l\}$ of subsets of~$\dom$,
we denote by~$\I(Z)$ the set
\[\I(Z) \defeq \{ \bigcap_{i=1}^l B_i \mid (B_1,\ldots,B_l) \in \{A_1,A_1^\complement\} \times \ldots \times \{A_l,A_l^\complement\} \} \]

We now explain informally how we can compute~$N_{[m]}(D)$.
Let~$L=\mathbf{s}_1,\ldots,\mathbf{s}_{2^K}$ be an arbitrary linear order of
the set of subsets of~$[K]$.  We will define by induction on~$i\in [2^K]$ an
expression computing~$N_{[m]}(D)$, which will be a nested sum of the form

\begin{equation}
\label{eq:huge-sum}
	\sum_{\text{sum}_{\mathbf{s}_1}} f_{\mathbf{s}_1} \times \bigg( \sum_{\text{sum}_{\mathbf{s}_2}} f_{\mathbf{s}_2} \times 
	\big( \ldots (\sum_{\text{sum}_{\mathbf{s}_{2^K}}} f_{\mathbf{s}_{2^K}}) \ldots \big) \bigg)
\end{equation}

where each~$\text{sum}_{\mathbf{s}_i}$ sums over the possible
images~$A_{\mathbf{s}_i}$ of the nulls in~$N_{\mathbf{s}_i}$ by a valuation,
and~$f_{\mathbf{s}_i}$ will simply be~$\surj_{n_{\mathbf{s}_i} \to
a_{\mathbf{s}_i}}$, where~$a_{\mathbf{s}_i} \defeq |A_{\mathbf{s}_i}|$, i.e.,
the number of valuations~$\nu$ of~$N_{\mathbf{s}_i}$ with image
exactly~$A_{\mathbf{s}_i}$. But there are two technicalities:
\begin{itemize}
	\item First, we need to ensure that each basic singleton query~$C_i(x_i)$ of~$q$ will not be satisfied. In order to 
		do that, $\text{sum}_{\mathbf{s}_i}$ will actually sum over all
		the possible partitions~$(B_{\mathbf{s_i}}^1,\ldots,B_{\mathbf{s_i}}^{|\I(Z_{i-1}|)})$ of~$A_{\mathbf{s}_i}$,
		where each of the~$B_{\mathbf{s}_i}^j$ 
		is included in one of the sets in~$\I(Z_{i-1})$, where~$Z_{i-1}$ contains all the blocks of constants and all the other~$B_{\mathbf{s_j}}^r$ for~$j < i$.
		We iteratively build that sum from the outside to the inside, starting with~$Z_0 \defeq \{\dom\} \cup \{C_{\mathbf{s}} \mid \mathbf{s} \subseteq [K]\}$.
		This will allow us to avoid summing over the~$B_{\mathbf{s_i}}^j$ that would render a basic singleton query true.
	\item Second, as is, such a sum is obviously not going to be computable in PTIME, as we are summing over subsets of~$\dom$.
		To fix this, observe that the value of the subsum for~$\mathbf{s}_i$ actually only depends on the \emph{sizes} of the sets in~$Z_{i-1}$.
		Hence, iterating from the outside to the inside, whenever~$\text{sum}_{\mathbf{s}_i}$ contains a
		sum of the form, say, $B_{\mathbf{s_i}}^k \subseteq B_{\mathbf{s_j}}^{k'}$ for~$j<i$,
		we can replace this with a sum over~$0 \leq b_{\mathbf{s_i}}^k \leq b_{\mathbf{s_j}}^{k'}$, and add to~$f_{\mathbf{s}_i}$ a factor
		of~$\binom{b_{\mathbf{s_j}}^{k'}}{b_{\mathbf{s_i}}^k}$.
		Now, because of how~$Z_0$ is defined, and because of how~$\I$ works, all the initial numbers in the first sum are either 
		$|\dom \setminus \bigcup_{i=1}^K C_{\{i\}}|$ or	one of the numbers~$c_{\mathbf{s}}$
		for~$\mathbf{s} \subseteq [K]$. These can all be computed in polynomial time.
\end{itemize}

The resulting expression then indeed evaluates to~$N_{[m]}(D)$, and is in a
form that allows us to directly compute it in polynomial time (but
non-elementary in the query).  This concludes the proof of
Proposition~\ref{prp:compute-NSD}.
\end{proof}

\end{toappendix}

\subsubsection{Codd tables}
We conclude this section by turning our attention to the 
case of Codd tables.
Notice that none of the results proved so far 
provides a hard pattern in this case.
We identify in the following proposition a simple query for which the problem is intractable.

\begin{proposition}
\label{prp:RxSxyTy-hard-codd}
	$\cucountvals(R(x) \land S(x,y) \land T(y))$ is \shp-hard.
\end{proposition}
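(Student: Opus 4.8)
The plan is to reduce from a known \shp-hard counting problem, and the natural candidate is again \sIS, the problem of counting independent sets of a graph, as was used in Proposition~\ref{prp:RxSxyTy-RxySxy-hard}. The difficulty compared to that proposition is that here the input must be a \emph{Codd} table, so we cannot use a single null~$\bot_v$ per vertex that appears in many $S$-facts; each null may occur at most once. Also, here we are in the \emph{uniform} setting (same domain for all nulls), so we must engineer the domain carefully. First I would set up the domain of the uniform incomplete database to be something like $\{0\} \cup \{a_v \mid v \in V\}$ (or a similarly sized set indexed by the vertices), chosen so that a valuation of the nulls can ``select'' a subset of vertices while still respecting that every null ranges over the same full domain.

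The key idea I would pursue: for each vertex $v$ put a fact $R(\bot_v^R)$ and a fact $T(\bot_v^T)$ with fresh distinct nulls (so Codd is respected), and for each edge $\{u,v\}$ put a fact $S(\bot, \bot')$ with fresh nulls $\bot,\bot'$; the edge facts together with appropriate constant ``anchor'' facts are what detect whether the selected set is independent. Concretely, one wants the correspondence: a valuation $\nu$ selects the set $X_\nu = \{v : \nu(\bot_v^R) = \text{(the special value)}\}$, and $\nu(D) \models R(x)\land S(x,y)\land T(y)$ iff $X_\nu$ is \emph{not} independent, i.e., some edge has both endpoints selected. To make the $S$-atom fire exactly on edges with both endpoints selected, I would have the nulls in the $S$-fact for edge $\{u,v\}$ each range over the full domain but arrange, via the constant facts in $R$ and $T$ and a counting/normalization argument, that the number of ``bad'' completions of that edge-gadget is what we can control. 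Since each summand depends only on which vertices are selected and the per-edge and per-non-selected-vertex contributions factor, one gets $\ucountvals(q)(D)$ as a polynomial-time-computable linear combination of the number of independent sets of $G$ (or of $\sIS$ evaluated on $G$), from which $\sIS \tr \cucountvals(q)$ follows; membership in \shp\ is immediate as noted after Theorem~\ref{thm:countvals-sjfcqs}.

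The main obstacle I expect is the Codd restriction interacting with the need to ``link'' the two endpoints of each edge and to ``link'' the $R$-side and $T$-side of each vertex: with naive tables one shares a null, but here every occurrence is a distinct null over the same domain, so the coupling must instead be encoded through the constants that also appear in the relations and through a careful inclusion–exclusion-style accounting of valuations. Getting the domain size and the anchor facts exactly right so that the factorization yields a clean invertible relation to $\sIS$ — and verifying the whole construction is polynomial-time and preserves the Codd property — is the technical heart of the argument.
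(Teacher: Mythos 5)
There is a genuine gap, and it sits exactly at the point you flag as ``the technical heart.'' Your plan keeps the graph structure in the $S$-relation via facts $S(\bot,\bot')$ with fresh nulls, one per edge. But in a uniform Codd table those two nulls range over the whole domain independently of every other null, so the fact ``for edge $\{u,v\}$'' is indistinguishable from the fact for any other edge: the collection of $S$-facts is just $|E|$ identical copies of an unconstrained pair, and no choice of constant anchor facts can re-attach them to specific edges. Consequently the event ``the $S$-atom fires on an edge whose endpoints are both selected'' cannot be expressed, and the hoped-for factorization into per-edge contributions collapses. The paper's construction avoids this entirely by inverting the roles: the graph is encoded with \emph{ground} $S$-facts $S(a_i,a_j)$ (constants only, one per edge), and the nulls live exclusively in the unary relations $R$ and $T$, where each null independently ``selects'' a vertex from the uniform domain $\{a_1,\ldots,a_n\}$.

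Two further ideas are then needed that your sketch does not supply. First, since the query separates the $R$-side variable $x$ from the $T$-side variable $y$, a valuation selects a \emph{pair} of subsets (one via the $R$-nulls, one via the $T$-nulls), and a match corresponds to an edge between the two sides; the natural source problem is therefore \#BIS, counting independent sets of a \emph{bipartite} graph (equivalently, independent pairs $(S_1,S_2)$ with $S_1\times S_2$ disjoint from $E$), not \sIS\ on general graphs. Second, the map from valuations to selected subsets is many-to-one: $a$ nulls in $R$ select a set of size $i$ in $\surj_{a\to i}$ ways, so no single database yields the ``clean invertible relation'' you hope for. The paper instead builds $(n+1)^2$ databases $D_{a,b}$, obtains a linear system relating the oracle answers to the counts $Z_{i,j}$ of independent pairs by size, and inverts the matrix with entries $\surj_{a\to i}\times\surj_{b\to j}$ (a Kronecker product of triangular matrices with nonzero diagonal) --- a Turing reduction by interpolation, not the single linear combination your plan envisages. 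Without these three ingredients the reduction does not go through.
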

\begin{proof}
We reduce from the problem of counting the number of independent sets of a
bipartite (simple) graph, written~\#BIS, which is
\shp-hard~\cite{provan1983complexity}.  Let~$G = (X \sqcup Y, E)$ be a
bipartite graph.  Without loss of generality, we can assume that~$|X|=|Y|=n$;
indeed, if~$|X| < |Y|$ then we could simply add~$|Y|-|X|$ isolated nodes to
complete the graph, which simply multiplies the number of independent sets
by~$2^{|Y|-|X|}$.  Also, observe that counting the number of independent sets
of~$G$ is the same as counting the number of pairs~$(S_1,S_2)$ with~$S_1
\subseteq X, S_2 \subseteq Y$, such that~$(S_1 \times S_2) \cap E = \emptyset$.
We will call such a pair an \emph{independent pair}.  For~$0 \leq i,j \leq n$,
let~$Z_{i,j}$ be the number of independent pairs~$(S_1,S_2)$ such that~$|S_1| =
i$ and~$|S_2| = j$.  It is clear that ($\star$) the number of independent sets
of~$G$ is then~$\#\mathrm{BIS}(G) = \sum_{0 \leq i,j \leq n} Z_{i,j}$.  The
idea of the reduction is to construct in polynomial time $(n+1)^2$ incomplete
databases~$D_{a,b}$ for~$0 \leq a,b \leq n$ such that, letting~$C_{a,b}$ be the
number of valuations~$\nu$ of~$D_{a,b}$ with~$\nu(D_{a,b}) \not \models
R(x)\land  S(x,y)\land  T(y)$, the values of the variables~$Z_{i,j}$
and~$C_{i,j}$ form a linear system of equations~$\mA \mZ = \mC$, with~$\mA$ an
invertible matrix.  This will allow us, using $(n+1)^2$ calls to an oracle for
$\cucountvals(R(x)\land  S(x,y)\land  T(y))$, to recover the~$Z_{i,j}$ values,
and then to compute~$\#\mathrm{BIS}(G)$ using ($\star$).  We now explain how we
construct~$D_{a,b}$ from~$G$ for~$0 \leq a,b \leq n$, and define~$\mA$.  First,
we fix an arbitrary linear order~$x_1,\ldots,x_n$ of~$X$, and similarly
$y_1,\ldots,y_n$ for~$Y$.  The database~$D_{a,b}$ has constants~$a_i$ for~$1
\leq i \leq n$, and has a fact~$S(a_i,a_j)$ whenever~$(x_i,y_j) \in E$. It has
nulls~$\bot_1,\ldots,\bot_a$ and facts~$R(\bot_i)$ for~$1 \leq i \leq a$
(if~$a=0$ there are no such nulls and facts), and
nulls~$\bot_1',\ldots,\bot'_b$ and facts~$T(\bot'_i)$ for~$1 \leq i \leq b$; in
particular, this is a Codd table.  The uniform domain of the nulls is~$\{a_i
\mid 1 \leq i \leq n\}$.  Given a valuation~$\nu$ of~$D_{a,b}$, let $P(\nu)$ be
the pair of subsets of~$V$ defined by \[P(\nu) \defeq (\{x_i \mid \exists 1
\leq k \leq a \text{ s.t. } \nu(\bot_k)=a_i\}, \\ \{y_i \mid \exists 1 \leq k
\leq b \text{ s.t. } \nu(\bot'_k)=a_i \})\] One can then easily check that the
following two claims hold:
\begin{itemize}
	\item For every valuation~$\nu$ of~$D_{a,b}$, we have that~$\nu(D_{a,b}) \not\models R(x)\land  S(x,y)\land  T(y)$ iff~$P(\nu)$ is an independent
		pair of~$G$;\footnote{This observation, and in fact the idea of reducing from \#BIS, is due to Antoine Amarilli.}
	\item For every independent pair~$(S_1,S_2)$ of~$G$, there are
		exactly $\surj_{a \to |S_1|} \times \surj_{b \to |S_2|}$ valuations~$\nu$ such that~$P(\nu) = (S_1,S_2)$.
\end{itemize}

But then, we have $C_{a,b} = \sum_{0 \leq i,j \leq n} (\surj_{a \to i} \times
\surj_{b \to j}) Z_{i,j}$.  In other words, we have the linear system of
equations~$\mA \mZ = \mC$, where~$\mA$ is the~$(n+1)^2 \times (n+1)^2$ matrix
defined by $\mA_{(a,b),(i,j)} \defeq \surj_{a \to i} \times \surj_{b \to j}$.
This matrix is the Kronecker product~$\mA' \otimes \mA' $ of the~$(n+1) \times
(n+1)$ matrix with entries~$\mA'_{a,i} \defeq \surj_{a \to i}$.  Since~$\mA'$
is a triangular matrix with non-zero coefficients on the diagonal, it is
invertible, hence so is~$\mA$, which concludes the proof.  
\end{proof}

Note that in Proposition~\ref{prp:RxSxyTy-RxySxy-hard}, we proved that
$\ucountvals(R(x) \land S(x,y) \land T(y))$ is \shp-hard in the general case
where naive tables are allowed. Hence, the hardness of that query for naive
tables was in fact a consequence of Proposition~\ref{prp:RxSxyTy-hard-codd}.
However, we decided to provide a separate proof for
Proposition~\ref{prp:RxSxyTy-RxySxy-hard}, because in this case intractability
holds already when nulls are interpreted over the fixed domain~$\{0,1\}$,
whereas we do not know if this is true for Codd tables.\\

The second pattern that we show is hard for~$\ucountvals$ for Codd tables is
$R(x,y) \land S(x,y)$. Again, notice that we already showed this query to be
hard in the case of naive tables (as Proposition~\ref{prp:RxSxyTy-RxySxy-hard}),
even for a fixed domain of~$\{0,1\}$. In the case of Codd tables, hardness
still holds, but the proof is more complicated and uses domains of unbounded
size (which is why we provide separate proofs). We show:

\begin{proposition}
\label{prp:Rxy-Sxy-hard-codd}
  $\cucountvals(R(x,y) \land S(x,y))$ is \shp-hard.
\end{proposition}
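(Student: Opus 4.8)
The plan is to mirror the strategy of the proof of Proposition~\ref{prp:RxSxyTy-hard-codd}: reduce from counting bipartite independent sets, $\#\mathrm{BIS}$, through a parametrized family of uniform Codd tables together with polynomial interpolation. Given a bipartite graph $G=(X\sqcup Y,E)$, I would build, for integers $a,b$ ranging over $\{0,\dots,|X|\}\times\{0,\dots,|Y|\}$, a uniform Codd table $D_{a,b}$ over the schema $\{R,S\}$; its uniform domain is obtained from $V(G)$ augmented by enough fresh ``dummy'' constants so that it can be taken of unbounded size --- this is exactly the ingredient the authors flag as necessary. The table has a fixed part encoding the edge structure of $G$ (distributed between $R$ and $S$), together with $a+b$ fresh nulls that act as independent ``probes''; since every null occurs in a single fact, $D_{a,b}$ is genuinely a Codd table.

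The correctness statement I would aim to establish is: a valuation $\nu$ of $D_{a,b}$ amounts to a choice of $a$ vertices on the $X$-side and $b$ vertices on the $Y$-side (with repetitions), and $\nu(D_{a,b})\not\models R(x,y)\wedge S(x,y)$ precisely when the pair $(S_1,S_2)$ of probed vertex sets is an independent pair of $G$. Granting this, the number $C_{a,b}$ of non-satisfying valuations equals $\sum_{i,j}(\surj_{a\to i}\times\surj_{b\to j})\,Z_{i,j}$, where $Z_{i,j}$ is the number of independent pairs $(S_1,S_2)$ with $|S_1|=i$ and $|S_2|=j$. The coefficient matrix is the Kronecker product $\mA'\otimes\mA'$ with $\mA'_{a,i}=\surj_{a\to i}$, which is triangular with nonzero diagonal ($\surj_{i\to i}=i!$) and hence invertible, exactly as in Proposition~\ref{prp:RxSxyTy-hard-codd}; so polynomially many oracle calls to $\cucountvals(R(x,y)\wedge S(x,y))$ recover all the $Z_{i,j}$, and $\#\mathrm{BIS}(G)=\sum_{i,j}Z_{i,j}$. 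A possible variant is to count instead pairs of vertex subsets with \emph{no edge between them}, whose total equals $\#\mathrm{BIS}(G)^2$ for bipartite $G$, and then take an integer square root; this may fit more naturally with the two coordinates of $R(x,y)\wedge S(x,y)$, one tracking $S_1$ and the other $S_2$.

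The main obstacle, and the reason a longer separate argument is needed here, is the design of the fixed part of $D_{a,b}$. For naive tables one simply writes $R(1,1)$ together with $S(\bot_u,\bot_v)$ for each edge (Proposition~\ref{prp:RxSxyTy-RxySxy-hard}), but under the Codd restriction a null representing a vertex cannot be reused across the edges incident to it, so the ``probe value $=$ vertex'' correlation of the naive proof is unavailable; every obvious encoding makes the facts independent and collapses $C_{a,b}$ to a polynomial-time-computable closed form (a product over facts) carrying no information beyond $|E|$. One therefore needs an indirect encoding that still makes the condition $\nu(D_{a,b})\not\models R(x,y)\wedge S(x,y)$ detect non-independence of the probed set, despite $R(x,y)\wedge S(x,y)$ having no mediating atom --- in contrast with the chain $R(x)\wedge S(x,y)\wedge T(y)$, where $S$ plays exactly that mediating role. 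Pinning down such an encoding over a large domain, proving the independent-pair equivalence, and then verifying invertibility of the interpolation matrix are the three places where the real work sits; the surrounding reduction bookkeeping (computing the matrix entries, solving the system, summing) is routine.
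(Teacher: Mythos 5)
There is a genuine gap: you have sketched the interpolation scaffolding but not the reduction itself. The entire difficulty of this proposition lies in producing a concrete Codd table over $\{R,S\}$ in which the valuations that avoid $R(x,y)\land S(x,y)$ are in a controlled, counted correspondence with the combinatorial objects of a \shp-hard problem, and you explicitly concede that you do not have such an encoding (``pinning down such an encoding \dots\ [is] where the real work sits''). Your own analysis of why the naive-table trick fails under the Codd restriction is accurate, but it leaves the proposal without a construction to verify; as it stands nothing is proved. Moreover, the specific target you aim for (probe nulls selecting subsets $S_1\subseteq X$, $S_2\subseteq Y$, with the query detecting an edge between a probed $X$-vertex and a probed $Y$-vertex) is particularly ill-suited to this query under the Codd restriction: witnessing adjacency of $\nu(\bot_i)$ and $\nu(\bot'_j)$ requires the \emph{pair} of their values to occur in both $R$ and $S$, and since each probe null may occur in only one fact, no fixed fact of the table can mention both probes at once. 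So the obstacle you flag is not merely unresolved but arguably fatal to this particular encoding strategy.

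The paper proves the proposition by a different route. It reduces from counting the matchings of a $2$-$3$--regular bipartite graph $G=(A\sqcup B,E)$. The key device is to use \emph{constants} as anchors: one fact $R(a,\bot_a)$ per $a\in A$, so that the single null $\bot_a$ acts as an orientation choice for $a$; the relation $S$ is padded with all pairs over $(A\cup B)^2$ that are \emph{not} edges (plus auxiliary facts $S(\bot_b,b)$, $S(\bot_{a'},a')$ and padding constants $1,\dots,k$), which forces $\nu(\bot_a)$ to land either on a neighbour of $a$ or on a dummy $a'$ in any non-satisfying valuation. Non-satisfying valuations are then stratified by the \emph{type} of the induced $A$-semimatching, each stratum contributes a closed-form factor $(n_A+k-2)^c(n_A+k-1)^{n_A-2c}(n_A+k)^{n_B+c}$, and interpolation over the single padding parameter $k$ yields a system $\mA\mT=\mC$ with $\mA=\mD\mV$ for a diagonal $\mD$ and a Vandermonde $\mV$ whose nodes are shown to be distinct via monotonicity of $x\mapsto\frac{(n_A+x-2)(n_A+x)}{(n_A+x-1)^2}$. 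This constant-anchored, orientation-based encoding is exactly the missing ingredient in your plan; the surjection-matrix interpolation you describe (borrowed from the $R(x)\land S(x,y)\land T(y)$ case) does not appear here at all.
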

\begin{proof}
	Let~$q$ be the query~$R(x,y) \land S(x,y)$.  We reduce from the
problem of counting the number of matchings of a $2$-$3$--regular bipartite
graph, which is~\shp-complete~\cite{xia2007computational,cai2012holographic}.
Let~$G=(A\sqcup B,E)$ be a $2$-$3$--regular bipartite graph, with the nodes
in~$A$ having degree~$3$ and those in~$B$ having degree~$2$, and let~$n_A
\defeq |A|$, and~$n_B \defeq |B|$. Notice that since~$G$ is $2$-$3$--regular we
have that~$n_B = \frac{3n_A}{2}$.  We say that a set~$S \subseteq E$ of edges
of~$G$ is an \emph{$A$-semimatching} if every node~$a$ of~$A$ is adjacent to at
most~$1$ edge of~$S$; formally, for every~$a\in A$ we have $|\{e \in S \mid a
\in e\}| \leq 1$. The \emph{type} of an $A$-semimatching~$S$ is the number~$c$
of nodes in~$B$ that are adjacent to exactly~$2$ edges of~$S$; formally, $c
\defeq |\{b \in B \mid |\{ e \in S \mid b \in e\}|=2\}|$. For~$0 \leq c \leq
n_B$, we write~$T_c$ for the number of~$A$-semimatchings of~$G$ of type~$c$.
Observe then that~$T_0$ is simply the number of matchings of~$G$. The idea of
the reduction is then as follows. We will construct databases~$D_k$ for~$0 \leq
k \leq n_B$ such that, letting~$C_k$ be the number of valuations~$\nu$ of~$D_k$
such that~$\nu(D_k) \not\models q$, the values of the variables~$T_k$ and~$C_k$
form a system of linear equations~$\mA \mT = \mC$, with~$\mA$ and invertible
matrix. This will allow us, using~$n_B+1$ calls to and oracle
for~$\cucountvals(q)$, to recover the values~$T_k$, and thus to obtain~$T_0$ in
polynomial time, that is, the number of matchings of~$G$. We now explain how to
construct the database~$D_k$ for~$0 \leq k \leq n_B$. In what follows we use
the convention that~$\{1,\ldots,k\} = \emptyset$ when~$k=0$. The database~$D_k$
contains the following facts:
\begin{enumerate}
	\item One fact~$R(a,\bot_a)$ for every~$a\in A$;
	\item One fact~$S(\bot_b,b)$ for every~$b\in B$;
	\item One fact~$S(\bot_{a'},a')$ for every~$a\in A$ where~$a'$ is a fresh constant. In particular, observe that there are~$n_A$ such facts. In what follows we will write~$A' \defeq \{a' \mid a\in A\}$.
	\item One fact~$S(a_1,a_2')$ for every~$(a_1,a_2) \in A \times A$ with~$a_1 \neq a_2$;
	\item One fact~$S(a,i)$ for every~$(a,i)\in A \times \{1,\ldots,k\}$;
	\item One fact~$S(u,v)$ for every~$(u,v) \in (A \cup B)^2$ such that~$\{u,v\}$ is not in~$E$.
	\item One fact~$R(u,v)$ for every~$(u,v) \in (A' \cup B)^2$.
\end{enumerate}

And finally, the (uniform) domain for all the nulls is~$\dom \defeq A \cup B
\cup A' \cup \{1,\ldots,k\}$. Note that this is indeed a Codd database. Now,
let us compute~$C_k$, the number of valuations~$\nu$ of~$D_k$ such
that~$\nu(D_k) \not\models q$.  For such a valuation~$\nu$ of~$D_k$ such
that~$\nu(D_k) \not\models q$, observe that ($\star$) for every~$a \in A$ it
holds that~$\nu(\bot_a)$ is either~$a'$ or is one of the nodes in~$B$ that is a
neighbor of~$a$; this is because otherwise, the facts from ($4$-$6$) would make
the query be satisfied.  Then, for a valuation~$\nu$ of~$D_k$ such
that~$\nu(D_k) \not\models q$, let us define~$\sm(\nu) \defeq \{\{a,b\} \mid
(a,b)\in (A \times B) \cap R(\nu(D_k))\}$. Because of ($\star$), observe
that~$\sm(\nu)$ is a subset of~$E$, and that it is in fact an~$A$-semimatching.
We can then partition the valuations~$\nu$ of~$D_k$ with~$\nu(D_k) \not\models
q$ according to the type of~$\sm(\nu)$ as follows:
 \begin{equation}
  \label{eq:semimatchings}
 \{\nu \mid \nu \text{ is a valuation of $D_k$ with } \nu(D_k) \not\models q\} = \bigsqcup_{0\leq c \leq n_B} \,\,\,\bigsqcup_{\substack{S: \text{ $S$ is an $A$-semimatching}\\ \text{of~$G$ of type } c}} \,\, \bigsqcup_{\substack{\nu: \text{ $\nu$ valuation of $D_k$}\\ \nu(D_k)\not\models q\\ \sm(\nu)=S}} \{\nu\}.
 \end{equation}

Fix an~$A$-semimatching~$S$ of~$G$ of type~$c \in \{0,\ldots,n_B\}$, and let us
count how many valuations~$\nu$ of~$D_k$ there are such that~$\nu(D_k)
\not\models q$ and~$\sm(\nu)=S$.  First of all, observe that for such a
valuation~$\nu$, the value of~$\nu(\bot_a)$ for every~$a\in A$ is forced: it
is~$a'$ if~$a$ is adjacent to no edge of~$S$, and otherwise it is~$b$ for the
unique~$\{a,b\} \in S$. Therefore we have to count how many possibilities there
are for the remaining nulls, those of the form~$\bot_{a'}$ and~$\bot_b$ from
facts ($2$-$3$).  We have:
\begin{itemize}
	\item For a null~$\bot_b$ such that~$b$ is adjacent to two edges of~$S$, there
are~$n_A+k-2$ possible values in order to not satisfy the query. Note that there
are~$c$ such nulls~$\bot_b$. 
	\item For a
null~$\bot_{a'}$ such that~$a$ is not adjacent to an edge in~$S$ (so that we
know that~$\nu(\bot_a)=a'$), there are~$n_A+k-1$ possible values in order to not
satisfy the query. For a null~$\bot_b$ such that~$b$ is adjacent to exactly one
edge~$\{a,b\}$ of~$S$ (i.e., we have~$\nu(\bot_a)=b$) there are again~$n_A+k-1$
possible values to not satisfy the query. Observe that in total there are~$n_A
-2c$ such nulls~$\bot_{a'}$ or~$\bot_b$, because~$S$ is an~$A$-semimatching. 
	\item For a null~$\bot_{a'}$ such that~$a$ is adjacent to an edge in
~$S$ (so that we know that~$\nu(\bot_a)\neq a'$) there are~$n_A+k$
possibilities, and similarly for a null~$\bot_b$ such that~$b$ is not adjacent
to an edge in~$S$ there are~$n_A+k$ possible values. By the previous two items,
in total there are~$n_A+n_B - c - (n_A-2c) = n_B+c$ such nulls~$\bot_{a'}$
or~$\bot_b$.
\end{itemize}

Therefore, there are exactly~$(n_A+k-2)^c (n_A+k-1)^{n_A-2c} (n_A+k)^{n_B+c}$
valuations~$\nu$ of~$D_k$ such that~$\nu(D_k) \not\models q$ and~$\sm(\nu)=S$.
Since this depends only on the type of the~$A$-semimatching~$S$ (and not on~$S$
itself), we obtain from Equation~\ref{eq:semimatchings} that 

\begin{align*}C_k &= |\{\nu \mid \nu \text{ is a valuation of $D_k$ with } \nu(D_k) \not\models q\}|\\
&= \sum_{0 \leq c \leq n_B} T_c \times (n_A+k-2)^c (n_A+k-1)^{n_A-2c} (n_A+k)^{n_B+c}.
\end{align*}

 That is, we have the linear system of equations~$\mA \mT = \mC$,
with~$\mA_{k,c} = (n_A+k-2)^c (n_A+k-1)^{n_A-2c} (n_A+k)^{n_B+c}$ for~$0\leq
c,k \leq n_B$.  But observe that we have~$\mA = \mD \mV$, with~$\mD$ being the
diagonal~$(n_B+1)\times (n_B+1)$ matrix with
entries~$(n_A+k-2)^{n_A}(n_A+k)^{n_B}$, and~$\mV$ being the~$(n_B+1)\times
(n_B+1)$ Vandermonde matrix with coefficients~$\frac{(n_A+k-2)
(n_A+k)}{(n_A+k-1)^2}$ for~$0\leq k \leq n_B$.  Hence to show that~$\mA$ is
invertible, we only need to argue that the coefficients of this Vandermonde
matrix~$\mV$ are all distinct. But for the function~$f_{n_A}(x) \defeq
\frac{(n_A+x-2) (n_A+x)}{(n_A+x-1)^2}$, one can check that~$f'_{n_A}(x) =
\frac{2}{(n_A+x-1)^3}$, so that~$f_{n_A}$ is strictly increasing on~$[0,n_B]$
(we assume~$n_A \geq 2$ without loss of generality), so that all the
coefficients are indeed distinct.  This concludes the proof.
\end{proof}

As we show next, the patterns from Propositions~\ref{prp:Rxy-Sxy-hard-codd}
and~\ref{prp:RxSxyTy-hard-codd} are the only hard patterns for~$\cucountvals$.
The following is then the last dichotomy of this section.

\begin{theorem}[dichotomy]
\label{thm:countvals-codd-uniform}
	Let~$q$ be an \sjfbcq. If $R(x) \land S(x,y) \land T(y)$ or~$R(x,y)
\land S(x,y)$ is a pattern of $q$, then $\cucountvals(q)$ is \shp-complete.
Otherwise, $\cucountvals(q)$ is in \fp.
\end{theorem}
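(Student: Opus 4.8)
The $\shp$-hardness direction is immediate from what precedes: Propositions~\ref{prp:RxSxyTy-hard-codd} and~\ref{prp:Rxy-Sxy-hard-codd} give $\shp$-hardness of $\cucountvals(R(x)\land S(x,y)\land T(y))$ and of $\cucountvals(R(x,y)\land S(x,y))$, and Lemma~\ref{lem:pattern-parsimonious} (which also covers the Codd and uniform settings) propagates it to every $q$ containing one of the two patterns; membership in $\shp$ is the usual guess-a-valuation-then-model-check argument already invoked after Theorem~\ref{thm:countvals-sjfcqs}. So all the work is in the tractability claim, and the plan is as follows.

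\smallskip
\noindent\emph{Step 1: normalising $q$.} Using the argument of Lemma~\ref{lem:remove-ear-variables} (which works identically for Codd tables and in the uniform setting, since it only projects out columns) I would first assume that every variable of $q$ occurs at least twice; an atom \emph{all} of whose variables occur once factors out of the count trivially and can be deleted. Then I would reuse the analysis behind Lemma~\ref{lem:connectivity-graph-clique-naive}: its proof uses only the absence of the patterns $R(x)\land S(x,y)\land T(y)$ and $R(x,y)\land S(x,y)$ (the pattern $R(x,x)$ is never needed there), so it applies here verbatim, and every connected component of the connectivity graph $G_q$ is a clique all of whose edges carry a single common variable. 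Distinct components use disjoint relation symbols, hence, $D$ being a Codd table, disjoint sets of nulls; a homomorphism of $q$ into a completion then decomposes into independent homomorphisms of the component sub-queries, so $\cucountvals(q)(D)$ is the product over components of the corresponding sub-counts and we may assume $q$ is \emph{connected}. Such a $q$ is either a single atom $R(\bar x)$ in which every variable occurs at least twice, or a family $A_1,\dots,A_m$ of atoms all containing a common variable $x$ and sharing only $x$ pairwise, where each $A_j$ may also carry \emph{private} variables occurring (possibly repeatedly) only inside $A_j$, and $x$ may repeat inside an atom.

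\smallskip
\noindent\emph{Step 2: the count.} Here I would exploit that a Codd table is \emph{fact}-wise independent: every null lies in exactly one fact, so a valuation is an independent choice of values for the nulls of each fact. For $f\in D(R_j)$ and a valuation $\nu$, let $M_j(f,\nu)$ be the set of $v$ for which $f$ matches $A_j$ under $x\mapsto v$; since the occurrences of $x$ in $A_j$ must take a common value, $M_j(f,\nu)$ is always $\emptyset$ or a singleton, and whether it equals a given anonymous $\{v\}$, or the fixed $\{c\}$ forced by a constant sitting at an $x$-position of $f$, is decided by a condition on the nulls of $f$ whose number of witnesses is polynomial-time computable and, in the ``free'' case where all $x$-positions of $f$ are nulls, independent of $v\in\dom$. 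Since $\nu(D)\models q$ iff $\bigcap_{j}\bigcup_{f\in D(R_j)}M_j(f,\nu)\neq\emptyset$, i.e.\ iff some value is produced in every relation, I would write
\[
 \cucountvals(q)(D)\;=\;\sum_{\emptyset\neq W}(-1)^{|W|+1}\;\bigl|\{\nu:\ \text{each }v\in W\text{ is produced in every relation}\}\bigr|,
\]
where $W$ ranges over finite sets of values; by fact-independence the inner cardinality factors over $j$, and each factor is itself evaluated by an inner inclusion--exclusion matching the free facts of $D(R_j)$ against the members of $W$ they must produce.

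\smallskip
\noindent\emph{Step 3 (the hard part): making the sum polynomial.} The obstacle is that $W$ ranges over arbitrarily large sets, so the displayed sum is not directly computable; I expect to handle this exactly as in the proof of the tractability direction of Theorem~\ref{thm:countvals-naive-uniform} (the nested-sum construction used there). The constants occurring in $D$ are grouped into the constantly many \emph{types} determined by the relations in which they occur together with the pattern of $x$- and private-variable positions they occupy, the anonymous values of $\dom$ are interchangeable, and one rewrites the alternating sum from the outside in into a nested sum all of whose indices range over polynomially bounded intervals, with binomial and $\surj$ coefficients absorbing the collapse of equal terms. The resulting closed form is non-elementary in $q$ but computable in time polynomial in $\|D\|$, giving $\cucountvals(q)\in\fp$; the genuinely new difficulty relative to the naive--uniform case is that a constant fact carrying a null does not collapse to a null-free one, so a constant's type must also record how it sits inside the $x$- and private-variable positions of the facts containing it, and the per-relation covering conditions have to be threaded through this refined type system. (Example~\ref{expl:RxSx}, for $R(x)\land S(x)$, already illustrates the kind of closed form one ends up with.)
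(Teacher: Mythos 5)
Your hardness direction and Steps 1--2 are sound and essentially match the paper's setup: the paper likewise factors the count over connected components (using that $q$ is self-join--free and $D$ is Codd) and characterizes the pattern-free connected queries as ``star'' queries with a single joining variable $x$ such that every other variable is private to one atom; your observation that the proof of Lemma~\ref{lem:connectivity-graph-clique-naive} never uses the pattern $R(x,x)$ is correct, and the identity $\cucountvals(q)(D)=\sum_{\emptyset\neq W}(-1)^{|W|+1}\prod_j N_j(W)$, with $N_j(W)$ the number of valuations of the nulls of $D(R_j)$ under which every $v\in W$ is produced, is a valid reformulation (it factors over $j$ because distinct atoms use distinct relations and the table is Codd). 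The genuine gap is in Step~3. You claim the sum over $W$ collapses because the constants of $D$ fall into \emph{constantly many} types determined by the relations and the $x$-/private-variable positions they occupy. But two constants of the same type in this sense are not interchangeable: the contribution of a determined constant $c$ to $N_j(W)$ --- the number of valuations of the $c$-determined facts of $D(R_j)$ under which some (or none) of them produces $c$ --- depends on the full \emph{profile} of $c$, i.e., on how many $c$-determined facts of each null-pattern each relation contains, and these multiplicities are unbounded. There are thus polynomially many, not constantly many, interchangeability classes, so collapsing the sum over the determined constants in $W$ with one bounded summation index per class gives polynomially many nested indices and hence no polynomial bound. Nor does the alternating sum over subsets of determined constants factor into a per-constant product: a determined constant lying in $\dom$ can also be produced by a \emph{free} fact, and the free facts of a relation are a shared resource coupling all elements of $W$, so $\prod_j N_j(W)$ is not multiplicative over $W$.

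This is precisely the difficulty that drives the paper to a dynamic program rather than a flat type-collapsed sum: it fixes a linear order $c_1,\ldots,c_n$ on the determining constants and processes them one at a time, with a state $\params$ recording only aggregate quantities (how many free facts of each type remain available and how many constants the free facts have so far been forbidden from producing), so that each constant's individual profile $n_{R,c,\mathbf{t}}$ is consumed locally inside the recurrence $V(\params)=A+B$. Your appeal to the proof of Theorem~\ref{thm:countvals-naive-uniform} does not transfer because there the atoms are unary: a constant occurs at most once per relation, carries no nulls, and the block decomposition $C_{\mathbf{s}}$ genuinely yields constantly many interchangeability classes. Repairing your Step~3 would mean replacing the flat collapse by such a per-constant DP, i.e., reconstructing the paper's argument. (In fairness, the paper itself only carries out the DP in detail for $R(x,x,y,y)\land S(x,x)$ and asserts the general case, so the expected level of detail is modest --- but the specific collapsing mechanism you propose is the one step that would actually fail.)
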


We only need to show the tractability part of that claim, as hardness follows
from Propositions~\ref{prp:Rxy-Sxy-hard-codd} and~\ref{prp:RxSxyTy-hard-codd}
and Lemma~\ref{lem:pattern-parsimonious}.  First of all, observe that we can
assume without loss of generality that the \sjfbcq~$q$ is connected. This is
because~$q$ has no self-join and the database~$D$ is Codd, so,
letting~$q_1,\ldots,q_t$ be the connected components of~$q$, and letting~$D_i$
be the database~$D$ restricted to the relations appearing in~$q_i$, we have
that

\[\cucountvals(q)(D) = \prod_{i=1}^t \cucountvals(q_i)(D_i).\]

Second, notice that, for a connected \sjfbcq~$q$, not containing any of these
two patterns is equivalent to the following: there exists a variable~$x$
such that all atoms of~$q$ contain variable~$x$, and for any two atoms
of~$q$, the only variable that they have in common is~$x$. In other words,~$x$
is in every atom and every other variable occurs in only one atom. For
instance~$R_1(x,y,y)\land R_2(x,x,z,z,z,u,u) \land R_3(x,x,v,t,t)$ is such a
query. 
We now provide an example of a connected query with only two atoms. 

\begin{example}
{\em We consider the query~$q=R(x,x,y,y)\land S(x,x)$. Let~$D$ be an incomplete Codd
database over relations~$R,S$, with uniform domain~$\dom$ of size~$d$. In this
proof we will use symbols~$a,a_1,a_2,\ldots$ to denote a constant or a null,
and symbols~$c,c_1,c_2,\ldots$ to denote constants. Moreover, unless stated
otherwise, these symbols can refer to the same constant (but not to the same
null because~$D$ is a Codd database). A fact that contains only constants is
called a \emph{ground fact}. Furthermore, since that database is Codd we will
always represent a null with~$\bot$, being understood that they are all distinct.

We start with a few simplifications.  First, we assume wlog that~$D$ does not
contain ground facts that already satisfy the query. Second, we assume wlog
that~$D$ does not contain facts of the form~$R(a,a',c,c')$ or~$R(c,c',a,a')$
or~$S(c,c')$ where~$c$ and~$c'$ are distinct constants. Indeed, because~$D$ is
Codd and such a fact~$f$ can never be part of a match, we could simply
remove~$f$ from~$D$ and multiply the end result by the appropriate value
(namely,~$d^t$ where~$t$ is the number of nulls of~$f$). Last, we assume wlog
that for any fact of the form~$R(a,a',c,\bot)$ or~$R(a,a',\bot,c)$
or~$R(\bot,c,a,a')$ or~$R(c,\bot,a,a')$ or~$S(c,\bot)$ or~$S(\bot,c)$, we
have~$c\in \dom$; indeed otherwise, those facts can never be part of a match
and we could again remove them and multiply the result by the appropriate
value.

Next, we need to introduce some notation.  For a fact~$f$ of~$D$, the
\emph{type} of~$f$ is the word in~$\{0,1\}^{\arity(f)}$ that has a~$1$ in
position~$i$ iff the~$i$-th element of~$f$ is a null. For instance the type
of~$R(\bot,\bot,c,\bot)$ is~$1101$ and that of~$S(c,c)$ is~$00$.  Observe that
there are a fixed number of possible types, because the query is fixed.   For a
constant~$c$ and fact~$f$ of~$D$, we say that \emph{$f$ is $c$-determined}
if~$f$ contains the constant~$c$ at a position for variable~$x$. For
instance~$R(\bot,c,c',\bot)$ is~$c$-determined and so is~$S(c,c)$.  A fact~$f$
that contains only nulls on the positions for~$x$ is called \emph{free}. With
the simplifications of the last paragraph, a fact of~$D$ is either free or it
is~$c$-determined for a unique constant~$c$.  Let~$\mathbf{t}$ be a type of~$D$
(for~$R$ or for~$S$).  We say that~$\mathbf{t}$ is \emph{free} if it has
only~$1$s in the positions corresponding to variable~$x$; in other words, if it
is the type of a free fact. Otherwise we say that~$\mathbf{t}$ is determined.
For a constant~$c$ and determined type~$\mathbf{t}$, we
write~$n_{R,c,\mathbf{t}}$ (resp.,~$n_{R,c,\mathbf{t}}$) the number
of~$R$-facts (resp., of~$S$-facts) of~$D$ that are~$c$-determined of
type~$\mathbf{t}$.  For a free type~$\mathbf{t}$ of~$R$ (resp., of~$S$) we
write~$F_{R,\mathbf{t}}$ (resp., $F_{S,\mathbf{t}}$) for the set of
free~$R$-facts (resp., of free~$S$-facts) and~$\ff_{R,\mathbf{t}}$
(resp.,$\ff_{\mathbf{t}}$) for its size.  Let~$f$ be a fact that
is~$c$-determined.  We write~$\alpha_{f}$ for the number of valuations~$\nu$ of
the nulls in~$f$ such that~$f$ matches the corresponding atom in~$q$.  For
instance if~$f$ is~$R(\bot,c,c',\bot)$ or again~$R(c,c,c',c')$
then~$\alpha_{f}=1$, while if~$f$ is~$R(c,c,\bot,\bot)$
then~$\alpha_{f}=d$.\footnote{When~$f$ is a ground fact -- such
as~$R(c,c,c',c')$ -- we recall the mathematical convention that there exists a
unique function with emtpy domain, hence a unique valuation of the nulls
of~$f$.}  Similarly we let~$\beta_{f}$ denote the number of valuations~$\nu$ of
the nulls in~$f$ such that~$f$ does not match the corresponding atom in~$q$;
which is then equal to~$d^t - \alpha_{f}$, for~$t$ the number of nulls in~$f$.
Observe that $\alpha_{f}$ and~$\beta_f$ depend only on the type~$\mathbf{t}$
of~$f$.  Hence we will write~$\alpha_\mathbf{t}$ and~$\beta_\mathbf{t}$
instead.  Let~$f$ be a free fact. We let~$\alpha_f$ be the number of valuations
of the nulls in~$f$ that are not on a position for variable~$x$ that match the
corresponding part of the atom in~$q$. For instance if~$f$
is~$R(\bot,\bot,c',\bot)$ then~$\alpha_f=1$ and if~$f$
is~$R(\bot,\bot,\bot,\bot)$ then~$\alpha_f=d$.  We also define~$\beta_f \defeq
d^t - d \alpha_f$ where~$t$ is the number of nulls, which correspond to the
number of valuations of the nulls in~$f$ such that the ground fact obtained
does not match its corresponding atom.  Again, since $\alpha_{f}$ and~$\beta_f$
depend only on the free type~$\mathbf{t}$ of~$f$, we will
write~$\alpha_\mathbf{t}$ and~$\beta_\mathbf{t}$ instead.  It is clear that we
can compute all values~$\alpha_{f,c}$ and~$\beta_{f,c}$, for every determined
fact~$f$ and constant~$c$ of~$D$, and values~$\alpha_{f}$ for every free fact
in polynomial time.    Last, we fix a linear order~$c_1,\ldots,c_n$ on the
constants~$c_i$ such that there exists a fact of~$D$ that is~$c_i$-determined.

Next, we explain how we can compute in~\fp\ the number of valuations of~$D$
that \emph{do not} satisfy~$q$.  To do so, we will first define some
quantities, and then show that we can compute these quantities in polynomial
time using a dynamic programming approach. One of these quantities will be the
number of valuations of~$D$ that do not satisfy~$q$, which is what we want to
compute. The quantities that we will define are of the form~$V(\params)$,
where~$\params$ consist of the following parameters ($\dagger$):
\begin{enumerate}[(a)]
	\item one parameter~$v$ whose range is~$0\ldots n$;
	\item one parameter~$q_{R,\mathbf{t}}$ for every free type~$\mathbf{t} \in \{0,1\}^4$ of~$R$, with range~$0\ldots \ff_{R,\mathbf{t}}$;
	\item one parameter~$r_R$ with range~$0,\ldots,n$; 
	\item one parameter~$r_S$ with range~$0,\ldots,r_R$; 
\end{enumerate}

We now explain what the quantity~$V(\params)$ with these parameters represent. To that end, we define the incomplete database~$D(\params)$ to be the database that contains only
\begin{itemize}
	\item all of the facts that are~$c_i$-determined for~$1 \leq i \leq v$
(if~$v=0$ then~$D_v$ contains no determined facts); \item for every free
type~$\mathbf{t}$ of~$R$, $D(\params)$ contains $q_{R,\mathbf{t}}$ free
$R$-facts of~$D$ of type~$\mathbf{t}$ (it doesn't matter which ones, say the
first~$q_{R,\mathbf{t}}$ ones);
	\item $D(\params)$ contains all the free~$S$-facts of~$D$.
\end{itemize}

(Note that parameters (c) and (d) are not used to define the database but we still write~$D(\params)$.)

The quantity~$V(\params)$ is then defined to be the number of valuations~$\nu$
of database~$D(\params)$ such that~$\nu(D(\params))\not\models q$ and such that
the following holds: (1) for every~$ 1 \leq i \leq r_S$ and free fact~$f$
of~$R$ or of~$S$, the ground fact~$\nu(f)$ does not match the corresponding
atom in~$q$ with variable~$x$ mapped to~$c_i$ (this condition is empty if~$r_S
= 0$); and (2) for every~$r_S + 1 \leq i \leq r_R$ and every free fact~$f$
of~$R$, the ground fact~$\nu(f)$ does not match~$R(x,x,y,y)$ with variable~$x$
mapped to~$c_i$ (this condition is empty if~$r_R = r_S$).  By definition we then have that,
when~$v=n$, $q_{R,\mathbf{t}}=\ff_{R,\mathbf{t}}$ for every free
type~$\mathbf{t}$ of~$R$ -- so that~$D(\params)$ is equal to~$D$ --, $r_R=0$
and~$r_S=0$ then~$V(\params)$ is then equal to~$\cucountvals(q)(D)$.  Observe
that there are a fixed number of parameters in~$\params$ (because the query is
fixed), and that the possible values of these parameters are polynomial in the
size of the input. We explain how compute the values~$V(\params)$ by dynamic
programming. 

\paragraph*{Base case.}
 Our base case will be when~$v$ is equal to zero (and the other parameters are
arbitrary as in ($\dagger$)); in other words, the database~$D(\params)$
does not contain any determined facts, it contains only free facts. We have to
compute the number of valuations of~$D(\params)$ that do not satisfy the query
and such that~(1) and (2) hold. 
We do so as follows.  We guess a subset~$S_R$ of~$\dom \setminus
\{c_1,\ldots,c_{r_R}\}$; this will be the set of constants~$c$ such
that~$R(c,c,c',c') \in \nu(D(\params))$ for some~$c'$.  We also guess a subset~$S_S$
of~$\dom \setminus \big( \{c_1,\ldots,c_{r_S}\} \cup S_R\big)$; this will
be the set of constants~$c$ such that~$S(c,c) \in \nu(D(\params))$ (observe
that~$S_R$ and~$S_S$ are disjoint; this is in order to avoid satisfying the
query). To “achieve” these subsets, we for each free type~$\mathbf{t}$ of~$R$
guess a subset~$W_{R,\mathbf{t}}$ of the free facts of~$R$ of
type~$\mathbf{t}$; these will be the~$R$-facts~$f$ of type~$\mathbf{t}$ such
that~$\nu(f)$ matches~$R(x,x,y,y)$ with a constant in~$S_R$ for variable~$x$.
Similarly we for each free type~$\mathbf{t}$ of~$S$ guess a
subset~$W_{S,\mathbf{t}}$ of the free facts of~$S$ of type~$\mathbf{t}$; these
will be the~$S$-facts~$f$ of type~$\mathbf{t}$ such that~$\nu(f)$
matches~$S(x,x)$ with a constant in~$S_S$ for variable~$x$. To ensure that
these subsets are indeed enough to cover~$S_R$ and~$S_S$, we surjectively
assign each of the corresponding facts a constant in~$S_R$ (for~$R$-facts)
or~$S_S$ (for~$S$-facts). For such facts~$f$, we then choose one of
the~$\alpha_\mathbf{t}$ valuations of~$f$ that ensure that the ground fact
obtained satisfies the correspoinding atom of the query; crucially, this
quantity depends only on the type of~$f$. For the remaining (free) facts~$f$, we
choose one of the~$\beta_\mathbf{t}$ valuations of~$f$ that ensure that the
ground facts obtained do not satisfy the corresponding atom of the query. In
the end we obtain that~$V(\params)$ is equal to the following
expression:

\begin{equation}
\label{eq:base-intermediate}
\sum_{S_R \subseteq \dom \setminus \{c_1,\ldots,c_{r_R}\}} \sum_{S_S \subseteq\dom \setminus \big( \{c_1,\ldots,c_{r_S}\} \cup S_R\big)}
\sum_{W_{R,\mathbf{t_1}} \subseteq Q_{R,\mathbf{t_1}}} \cdots
\sum_{W_{R,\mathbf{t_J}} \subseteq Q_{R,\mathbf{t_J}}}
\sum_{W_{R,\mathbf{t'_1}} \subseteq F_{S,\mathbf{t'_1}}} \cdots
\sum_{W_{R,\mathbf{t'_K}} \subseteq F_{S,\mathbf{t'_K}}}
\gamma
\end{equation}

where~$\mathbf{t_1}\ldots \mathbf{t_J}$ are all the free types of~$R$,
$\mathbf{t'_1}\ldots \mathbf{t'_K}$ are all the free types of~$S$,
$Q_{R,\mathbf{t_i}}$ is the set of the first~$q_{R,\mathbf{t_i}}$ free
~$R$-facts of type~$\mathbf{t_i}$, and where,
letting~$z \defeq \sum_{i=1}^J |W_{R,\mathbf{t_i}}|$ and $z' \defeq
\sum_{i=1}^K |W_{S,\mathbf{t'_i}}|$, we have

\begin{equation*}
\gamma = \surj_{z \to |S_R|} \times \surj_{z' \to |S_S|} \times \big(\prod_{i=1}^J \alpha_{\mathbf{t_i}}^{|W_{R,\mathbf{t_i}}|} \beta_{\mathbf{t_i}}^{|Q_{R,\mathbf{t_i}}| - |W_{R,\mathbf{t_i}}|}  \big)
\times \big(\prod_{i=1}^K \alpha_{\mathbf{t'_i}}^{|W_{S,\mathbf{t'_i}}|} \beta_{\mathbf{t'_i}}^{|F_{S,\mathbf{t'_i}}| - |W_{S,\mathbf{t'_i}}|}  \big).
\end{equation*}

Obviously, we cannot compute the expression in~\ref{eq:base-intermediate} in polynomial time, since we are summing over subsets of the input facts.
However, because~$\gamma$ depends only on the sizes of all these sets, we can express~$V(\params)$ as

\begin{equation}
\label{eq:base-final}	
\sum_{0 \leq s_R \leq d-r_R} \sum_{0 \leq s_S \leq d-r_S -s_R }
\sum_{0 \leq w_{R,\mathbf{t_1}} \leq q_{R,\mathbf{t_1}}} \cdots
\sum_{0 \leq w_{R,\mathbf{t_J}} \leq q_{R,\mathbf{t_J}}}
\sum_{0 \leq w_{R,\mathbf{t'_1}} \leq \ff_{R,\mathbf{t'_1}}} \cdots
\sum_{0\leq w_{R,\mathbf{t'_K}} \leq \ff_{R,\mathbf{t'_K}}}
\delta,
\end{equation}

where, letting again $z \defeq \sum_{i=1}^J w_{R,\mathbf{t_i}}$ and $z' \defeq \sum_{i=1}^K w_{S,\mathbf{t'_i}}$, we have

\begin{align*}
\delta = & \surj_{z \to s_R} \times \surj_{z' \to s_S} \times \big(\prod_{i=1}^J \alpha_{\mathbf{t_i}}^{w_{R,\mathbf{t_i}}} \beta_{\mathbf{t_i}}^{q_{R,\mathbf{t_i}} - w_{R,\mathbf{t_i}}}  \big)
\times \big(\prod_{i=1}^K \alpha_{\mathbf{t'_i}}^{w_{S,\mathbf{t'_i}}} \beta_{\mathbf{t'_i}}^{\ff_{S,\mathbf{t'_i}} - w_{S,\mathbf{t'_i}}}  \big)\\
&  \times \binom{d-d_R}{s_R}\binom{d-r_S-s_R}{s_S} \times \big(\prod_{i=1}^J \binom{q_{R,\mathbf{t_i}}}{w_{R,\mathbf{t_i}}}\big) 
\times \big(\prod_{i=1}^K \binom{\ff_{S,\mathbf{t'_i}}}{w_{S,\mathbf{t'_i}}}\big).
\end{align*}

But then, because the expression~\ref{eq:base-final} contains a fixed number of
nested sums (this number depends only on the query~$q$), and because indices
and summands are polynomial, this quantity can be computed in polynomial time.
This concludes the base case, that was when~$v=0$.

\paragraph*{Inductive case.}

Next we explain how we can compute the quantities~$V(\params)$ (with the
parameters~$\params$ as in ($\dagger$)) in polynomial time from
quantities~$V(\params')$ with a strictly smaller value for parameter~(a). Hence
we assume that parameter~$v$ is~$\geq 1$. The idea is to get rid of
the~$c_v$-determined facts of~$D(\params)$, by partitioning the valuations
of~$D(\params)$ that do not satisfy~$q$ and satisfy (1) and (2) into

\begin{description}
 \item[(A)] those valuations~$\nu$ that do not satisfy the query and satisfy (1) and (2) and such that
no~$c_v$-determined~$R$-fact or free $R$-fact~$f$ of~$D(\params)$ is such
that~$\nu(f)$ matches~$R(x,x,y,y)$ with variable~$x$ mapped to~$c_v$; and
\item[(B)] those valuations~$\nu$ that do not satisfy the query and satisfy (1) and (2) and such that at
least one $c_v$-determined~$R$-fact or free $R$-fact~$f$ of~$D(\params)$ is
such that~$\nu(f)$ matches~$R(x,x,y,y)$ with variable~$x$ mapped to~$c_v$ (and thus,
no~$c_v$-determined~$S$-fact or free $S$-fact~$f$ of~$D(\params)$ is such
that~$\nu(f)$ matches~$S(x,x)$ with variable~$x$ mapped to~$c_v$).
\end{description}

To compute valuations in $\mathbf{(A)}$, we choose for each~$R$-fact~$f$ that
is~$c_v$-determined one of the~$\beta_{f,c_v}$ valuations of its nulls that is
such that~$\nu(f)$ does not match~$R(x,x,y,y)$, we disallow the free facts
of~$R$ to have valuations that would match on~$c_v$ by increasing~$r^R$ by one,
and we choose any valuation for the~$c_v$-determined facts of~$S$ (since we
know that they will not be part of a match). Formally, letting~$t_f$ be the
number of nulls of a fact~$f$, we have the expression

\[A = \big(\prod_{\substack{c_v\text{-determined}\\R\text{-fact }f}} \beta_{f}\big)
 \times \big(\prod_{\substack{c_v\text{-determined}\\S\text{-fact }f}} d^{t_f}\big)
 \times V(\params')
\]

where~$\params'$ is equal to~$\params$ except that~$v' = v-1$ and~$r'_R = r'_R
+1$. This can be computed in polynomial time if we know the
value~$V(\params')$.  For valuations in~$\mathbf{(B)}$, we do as follows. We
choose exactly which subset of the~$c_v$-determined and free facts of~$R$ will
match~$R(x,x,y,y)$ with~$x=c_v$ by partioning according to the types, for the
remaining~$c_v$-determined~$R$-facts we choose one of the~$\beta$ valuations
that do not match on~$c_v$, for the remaining free facts of~$R$ we disallow to
match on~$c_v$ by increasing~$r^R$ by one, for each~$c_v$-determined fact~$f$
of~$S$ we choose one of the~$\beta_{f,c_v}$ valuations of~$f$ that are such
that~$\nu(f)$ is not~$S(c_v,c_v)$, and for the free facts of~$S$ we disallow
matching on~$c_v$ by increasing~$r_S$ by one. Formally,
letting~$\mathbf{t_1},\ldots,\mathbf{t_J}$ be the types of the~$c_v$-determined
$R$-facts and $\mathbf{t'_1},\ldots,\mathbf{t'_J}$ be the free types of~$R$ we
obtain an expression of the form

\begin{equation}
B = \sum_{0 \leq h_\mathbf{t_1} \leq n_{R,c_v,\mathbf{t_1}}} \cdots \sum_{0 \leq h_\mathbf{t_J} \leq n_{R,c_v,\mathbf{t_J}}}
    \sum_{0 \leq h'_\mathbf{t'_1} \leq q_{R,\mathbf{t'_1}}} \cdots \sum_{0 \leq h'_\mathbf{t'_K} \leq q_{R,\mathbf{t'_K}}}
    \gamma \times \delta \times V(\params')
\end{equation}

where~$\gamma$ is equal to~$1$ if~$\sum_{i=1}^J h_\mathbf{t_i} + \sum_{i=1}^K h_\mathbf{t'_i} \geq 1$ and to~$0$ otherwise (in order to match on~$c_v$ in~$R$),~$\delta$ is

\[ \big(\prod_{i=1}^J \binom{n_{R,c_v,\mathbf{t_i}}}{h_\mathbf{t_i}} \alpha_{\mathbf{t_i}}^{h_\mathbf{t_i}} \beta_{\mathbf{t_i}}^{n_{R,c_v,\mathbf{t_i}} - h_\mathbf{t_i}} \big)
\times \big(\prod_{i=1}^J \binom{q_{R,\mathbf{t'_i}}}{h'_\mathbf{t'_i}} \alpha_{\mathbf{t'_i}}^{h'_\mathbf{t'_i}}\big)
\times \big(\prod_{\substack{c_v\text{-determined}\\S\text{-fact }f}} \beta_f\big),
\]

and where~$\params'$ is equal to~$\params$ except that:
\begin{itemize}
  \item we have~$v'=v-1$;
  \item for every free type~$\mathbf{t}$ of~$R$, we have~$q'_{R,\mathbf{t}} = q'_{R,\mathbf{t}} - h'_\mathbf{t}$;
  \item we have~$r'_R = r_R +1$;
  \item we have~$r'_S = r_S +1$.
\end{itemize}

The crucial point to see that expressions A and B indeed compute what we
want is that we do not need to remember exactly which susbsets of constants are
disallowed to match for the free facts, but we only need to remember their
numbers (this is what allows us to use a dynamic programming approach); and
similarly for the free facts of~$R$, we only need to remember how many we have
left at each stage of each type, but not their precise subsets.  Again, if we know the values
of~$V(\params')$ where in~$\params'$ parameter~$v'$ is equal to~$v -1$ then we
can compute in polynomial time the value~$V(\params) = A + B$.  Thus, we can
compute all values~$V(\params)$ in polynomial time, and this concludes this
example.}\qed
\end{example}

The proof in this example can be extended as-is to any connected \sjfbcq\
containing only two atoms and having only one variable ($x$) that joins.  We
claim that the same idea works for an arbitrary number of atoms, which
concludes Theorem~\ref{thm:countvals-codd-uniform}.
Since a full proof is technically tedious and does not provide new insights, we omit it here.

\section{Dichotomies for counting completions}
\label{sec:countcompls-sjfcqs}
In this section, we study the complexity of the problems of counting
completions satisfying an~\sjfbcq~$q$, in the four cases that can be obtained
by considering naive or Codd tables and non-uniform or uniform domains.  We
will again use the notion of pattern as introduced in
Definition~\ref{def:pattern}. Our first step is to observe that
Lemma~\ref{lem:pattern-parsimonious}, which we used in the last section for the
problems or counting valuations, extends to the problems of counting
completions. 

\begin{lemma}
\label{lem:pattern-parsimonious-compls}
	Let~$q,q'$ be \sjfbcqs\ such that~$q'$ is a pattern of~$q$.  Then we
have that~$\countcompls(q') \pr \countcompls(q)$. Moreover, the same results
hold if we restrict to the case of Codd tables, and/or to the uniform setting.
\end{lemma}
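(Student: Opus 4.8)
The plan is to reuse verbatim the construction from the proof of Lemma~\ref{lem:pattern-parsimonious}. Given an input $D'$ of $\countcompls(q')$, we set $A$ to be the set of constants occurring in $D'$ or in a domain of some null of $D'$, and we build an incomplete database $D$ over $\sig(q)$ by: for every atom $R_{j_k}$ kept in $q'$, expanding each tuple of $D'(R_{j_k})$ with all the tuples obtained by filling the positions of deleted variable occurrences with every constant of $A$; for every relation symbol of $q$ not occurring in $q'$, putting all facts over $A$; and leaving the domains of the nulls untouched. As established in Lemma~\ref{lem:pattern-parsimonious}, this is computable in polynomial time ($q$ is fixed, so the arities are constant), it preserves the property of being a Codd table and the uniform setting, $D'$ and $D$ have exactly the same nulls (hence the same valuations), and for every such valuation $\nu$ we have $\nu(D') \models q'$ if and only if $\nu(D) \models q$.

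The only additional point to verify---and the only place where counting completions genuinely differs from counting valuations---is that this reduction is parsimonious for the number of \emph{distinct} completions, i.e.\ that $\countcompls(q')(D') = \countcompls(q)(D)$. The key observation is that $D$ is obtained from $D'$ by a purely fact-wise transformation that only appends columns filled with \emph{constants} drawn from the fixed set $A$ (together with the addition of constant-only facts for the relations of $q$ absent from $q'$). Consequently, applying a valuation $\nu$ commutes with this transformation, so $\nu(D)$ is completely determined by $\nu(D')$: the map $\nu(D') \mapsto \nu(D)$ on completions is well defined. Conversely, projecting $\nu(D(R_{j_k}))$ onto the positions that were \emph{not} deleted recovers exactly $\nu(D'(R_{j_k}))$ (this uses $A \neq \emptyset$, so that every kept tuple does get expanded; the degenerate case $A = \emptyset$ forces $\countcompls(q')(D') = 0$ and is disposed of separately by mapping to a trivially-zero instance), and since $D'$ uses only relations of $q'$, this shows the map is injective on completions. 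Combining these two facts with the equivalence $\nu(D') \models q' \Leftrightarrow \nu(D) \models q$, the map restricts to a bijection between $\{\nu(D') : \nu(D') \models q'\}$ and $\{\nu(D) : \nu(D) \models q\}$, which yields the desired equality. The argument is insensitive to whether $D'$ is a Codd table and/or the domain is uniform, exactly as in Lemma~\ref{lem:pattern-parsimonious}.

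I expect the parsimony check to be the main (really the only) subtle step: one has to rule out that two distinct completions of $D'$ collapse to the same completion of $D$, and conversely that a single completion of $D'$ corresponds to several completions of $D$. This is precisely what dictates filling the deleted columns with constants taken from $D'$ itself---rather than with fresh constants or with nulls---so that the expansion is reversible by projection and no new identifications of facts are created.
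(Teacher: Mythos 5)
Your proposal is correct and follows essentially the same route as the paper: reuse the construction of Lemma~\ref{lem:pattern-parsimonious} verbatim and observe that, since the transformation only appends constant-filled columns and constant-only relations, two valuations yield the same completion of $D'$ if and only if they yield the same completion of $D$, so the count of distinct satisfying completions is preserved. Your extra care about projection recovering $\nu(D')$ and the degenerate case $A=\emptyset$ is a harmless elaboration of the paper's one-line observation.
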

\begin{proof}
	The reduction is exactly the same as the one of
Lemma~\ref{lem:pattern-parsimonious}.  To show that this reduction works
properly for counting completions, it is enough to observe that for every
pair of valuations~$\nu_1,\nu_2$ of~$D'$ (or of~$D$, since~$D$ and~$D'$ have exactly
the same set of nulls), we have that~$\nu_1(D')=\nu_2(D')$ iff~$\nu_1(D) =
\nu_2(D)$.
\end{proof}

We will then follow the same general strategy as in the last section, i.e.,
prove hardness for some simple patterns and combine these with
Lemma~\ref{lem:pattern-parsimonious-compls} and tractability proofs to obtain
dichotomies.  Our findings are summarized in the last two columns of Table
\ref{tab:dichos-count-intro} in the introduction.  We start in
Section~\ref{subsec:countcompls-non-uniform} with the non-uniform cases and
continue in Section~\ref{subsec:countcompls-uniform} with the uniform cases.
Again, we explicitly state when a \shp-hardness result holds even in the
restricted setting in which there is a fixed domain over which nulls are
interpreted.

\subsection{The complexity on the non-uniform case}
\label{subsec:countcompls-non-uniform}

	Here, we study the complexity of the problems $\countcompls(q)$ and
$\ccountcompls(q)$, providing dichotomy results in both cases.  In fact, it
turns out that these problems are~\shp-hard for all \sjfbcqs.  To prove this,
it is enough to show that the problem~$\ccountcompls(R(x))$ is hard, that is,
even counting the completions of a single unary table is \shp-hard in the
non-uniform setting. 

\begin{proposition}
\label{prp:countcompls}
	$\ccountcompls(R(x))$ is \shp-hard.
\end{proposition}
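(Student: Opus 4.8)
The plan is to reduce from a known $\#\mathrm{P}$-hard counting problem whose instances decompose into independent binary choices, because counting completions of a single unary relation $R$ where each null $\bot$ has its own domain $\dom(\bot)$ naturally exhibits such a product structure: two valuations give the same completion exactly when they induce the same \emph{set} of constants appearing in $R$. A clean source is the permanent / counting perfect matchings of a bipartite graph, or counting independent sets, but the most direct fit is to reduce from counting the number of set covers, or better yet from a problem where the answer can be read off as $\sum_k (\text{something}) \cdot (\text{number of size-}k\text{ subsets with a property})$. So concretely, I would reduce from $\#\mathrm{BIS}$ (counting independent sets of a bipartite graph) or from counting the antichains / downsets of a poset — but the cleanest is probably to reduce from the problem ``$\#$Positive-2DNF'' / network reliability, i.e.\ from a $\#\mathrm{P}$-hard problem expressible as counting the satisfying assignments that make \emph{some} union of ``blocks'' nonempty.

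First I would set up the combinatorial identity: if $D$ consists of facts $R(\bot_1),\dots,R(\bot_N)$ with $\dom(\bot_i)\subseteq \consts$, then a completion is determined precisely by which \emph{subset} $C\subseteq\bigcup_i\dom(\bot_i)$ of constants is realized, and a subset $C$ is realizable iff there is a system of distinct-ish representatives — more precisely iff $C$ can be written as $\{\nu(\bot_i): i\in[N]\}$ for some choice $\nu(\bot_i)\in\dom(\bot_i)$, which by a Hall-type condition holds iff $|C|\le N$ and $C$ is coverable by the $\dom(\bot_i)$'s with each element of $C$ hit. Counting realizable subsets of a given family of sets is exactly a known $\#\mathrm{P}$-hard problem (it is essentially counting the distinct ``spans'' of a hypergraph, equivalently related to counting surjections / set covers). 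I would then pick a concrete hard instance class: take a graph $G=(V,E)$ and build, for each edge $e=\{u,v\}$, a null $\bot_e$ with $\dom(\bot_e)=\{u,v\}$; then realizable subsets $C\subseteq V$ are exactly the vertex covers of $G$ that are ``edge-coverable'', and counting these relates to counting minimal vertex covers / independent sets. If this exact correspondence is not $\#\mathrm{P}$-hard on the nose, I would instead weight it: add extra nulls with singleton domains to force certain constants in, and add ``private'' constants to each $\dom(\bot_e)$ so that the count becomes $\sum_{S} 2^{(\text{something linear in }|S|)}$ over independent sets $S$, from which $\#\mathrm{IS}(G)$ (known $\#\mathrm{P}$-complete) can be recovered by interpolation using several oracle calls — this is the same Vandermonde/interpolation trick used elsewhere in the paper (e.g.\ Propositions~\ref{prp:RxSxyTy-hard-codd} and~\ref{prp:Rxy-Sxy-hard-codd}).

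The key steps, in order: (1) characterize completions of a unary $R$-table as realizable constant-subsets, and give the Hall-type realizability criterion; (2) choose the source problem — I expect counting independent sets $\sIS$ or $\#\mathrm{BIS}$ — and describe the gadget: one null per edge with a two-element domain plus padding constants/nulls to control the multiplicity with which each subset is counted; (3) show the number of completions equals $\sum_{S\text{ independent}} (\text{polynomial in }|S|\text{ and other parameters})$, hence set up a system of $\#\mathrm{P}$-oracle-call equations whose matrix is Vandermonde (or triangular) and therefore invertible; (4) invert to recover the per-size counts $Z_k$ of independent sets of size $k$, and sum to get $\sIS(G)$, giving a Turing reduction. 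Since Codd tables are exactly the case where every null occurs once, and every null in the construction occurs once, the construction is automatically a Codd table, which is what the proposition demands.

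The main obstacle I anticipate is engineering the multiplicities correctly: the raw count of realizable subsets may not itself be $\#\mathrm{P}$-hard (or may be awkward to relate to a clean hard problem), so the real work is choosing the padding — how many singleton-domain forcing nulls, and how large an extra private alphabet to attach to each edge-domain — so that each independent set $S$ of $G$ contributes a factor depending \emph{only} on $|S|$ (and on global parameters like $|V|$ and the number of padding elements), yielding a genuinely invertible interpolation system rather than a degenerate one. A secondary subtlety is verifying the Hall/coverability condition does not spuriously exclude or include subsets once padding is added; I would handle this by giving every edge-null enough private constants that its domain can always ``absorb'' a value, so realizability reduces cleanly to the vertex-cover/independent-set condition on the original part.
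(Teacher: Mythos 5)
Your core gadget --- one null $\bot_e$ with $\dom(\bot_e)=\{u,v\}$ per edge $e=\{u,v\}$, so that the realizable constant-sets are forced to be vertex covers --- is exactly the paper's, and your choice of source problem ($\sVC$, equivalently $\sIS$ by complementation) is also the paper's. But you stop short at the decisive step, and the interpolation machinery you sketch in its place is both unverified and unnecessary. The difficulty you correctly flag is real: with edge-nulls alone not every vertex cover is realizable (a cover $C$ arises as a completion only if the edge-nulls can surjectively witness all of $C$; e.g.\ a single edge can never realize both its endpoints), and one would otherwise have to control multiplicities. The paper closes this with two lines of padding rather than a Vandermonde system: for every node $u\in V$ add a null $\bot_u$ with $\dom(\bot_u)=\{u,a\}$ for a single fresh constant $a$, and add the ground fact $R(a)$. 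Then every vertex cover $C$ is realizable (route each edge-null to an endpoint in $C$, and use $\bot_u$ to inject $u$ exactly when $u\in C$, sending it to $a$ otherwise), the fact $R(a)$ guarantees that $a$ appears in every completion and hence never distinguishes two completions, and conversely the constant-set of any completion, minus $\{a\}$, is a vertex cover. Completions are thus in bijection with vertex covers and the reduction is \emph{parsimonious}: $\sVC \pr \ccountcompls(R(x))$.

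This is more than a matter of elegance. The paper reuses the parsimony of this exact reduction in Section~\ref{sec:approx-countcompls} (via $\sIS(G)=\sVC(G)$) to conclude that $\ccountcompls(R(x))$ admits no FPRAS unless $\np=\rp$; a Turing reduction through an oracle-interpolation system, as in your steps (3)--(4), would not transfer inapproximability. So before reaching for the interpolation trick you cite from Propositions~\ref{prp:RxSxyTy-hard-codd} and~\ref{prp:Rxy-Sxy-hard-codd}, it is worth checking whether a one-constant absorber plus per-node selector nulls already yields an exact bijection --- here it does.
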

\begin{proof}
We provide a polynomial-time parsimonious reduction from the problem of
counting the vertex covers of a graph, which we denote by \sVC.  Let~$G=(V,E)$
be a graph. We construct a Codd table~$D$ using a single unary relation~$R$
such that the number of completions of~$D$ equals the number of vertex covers
of~$G$.  For every edge~$e=\{u,v\}$ of~$G$, we have one null~$\bot_e$ with
$\dom(\bot_e) = \{u,v\}$ and the fact~$R(\bot_e)$.  Let~$a$ be a fresh
constant.  For every node~$u\in V$ we have a null~$\bot_u$ with~$\dom(\bot_u) =
\{u,a\}$ and the fact~$R(\bot_u)$.  Last, we add the fact~$R(a)$.  We now show
that the number of completions of~$D$ equals the number of vertex covers
of~$G$. 
	
	Let~$\VC(G)$ be the set of vertex covers of~$G$.  For a valuation~$\nu$
of~$D$, define the set~$S_\nu := \{u \in V \mid R(u) \in D\}$.  Since the
fact~$R(a)$ is in every completion of~$D$, it is clear that the number of
completions of~$D$ is equal to~$|\{S_\nu \mid \nu \text{ is a valuation of }
D\}|$.  We claim that~$\VC(G)=\{S_\nu \mid \nu \text{ is a valuation of } D\}$,
which shows that the reduction works. ($\subseteq$) Let~$C \in \VC(G)$, and let
us show that there exists a valuation~$\nu$ of~$D$ such that~$S_\nu = C$.  For
a null of the form~$\bot_e$ with~$e=\{u,v\}\in E$, assuming wlog that~$u \in
C$, we define~$\nu(\bot_e)$ to be~$u$.  For a null of the form~$\bot_u$
with~$u\in V$, we define~$\nu(\bot_u)$ to be~$u$ if~$u\in C$ and~$a$ otherwise.
It is then clear that~$S_\nu = C$. ($\supseteq$) Let~$\nu$ be a valuation
of~$D$, and let us show that~$S_\nu$ is a vertex cover.  Assume by
contradiction that there is an edge~$e=\{u,v\}$ such that~$e \cap S_\nu =
\emptyset$.  By definition of~$D$, we must have~$\nu(\bot_e) \in \{u,v\}$, so
that one of~$u$ or~$v$ must be in~$S_\nu$, hence a contradiction. Therefore, we
conclude that $\sVC \pr \ccountcompls(R(x))$.
\end{proof}

Recall from Section~\ref{sec:preliminaries} that, to avoid trivialities, we
assume all \sjfbcqs\ to contain at least one atom and that all atoms have at
least one variable.  Using Lemma~\ref{lem:pattern-parsimonious-compls}, this allows us
to obtain the following dichotomy result.

\begin{theorem}[Dichotomy]
\label{thm:countcompls-sjfcq-hard} 	
For every \sjfbcq~$q$, it holds that $\countcompls(q)$ and $\ccountcompls(q)$
are~\shp-hard. 	
\end{theorem}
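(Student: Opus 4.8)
The plan is to derive Theorem~\ref{thm:countcompls-sjfcq-hard} directly from Proposition~\ref{prp:countcompls} together with Lemma~\ref{lem:pattern-parsimonious-compls}. The key observation is that the single-atom query $R(x)$ (a unary atom with one variable) is a pattern of \emph{every} \sjfbcq~$q$. Indeed, fix any atom $R_1(\bar x_1)$ of $q$; by our standing assumption (Footnote~\ref{foot:trivial}) this atom contains at least one variable, say $y$ occurring at some position. Starting from $q$, we delete all atoms other than $R_1(\bar x_1)$, then delete every variable occurrence in $\bar x_1$ except for one occurrence of $y$, then rename $R_1$ to a fresh relation name $R$ and $y$ to a fresh variable $x$. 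By Definition~\ref{def:pattern}, what we obtain, namely $R(x)$, is a pattern of $q$.

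Given this, the argument is short. First I would invoke Proposition~\ref{prp:countcompls}, which states that $\ccountcompls(R(x))$ is $\shp$-hard; note that the reduction there produces a Codd table, so the hardness holds for Codd tables and hence \emph{a fortiori} for naive tables (every Codd table is a naive table), giving also that $\countcompls(R(x))$ is $\shp$-hard. Then, applying Lemma~\ref{lem:pattern-parsimonious-compls} to the pattern $R(x)$ of $q$, we get $\countcompls(R(x)) \pr \countcompls(q)$ and $\ccountcompls(R(x)) \pr \ccountcompls(q)$. Since parsimonious reductions are in particular polynomial-time Turing reductions, $\shp$-hardness of $\countcompls(R(x))$ (resp.\ $\ccountcompls(R(x))$) transfers to $\countcompls(q)$ (resp.\ $\ccountcompls(q)$). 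This establishes the theorem for both the naive and the Codd cases simultaneously.

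There is essentially no real obstacle here: the content of the theorem is entirely absorbed into Proposition~\ref{prp:countcompls}, whose proof does the genuine work via a parsimonious reduction from $\sVC$ (counting vertex covers). The only point that deserves a sentence of care is the verification that $R(x)$ is indeed a pattern of an \emph{arbitrary} \sjfbcq, which relies crucially on the convention that every atom of an \sjfbcq\ has arity at least one and contains at least one variable --- otherwise a query built solely from nullary or constant-only atoms would have no unary-atom pattern. With that convention in hand the proof is a two-line corollary.

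\begin{proof}
By Proposition~\ref{prp:countcompls}, $\ccountcompls(R(x))$ is $\shp$-hard; since every Codd table is in particular a naive table, $\countcompls(R(x))$ is $\shp$-hard as well. Now let $q$ be an arbitrary \sjfbcq. By our standing assumptions, $q$ has at least one atom and every atom contains at least one variable. Pick any atom $R_1(\bar x_1)$ of $q$ and any variable occurrence in $\bar x_1$. From $q$, delete all other atoms, then delete all other variable occurrences of $\bar x_1$, and finally rename $R_1$ and the remaining variable to fresh symbols $R$ and $x$; by Definition~\ref{def:pattern}, this shows that $R(x)$ is a pattern of $q$. Applying Lemma~\ref{lem:pattern-parsimonious-compls} yields $\countcompls(R(x)) \pr \countcompls(q)$ and $\ccountcompls(R(x)) \pr \ccountcompls(q)$. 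As parsimonious reductions are polynomial-time Turing reductions, it follows that $\countcompls(q)$ and $\ccountcompls(q)$ are both $\shp$-hard.
\end{proof}
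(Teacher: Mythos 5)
Your proof is correct and is essentially identical to the paper's argument: the paper also obtains this theorem by combining the $\shp$-hardness of $\ccountcompls(R(x))$ from Proposition~\ref{prp:countcompls} with Lemma~\ref{lem:pattern-parsimonious-compls}, using the standing convention that every \sjfbcq\ has at least one atom with at least one variable so that $R(x)$ is a pattern of every such query. Your explicit verification that $R(x)$ is a pattern of an arbitrary \sjfbcq\ is exactly the step the paper leaves implicit.
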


Notice here that we do not claim membership in \shp; in fact, we will come back
to this issue in Section~\ref{sec:misc} to show that this is unlikely to be
true for naive tables.  However, we can still show that membership in \shp\
holds for Codd tables.  We then obtain:

\begin{theorem}[Dichotomy]
	\label{thm:countcompls-sjfcqs-complete-codd}
For every \sjfbcq~$q$, the problem $\ccountcompls(q)$ is \shp-complete.
\end{theorem}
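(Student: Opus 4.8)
The plan is to establish the two halves of \shp-completeness separately: hardness is already in hand, so the real work is membership in \shp. Hardness of $\ccountcompls(q)$ for every \sjfbcq~$q$ follows directly from Proposition~\ref{prp:countcompls} (which gives \shp-hardness of $\ccountcompls(R(x))$), together with Lemma~\ref{lem:pattern-parsimonious-compls}: every \sjfbcq~$q$ has $R(x)$ as a pattern, since $q$ has at least one atom which in turn has at least one variable occurrence, and we may delete all other atoms, all other variable occurrences in the chosen atom, and rename the relation and variable to obtain $R(x)$. Hence $\ccountcompls(R(x)) \pr \ccountcompls(q)$, so $\ccountcompls(q)$ is \shp-hard. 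This is exactly Theorem~\ref{thm:countcompls-sjfcq-hard} restricted to Codd tables.

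For membership, the obstacle is that the naive-style certificate ``guess a completion, check $q$'' does not work, because a completion is a set of facts and the same completion can arise from exponentially many valuations, so counting accepting paths of such a machine would overcount. The key insight exploited here is that for a \emph{Codd} table each null occurs exactly once, so there is a canonical way to pick, among all valuations producing a given completion, a unique representative. First I would fix, for the input Codd table $D=(T,\dom)$, a linear order on $\consts$ (say induced by the encoding). Given a valuation $\nu$, call it \emph{canonical} if for every null $\bot$ of $D$, the value $\nu(\bot)$ cannot be replaced by a strictly smaller element of $\dom(\bot)$ without changing the completion $\nu(D)$; equivalently, $\nu(\bot)$ is the minimum element $c \in \dom(\bot)$ such that replacing $\bot$'s value by $c$ still yields the same set of facts. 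Because $D$ is Codd, the fact containing $\bot$ is $R(\ldots,\bot,\ldots)$ with all other positions being constants or \emph{other} nulls, so whether the resulting fact coincides with an already-present fact depends only on $\bot$'s value and on the (already fixed) values of finitely many other cells — this canonicity condition is checkable in polynomial time for a given $\nu$. Crucially, every completion $\nu(D)$ is produced by exactly one canonical valuation: a valuation is canonical iff each null is assigned the lexicographically least value consistent with the completion it produces, and this least value is uniquely determined by the completion.

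With that in place, the \shp\ machine $M$ on input $D$ works as follows: it guesses a valuation $\nu$ of $D$ (writing down $\nu(\bot)\in\dom(\bot)$ for each null, which is polynomial in the input since each $\dom(\bot)$ is given explicitly), then it checks in polynomial time (i)~that $\nu$ is canonical, and (ii)~that $\nu(D)\models q$, which is decidable in \ptime\ because $q$ is a fixed FO query and model checking for $q$ is in \ptime. If both checks pass, $M$ accepts, otherwise it rejects. The number of accepting paths of $M$ is precisely the number of canonical valuations $\nu$ with $\nu(D)\models q$, which by the uniqueness property equals the number of distinct completions $\nu(D)$ satisfying $q$, i.e.\ $\ccountcompls(q)(D)$. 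Hence $\ccountcompls(q)\in\shp$, and combined with hardness we conclude $\ccountcompls(q)$ is \shp-complete for every \sjfbcq~$q$. I expect the only delicate point to be the precise formulation of the canonicity test and the argument that it is polynomial-time and selects a unique representative per completion; the Codd restriction (each null occurring once) is exactly what makes this local and tractable, which is why the analogous claim fails for naive tables.
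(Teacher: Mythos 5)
Your hardness half is correct and identical to the paper's. The membership half, however, has a genuine gap: your canonicity condition does not select a unique representative valuation per completion. Local minimality (``no single null's value can be decreased without changing $\nu(D)$'') is not the same as global uniqueness, because two nulls can swap values and leave both valuations locally stuck. Concretely, take the Codd table $D=\{R(\bot_1),R(\bot_2)\}$ with $\dom(\bot_1)=\dom(\bot_2)=\{a,b\}$ and $a<b$. The valuation $\nu=(a,b)$ is canonical by your test, but so is $\nu'=(b,a)$: lowering $\bot_1$ from $b$ to $a$ collapses the completion to $\{R(a)\}\neq\{R(a),R(b)\}$, so $\bot_1$ cannot be decreased, and $\bot_2$ is already at the minimum. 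Both canonical valuations yield the completion $\{R(a),R(b)\}$, so for $q=R(x)$ your machine has $4$ accepting paths on this input while $\ccountcompls(q)(D)=3$. To repair this you would need a genuinely global canonical form (e.g., the lexicographically least valuation, over a fixed order on the nulls, among \emph{all} valuations producing the same completion), and verifying that property is no longer the local check you describe; it amounts to deciding whether a partially constrained valuation can still realize the given set of facts.

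The paper sidesteps this by reversing the direction of the guess: the nondeterministic machine guesses the \emph{completion} itself, as a subset $S$ of the polynomial-size set of ground facts obtainable from the facts of $D$, checks $S\models q$, and then decides in polynomial time whether $S$ is realizable as $\nu(D)$ for some valuation $\nu$. That realizability test (Lemma~\ref{lem:matchings}) is where the Codd assumption is really used: one builds a bipartite graph between the facts of $D$ and the ground facts of $S$ and checks that a maximum matching saturates $S$. This is exactly the global consistency check that your local canonicity test fails to capture; interestingly, the same matching subroutine is what you would need anyway to verify a correct (lexicographic) canonical form, so the two routes converge on the same combinatorial core.
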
 
\begin{proof}
	Hardness is from Theorem~\ref{thm:countcompls-sjfcq-hard}. To show
membership in \shp\, we will actually prove a more general result in
Section~\ref{subsec:countcompls-codd-sharp-p}. There, we show that for every
Boolean query~$q$ such that~$q$ has model checking in \ptime\, the
problem~$\ccountcompls(q)$ is in \shp. This in particular applies to all
\sjfbcqs.
\end{proof}

\subsection{The complexity on the uniform case}
\label{subsec:countcompls-uniform}
We now investigate the complexity of~$\ucountcompls(q)$ and
$\cucountcompls(q)$.  Recall that in the non-uniform case, even counting the
completions of a single unary table is a \shp-hard problem. This no longer
holds in the uniform case, as we will show that $\ucountcompls(q)$ is in \fp\
for every \sjfbcq\ that is defined over a schema consisting exclusively of
unary relation symbols.

Such a positive result, however, cannot be extended much further.  In fact, we
show next that~$R(x,x)$ and~$R(x,y)$ are hard patterns, both for naive and Codd
tables (and, thus, we also conclude that the problem of counting the
completions of a single binary Codd table is a \shp-hard problem).  We start
with the case of naive tables, for which hardness even holds when nulls are
interpreted over the fixed domain~$\{0,1\}$.

\begin{proposition}
\label{prp:Rxx-Rxy-hard-compls-naive}
The problems $\ucountcompls(R(x,x))$ and $\ucountcompls(R(x,y))$ are both \shp-hard, even
	when nulls are interpreted over the same fixed domain~$\{0,1\}$.
\end{proposition}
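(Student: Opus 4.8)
The plan is to reduce from a \shp-hard counting problem whose solutions are naturally in bijection with the completions of a uniform incomplete database over a single binary relation with domain~$\{0,1\}$. A natural candidate is counting the independent sets of a graph (\sIS), or equivalently counting vertex covers (\sVC), both of which are \shp-hard~\cite{provan1983complexity}; another option is counting antichains or downsets. The subtlety, compared with the earlier valuation-counting reductions (e.g.\ Proposition~\ref{prp:Rxx-hard} or~\ref{prp:RxSxyTy-RxySxy-hard}), is that here we count \emph{completions}, i.e.\ distinct databases~$\nu(D)$, not valuations~$\nu$. So the reduction must ensure that distinct objects we want to count give rise to distinct completions, i.e.\ that the map from the combinatorial objects to completions is injective (and hits exactly the completions we want).

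First I would handle~$\ucountcompls(R(x,x))$. Given a graph~$G=(V,E)$, introduce one null~$\bot_v$ for each~$v\in V$, with uniform domain~$\{0,1\}$, and add the facts~$R(\bot_v,\bot_v)$ for all~$v$; additionally, for each edge~$\{u,v\}\in E$ add a ground ``gadget'' fact forcing~$\bot_u$ and~$\bot_v$ not to be simultaneously~$1$ — but since the query atom is~$R(x,x)$, the only facts that can witness~$q$ are of the form~$R(c,c)$, so I instead want the edges to directly constrain which subsets~$\{v : \nu(\bot_v)=1\}$ are ``allowed''. The cleanest route is: the facts~$R(\bot_v,\bot_v)$ mean that the completion records, for each~$v$, the pair~$(\nu(\bot_v),\nu(\bot_v))$; to make the completion remember the \emph{set} $S_\nu = \{v : \nu(\bot_v)=1\}$ faithfully, I add for every~$v$ a fact~$R(v,\bot_v)$ or similar tagging fact with a fresh constant name~$v$ (note constants, not nulls, are fine in the database). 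Then the completion contains~$R(v,1)$ iff~$v\in S_\nu$, so two valuations yield the same completion iff they induce the same set~$S_\nu$, giving injectivity on the level of sets. Finally, to count only the ``good'' sets (say, independent sets), I add, for each edge~$\{u,v\}$, ground facts that unconditionally make~$q$ false unless... — actually the simplest is to count completions \emph{not} satisfying~$q$: a completion satisfies~$q=\exists x\,R(x,x)$ iff some~$R(1,1)$ or~$R(0,0)$ appears; so I must be careful that~$R(0,0)$ never appears spuriously. To avoid this I use domain~$\{0,1\}$ but have the diagonal facts be~$R(\bot_v,\bot_v)$ only for~$v$ in a set I control, and route the edge constraints through auxiliary binary facts whose projections cannot produce a diagonal element; since the only way~$q$ holds is via a repeated value, I arrange that the ``bad'' event (some vertex set that is not independent) corresponds exactly to producing an~$R(1,1)$ fact. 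Concretely: add~$R(\bot_u,\bot_v)$ and~$R(\bot_v,\bot_u)$ for each edge — then~$q$ holds iff for some edge both endpoints get~$1$, i.e.\ iff~$S_\nu$ is \emph{not} an independent set; the tagging facts~$R(v,\bot_v)$ (with~$v$ a fresh constant distinct from~$0,1$) guarantee that the completion determines~$S_\nu$, and since~$v\notin\{0,1\}$ these tagging facts never create a diagonal fact. Hence the number of completions \emph{not} satisfying~$q$ equals the number of independent sets of~$G$ (the tagging facts make the correspondence between such completions and independent sets a bijection), and since the total number of completions is easily computed — or even simpler, compute it as a polynomial-time-recoverable quantity — this is a Turing reduction from \sIS.

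For~$\ucountcompls(R(x,y))$ the plan is analogous but easier: here \emph{every} completion satisfies~$q=\exists x\exists y\,R(x,y)$ as soon as~$R$ is nonempty, so I cannot count ``satisfying'' completions directly in a useful way; instead I again reduce from \sIS\ by making the number of \emph{distinct completions} itself encode the count. Using domain~$\{0,1\}$, for each vertex~$v$ put a null~$\bot_v$ and the tagging fact~$R(v,\bot_v)$ (with~$v$ a fresh constant), so the completion records~$S_\nu$; for each edge~$\{u,v\}$ add a fact, say~$R(\bot_u,\bot_v)$, and observe that whether this particular fact is~$R(0,0),R(0,1),R(1,0),R(1,1)$ is determined by~$\nu$ — the point is that the completion, restricted to facts over constants~$0,1$ only, together with the tagging facts, is in bijection with~$S_\nu$ provided no collisions occur. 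Actually the cleanest formulation: make the completion literally equal to (tagging part encoding~$S_\nu$) $\cup$ (a part that is a function of~$S_\nu$ only), so that distinct~$S_\nu$ give distinct completions and conversely — then~$\ucountcompls(R(x,y))(D)$ is exactly~$2^{|V|}$, which is not what we want, so I must instead restrict the allowed valuations, which I do by adding ground facts incompatible with certain choices — but ground facts are always present, so they cannot ``kill'' valuations. The resolution is to reduce from \sIS\ via the standard trick of counting \emph{surjectivity-adjusted} completions: I think the actual intended reduction counts, for~$R(x,y)$, the completions of a database where the edges of~$G$ are encoded so that a set is an independent set iff it is ``closed'' under a relation recorded in the completion. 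I would model this after the vertex-cover reduction in Proposition~\ref{prp:countcompls}: for each edge~$e=\{u,v\}$ a null~$\bot_e$, plus structure forcing the completion to determine a choice at each edge, with the domain~$\{0,1\}$ now standing for the two endpoints encoded via position; distinct vertex covers yield distinct completions because the completion records, for each edge, which endpoint was chosen.

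\textbf{Main obstacle.} The crux is precisely the injectivity/surjectivity bookkeeping for completions under a \emph{fixed two-element} domain: with only~$\{0,1\}$ available there is very little room to ``tag'' nulls so that the resulting complete database remembers the combinatorial object, and one must be careful that the tagging and gadget facts never accidentally coincide (producing fewer completions than objects) nor accidentally make~$q=\exists x\,R(x,x)$ true (so that the count of non-satisfying completions is exactly the desired one). I expect the bulk of the proof to be verifying the two inclusions — every independent set (resp.\ vertex cover) arises from some valuation whose completion is distinct, and every completion of the required kind comes from exactly one such set — and checking that the ``easy'' quantities (total number of completions, or the correction factors) are polynomial-time computable so that we get a genuine \shp-hardness reduction (Turing, as allowed by the paper's convention). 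The fixed-domain claim then follows because all domains in the construction are exactly~$\{0,1\}$.
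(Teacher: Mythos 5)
You have correctly identified the two obstacles (the spurious $R(0,0)$ facts for $R(x,x)$, and the fact that every nonempty completion satisfies $R(x,y)$), but the constructions you propose do not overcome them, and the reduction strategy itself has a flaw. First, for $R(x,x)$: with edge facts $R(\bot_u,\bot_v)$, $R(\bot_v,\bot_u)$ over domain $\{0,1\}$, an edge whose two endpoints are both mapped to $0$ produces the diagonal fact $R(0,0)$, so a completion fails to satisfy $\exists x\, R(x,x)$ exactly when the valuation is a proper $2$-coloring of $G$ --- not when $S_\nu$ is an independent set. (The empty set is independent, yet its valuation creates $R(0,0)$ for every edge.) Second, your plan is to count the completions \emph{not} satisfying $q$ and subtract from the total number of completions, which you assert is ``easily computed''; it is not. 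Computing the total number of completions of a naive table is itself \shp-hard (indeed that is essentially what this proposition and Proposition~\ref{prp:Rxx-Rxy-hard-compls-codd} establish), so unlike the valuation-counting reductions there is no cheap complementation step available. Third, for $R(x,y)$ you never arrive at a concrete construction: the vertex-cover gadget of Proposition~\ref{prp:countcompls} relies on per-null domains $\dom(\bot_e)=\{u,v\}$ and is unavailable in the uniform fixed-domain setting.

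The paper's proof resolves all three points with one construction. Besides your tagging facts $R(u,\bot_u)$ and edge facts $R(\bot_u,\bot_v)$, $R(\bot_v,\bot_u)$, it adds the ground facts $R(0,0)$, $R(0,1)$, $R(1,0)$ and one extra fact $R(\bot,\bot)$ for a fresh null $\bot$. The fact $R(\bot,\bot)$ forces every completion to contain a diagonal fact, so \emph{every} completion satisfies both $R(x,x)$ and $R(x,y)$, and the quantity to be computed is simply the total number of completions --- no subtraction needed. The three ground facts absorb every edge fact except a possible $R(1,1)$, so a completion is determined by the tagging facts together with the presence or absence of $R(1,1)$. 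Partitioning on that bit, there are exactly $2^{|V|}$ completions containing $R(1,1)$ (one per assignment to the $\bot_u$, obtained by sending $\bot$ to $1$), and the completions without $R(1,1)$ are in bijection with the independent sets of $G$ (one must send $\bot$ to $0$ and avoid any edge with both endpoints at $1$). Hence the total is $2^{|V|}+\#\IS(G)$, giving a Turing reduction from $\sIS$ that works simultaneously for both queries over the fixed domain $\{0,1\}$. This ``saturate everything except one distinguished fact'' device is the missing idea in your proposal.
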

\begin{proof}
We reduce from $\sIS$, the problem of counting the number of independent sets
of a graph.  Let~$G=(V,E)$ be a graph. We will construct an incomplete
database~$D$ containing a single binary predicate~$R$ such that each completion
of~$D$ satisfies~$R(x,x)$ and the number of completions of~$D$ is~$2^{|V|} +
\#\IS(G)$, thus establishing hardness for the two queries.  For every node~$u
\in V$, we have a null~$\bot_u$ with~$\dom(\bot_u) = \{0,1\}$. We then
construct the naive table~$D$ as follows: \begin{itemize} \item for every
node~$u \in V$ we add to $D$ the fact~$R(u,\bot_u)$; \item then for every
edge~$\{u,v\} \in E$, we add the facts~$R(\bot_u,\bot_v)$
and~$R(\bot_v,\bot_u)$ to $D$; and \item last, we add the facts~$R(0,0)$,
$R(0,1)$, $R(1,0)$, and~$R(\bot,\bot)$, where~$\bot$ is a fresh null.
\end{itemize} It is clear that every completion of~$D$ satisfies~$R(x,x)$
(thanks to the fact~$R(\bot,\bot)$).  Let us now count the number of
completions of~$D$.  First, we observe that, thanks to the facts of the form
$R(u,\bot_u)$, for $u \in V$, for every two valuations~$\nu,\nu'$ that do not
assign the same value to the nulls of the form~$\bot_u$, it is the case that
$\nu(D) \neq \nu(D')$.  We then partition the completions of~$D$ into those
that contain the fact~$R(1,1)$, and those that do not contain~$R(1,1)$.
Because of the facts of the form $R(u,\bot_u)$, for $u \in V$, and thanks to
the fact~$R(\bot,\bot)$ which becomes~$R(1,1)$ when we assign~$1$ to~$\bot$,
there are exactly~$2^{|V|}$ completions of~$D$ that contain~$R(1,1)$.
Moreover, it is easy to see that there are~$\#\IS(G)$ valuations~$\nu$ of~$D$
that assign~$0$ to~$\bot$ and that yield a completion not containing~$R(1,1)$.
Indeed, one can check that a valuation of~$D$ that assigns~$0$ to~$\bot$ yields
a completion not containing~$R(1,1)$ if and only if the set~$\{u \in V \mid
\nu(\bot_u)=1 \}$ is an independent set of~$G$. Therefore, we conclude that the
number of completions of~$D$ is indeed~$2^{|V|} + \#\IS(G)$, and therefore
that~$\sIS \tr \ucountcompls(q)$, where~$q$ can be~$R(x,x)$ or~$R(x,y)$.
\end{proof} 

Next, we prove hardness of the same queries for Codd tables (but in this case
we do not know if harness holds when nulls are interpreted over a fixed domain,
as our proof will use domains of unbounded size).
We will reduce from the problem of counting the number of \emph{induced pseudoforests} of a graph, as defined next.

\begin{definition}
\label{def:pseudoforest}
	 A graph~$G$ is a \emph{pseudoforest} if every connected component
of~$G$ contains at most one cycle.  Let~$G=(V,E)$ be a graph. For~$S \subseteq
E$, let us denote by~$G[S]$ the graph~$(V',S)$, where~$V'$ is the set of nodes
of~$G$ that appear in some edge of~$S$. The problem~$\shpf$ is the problem that
takes as input a graph~$G=(V,E)$ and outputs the number of edge sets~$S
\subseteq E$ such that~$G[S]$ is a pseudoforest.
\end{definition}

	Using techniques from matroid theory, the authors
of~\cite{gimenez2006complexity} have shown that~$\shpf$ is \shp-hard on graphs.
We explain in Appendix~\ref{apx:shpf-hard-bipartite} how their proof actually
shows hardness of this problem for \emph{bipartite graphs} (which we need);
formally, we have:

\begin{toappendix}
	\subsection{Proof for Proposition~\ref{prp:shpf-hard-bipartite}}
	\label{apx:shpf-hard-bipartite}
	In this section we explain how to obtain the following hardness result. 
\end{toappendix}

\begin{propositionrep}[Implied by {\cite{gimenez2006complexity}}]
\label{prp:shpf-hard-bipartite}
	The problem~$\shpf$ restricted to bipartite graphs is \shp-hard.
\end{propositionrep}

\begin{toappendix}
	This result is proven for (non-necessarily bipartite) graphs
in~\cite{gimenez2006complexity} using techniques from matroid theory, in
particular using the notions of \emph{bicircular matroid} of a graph and of
\emph{Tutte polynomial} of a matroid.  We did not find a way to show that the
result holds on bipartite graphs without explaining their proof for general
graphs, and we did not find a way to explain the proof for general graphs
without introducing these concepts. Therefore, we need to define these concepts
here.  We have tried to keep this exposition as brief as possible, but more
detailed introductions to matroid theory and to the Tutte polynomial can be
found in~\cite{oxley2003matroid,welsh1999tutte}.  First, we define what is a
matroid.
	\begin{definition}
		\label{def:matroid}
		A matroid~$M=(E,\I)$ is a pair where~$E$ is a finite set (called the \emph{ground set}) and~$\I$ is a set of subsets of~$E$ whose elements are called \emph{independent sets} and that satisfies the following properties:
		\begin{description}
			\item[Non emptiness.] $\I \neq \emptyset$;
			\item[Heritage.] For every~$A' \subseteq A \subseteq E$, if~$A \in \I$ then~$A' \in \I$;
			\item[Independent set exchange.] For every~$A,B \in \I$, if~$|A| > |B|$ then there exists~$x\in A \setminus B$ such that~$B\cup \{x\} \in \I$.
		\end{description}
	\end{definition}
	
	In a matroid~$M=(E,\I)$, an independent set~$A \in \I$ is called a
	\emph{basis} if every strict superset~$A \subsetneq A' \subseteq E$ is
	not in~$\I$. Notice that, thanks to the independent set exchange
	property, all bases of~$M$ have the same number of elements. The \emph{rank} of~$M$ is defined as the number of elements in any basis of $M$. Given a matroid~$M=(E,\I)$
	and~$A \subseteq E$, we can define the \emph{submatroid} of~$M$ generated
	by~$A$ to be~$M_A = (A,\I')$, where for~$A' \subseteq A$ we
	have~$A' \in \I'$ iff~$A' \in \I$ (one should check that this is indeed
	a matroid).  The \emph{rank function}~$\rk_M:\{A \mid A \subseteq E\} \to \N$
	of~$M$ is then defined with~$\rk_M(A)$ being the rank of the
	matroid~$M_A$. We will now omit the subscript in~$\rk_M$ as this will not cause confusion.
	We are ready to define the Tutte polynomial of a matroid.

	\begin{definition}
		\label{def:tutte}
		Let~$M=(E,\I)$ be a matroid. The \emph{Tutte polynomial} of~$M$, denoted~$\T(M;x,y)$, is the two-variables polynomial defined by
		\begin{eqnarray*}
		\label{eq:tutte}
			\T(M;x,y) & = & \sum_{A \subseteq E} (x-1)^{\rk(M)-\rk(A)} (y-1)^{|A|-\rk(A)}
		\end{eqnarray*}
	\end{definition}

	We will use the following observation:
	\begin{observation}
		\label{obs:2-1}
		Let~$M=(E,\I)$ be a matroid. Then~$\T(M;2,1) = |\I|$, i.e., evaluating the Tutte polynomial of a matroid at point~$(2,1)$ simply counts its number of independent sets.
	\end{observation}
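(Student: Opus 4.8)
\textbf{Proof plan for Observation~\ref{obs:2-1}.}
The plan is to simply substitute $(x,y) = (2,1)$ into the defining sum of~$\T(M;x,y)$ and see which terms survive. Plugging in, each summand becomes $(2-1)^{\rk(M)-\rk(A)} (1-1)^{|A|-\rk(A)} = 1^{\,\rk(M)-\rk(A)} \cdot 0^{\,|A|-\rk(A)}$. The first factor is always~$1$, so the whole term equals $0^{\,|A|-\rk(A)}$. Here I will use the standard convention $0^0 = 1$; this makes sense because for every $A \subseteq E$ we have $|A| - \rk(A) \geq 0$ (the rank of~$A$ is by definition the size of a largest independent subset of~$A$, hence at most~$|A|$), so the exponent is a nonnegative integer and the term equals~$1$ exactly when $|A| = \rk(A)$ and~$0$ otherwise.

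Therefore $\T(M;2,1)$ equals the number of subsets $A \subseteq E$ with $\rk(A) = |A|$. The remaining step is to show that this condition is equivalent to $A \in \I$. For the ``if'' direction, if $A \in \I$ then $A$ is itself an independent subset of~$A$ of size~$|A|$, so $\rk(A) = |A|$. For the ``only if'' direction, if $\rk(A) = |A|$ then some independent subset of~$A$ has size~$|A|$, but the only subset of~$A$ of size~$|A|$ is~$A$ itself, so $A \in \I$. Combining, $\T(M;2,1) = |\{A \subseteq E : A \in \I\}| = |\I|$, as claimed.

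There is really no obstacle here: the statement is essentially an unfolding of the definition. The only points that require a word of care are (i) noting that $|A| - \rk(A)$ is always a nonnegative integer so that the evaluation $0^{|A|-\rk(A)}$ is unambiguous once we fix $0^0 = 1$, and (ii) the trivial but needed equivalence $\rk(A) = |A| \iff A \in \I$, which follows directly from the definition of the rank function $\rk_M$ as the rank of the submatroid~$M_A$ together with the heritage axiom.
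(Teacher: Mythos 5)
Your proof is correct and follows essentially the same route as the paper's: substitute $(2,1)$ into the defining sum, use the convention $0^0=1$, and observe that $\rk(A)=|A|$ if and only if $A\in\I$. You simply spell out the last equivalence in slightly more detail than the paper does.
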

	\begin{proof}
		We have~$\T(M;2,1) =\sum_{A \subseteq E} 0^{|A|-\rk(A)}$. We recall the convention that~$0^0 = 1$, and the fact that~$0^k=0$ for~$k>0$.
		Observe then that we always have~$\rk(A) \leq |A|$, and that we have~$\rk(A) = |A|$ if and only if~$A\in \I$, which proves the claim.
	\end{proof}
	
	Next, we define what is called the \emph{bicircular matroid} of a graph~$G=(V,E)$.
	Recall from Definition~\ref{def:pseudoforest} the definition of the induced subgraph~$G[S]$ for~$S \subseteq E$.

	\begin{definition}
		\label{def:bicircular}
		Let~$G=(V,E)$ be a graph and~$\I = \{S\subseteq E \mid G[S] \text{ is a pseudoforest}\}$.
		Then one can check that~$(E,\I)$ is a matroid~\cite{zaslavsky1982bicircular}.
		This matroid is called the \emph{bicircular matroid of~$G$}, and is denoted by~$B(G)$.
	\end{definition}

	Notice then that the problem~$\shpf$ is exactly the same as the problem of computing, given as input a graph~$G$, the quantity~$\T(B(G);2,1)$.
	We now explain the steps used in~\cite{gimenez2006complexity} to prove that computing~$\T(B(G);2,1)$ is \shp-hard for graphs.
	The starting point of our explanation is that computing~$\T(B(G);1,1)$ is \shp-hard.

	\begin{proposition}[{\cite[Corollary 4.3]{gimenez2006complexity}}]
		\label{prp:1-1}
		The problem of computing, given a graph~$G$, the quantity~$\T(B(G);1,1)$ is \shp-hard.
	\end{proposition}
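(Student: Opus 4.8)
The plan is to give a polynomial-time Turing reduction from the problem of computing $\T(B(G);1,1)$ on arbitrary graphs, which is \shp-hard by Proposition~\ref{prp:1-1}, to $\shpf$ restricted to bipartite graphs. The engine of the reduction is the \emph{$k$-stretch}: given $G=(V,E)$ with $m \defeq |E|$, let $G^{(k)}$ be obtained by replacing every edge $e=\{u,v\}$ with a fresh path $u=w^e_0 - w^e_1 - \cdots - w^e_k = v$ of length $k$. The crucial point for us is that whenever $k$ is \emph{even}, $G^{(k)}$ is bipartite: internal path vertices have degree $2$, so every cycle of $G^{(k)}$ must traverse each path it meets entirely and hence projects onto a cycle of $G$; a cycle of length $\ell$ in $G$ thereby becomes one of length $k\ell$, which is even. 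Thus all graphs we feed to the oracle will be legal bipartite inputs.

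Next I would establish the key identity relating $\shpf(G^{(k)})=\T(B(G^{(k)});2,1)$ to the pseudoforests of $G$ itself. Given $S' \subseteq E(G^{(k)})$, call the path $P_e$ \emph{full} if all its $k$ edges lie in $S'$, and let $F \subseteq E$ be the set of full paths. Since a non-full path is missing at least one edge, its selected edges form acyclic sub-paths that attach to at most the endpoints of $P_e$ and never connect $u$ to $v$; they contribute only tree-like appendages and can neither create a cycle nor merge two components of the ``core''. Consequently $G^{(k)}[S']$ is a pseudoforest if and only if $G[F]$ is a pseudoforest, the cycles of the former being in bijection with those of the latter. Counting the completions for each fixed $F$ — one way for each full path, and $2^k-1$ ways (any proper subset of the $k$ edges) for each non-full path — yields
\[
  \shpf(G^{(k)}) \;=\; \sum_{\substack{F \subseteq E \\ G[F]\text{ a pseudoforest}}} (2^k-1)^{\,m-|F|} \;=\; \sum_{j=0}^{m} p_j\,(2^k-1)^{\,m-j},
\]
where $p_j$ denotes the number of $j$-edge pseudoforests of $G$.

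Reading the right-hand side as a polynomial $P(t)=\sum_j p_j\,t^{\,m-j}$ of degree $m$ in $t=2^k-1$, I would evaluate it at the $m+1$ distinct points obtained from $k=2,4,\dots,2(m+1)$ by querying the bipartite oracle on the graphs $G^{(k)}$ (each with $mk = O(m^2)$ edges, hence of polynomial size), and then recover all coefficients $p_j$ by solving the resulting Vandermonde system in polynomial time. Finally I would read off the target value: since $\T(M;1,1)=\sum_{A\subseteq E}0^{\rk(M)-\rk(A)}\,0^{|A|-\rk(A)}$ is nonzero exactly for the sets $A$ that are both independent and of full rank — that is, the bases of $M$ — the value $\T(B(G);1,1)$ is precisely the number of maximum-size pseudoforests of $G$, namely $p_r$ for $r=\rk(B(G))$ (computable directly, or identified as the largest index with $p_r\neq 0$). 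Extracting $p_r$ completes the reduction and proves the proposition.

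I expect the main obstacle to be the careful verification of the combinatorial correspondence of the second paragraph — that broken paths add only acyclic pieces, so that the pseudoforest condition on $G^{(k)}[S']$ localizes exactly to $G[F]$ and the per-component cycle count is preserved — since this is precisely what makes the weight $(2^k-1)^{\,m-|F|}$ correct and the whole interpolation go through. The bipartiteness and Vandermonde interpolation steps are then routine.
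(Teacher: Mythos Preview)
Your proposal does not prove Proposition~\ref{prp:1-1}. That proposition asserts that computing $\T(B(G);1,1)$ is \shp-hard, yet your opening sentence \emph{assumes} exactly this (``which is \shp-hard by Proposition~\ref{prp:1-1}'') and then reduces \emph{from} it to $\shpf$ on bipartite graphs. Read as an argument for Proposition~\ref{prp:1-1} this is circular; what you have in fact written is a proof of Proposition~\ref{prp:shpf-hard-bipartite}, a different statement.

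The paper, moreover, gives no proof of Proposition~\ref{prp:1-1}: it is simply quoted as \cite[Corollary~4.3]{gimenez2006complexity} and used as a black box inside the appendix proof of Proposition~\ref{prp:shpf-hard-bipartite}, so there is no paper proof to compare your attempt against. For the record, interpreted as a proof of Proposition~\ref{prp:shpf-hard-bipartite} your argument is correct and coincides with the paper's route: your direct combinatorial identity $\shpf(G^{(k)}) = \sum_j p_j(2^k-1)^{m-j}$ is exactly the Brylawski stretch formula $(2^k-1)^{|E|-\rk_{B(G)}(E)}\,\T(B(G);2^k,1)$ that the paper invokes (since $\T(B(G);x,1)=\sum_{A\in\I}(x-1)^{\rk(E)-|A|}$), and the even-$k$ bipartiteness and Vandermonde interpolation steps are identical. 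But that is not the proposition you were asked to prove.
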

	Second, let us define the following univariate polynomial: for a graph~$G$, let~$P_G(x)$ be
	\[P_G(x) \ = \ \T(B(G);x,1).\]
	Notice that this is indeed a polynomial and that its degree is at most~$|E|$ (the degree is exactly~$|E|$ iff~$G$ is itself a pseudoforest).
	If we could compute efficiently the coefficients of~$P_G$, then we could in particular compute the value~$P_G(1) = \T(B(G);1,1)$, which is \shp-hard by the previous proposition.
	We recall that to compute the coefficients of a polynomial of degree~$n$, it is enough to know its value on~$n+1$ distinct points; in fact, given these values in $n+1$ distinct points, it is possible to efficiently compute the coefficients of the polynomial by using standard  interpolation techniques (for example, by using Lagrange polynomials).

	We need one last definition.
	\begin{definition}
		\label{def:stretch}
		Let~$G$ be a graph. For~$k\in \N$, let~$\s_k(G)$ be the graph obtained from~$G$ by replacing each edge of~$G$ by a path of lenght~$k$; this graph is called the \emph{$k$-stretch} of~$G$.
	\end{definition}

	Then, using a result attributed to Brylawski (see~\cite{jaeger1990computational}), the authors of~\cite{gimenez2006complexity} obtain that, “up to a trivial factor”, we have
	\[\T(B(\s_k(G));2,1) \ \simeq \ \T(B(G);2^k,1). \]
	A careful inspection of~\cite{jaeger1990computational} reveals\footnote{To be precise, we use Equations (7.1) and (7.2) of~\cite{jaeger1990computational} with~$x=1,y=0$, and Equation (2.2) with~$x=2,y=1$.} that, in fact, we have
	\[\T(B(\s_k(G));2,1) \ = \ (2^k -1)^{|E|-\rk_{B(G)}(E)} \times \T(B(G);2^k,1). \]
	Notice that~$\rk_{B(G)}(E)$ is the size (number of edges) of a pseudoforest of~$G$ that is maximal by inclusion of edges, which we can compute in polynomial time.\footnote{This is because, since~$B(G)$ is a matroid, any two such pseudoforests have the same number of edges. We can then simply start from the empty subgraph and iteratively add edges until it is not possible to add an edge such that the resulting graph is a pseudoforest. This also relies on the fact that we can check in polynomial time whether a graph is a pseudoforest.}
	
	With this, the authors of~\cite{gimenez2006complexity} can conclude the proof that computing~$\T(B(G);2,1)$ is hard for (non-necessarily bipartite) graphs, i.e., that~$\shpf$ is \shp-hard.
	Indeed, given as input~$G=(V,E)$, we can construct in polynomial time the graphs~$\s_k(G)$ for~$|E|+1$ distinct values of~$k$,
	then use oracle calls to obtain the numbers~$\T(B(\s_k(G));2,1)$, which gives us the value of~$P_G$ on~$|E|+1$ distinct points.
	With that we can recover the coefficients of~$P_G$ and compute~$P_G(1) = \T(B(G);1,1)$ as argued above, thus proving hardness for general graphs.
	To obtain hardness for bipartite graphs, it is enough to observe that when~$k$ is even then the~$k$-stretch of~$G$ is bipartite (even if $G$ is not bipartite).
	Hence, to obtain a proof of Proposition~\ref{prp:shpf-hard-bipartite} for bipartite graphs, we can simply change that proof and specify that we make~$|E|+1$ calls to the oracle~$\T(B(\s_k(G));2,1)$ for~$|E|+1$ disctinct \emph{even} values of~$k$.
\end{toappendix}

To prove that the reduction that we will present is correct, we will also need the following folklore lemma about pseudoforests.
We recall that an \emph{orientation} of an undirected graph~$G=(V,E)$ is a directed graph that can be obtained from~$G$ by orienting every edge of~$G$. Equivalently, one can see such an orientation as a function~$f:E \to V$ that assigns to every edge in~$G$ a node to which it is incident.
We then have: 

	\begin{lemma}
		\label{lem:orientation}
		A graph~$G$ is a pseudoforest if and only if there exists an orientation of~$G$ such that every node has outdegree at most~$1$.
	\end{lemma}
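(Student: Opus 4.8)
The plan is to prove both directions directly. For the ($\Leftarrow$) direction, suppose there is an orientation of $G$ in which every node has outdegree at most $1$. I would argue by contradiction: if some connected component $C$ of $G$ contained two distinct cycles, then $C$ would contain a subgraph with more edges than vertices (a standard fact: a connected graph with $n$ vertices and at least $n$ edges contains a cycle, and a component with two distinct cycles has at least $n$ edges where $n$ is the number of its vertices, in fact one can extract a subgraph with $e \geq n+1$ on $n$ vertices, or more carefully, a connected subgraph $H$ with more edges than vertices). But in the given orientation, the number of edges of $H$ equals $\sum_{v \in H} \mathrm{outdeg}(v) \leq |V(H)|$ since each edge contributes $1$ to the outdegree of exactly one endpoint — contradicting $e(H) > |V(H)|$. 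Hence every component has at most one cycle, i.e., $G$ is a pseudoforest.

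For the ($\Rightarrow$) direction, suppose $G$ is a pseudoforest. Since orientations and the pseudoforest property are both determined component-wise, I would treat each connected component $C$ separately. If $C$ is a tree, root it at an arbitrary vertex and orient every edge away from its child toward… wait — better: orient every edge from child to parent, so each non-root vertex has outdegree exactly $1$ and the root has outdegree $0$. If $C$ contains exactly one cycle, let $Z$ be that cycle; orient the edges of $Z$ consistently around the cycle so every vertex of $Z$ has outdegree exactly $1$ within $Z$. Removing the edges of $Z$ leaves a forest in which each tree is attached to $C$ at a unique vertex of $Z$; root each such tree at its attachment vertex and orient its edges from child to parent. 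Then every vertex of $Z$ has outdegree $1$ (from the cycle), and every vertex off $Z$ has outdegree $1$ (toward its parent). In all cases the outdegree is at most $1$, as required.

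The main thing to get right is the counting argument in the ($\Leftarrow$) direction: one must carefully justify that a connected component with two distinct cycles contains a (connected) subgraph with strictly more edges than vertices. This follows because a connected graph on $n$ vertices is a tree (hence has exactly $n-1$ edges) if and only if it has no cycle; a connected graph with at least $n$ edges has a cycle, and if it has two \emph{distinct} cycles then deleting one edge from one cycle keeps the graph connected and still leaves a cycle, so it has at least $n+1$ edges originally is not quite the cleanest phrasing — instead, observe directly that if a connected component $C$ on vertex set of size $n$ has two distinct cycles then $|E(C)| \geq n$, and any connected graph with $|E| \geq |V|$ suffices for the contradiction, since then $\sum_v \mathrm{outdeg}(v) = |E(C)| \geq n = |V(C)|$ forces every vertex to have outdegree exactly $1$ but then following outgoing edges from any vertex must eventually cycle, and the functional graph of outgoing edges on $n$ vertices with every outdegree $1$ has exactly one cycle per component — yet $C$ was assumed to have two, contradiction. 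I would state this last argument cleanly rather than the subgraph-extraction version, as it is more self-contained.
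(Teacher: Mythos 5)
Your proof is correct. Note that the paper does not actually prove this lemma at all: it is stated as folklore with pointers to the literature, so you are supplying an argument the authors omitted. Both of your directions are sound. For ($\Rightarrow$), the component-wise construction (orient a tree toward a root; orient the unique cycle consistently and hang the pendant trees rooted at their attachment vertices) is the standard one, and your implicit claim that each tree of $C - E(Z)$ meets the cycle $Z$ in exactly one vertex is easily justified (two such vertices would create a second cycle; zero would contradict connectivity of $C$). For ($\Leftarrow$), the cleanest packaging of what you wrote is: if a connected component $C$ contained two distinct cycles, then $|E(C)|\geq |V(C)|$ (indeed $\geq |V(C)|+1$, since the cycle space of $C$ would have dimension at least $2$), while $|E(C)|=\sum_{v\in C}\mathrm{outdeg}(v)\leq |V(C)|$ under the given orientation --- a contradiction already at that point, without needing the functional-graph detour about ``exactly one cycle per component.'' Before using this, you should strip out the self-correcting asides (``wait --- better:'', ``is not quite the cleanest phrasing'') and commit to one version of the counting argument, but mathematically there is no gap.
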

	\begin{proof}
	Folklore, see, e.g., \cite{kowalik2006approximation,fan2015decomposing,grout2019decomposing}.
	\end{proof}

Using the hardness of~$\shpf$ on bipartite graphs, we are able show hardness of~$\cucountcompls(R(x,x))$ and $\cucountcompls(R(x,y))$ as follows.
	\begin{proposition}
	\label{prp:Rxx-Rxy-hard-compls-codd}
	The problems $\cucountcompls(R(x,x))$ and $\cucountcompls(R(x,y))$ are both \shp-hard.
	\end{proposition}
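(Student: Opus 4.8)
The plan is to reduce from $\shpf$ restricted to bipartite graphs, which is \shp-hard by Proposition~\ref{prp:shpf-hard-bipartite}, and to build from a bipartite graph $G$ a single uniform Codd table $D$ over one binary relation $R$ whose completions are in bijection with the edge subsets $S$ of $G$ for which $G[S]$ is a pseudoforest (cf.\ Definition~\ref{def:pseudoforest}). This handles both queries at once: I will arrange that every completion of $D$ contains a fact of the form $R(c,c)$ (hence satisfies $R(x,x)$) and is nonempty (hence satisfies $R(x,y)$), so that $\cucountcompls(R(x,x))(D) = \cucountcompls(R(x,y))(D)$ equals the number of completions of $D$, which will be $\shpf(G)$. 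Note that the domain will essentially be the vertex set of $G$, which is why, unlike the argument of Proposition~\ref{prp:Rxx-Rxy-hard-compls-naive}, this one inherently uses domains of unbounded size.

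The construction goes as follows. Write $G = (X \sqcup Y, E)$, assume $E \neq \emptyset$ (the empty case is trivial), and take the uniform domain to be $X \cup Y$. The idea is to let each vertex ``vote'' for at most one incident edge, to record in the completion which edges received a vote, and to use Lemma~\ref{lem:orientation} to show that the recorded set is always a pseudoforest and that every pseudoforest arises this way. Concretely: for each $x \in X$ I introduce a fresh null $\bot_x$ and the fact $R(\bot_x, x)$; for each $y \in Y$ a fresh null $\bot_y$ and the fact $R(y, \bot_y)$. This asymmetric placement is chosen so that, for an edge $\{x,y\} \in E$, mapping $\bot_x$ to $y$ and mapping $\bot_y$ to $x$ both produce exactly the same ground fact $R(y,x)$; this is where bipartiteness is used, since $X \cap Y = \emptyset$ lets the two endpoints of an edge agree on one ``edge-record'' fact without the direction of the vote being recorded. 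Finally I add to $D$ the following ground facts (no nulls, so $D$ is still a Codd table): all $R(x',x)$ with $x, x' \in X$; all $R(y,y')$ with $y, y' \in Y$; and all $R(y,x)$ with $x \in X$, $y \in Y$, $\{x,y\} \notin E$. These are ``junk absorbers'': whenever a null is mapped to anything other than a genuine neighbor on the appropriate side, the fact it produces is one of these fixed facts and thus has no effect on the completion.

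For correctness, given a valuation $\nu$ let $S(\nu) \defeq \{\{x,y\} \in E \mid \nu(\bot_x) = y \text{ or } \nu(\bot_y) = x\}$. A short case analysis on the value of each null shows that $\nu(D)$ equals the fixed ground facts together with $\{R(y,x) \mid \{x,y\} \in S(\nu)\}$, and that these edge-records are pairwise distinct and distinct from all fixed facts; hence $\nu(D)$ depends only on $S(\nu)$ and conversely determines it, so the number of completions of $D$ is the number of distinct sets $S(\nu)$. Now $S(\nu)$ carries the orientation ``point each voted edge from the voter to the other endpoint'', in which every vertex has outdegree at most $1$ (each null takes a single value), so $G[S(\nu)]$ is a pseudoforest by Lemma~\ref{lem:orientation}; conversely, given $S$ with $G[S]$ a pseudoforest, fix an orientation of $G[S]$ of outdegree at most $1$ (again Lemma~\ref{lem:orientation}) and have each vertex vote along its out-edge if it has one and ``abstain'' otherwise (set $\nu(\bot_x) \defeq x$, resp.\ $\nu(\bot_y) \defeq y$), which gives $S(\nu) = S$. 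Thus $\{S(\nu) : \nu\}$ is exactly the set of pseudoforest subgraphs of $G$, so the number of completions of $D$ is $\shpf(G)$; and since every completion contains $R(x,x)$ for a fixed $x \in X$ and is nonempty, both $\cucountcompls(R(x,x))$ and $\cucountcompls(R(x,y))$ return that value on $D$. The construction is polynomial time, which establishes \shp-hardness of both problems.

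The main obstacle I expect is purely the bookkeeping around the junk-absorbing facts: one has to check that the genuine edge-records $R(y,x)$ with $\{x,y\} \in E$ are disjoint from all three families of fixed facts (here $X \cap Y = \emptyset$ is essential — an $R$-fact with first coordinate in $Y$ and second in $X$ cannot coincide with one whose first coordinate is in $X$ or whose second is in $Y$), and that every fact a null can possibly produce, other than a genuine edge-record, is indeed one of the fixed facts. Once this is verified, the rest follows immediately from Lemma~\ref{lem:orientation}.
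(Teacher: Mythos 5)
Your proof is correct and follows essentially the same route as the paper's: a reduction from $\shpf$ on bipartite graphs with one null per vertex, facts arranged so that both endpoints of an edge can produce the same ``edge-record'' fact, complementary facts absorbing all other null values, and Lemma~\ref{lem:orientation} to identify the reachable edge sets with pseudoforests. The differences (swapped coordinate convention, trimming the absorber facts to exactly the producible non-edge facts instead of all of $(U\cup V)^2\setminus E$, and using $R(x,x)$ from the absorbers rather than a fresh fact $R(f,f)$ to force satisfaction of both queries) are cosmetic; just make sure that when an edge is voted by both endpoints you pick a single voter as its source, which is the normalization the paper carries out in its step~$(\star)$.
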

	\begin{proof}
	We reduce both problems from $\shpf$ on bipartite graphs. Let~$G=(U \sqcup V,E)$ be a bipartite  graph. We will construct a uniform Codd table~$D$ over binary relation~$R$ such that (1) all the completions of~$D$ satisfy both queries; and (2) the number of completions of~$D$ is equal to~$\shpf(G)$, thus establishing hardness.
	For every~$(t,t') \in (U \cup V)^2 \setminus E$, we add to~$D$ the fact~$R(t,t')$; we call these the \emph{complementary facts}.
	For every~$u \in U$ we add to~$D$ the fact~$R(u,\bot_u)$ and for every~$v \in V$ the fact~$R(\bot_v,v)$. 
	Finally, we add to~$D$ a fact~$R(f,f)$ where~$f$ is a fresh constant.
	The uniform domain of the nulls if~$\dom = U \cup V$. It is clear that~$D$ is a Codd table and that every completion of~$D$ satisfies both queries (thanks to the fact~$R(f,f))$, so (1) holds.
	We now prove that (2) holds.
	First of all, observe that a completion~$\nu(D)$ of~$D$ is uniquely determined by the set of edges~$\{(u,v)\in E \mid R(u,v) \in \nu(D)\}$: this is because~$\nu(D)$ already contains all the complementary facts. For a set~$S\subseteq E$ of edges, let us define~$D_S$ to be the complete database that contains all the complementary facts and all the facts~$R(u,v)$ for~$(u,v) \in S$ (note that $D_S$ is not necessarily a completion of~$D$). We now argue that for every set~$S \subseteq E$, we have that~$D_S$ is a completion of~$D$ if and only if~$G[S]$ is a pseudoforest, which would conclude the proof.
		By Lemma~\ref{lem:orientation} we only need to show that~$D_S$ is a completion of~$D$ if and only if~$G[S]$ admits an orientation with maximum outdegee~$1$.
		We show each direction in turn. $(\Rightarrow)$ Assume~$D_S$ is a completion of~$D$, and let~$\nu$ be a valuation witnessing this fact, i.e., such that~$\nu(D)=D_S$. 
		First, observe that we can assume without loss of generality that~($\star$) for every~$e =(u,v) \in S$, we have either~$\nu(\bot_u)=v$ or~$\nu(\bot_v)=u$ but not both. Indeed, if we had both then we could modify~$\nu$ into~$\nu'$ by redefining, say,~$\nu'(\bot_u)$ to be~$u$, and we would still have that~$\nu'(D)=D_S$ (because~$R(u,u)$ is already present in~$D$: it is a complementary fact).
		We now define an orientation~$f_{\nu}:S \to U\cup V$ of~$G[S]$ from~$\nu$ as follows. Let~$e=(u,v) \in S$. Then: if we have~$\nu(\bot_u)=v$ we define~$f_\nu((u,v))$ to be~$v$, i.e., we orient the (undirected) edge~$(u,v)$ from~$u$ to~$v$. Else, if we have~$\nu(\bot_v)=u$ we define~$f_\nu((u,v))$ to be~$u$, i.e., we orient the (undirected) edge~$(u,v)$ from~$v$ to~$u$. Observe that by~($\star$) $f_\nu$ is well defined. It is then easy to check that the maximal outdegree of the directed graph defined by~$f_\nu$ is~$1$: this is because for every~$u \in U$ (resp.,~$v \in V$), there is only one fact in~$D$ of the form~$R(u,\text{null})$ (resp.,~$R(\text{null},v)$), namely, the fact~$R(u,\bot_u)$ (resp., $R(\bot_v,v)$).
		$(\Leftarrow)$ Let~$f:S \to U\cup V$ be an orientation of~$G[S]$ with maximum outdegree~$1$. Let~$\nu_f$ be the valuation of~$D$ defined from~$f$ as follows:
		for every~$u\in U$ (resp., $v\in V$), if there is an edge~$(u,v)\in S$ such that~$f((u,v)) = v$ (resp., such that $f((u,v)) = u$), then define~$\nu_f(\bot_u)$ to be~$v$ (resp., define~$\nu_f(\bot_v)$ to be~$u$). Observe that there can be at most one such edge because~$f$ has maximum outdegree~$1$, so this is well defined. If there is no such edge, define~$\nu_f(\bot_u)$ to be~$u$ (resp., define $\nu_f(\bot_v)$ to be~$v$). Since all edges in~$S$ are given an orientation by~$f$, it is clear that for every~$(u,v)\in S$ we have~$R(u,v) \in \nu_f(D)$. Moreover, since~$\nu_f(D)$ contains all the complementary facts, we have that~$\nu_f(D)=D_S$, which shows that~$D_S$ is a completion of~$D$ and concludes this proof.
\end{proof}

	As we show next, these two patterns 
	suffice to characterize the complexity of $\ucountcompls(q)$ and~$\cucountcompls(q)$.

	\begin{toappendix}
		\subsection{Proof of Theorem~\ref{thm:ucc-naive-fp}}
		\label{apx:ucc-naive-fp}
	In this section we prove the tractability part of the following:
	\end{toappendix}

	\begin{theoremrep}[Dichotomy]
	\label{thm:ucc-naive-fp}
	Let $q$ be an \sjfbcq. 
	 If $R(x,x)$ or $R(x,y)$ is a pattern of $q$, 
	 then $\ucountcompls(q)$ and~$\cucountcompls(q)$ are \shp-hard. Otherwise, these problems are in~\fp.
	\end{theoremrep}

From what precedes, we only have to prove the tractability part of that claim, and this for naive tables.
	To this end, we define a \emph{conjunction of basic singletons \sjfbcq} to be an \sjfbcq\ of the form~$C_1(x_1) \land \ldots \land C_m(x_m)$,
	where each~$C_i(x_i)$ is a conjunction of unary atoms over the same variable~$x_i$ (here the~$x_i$ are pairwise distinct).
	Since~$q$ does not contain the pattern~$R(x,x)$ nor the pattern~$R(x,y)$, observe that~$q$ must in fact be a conjunction of basic singletons \sjfbcq.
	The main difficulty is to decompose the computation in such a way that we do not count the same completion twice. Moreover, the fact that the database is naive and not Codd, and the fact that constants can appear everywhere, complicate a lot the description of the algorithm.
For these reasons, we provide in the next section an example that gives the intuition of the general proof, and defer the general proof to Appendix~\ref{apx:ucc-naive-fp}.

Note that, as in the last section, we do not claim membership in~\shp\ in the statement of Theorem~\ref{thm:ucc-naive-fp}. However, 
and also as in the last section, 
we can show that these problems are in \shp\ for Codd tables, which allows us to obtain our last dichotomy for exact counting.

\begin{theorem}[Dichotomy]
	\label{thm:ucc-codd-fp}
	Let~$q$ be an \sjfbcq.
	If $R(x,x)$ or~$R(x,y)$ is a pattern of $q$, then 
$\cucountcompls(q)$ is \shp-complete. Otherwise, this problem is in~\fp.
\end{theorem}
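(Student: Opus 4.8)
The plan is to assemble this dichotomy entirely from results already established; essentially no new work is needed. Three things have to be put together: $\shp$-hardness in the pattern case, membership in $\shp$ (which upgrades the hardness to completeness), and tractability in the pattern-free case.

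For hardness, if $R(x,x)$ or $R(x,y)$ is a pattern of $q$, then Proposition~\ref{prp:Rxx-Rxy-hard-compls-codd} gives $\shp$-hardness of $\cucountcompls$ for that very pattern, and Lemma~\ref{lem:pattern-parsimonious-compls} — which, as stated there, applies to Codd tables in the uniform setting — transports this hardness to $\cucountcompls(q)$ via a polynomial-time parsimonious reduction. Hence $\cucountcompls(q)$ is $\shp$-hard in this case.

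For membership, I would observe that an \sjfbcq\ has its model-checking problem in \ptime, so the general result proved in Section~\ref{subsec:countcompls-codd-sharp-p} applies and yields $\ccountcompls(q) \in \shp$. Then $\cucountcompls(q)$ reduces to $\ccountcompls(q)$ in polynomial time and parsimoniously: given a uniform Codd table $(T,\dom)$, simply replace the set $\dom$ by the constant function $\bot \mapsto \dom$, which is still a Codd table with exactly the same valuations and hence the same completions. Therefore $\cucountcompls(q) \in \shp$ as well, and combined with the previous paragraph, $\cucountcompls(q)$ is $\shp$-complete whenever $R(x,x)$ or $R(x,y)$ is a pattern of $q$.

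Finally, if neither $R(x,x)$ nor $R(x,y)$ is a pattern of $q$, then $q$ is a conjunction of basic singletons, and the tractability direction of Theorem~\ref{thm:ucc-naive-fp} already establishes $\cucountcompls(q) \in \fp$, so nothing more is required. The only delicate ingredient in the whole argument is the membership claim, which is not self-contained here: it rests on the forward reference to Section~\ref{subsec:countcompls-codd-sharp-p}. I expect that to be where the real care lies, since a nondeterministic machine that merely guesses a valuation of a Codd table and accepts when the query holds would in general \emph{overcount} completions (two distinct valuations of a Codd table can still produce the same set of facts), so that section must supply a canonical-representative or direct guess-the-completion construction rather than a naive ``number of accepting runs'' argument.
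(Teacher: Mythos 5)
Your proposal is correct and assembles the dichotomy from exactly the same ingredients as the paper: hardness via Proposition~\ref{prp:Rxx-Rxy-hard-compls-codd} and Lemma~\ref{lem:pattern-parsimonious-compls} (which is how Theorem~\ref{thm:ucc-naive-fp} itself obtains the Codd hardness the paper cites), tractability from the pattern-free case of Theorem~\ref{thm:ucc-naive-fp}, and membership from Proposition~\ref{prp:countcompls-codd-sharp-p}. Your closing remark is also on target: that section does indeed avoid the naive guess-a-valuation overcounting by guessing a candidate set of ground facts and verifying it is a completion via a maximum-matching argument.
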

\begin{proof}
	Hardness follows from Theorem~\ref{thm:ucc-naive-fp}, while membership in \shp\ follows from the result proven in Section~\ref{subsec:countcompls-codd-sharp-p}.
\end{proof}

\subsection{Example of tractability for Theorem~\ref{thm:ucc-naive-fp}}
\label{subsec:ucc-naive}
	In this section, we prove that for the query~$q\defeq R(x)\land S(y)$, the problem~$\cucountcompls(q)$ is in~\fp. Observe that~$q$ is a conjunction of basic singletons query, and that it is always satisfied (as long as the database is not empty). We will show that~$\ucountcompls(q)$ is in~\fp.
	Let~$C_{RS},C_R,C_S$ (resp, $N_{RS},N_R,N_S$) be the sets of constants (resp., nulls) that occur respectively: in~$R$ and in~$S$, only in~$R$, only in~$S$, and denote~$c_{RS},c_R,c_S$ (resp., $n_{RS},n_R,n_S$) their sizes. Let~$\dom$ be the uniform domain of the nulls, and let~$d$ be its size.
	First of all, observe that we can assume without loss of generality that~$C \defeq C_{RS} \cup C_R \cup C_S \subseteq \dom$, as otherwise we could simply remove the facts of~$D$ that are over constants that are not in~$\dom$ and this would not change the result.
	Let~$c = c_{RS} + c_R + c_S$.
	The next two claims explain how we can decompose the computation in a way that we do not count a same completion twice (Claim~\ref{claim:expl-disjoint}), and that we count all the possible completions (Claim~\ref{claim:expl-have-all}).
	\begin{claim}
		\label{claim:expl-disjoint}
		For a triplet~$(I_R,I_S,I_{RS})$ of subsets of~$\dom$ satisfying the conditions~($\star$) $I_R \subseteq \dom \setminus C$, $I_S \subseteq \dom \setminus (C \cup I_R)$, and $I_{RS} \subseteq \dom \setminus (C_{RS} \cup I_R \cup I_S)$,
		let us define~$P(I_R,I_S,I_{RS})$ to be the complete database consisting of the following facts:
		\begin{enumerate}
			\item $R(a)$ and~$S(a)$ for~$a \in C_{RS} \cup I_{RS}$;
			\item $R(a)$ for~$a \in I_R \cup (C_R \setminus I_{RS})$;
			\item $S(a)$ for~$a \in I_S \cup (C_S \setminus I_{RS})$;
		\end{enumerate}
		Then, for any two such triplets of sets~$(I_R,I_S,I_{RS})$ and~$(I'_R,I'_S,I'_{RS})$ that are different, the complete databases~$P(I_R,I_S,I_{RS})$ and~$P(I'_R,I'_S,I'_{RS})$ are distinct.
	\end{claim}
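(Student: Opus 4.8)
The plan is to show that the triplet $(I_R,I_S,I_{RS})$ can be reconstructed from the complete database $P \defeq P(I_R,I_S,I_{RS})$ using only $P$ together with the fixed sets $C_{RS},C_R,C_S$ (these depend on the input $D$ but not on the triplet), which immediately gives injectivity and hence the claim. Throughout, write $P_R$ (resp.\ $P_S$) for the set of constants $a$ with $R(a) \in P$ (resp.\ $S(a) \in P$).

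First I would rewrite $P_R$ and $P_S$ in a more convenient form. By the definition of $P(I_R,I_S,I_{RS})$, a constant $a$ lies in $P_R$ iff $a \in C_{RS} \cup I_{RS}$ (from item~1) or $a \in I_R \cup (C_R \setminus I_{RS})$ (from item~2); since $I_{RS} \cup (C_R \setminus I_{RS}) = I_{RS} \cup C_R$, this collapses to $P_R = C_{RS} \cup C_R \cup I_R \cup I_{RS}$, and symmetrically $P_S = C_{RS} \cup C_S \cup I_S \cup I_{RS}$.

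Next I would exploit the disjointness facts at hand: the sets $C_{RS},C_R,C_S$ are pairwise disjoint by definition; condition~($\star$) forces $I_R$ and $I_S$ to be disjoint from each other and from $C = C_{RS}\cup C_R\cup C_S$, and forces $I_{RS}$ to be disjoint from $C_{RS}$, from $I_R$ and from $I_S$; the only pairs among the six sets $C_{RS},C_R,C_S,I_R,I_S,I_{RS}$ that may overlap are $(I_{RS},C_R)$ and $(I_{RS},C_S)$. A short case analysis (intersecting each of the six sets with $P_R$ and with $P_S$) then yields $P_R \cap P_S = C_{RS} \cup I_{RS}$, $P_R \setminus P_S = (C_R \setminus I_{RS}) \cup I_R$, and $P_S \setminus P_R = (C_S \setminus I_{RS}) \cup I_S$. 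From these I recover the triplet: $I_{RS} = (P_R \cap P_S) \setminus C_{RS}$ (legitimate since $I_{RS} \cap C_{RS} = \emptyset$); then $I_R = (P_R \setminus P_S) \setminus C_R$ (legitimate since $I_R \cap C_R = \emptyset$ while $C_R \setminus I_{RS} \subseteq C_R$); and likewise $I_S = (P_S \setminus P_R) \setminus C_S$. Consequently, $P(I_R,I_S,I_{RS}) = P(I'_R,I'_S,I'_{RS})$ forces $(I_R,I_S,I_{RS}) = (I'_R,I'_S,I'_{RS})$.

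The only real difficulty is bookkeeping: one must keep straight exactly which pairs of the six sets are provably disjoint and which are not (namely $I_{RS}$ may meet $C_R$ and $C_S$, nothing else), and organize the computations of the intersection and the two set differences accordingly, being careful that the ``extra'' pieces such as $C_R \cap I_{RS}$ are reabsorbed into $I_{RS}$. There is no genuine mathematical obstacle beyond this careful case tracking.
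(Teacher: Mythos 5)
Your proof is correct. It establishes the same injectivity statement as the paper, resting on exactly the same disjointness inventory (the three $C$-sets pairwise disjoint, $I_R,I_S$ disjoint from everything, $I_{RS}$ possibly meeting only $C_R$ and $C_S$), but it is organized differently: the paper argues by taking the first coordinate on which the two triplets differ and exhibiting a single distinguishing fact (for $a\in I_{RS}\setminus I'_{RS}$, both $R(a)$ and $S(a)$ lie in $P$ but not both in $P'$; then similarly for $I_R$ and $I_S$), whereas you construct an explicit left inverse, recovering the whole triplet from $P$ via $I_{RS}=(P_R\cap P_S)\setminus C_{RS}$, $I_R=(P_R\setminus P_S)\setminus C_R$, $I_S=(P_S\setminus P_R)\setminus C_S$. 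The two are really the same argument viewed from opposite ends — your identity $P_R\cap P_S=C_{RS}\cup I_{RS}$ is precisely what makes the paper's first witness work — but your version buys a little more: it shows the map $P$ is injective \emph{with a computable inverse}, which makes the subsequent step of the paper's argument (that every completion is of the form $P(I_R,I_S,I_{RS})$, Claim 4.1 of that section) essentially immediate, since the inverse formulas are exactly the definitions used there. The paper's version is shorter on the page because it stops at the first discrepancy rather than computing all three set differences; your version requires the more careful bookkeeping you acknowledge, and you have carried it out correctly (in particular the reabsorption of $C_R\cap I_{RS}$ and $C_S\cap I_{RS}$ into $I_{RS}$ inside $P_R\cap P_S$).
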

	\begin{proof}
		To help the reader, we have drawn in Figure~\ref{fig:drawing} how the sets can intersect.
		If we have~$I_{RS} \neq I'_{RS}$ with~$a \in I_{RS}$ and~$a
		\notin I'_{RS}$, then~$P(I_R,I_S,I_{RS})$ contains both
		facts~$R(a)$ and~$S(a)$,
		while~$P(I'_R,I'_S,I'_{RS})$ does not. So let us
		assume now that~$I_{RS} = I'_{RS}$. If we have~$I_R \neq I'_R$
		with~$a \in I_R$ and~$a\notin I'_R$ then one can check
		that~$P(I_R,I_S,I_{RS})$ contains the
		fact~$R(a)$ while~$P(I'_R,I'_S,I'_{RS})$ does
		not. Hence let us assume that~$I_R = I'_R$. Using the same
		reasoning we obtain that~$I_S = I'_S$, thus completing the
		proof.
	\end{proof}

	\begin{figure}
		\centering
	\begin{tikzpicture}
		\draw (0,0) -- (5,0) -- (5,6) -- (0,6) -- (0,0);
		\draw \boundellipse{2.5,5}{1.5}{.6};
		\draw \boundellipse{2.5,3.5}{1.5}{.6};
		\draw \boundellipse{1,2.5}{.5}{1.5};
		\draw \boundellipse{4,2.5}{.5}{1.5};
		\draw \boundellipse{1.8,.8}{.5}{.5};
		\draw \boundellipse{3.2,.8}{.5}{.5};
		\node at (.6,5.5) {$\dom$};
		\node at (2.5,5) {$C_{RS}$};
		\node at (2.5,3.5) {$I_{RS}$};
		\node at (1,2.5) {$C_R$};
		\node at (4,2.5) {$C_S$};
		\node at (1.8,.8) {$I_R$};
		\node at (3.2,.8) {$I_S$};
	\end{tikzpicture}
		\caption{How the sets $\dom, I_R,I_S,I_{RS},C_{RS},C_R$ and $C_S$ from Claim~\ref{claim:expl-disjoint} are allowed to intersect when they satisfy~($\star$). The sets themselves and the intersections can be empty.}
		\label{fig:drawing}

	\end{figure}

	Our next step is to show that every completion of~$D$ is of the form~$P(I_R,I_S,I_{RS})$ for some triplet~$(I_R,I_S,I_{RS})$ satisfying~($\star$):
	\begin{claim}
		\label{claim:expl-have-all}
		For every completion~$D'$ of~$D$, there exist a triplet~$(I_R,I_S,I_{RS})$ satisfying~($\star$) such that~$D' = P(I_R,I_S,I_{RS})$.
	\end{claim}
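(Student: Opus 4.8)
The plan is to show that every completion $D'$ of $D$ arises from a canonical choice of the triplet $(I_R,I_S,I_{RS})$, read off directly from $D'$. First I would observe that, since all constants of $D$ lie in $C\subseteq\dom$ and every valuation maps nulls into $\dom$, a completion $D'=\nu(D)$ is entirely determined by the pair of sets $A_R\defeq\{a\in\dom\mid R(a)\in D'\}$ and $A_S\defeq\{a\in\dom\mid S(a)\in D'\}$. Moreover, since the constants occurring in $D(R)$ (resp.\ in $D(S)$) survive in the completion, we have $C_{RS}\cup C_R\subseteq A_R$ and $C_{RS}\cup C_S\subseteq A_S$.

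Given such a completion $D'$, I would then define $I_{RS}\defeq(A_R\cap A_S)\setminus C_{RS}$, $I_R\defeq(A_R\setminus A_S)\setminus C$, and $I_S\defeq(A_S\setminus A_R)\setminus C$. Checking that $(I_R,I_S,I_{RS})$ satisfies~($\star$) is then immediate: $I_R$ and $I_S$ are disjoint from $C$ by construction; $I_R\cap I_S=\emptyset$ since $I_R\subseteq A_R\setminus A_S$ and $I_S\subseteq A_S\setminus A_R$; and $I_{RS}\subseteq A_R\cap A_S$ is disjoint from $C_{RS}$ by construction and from $I_R$ and $I_S$, which live in $A_R\setminus A_S$ and $A_S\setminus A_R$ respectively.

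The remaining step is to verify that $D'=P(I_R,I_S,I_{RS})$, i.e.\ that the set of $a$ with $R(a)\in P(I_R,I_S,I_{RS})$ equals $A_R$ (and symmetrically for $S$). By the definition of $P$ this set is $C_{RS}\cup I_{RS}\cup I_R\cup(C_R\setminus I_{RS})$, which simplifies to $C_{RS}\cup C_R\cup I_R\cup I_{RS}$. On the other hand, splitting $A_R=(A_R\cap A_S)\sqcup(A_R\setminus A_S)$ gives $A_R=(C_{RS}\cup I_{RS})\sqcup\big(I_R\cup(C_R\setminus A_S)\big)$, using that $A_R\setminus A_S$ is disjoint from $A_S\supseteq C_{RS}\cup C_S$, so that $(A_R\setminus A_S)\cap C=(A_R\setminus A_S)\cap C_R=C_R\setminus A_S$. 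The two descriptions agree once one notes that $C_R\cap A_S\subseteq I_{RS}$: indeed any $a\in C_R\cap A_S$ lies in $(A_R\cap A_S)\setminus C_{RS}$. The symmetric computation handles the $S$-facts, which concludes the proof.

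I expect the only real subtlety — and the point I would treat most carefully — is the behaviour of a constant that occurs only in $R$ in the naive table $D$ but nevertheless ends up in $S$ in the completion $D'$ (because some null was mapped onto it): one must check that it is absorbed into $I_{RS}$ rather than into $I_R$, which is precisely the observation $C_R\cap A_S\subseteq I_{RS}$ used above, with the analogous remark $C_S\cap A_R\subseteq I_{RS}$ covering the symmetric situation. Everything else is routine set-theoretic bookkeeping.
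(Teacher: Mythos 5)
Your proposal is correct and follows essentially the same route as the paper: your canonical triplet $(I_R,I_S,I_{RS})$ coincides with the paper's (e.g.\ your $(A_R\setminus A_S)\setminus C$ equals the paper's $D'(R)\setminus(C_R\cup D'(S))$, since $C_{RS}\cup C_S\subseteq A_S$), and you merely spell out the verification that the paper dismisses with ``one can easily check''. The subtlety you flag --- that a constant of $C_R$ landing in $A_S$ must be absorbed into $I_{RS}$ --- is exactly the point worth making explicit, so no issues here.
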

	\begin{proof}
		We define:
		\begin{itemize}
			\item $I_R \defeq D'(R) \setminus (C_R \cup D'(S))$; where we see~$D'(R)$ as the set of constants occurring in relation~$R$ of~$D'$.
			\item $I_S \defeq D'(S) \setminus (C_S \cup D'(R))$;
			\item $I_{RS} \defeq (D'(R) \cap D'(S)) \setminus C_{RS}$.
		\end{itemize}
		Then one can easily check that $(I_R,I_S,I_{RS})$ satisfies~($\star$) and that~$D' = P(I_R,I_S,I_{RS})$.
	\end{proof}

	By combining these two claims, we have that the result that we wish to compute (which, we recall, is simply the total number of completions of~$D$, because any valuation satisfies the query) is equal to
	\[\sum_{I_R \subseteq \dom \setminus C} \sum_{I_S \subseteq \dom \setminus (C \cup I_R)} \sum_{I_{RS} \subseteq \dom \setminus (C_{RS} \cup I_R \cup I_S)} \checkp(I_R,I_S,I_{RS}) \]
	where~$\checkp(I_R,I_S,I_{RS}) \defeq \begin{cases}
		1 \text{ if } P(I_R,I_S,I_{RS}) \text{ is a possible completion of } D \\
		0 \text{ otherwise}
	\end{cases}$.

	Next, we show that the value of~$\checkp(I_R,I_S,I_{RS})$ can be computed in polynomial time and actually only depends on the sizes of these sets. 
	In order to show this, we will use the following:
	\begin{claim}
		\label{claim:expl-conditions}
		We have~$\checkp(I_R,I_S,I_{RS}) = 1$ if and only if the following conditions hold:
		\begin{enumerate}
			\item if~$n_R \geq 1$ and $|C_R \cup C_{RS} \cup I_{RS}| = 0$, then we have~$|I_R| \neq 0$. Intuitively, this means that the value of a null in~$N_R$ cannot be “absorbed” by~$C_R \cup C_{RS} \cup I_{RS}$.
			\item if~$n_S \geq 1$ and $|C_S \cup C_{RS} \cup I_{RS}| = 0$, then we have~$|I_S| \neq 0$. (Same reasonning for nulls in~$N_S$.)
			\item if~$n_{RS} \geq 1$ and $|C_{RS}| = 0$, then we have~$|I_{RS}| \neq 0$. (Same reasonning for nulls in~$N_{RS}$.)
			\item the following system of equations, whose variables are natural numbers between~$0$ and~$d$, has a solution:
				\begin{align*}
					z_{N_R}^{\{N_R\}} + z_{N_R}^{\{N_R,C_S\}} +  z_{N_R}^{\{N_R,N_S\}} &\leq n_R  \\
					z_{N_S}^{\{N_S\}} + z_{N_S}^{\{N_S,C_R\}} +  z_{N_S}^{\{N_S,N_R\}} &\leq n_R  \\
					z_{C_R}^{\{C_R,N_S\}} &\leq c_R  \\
					z_{C_S}^{\{C_S,N_R\}} &\leq c_S  \\
					z_{N_R}^{\{N_R\}} &\geq |I_R| \\
					z_{N_S}^{\{N_S\}} &\geq |I_S| \\
					n_{RS} + \min(z_{N_R}^{\{N_R,C_S\}}, z_{C_S}^{\{C_S,N_R\}}) + 
					\min(z_{N_R}^{\{N_R,N_S\}}, z_{N_S}^{\{N_S,N_R\}}) + 
					\min(z_{N_S}^{\{N_S,C_R\}}, z_{C_R}^{\{C_R,N_S\}})
					&\geq |I_{RS}|
				\end{align*}
		\end{enumerate}
	\end{claim}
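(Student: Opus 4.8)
The plan is to fix a triple $(I_R,I_S,I_{RS})$ satisfying ($\star$) and analyze precisely which valuations $\nu$ of $D$ satisfy $\nu(D)=P(I_R,I_S,I_{RS})$, so that conditions (1)--(4) appear as the combinatorial feasibility constraints for such a $\nu$ to exist. I would first record a short preliminary sub-claim gathering the structural facts underlying both directions. Writing $R(P)$ and $S(P)$ for the $R$- and $S$-relations of $P(I_R,I_S,I_{RS})$, one checks $R(P)=C_{RS}\cup C_R\cup I_R\cup I_{RS}$, $S(P)=C_{RS}\cup C_S\cup I_S\cup I_{RS}$, and $R(P)\cap S(P)=C_{RS}\cup I_{RS}$. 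Since a null in $N_{RS}$ occurs in a fact of both $R(D)$ and $S(D)$, its $\nu$-image lies in $R(P)\cap S(P)$; the image of an $N_R$-null lies in $R(P)$ and that of an $N_S$-null in $S(P)$. Dually, each element of $I_R$ is a constant outside $C$ that occurs in $P$ only in $R$, so it can be produced \emph{only} by an $N_R$-null (symmetrically for $I_S$ and $N_S$); and an element $a\in I_{RS}$ needs both $R(a)$ and $S(a)$ produced --- by one $N_{RS}$-null, or by an $N_R$-null \emph{and} an $N_S$-null, when $a\notin C_R\cup C_S$; by one $N_{RS}$-null or one $N_S$-null when $a\in C_R$; by one $N_{RS}$-null or one $N_R$-null when $a\in C_S$.

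For the ``only if'' direction I would take a witnessing $\nu$ and read the conditions off directly. Conditions (1)--(3) are the degenerate cases: e.g. if $n_R\geq1$, pick $\bot\in N_R$; its image lies in $R(P)$, and if $C_R\cup C_{RS}\cup I_{RS}=\emptyset$ then $R(P)=I_R$, forcing $I_R\neq\emptyset$. For (4) I would extract the $z$-variables from $\nu$: let $z_{N_R}^{\{N_R\}}$ be the number of $N_R$-nulls mapped into $I_R$, $z_{N_R}^{\{N_R,C_S\}}$ the number mapped into $I_{RS}\cap C_S$, $z_{N_R}^{\{N_R,N_S\}}$ the number mapped into $I_{RS}\setminus(C_R\cup C_S)$ (symmetrically for $N_S$), and let $z_{C_R}^{\{C_R,N_S\}}$ (resp. $z_{C_S}^{\{C_S,N_R\}}$) be the number of elements of $I_{RS}\cap C_R$ (resp. $I_{RS}\cap C_S$) whose missing $S$-fact (resp. $R$-fact) $\nu$ produces using an $N_S$-null (resp. $N_R$-null). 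The ``$\leq$'' inequalities are then immediate disjointness counting; $z_{N_R}^{\{N_R\}}\geq|I_R|$ and $z_{N_S}^{\{N_S\}}\geq|I_S|$ hold because $I_R$ (resp. $I_S$) must be entirely covered using only those nulls; and the last inequality holds because every element of $I_{RS}$ is covered and the four summands enumerate, with the $\min$'s expressing ``you need both a null willing to do the job and, when the element already lives in $C_R$ or $C_S$, a designated slot on that side'', all the ways this coverage can happen.

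For the ``if'' direction I would fix a solution of the system in (4) and build $\nu$ explicitly. Send $z_{N_R}^{\{N_R\}}$ of the $N_R$-nulls surjectively onto $I_R$ --- possible since $z_{N_R}^{\{N_R\}}\geq|I_R|$, and if this uses \emph{all} $N_R$-nulls then condition (1) guarantees $I_R\neq\emptyset$ --- and symmetrically for $N_S$ and $I_S$. Then, from the last inequality together with the ``$\leq$'' constraints, a short greedy/matching argument produces an assignment of at most $n_{RS}$ of the $N_{RS}$-nulls, together with $N_R/N_S$-pairs and lone $N_S$- and $N_R$-nulls, to the elements of $I_{RS}$ so that every element of $I_{RS}$ gets covered in the appropriate way (the $\min$'s are exactly what makes this assignment realizable). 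Every still-unassigned null is then mapped to a ``safe'' value inside the relation(s) where it occurs: a leftover $N_{RS}$-null to some element of $C_{RS}\cup I_{RS}$ (nonempty by condition (3) unless there are no such nulls), a leftover $N_R$-null to some element of $C_{RS}\cup C_R\cup I_{RS}$ or, failing that, into $I_R$ (condition (1) guarantees at least one of these is nonempty), symmetrically for $N_S$. Finally one verifies $\nu(D)=P(I_R,I_S,I_{RS})$: every fact of $P$ is produced by construction, and no spurious fact appears because each null's image was always kept inside the relation(s) in which it occurs, which by the preliminary remarks is contained in the corresponding relation of $P$ --- in particular an $N_S$-null sent onto an element $a\in I_{RS}\cap C_R$ adds only $S(a)$, and $a\in I_{RS}\subseteq S(P)$ as required.

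Having established the equivalence, I would close by noting that conditions (1)--(4) mention only $|I_R|,|I_S|,|I_{RS}|$ and the fixed quantities $c_R,c_S,c_{RS},n_R,n_S,n_{RS},d$, so $\checkp(I_R,I_S,I_{RS})$ depends only on the triple of sizes and is computable in polynomial time (the system in (4) has a bounded number of variables, each ranging over $\{0,\dots,d\}$) --- which is exactly what the enclosing computation of $\ucountcompls(q)(D)$ needs. I expect the main obstacle to be the ``if'' direction, specifically turning a feasible $z$-assignment into an honest valuation: the coverage bookkeeping for $I_{RS}$ (who covers which element and why the $\min$'s make this concrete) and the check that no leftover null is ever stranded without a legal image are where all the case analysis concentrates, whereas the ``only if'' direction and the degenerate conditions (1)--(3) are essentially direct.
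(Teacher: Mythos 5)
Your proposal follows essentially the same route as the paper's own (explicitly informal) proof: necessity of (1)--(3) by the degenerate-case observation and of (4) by reading the $z$-allocation off a witnessing valuation, and sufficiency by constructing a valuation from a feasible $z$-solution, absorbing leftover nulls into $C_R\cup C_{RS}\cup I_{RS}$ (etc.), which is exactly what conditions (1)--(3) license. The one step you defer to a ``short greedy/matching argument'' --- checking that a feasible $z$-solution really lets each element of $I_{RS}$ be covered by a mechanism of the right kind, given how $I_{RS}$ may intersect $C_R$ and $C_S$ --- is precisely the step the paper also leaves to the reader (``We leave it to the reader to complete the small gaps''), so your write-up is faithful to, and if anything slightly more detailed than, the original.
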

		\begin{proof}
			We prove the claim informally by explaining the main ideas, because a formal proof would be too long and not that interesting.
			Conditions~(1-3) are easily checked to be necessary. We now explain why condition~(4) is also necessary.
			Suppose that~$P(I_R,I_S,I_{RS})$ is a completion of~$D$.
			Observe that~($\dagger$) to obtain the constants in~$I_{RS}$, we had to use some or all of the following:
		\begin{itemize}
			\item the nulls in~$N_{RS}$; or
			\item the nulls in~$N_R$ together with those in~$N_S$; or
			\item the nulls in~$N_R$ together with the constants in~$C_S$; or
			\item the nulls in~$N_S$ together with the constants in~$C_R$.
		\end{itemize}

			But then, to obtain~$P(I_R,I_S,I_{RS})$ as a completion, we must have used three disjoint (possibly empty) sets $Z_{N_R}^{\{N_R\}}, Z_{N_R}^{\{N_R,C_S\}},  Z_{N_R}^{\{N_R,N_S\}}$ of the nulls in~$N_R$ of sizes~$0\leq z_{N_R}^{\{N_R\}}, z_{N_R}^{\{N_R,C_S\}},  z_{N_R}^{\{N_R,N_S\}} \leq d$, we have done the same for the nulls in~$N_S$ and we also used a subset of the constants of~$C_R$ (and~$C_S$) in such a way that, according to~($\dagger$):
		\begin{itemize}
			\item the nulls in~$Z_{N_R}^{\{N_R\}}$ have been used to obtain the set~$I_R$ (which, we recall, is the set of constants that occur only in~$R$ and that are not in~$C_R$). Note that only the nulls in~$N_R$ could have been used to obtain constants in~$I_R$. This is what the fifth equation expresses.
			\item the nulls in~$Z_{N_R}^{\{N_R,C_S\}}$ have values in~$Z_{C_S}^{\{C_S,N_R\}}$, which gives us constants in~$I_{RS}$. Observe that at maximum we could obtain $\min(z_{N_R}^{\{N_R,C_S\}}, z_{C_S}^{\{C_S,N_R\}})$ constants in this manner.
			\item the nulls in~$Z_{N_R}^{\{N_R,N_S\}}$ and those in~$Z_{N_R}^{\{N_R,N_S\}}$ have common values, which gives us constants in~$I_{RS}$.
				Again, observe that we can get at most~$\min(z_{N_R}^{\{N_R,N_S\}}, z_{N_S}^{\{N_S,N_R\}})$ constants using these.
			\item the nulls in~$Z_{N_S}^{\{N_S,C_R\}}$ have values in~$Z_{C_R}^{\{C_R,N_S\}}$, which gives us constants in~$I_{RS}$. Observe that at maximum we could obtain $\min(z_{N_S}^{\{N_S,C_R\}}, z_{C_R}^{\{C_R,N_S\}})$ constants in this manner.

		\end{itemize}
			The first~$4$ equations express the partitioning process, and
			the last equation then expresses that by combining all these constants we indeed obtained the whole set~$I_{RS}$.

			We now explain why conditions~(2-4) are sufficient. If~$|I_R|,|I_S|$ and~$I_{RS}$ are all~$\geq 1$ then condition~(4) is sufficient, because we can use the nulls and constants as explained above, and we have enough of them to obtain the sets~$I_R,I_S,I_{RS}$. We explain what happens when~$I_R = \emptyset$ for instance. In that case, condition~(1) ensures us that we have either~$n_R=0$ or $C_R \cup C_{RS} \cup I_{RS} \neq \emptyset$. If we have~$n_R=0$ then it is fine, since the only nulls that could be used to fill~$I_R$ are those in~$N_R$. If we have~$n_R \geq 1$ and~$C_R \cup C_{RS} \cup I_{RS} \neq \emptyset$ then we can use these to absorb the values of the nulls in~$N_R$, and we are fine (i.e., we will be able to obtain~$I_R = \emptyset$). We leave it to the reader to complete the small gaps in this proof.
		\end{proof}

	Using this, we have that the value of~$\checkp(I_R,I_S,I_{RS})$ only depends on the sizes of~$I_R,I_S,I_{RS}$, and moreover can be computed in polynomial time
	\begin{claim}
		The value of~$\checkp(I_R,I_S,I_{RS})$ only depends on~$|I_R|,|I_S|,|I_{RS}|,n_R,n_S,n_{RS},c_R,c_S,c_{RS}$, and can be computed in \fp.
	\end{claim}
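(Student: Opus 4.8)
The plan is to read the statement off directly from Claim~\ref{claim:expl-conditions}, which characterizes $\checkp(I_R,I_S,I_{RS}) = 1$ as the conjunction of conditions (1)--(4). I will check that each of these conditions refers to $I_R, I_S, I_{RS}$ and to the fixed sets $C_R, C_S, C_{RS}, N_R, N_S, N_{RS}$ only through their cardinalities, so that, for a fixed input database $D$ (which also fixes the domain size $d$), the value of $\checkp$ depends solely on $|I_R|,|I_S|,|I_{RS}|,n_R,n_S,n_{RS},c_R,c_S,c_{RS}$.

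First I would dispatch conditions (1)--(3). The point for condition (1) is that $|C_R \cup C_{RS} \cup I_{RS}| = 0$ holds if and only if $c_R = c_{RS} = |I_{RS}| = 0$, since a union is empty exactly when every set in it is; hence the whole implication of (1) is a statement about $n_R, c_R, c_{RS}, |I_{RS}|$ and $|I_R|$ only. Conditions (2) and (3) are symmetric. For condition (4), every right-hand side and every lower bound appearing in the system --- namely $n_R, n_S, c_R, c_S$, the additive constant $n_{RS}$, and $|I_R|, |I_S|, |I_{RS}|$ --- is among the listed quantities, and the unknowns range over $\{0,\dots,d\}$; therefore whether the system is solvable depends only on these quantities. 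This yields the first half of the claim.

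For computability in \fp, conditions (1)--(3) are checked trivially. For condition (4), the system has a fixed number of unknowns (eight of them), each ranging over $\{0,\dots,d\}$, and $d = |\dom|$ is bounded by the size of the input; so I would simply enumerate all $O(d^{8})$ candidate assignments and test, for each, the constantly many inequalities --- including those involving $\min$ --- in polynomial time, declaring the system satisfiable iff some assignment passes. Hence $\checkp(I_R,I_S,I_{RS})$ is computable in polynomial time.

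I do not expect a real obstacle; the only mild subtlety is that the $\min$ terms in the last inequality of condition (4) make the system non-linear, so one cannot directly invoke a fixed-dimension integer-programming result, but the brute-force enumeration over the $d$-bounded unknowns avoids this entirely. With this claim established, the outer triple sum of Section~\ref{subsec:ucc-naive} can be re-indexed over the \emph{sizes} of $I_R, I_S, I_{RS}$, with each term multiplied by the number of triples $(I_R,I_S,I_{RS})$ satisfying ($\star$) of those prescribed sizes (a product of binomial coefficients); this is a fixed number of nested sums with polynomially many summands, hence computable in \fp, finishing the example.
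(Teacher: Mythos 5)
Your proposal is correct and follows essentially the same route as the paper: the dependence on cardinalities is read off by inspection of the conditions in Claim~\ref{claim:expl-conditions}, and condition~(4) is decided by brute-force enumeration of the constantly many variables over $\{0,\ldots,d\}$, exactly as the paper does. The extra observations you add (why $|C_R\cup C_{RS}\cup I_{RS}|=0$ reduces to the stated cardinalities, and why the $\min$ terms are harmless under enumeration) are consistent with, and slightly more explicit than, the paper's argument.
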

	\begin{proof}
		The fact that this value only depends on the sizes of these sets is simply by inspection of the conditions in Claim~\ref{claim:expl-conditions}. Conditions~(1-3) can obviously be checked in PTIME. The fact that condition~(4) can be checked in PTIME is because we can test all possible assignments between~$0$ and~$d$ for all these variables and see if there is one assignment that satisfies the equations; indeed, observe that the number of variables is fixed because the query is fixed.
	\end{proof}

	But then, we can express the result as follows
	\[\sum_{0 \leq i_R,i_S,i_{RS} \leq d} \binom{d-c}{i_R} \binom{d-c-i_R}{i_S} \binom{d-c_{RS} - i_R - i_S}{i_{RS}}  \times \checkp(i_R,i_S,i_{RS}) \]
	and we can evaluate this expression as-is in FP because computing~$\checkp(I_R,I_S,I_{RS}) \in \{0,1\}$ is in PTIME by the last claim. This concludes this (long) example.

\begin{toappendix}
	Let~$q$ be an sjfBCQ not containing any of these two patterns. Then, as observed in Section~\ref{subsec:countcompls-uniform},~$q$ is a conjunction of basic singletons query.
	Let~$\sigma = \{R_1,\ldots,R_l\}$ be the set of relation symbols of~$q$, and~$D$ be an incomplete database over these relations, with~$\dom$ the uniform domain of the nulls and~$d$ its size.
	For every~$\mathbf{s} \subseteq \sigma$, $\mathbf{s} \neq \emptyset$, let:
	 \begin{itemize}
		\item $C_\mathbf{s}$ be the set of constants that occur in all relations of~$\mathbf{s}$ and in none of the others;~$c_\mathbf{s}$ be its size;
		\item~$N_\mathbf{s}$ be the set of nulls that occur in all relations of~$\mathbf{s}$ and in none of the others;~$n_\mathbf{s}$ be its size.
	 \end{itemize}
	We also define~$c$ as $\sum_{\emptyset \neq \mathbf{s} \subseteq \sigma} c_\mathbf{s}$.
	We can assume wlog that~$C_\mathbf{s} \subseteq \dom$ for all~$\emptyset \neq \mathbf{s} \subseteq \sigma$, otherwise we can simply remove from~$D$ the corresponding facts. 
	Let~$L \defeq 2^l-1$, and let~$\mathbf{s}_1,\ldots,\mathbf{s}_L$ be an arbitrary linear order of~$\{\mathbf{s} \subseteq \sigma \mid \mathbf{s} \neq \emptyset\}$ (for instance, by non-decreasing size).
	We will follow the same steps as in the example of Section~\ref{subsec:ucc-naive}.
	The following lemma is the generalization of Claim~\ref{claim:expl-disjoint}, and explains how we can guide the computation so that we do not count the same completion twice:

	\begin{lemma}
		\label{lem:disjoint}
		For a tuple~$(I_{\mathbf{s}_1},\ldots,I_{\mathbf{s}_L})$ of subsets of~$\dom$ satisfying~($\star$)
		\[I_{\mathbf{s}} \subseteq (\dom \setminus (C \cup \bigcup_{\substack{\emptyset \neq \mathbf{s'}\subseteq \sigma \\ \mathbf{s'} \neq \mathbf{s}}} I_{\mathbf{s'}})) \cup \bigcup_{\emptyset \neq \mathbf{s'}\subsetneq \mathbf{s}} C_\mathbf{s'}\]
		for every~$\mathbf{s} \in (\mathbf{s}_1,\ldots,\mathbf{s}_L)$ (in other words, all the sets~$I_\mathbf{s}$ are mutually disjoint subsets of~$\dom$, and a set~$I_\mathbf{s}$ can only contain a constant~$b\in C$ if~$b$ is in one of the sets~$C_\mathbf{s'}$ for which~$\mathbf{s'}$ is striclty included in~$\mathbf{s}$),
		let us define~$P(I_{\mathbf{s}_1},\ldots,I_{\mathbf{s}_L})$ to be the complete database consisting of the following facts, for every~$\emptyset \neq \mathbf{s} \subseteq \sigma$:
		\begin{itemize}
			\item $R(a)$ for every~$R \in \mathbf{s}$ and~$a \in I_\mathbf{s}$ or~$a \in C_\mathbf{s} \setminus \bigcup_{\mathbf{s} \supsetneq \mathbf{s'}} I_\mathbf{s'}$
		\end{itemize}
		Then, for every two such tuples~$(I_{\mathbf{s}_1},\ldots,I_{\mathbf{s}_L})$ and~$(I'_{\mathbf{s}_1},\ldots,I'_{\mathbf{s}_L})$ satisfying ($\star$) and that are distinct, we have that~$P(I_{\mathbf{s}_1},\ldots,I_{\mathbf{s}_L}) \neq P(I'_{\mathbf{s}_1},\ldots,I'_{\mathbf{s}_L})$.
	\end{lemma}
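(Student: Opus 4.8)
The plan is to prove the stronger statement that the assignment $(I_{\mathbf{s}_1},\ldots,I_{\mathbf{s}_L}) \mapsto P(I_{\mathbf{s}_1},\ldots,I_{\mathbf{s}_L})$ is \emph{injective} on the tuples satisfying~$(\star)$; the lemma is exactly its contrapositive. I would establish this by showing that, from the fixed incomplete database~$D$ together with the complete database $P \defeq P(I_{\mathbf{s}_1},\ldots,I_{\mathbf{s}_L})$, one can recover each of the sets~$I_{\mathbf{s}}$.

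First I would record some elementary bookkeeping. A constant~$a$ occurring in~$D$ lies in exactly one block~$C_{\mathbf{u}}$, namely the one indexed by the set~$\mathbf{u}$ of relation symbols in whose~$D$-instance~$a$ appears; so the blocks~$C_{\mathbf{s}}$ are pairwise disjoint, and (by the wlog assumption) $C = \bigsqcup_{\mathbf{s}} C_{\mathbf{s}} \subseteq \dom$. From~$(\star)$ the sets~$I_{\mathbf{s}}$ are pairwise disjoint, and if a constant~$a$ belongs to some~$I_{\mathbf{s}}$ then $a \in C_{\mathbf{s'}}$ for some $\emptyset \neq \mathbf{s'}\subsetneq\mathbf{s}$; together with the uniqueness of the block of~$a$ this yields: whenever~$a$ is a constant with block~$C_{\mathbf{u}}$ and $a \in I_{\mathbf{s}}$, we have $\mathbf{u}\subsetneq\mathbf{s}$ (in particular, a constant is never in the $I$-set indexed by its own block).

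Next, I would fix a~$(\star)$-tuple and a value~$a$, and compute $\mu(a) \defeq \{R \in \sigma \mid R(a)\in P\}$ directly from the definition of~$P$. Let~$\mathbf{t}$ denote the unique set (if any) with $a\in I_{\mathbf{t}}$, which is well defined by disjointness. If $a \notin C$, only the ``$a\in I_{\mathbf{s}}$'' disjuncts can fire, so $\mu(a)=\mathbf{t}$, or $\mu(a)=\emptyset$ when there is no such~$\mathbf{t}$. If $a\in C_{\mathbf{u}}$, then the disjunct ``$a\in C_{\mathbf{u}}\setminus\bigcup_{\mathbf{u}\supsetneq\mathbf{t''}}I_{\mathbf{t''}}$'' always fires: by the previous paragraph any~$\mathbf{t}$ with $a\in I_{\mathbf{t}}$ satisfies $\mathbf{u}\subsetneq\mathbf{t}$, hence $\mathbf{t}\not\subsetneq\mathbf{u}$, so~$a$ lies in no~$I_{\mathbf{t''}}$ with $\mathbf{t''}\subsetneq\mathbf{u}$. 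Hence $\mu(a)=\mathbf{u}\cup\mathbf{t}$, which equals~$\mathbf{t}$ because $\mathbf{u}\subsetneq\mathbf{t}$, and $\mu(a)=\mathbf{u}$ when there is no such~$\mathbf{t}$.

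Finally I would assemble the reconstruction. Using~$D$ one knows, for each value~$a$, whether $a\in C$ and, if so, its block~$\mathbf{u}$. The previous step then gives: for $a\notin C$, $\mu(a)=\emptyset$ means~$a$ is in no~$I$-set, and otherwise $a\in I_{\mu(a)}$ and in no other; for $a\in C_{\mathbf{u}}$, $\mu(a)=\mathbf{u}$ means~$a$ is in no~$I$-set (here we use that~$(\star)$ forbids $a\in I_{\mathbf{u}}$), and otherwise $\mu(a)\supsetneq\mathbf{u}$ and $a\in I_{\mu(a)}$ and in no other. In every case the membership of~$a$ in each~$I_{\mathbf{s}}$ is determined by~$P$ and~$D$, so the tuple $(I_{\mathbf{s}_1},\ldots,I_{\mathbf{s}_L})$ is uniquely determined by~$P$; this is precisely injectivity, hence the lemma. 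The delicate point—and the one I expect to require the most care—is the computation of~$\mu(a)$ for constants $a\in C$: one must check that~$(\star)$ is exactly strong enough for the block-clause of~$a$ to fire in \emph{every} admissible tuple, so that~$\mu(a)$ is never ``below'' the block of~$a$, which is what disambiguates ``$a$ was used to fill no set'' from ``$a\in I_{\mathbf{s}}$''. This is the mechanism behind the largest-block-first bookkeeping used in the proof of Claim~\ref{claim:expl-disjoint}.
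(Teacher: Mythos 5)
Your proof is correct, but it takes a genuinely different route from the paper's. The paper argues by contradiction: it picks $a \in I_{\mathbf{s}} \setminus I'_{\mathbf{s}}$, shows that $P$ contains $R(a)$ for exactly the $R\in\mathbf{s}$, and then rules out $P'$ matching this by observing that $P'$ would have to assemble $\mathbf{s}$ as a union of strictly smaller index sets $\mathbf{s'}_1,\ldots,\mathbf{s'}_k$, two of which must be incomparable, whence the sets $I'_{\mathbf{s'}_{j_1}}\cup C_{\mathbf{s'}_{j_1}}$ and $I'_{\mathbf{s'}_{j_2}}\cup C_{\mathbf{s'}_{j_2}}$ are disjoint yet both contain $a$. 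You instead prove injectivity directly by exhibiting the inverse map: your computation that $\mu(a)$ is always a chain (the block clause of $a$ always fires and contributes $\mathbf{u}$, and at most one $I$-clause fires and contributes some $\mathbf{t}\supsetneq\mathbf{u}$, so $\mu(a)\in\{\emptyset,\mathbf{u},\mathbf{t}\}$ rather than an arbitrary union) is what replaces the paper's incomparability argument. Both proofs ultimately rest on the same two consequences of~($\star$): pairwise disjointness of the $I$-sets together with disjointness of the constant blocks, and the fact that a constant can only be placed in an $I$-set indexed strictly above its own block. (Like the paper, you read pairwise disjointness off the parenthetical gloss of~($\star$) rather than the displayed inclusion alone, which by itself would permit a constant whose block sits strictly below two incomparable indices to lie in both $I$-sets; this reading is clearly the intended one.) Your approach buys something extra: the reconstruction $a\in I_{\mathbf{s}}$ iff $\mu(a)=\mathbf{s}$ and $a\notin C_{\mathbf{s}}$ is precisely the map $I_{\mathbf{s}}\defeq D_{\mathbf{s}}\setminus C_{\mathbf{s}}$ used in the companion Lemma~\ref{lem:have-all}, so your argument simultaneously verifies that this map inverts $P$ on its image, unifying the two lemmas; the paper's contradiction argument is shorter on the page but leaves that verification as "routine to check."
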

	\begin{proof}
		Let us write~$P = P(I_{\mathbf{s}_1},\ldots,I_{\mathbf{s}_L})$
		and~$P' = P(I'_{\mathbf{s}_1},\ldots,I'_{\mathbf{s}_L})$.
		Assume that~$P=P'$, and let us show
		that~$(I_{\mathbf{s}_1},\ldots,I_{\mathbf{s}_L}) =
		(I'_{\mathbf{s}_1},\ldots,I'_{\mathbf{s}_L})$.  Assume by way
		of contradiction that for some~$\emptyset \neq \mathbf{s}
		\subseteq \sigma$ we have~$I_\mathbf{s} \neq I'_\mathbf{s}$. Then
		(wlog) there exists~$a \in I_\mathbf{s} \setminus
		I'_\mathbf{s}$. By the definition of~$P$, we have that~$P$
		contains all the facts~$R(a)$ for~$R \in \mathbf{s}$. Let us
		show that~$P$ does not contain any fact~$R(a)$
		for~$R\notin\mathbf{s}$. Otherwise, assume that~$P$
		contains~$R(a)$ with~$R \notin \mathbf{s}$. Then there
		exists~$\mathbf{s'} \subseteq \sigma$ such that~$R \in
		\mathbf{s'}$ and such that~$a \in I_{\mathbf{s'}} \cup (
		C_\mathbf{s'} \setminus \bigcup_{\mathbf{s'} \supsetneq
		\mathbf{s''}} I_\mathbf{s''})$.
		Since~$\mathbf{s}$ does not contain~$R$ while~$\mathbf{s'}$
		does, we have~$\mathbf{s'} \not\subseteq \mathbf{s}$. But then
		by~($\star$) we have that~$I_\mathbf{s}$ and~$I_{\mathbf{s'}}
		\cup  C_\mathbf{s'}$ are disjoint, which is a contradiction
		because~$a$ is supposed to be in both~$I_\mathbf{s}$
		and~$I_{\mathbf{s'}} \cup ( C_\mathbf{s'} \setminus
		\bigcup_{\mathbf{s'} \supsetneq \mathbf{s''}}
		I_\mathbf{s''})$.  Therefore, it is indeed the case that~$P$
		does not contain any fact~$R(a)$ for~$R\notin\mathbf{s}$.  Now,
		if~$P'$ contains a fact~$R(a)$ for some~$R \notin \sigma$ then
		we are done since this would imply~$P \neq P'$, a
		contradiction. Hence we can assume that~$P'$ does not contain any
		fact~$R(a)$ for~$R \notin \sigma$.  We will now prove
		that~$P'$ does not contain all the facts~$R(a)$ for~$R \in
		\sigma$, thus establishing a contradiction (because~$P$ does, so we would have~$P
		\neq P'$) and concluding this proof.  Assume by contradiction
		that~$P'$ contains all the facts~$R(a)$ for~$R \in \mathbf{s}$.
		First of all, observe that we have~$a \notin C_\mathbf{s}$
		because by~($\star$) we have that~$I_\mathbf{s}$
		and~$C_\mathbf{s}$ are disjoint, and we know that~$a \in
		I_\mathbf{s}$. Hence, the only way in which~$P'$ could contain
		all the facts~$R(a)$ for~$R \in \mathbf{s}$ is if there
		exist~$\mathbf{s'}_1,\ldots,\mathbf{s'}_k$ with~$k \geq 1$
		and~$\mathbf{s'}_j \subsetneq \mathbf{s}$ for~$1 \leq j\leq k$
		such that~$\bigcup_{1 \leq j \leq k} \mathbf{s'}_j =
		\mathbf{s}$ and such that for every~$1 \leq j \leq k$ we have
		that~(i) $a \in I_{\mathbf{s'}_j} \cup ( C_{\mathbf{s'}_j}
		\setminus \bigcup_{\mathbf{s'}_j \supsetneq \mathbf{s''}} I_\mathbf{s''})$.  Observe that there must
		exist~$1 \leq j_1,j_2 \leq k$ such that~$\mathbf{s'}_{j_1}$
		and~$\mathbf{s'}_{j_2}$ are incomparable by inclusion
		(otherwise, since all~$\mathbf{s}_j$ are strictly included
		in~$\mathbf{s}$, their union could not be equal
		to~$\mathbf{s}$). Also observe that by~($\star$) we have that
		the sets~$I_{\mathbf{s'}_{j_1}} \cup C_{\mathbf{s'}_{j_1}}$ and
		$I_{\mathbf{s'}_{j_2}} \cup C_{\mathbf{s'}_{j_2}}$ must be
		disjoint. But then~(i) applied to~$j_1$ and~$j_2$ gives a
		contradiction (namely, these two sets are not disjoint since
		they both contain~$a$).  This finishes the proof.
	\end{proof}

	This next Lemma generalizes Claim~\ref{claim:expl-have-all} and tells us that by summing over all such tuples~$(I_{\mathbf{s}_1},\ldots,I_{\mathbf{s}_L})$ we cannot miss a completion of~$D$:
	\begin{lemma}
		\label{lem:have-all}
		Let~$D'$ be a completion of~$D$. Then there exists a tuple~$(I_{\mathbf{s}_1},\ldots,I_{\mathbf{s}_L})$ of subsets of~$\dom$ satisfying~($\star$) such that~$D' = P(I_{\mathbf{s}_1},\ldots,I_{\mathbf{s}_L})$.
	\end{lemma}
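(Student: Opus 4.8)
The plan is to mimic the argument behind Claim~\ref{claim:expl-have-all}, generalizing the three sets $I_R,I_S,I_{RS}$ there to a family indexed by the nonempty subsets of $\sigma$. Fix a completion $D'$ of $D$. For a constant $a$, write $\mathbf{s}(a) \defeq \{R \in \sigma \mid R(a) \in D'\}$ for the set of relations in which $a$ occurs in $D'$, and $\mathbf{s}_D(a) \defeq \{R \in \sigma \mid R(a) \in D\}$ for its ``literal'' profile in the naive table $D$. For every $\emptyset \neq \mathbf{s} \subseteq \sigma$ I then define
\[ I_{\mathbf{s}} \defeq \{a \in \dom \mid \mathbf{s}(a) = \mathbf{s}\} \setminus C_{\mathbf{s}} , \]
which for $\sigma = \{R,S\}$ gives back exactly the sets of Claim~\ref{claim:expl-have-all}. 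First I would note that every constant occurring in $D'$ lies in $\dom$: it is either the value of a null, hence in $\dom$ by definition of a valuation, or a constant occurring literally in $D$, hence in some $C_{\mathbf{s}}$ and therefore in $\dom$ by the standing wlog assumption. So each $I_{\mathbf{s}}$ is a well-defined subset of $\dom$.

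Next I would verify that $(I_{\mathbf{s}_1},\ldots,I_{\mathbf{s}_L})$ satisfies $(\star)$. Pairwise disjointness is immediate since $a \mapsto \mathbf{s}(a)$ is a function: no $a$ can satisfy both $\mathbf{s}(a) = \mathbf{s}$ and $\mathbf{s}(a) = \mathbf{s}'$ for $\mathbf{s} \neq \mathbf{s}'$. For the remaining requirement of $(\star)$, suppose $a \in I_{\mathbf{s}} \cap C$, so $a$ occurs literally in $D$; then $a \in C_{\mathbf{s}_D(a)}$ with $\mathbf{s}_D(a) \neq \emptyset$. Literal facts of $D$ survive in any completion, so $\mathbf{s}_D(a) \subseteq \mathbf{s}(a) = \mathbf{s}$, and the inclusion is strict because $a \notin C_{\mathbf{s}}$ (by definition of $I_{\mathbf{s}}$) while $a \in C_{\mathbf{s}_D(a)}$. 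Hence $a \in C_{\mathbf{s}_D(a)} \subseteq \bigcup_{\emptyset \neq \mathbf{s}' \subsetneq \mathbf{s}} C_{\mathbf{s}'}$, as $(\star)$ demands.

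It then remains to check $D' = P(I_{\mathbf{s}_1},\ldots,I_{\mathbf{s}_L})$ fact by fact. Take $R \in \sigma$ and a constant $a$. If $R(a) \in D'$, set $\mathbf{s} \defeq \mathbf{s}(a) \ni R$: either $a \notin C_{\mathbf{s}}$, so $a \in I_{\mathbf{s}}$; or $a \in C_{\mathbf{s}}$, and then no $I_{\mathbf{s}'}$ with $\mathbf{s}' \subsetneq \mathbf{s}$ contains $a$ (that would force $\mathbf{s}(a) = \mathbf{s}' \neq \mathbf{s}$), so $a \in C_{\mathbf{s}} \setminus \bigcup_{\mathbf{s} \supsetneq \mathbf{s}'} I_{\mathbf{s}'}$; in both cases the defining clause of $P$ for the block $\mathbf{s}$ contributes $R(a)$. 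Conversely, if $R(a) \in P(I_{\mathbf{s}_1},\ldots,I_{\mathbf{s}_L})$, then for some $\mathbf{s} \ni R$ we have $a \in I_{\mathbf{s}}$ or $a \in C_{\mathbf{s}} \setminus \bigcup_{\mathbf{s} \supsetneq \mathbf{s}'} I_{\mathbf{s}'}$; in the first case $\mathbf{s}(a) = \mathbf{s} \ni R$ gives $R(a) \in D'$, and in the second case $a \in C_{\mathbf{s}}$ gives $R(a) \in D \subseteq D'$. This establishes the equality.

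The step requiring the most care is keeping straight the interaction between the constants occurring literally in $D$ (organized into the blocks $C_{\mathbf{s}}$, which are defined from $D$, not from $D'$) and the extra occurrences that valuating nulls creates: the crucial subtlety is that $\mathbf{s}_D(a)$ may be a proper subset of $\mathbf{s}(a)$, which is precisely why the definition of $I_{\mathbf{s}}$ subtracts $C_{\mathbf{s}}$ and why the ``$\setminus \bigcup_{\mathbf{s} \supsetneq \mathbf{s}'} I_{\mathbf{s}'}$'' clause is present in the definition of $P$. Everything else is routine bookkeeping against the (admittedly heavy) definitions of $P(\cdot)$ and of $(\star)$ from Lemma~\ref{lem:disjoint}.
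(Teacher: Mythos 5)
Your proposal is correct and follows essentially the same route as the paper: the paper's proof defines $I_{\mathbf{s}} \defeq D_{\mathbf{s}} \setminus C_{\mathbf{s}}$, where $D_{\mathbf{s}}$ is the set of constants occurring in exactly the relations of $\mathbf{s}$ in the completion $D'$ — which is precisely your $\{a \in \dom \mid \mathbf{s}(a) = \mathbf{s}\}$ — and then declares the verification routine. You have simply carried out that routine verification explicitly (including the observation that $\mathbf{s}_D(a) \subsetneq \mathbf{s}(a)$ is exactly what the $C_{\mathbf{s'}}$-exception in $(\star)$ accommodates), and it checks out.
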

	\begin{proof}
		For~$\emptyset \neq \mathbf{s} \subseteq \sigma$, let us define~$D_\mathbf{s}$ to be the set of constants that occur in all relation of~$\mathbf{s}$ and in none of the others. Define the set~$I_\mathbf{s}$ for~$\emptyset \neq \mathbf{s} \subseteq \sigma$ as follows: $I_\mathbf{s} \defeq D_\mathbf{s} \setminus C_\mathbf{s}$.
		It is then routine to check that~$(I_{\mathbf{s}_1},\ldots,I_{\mathbf{s}_L})$ satisfies~($\star$) and is such that~$D' = P(I_{\mathbf{s}_1},\ldots,I_{\mathbf{s}_L})$.
	\end{proof}

	Lemma~\ref{lem:disjoint} and~\ref{lem:have-all} allows us to express the result as

	\begin{equation}
		\sum_{I_{\mathbf{s}_1} \subseteq \dom \setminus C} \ldots 
		\sum_{I_{\mathbf{s}_j} \subseteq (\dom \setminus (C \cup \bigcup_{1 \leq k < j} I_{\mathbf{s}_k})) \cup \bigcup_{\emptyset \neq \mathbf{s'}\subsetneq \mathbf{s}} C_\mathbf{s'}} \ldots
		\sum_{I_{\mathbf{s}_L} \subseteq \dom \setminus (C_{\mathbf{s}_L} \cup \bigcup_{1 \leq k < L} I_{\mathbf{s}_k}) }
		\checkp(I_{\mathbf{s}_1},\ldots,I_{\mathbf{s}_L})
	\end{equation}

	where~$\checkp(I_{\mathbf{s}_1},\ldots,I_{\mathbf{s}_L}) \in \{0,1\}$ is defined by 
	$$\checkp(I_{\mathbf{s}_1},\ldots,I_{\mathbf{s}_L})\defeq \begin{cases}
		1 \text{ if } P(I_{\mathbf{s}_1},\ldots,I_{\mathbf{s}_L}) \text{ is a completion of } D \text{ that satisfies } q \\
		0 \text{ otherwise}
	\end{cases}.$$
	
	As such we cannot evaluate this expression in \ptime.
	The next step is to show that the value of~$\checkp(I_{\mathbf{s}_1},\ldots,I_{\mathbf{s}_L})$ only depends on~$(|I_{\mathbf{s}_1}|,\ldots,|I_{\mathbf{s}_L}|)$, which would allow us to rewrite the result as
	\begin{equation}
		\label{eq:final}
		\sum_{0 \leq i_{\mathbf{s}_1},\ldots,i_{\mathbf{s}_L} \leq d} \prod_{1 \leq j \leq L} 
		\binom{d - c - \sum_{1\leq k < j} i_{\mathbf{s}_k} + \sum_{\emptyset \neq \mathbf{s'}\subsetneq \mathbf{s}} C_\mathbf{s'}}{i_{\mathbf{s}_j}}
		\times \checkp(i_{\mathbf{s}_1},\ldots,i_{\mathbf{s}_L})
	\end{equation}

	We give here the necessary and sufficient conditions for~$P(I_{\mathbf{s}_1},\ldots,I_{\mathbf{s}_L})$ to be a completion of~$D$ that satisfies~$q$.

	\begin{lemma}
	We have~$\checkp(I_{\mathbf{s}_1},\ldots,I_{\mathbf{s}_L}) = 1$ if and only if the following conditions hold:
		\begin{enumerate}
			\item for every basic singleton query~$C_i(x)$ of~$q$, letting~$\mathbf{s}$ be its sets of relation symbols,
				there exists~$\mathbf{s} \subseteq \mathbf{s'} \subseteq \sigma$ such that we have~$|I_\mathbf{s'}|\geq 1$ or~$c_\mathbf{s'}\geq 1$.
			\item for every~$\emptyset \neq \mathbf{s} \subseteq \sigma$, if~$n_\mathbf{s} \geq 1$ and
				$|\bigcup_{\mathbf{s'} \supseteq \mathbf{s}} C_\mathbf{s'} \cup \bigcup_{\mathbf{s'} \supsetneq \mathbf{s}} I_\mathbf{s'}| = 0$ then~$|I_\mathbf{s}| \neq 0$.
			\item consider the following system of equations, with integer variables between~$0$ and~$d$:
		\begin{itemize}
			\item for every two sets~$A,A'$ of subsets of~$\{\emptyset \neq \mathbf{s} \subseteq \sigma \}$, we have a variable~$z_{N_\mathbf{s}}^{A,A'}$ for every~$\mathbf{s} \in A$ and a variable~$z_{C_\mathbf{s}}^{A,A'}$ for every~$\mathbf{s} \in A'$.
				For instance if~$\sigma = \{R,S,T,U\}$ and if~$A = \{\{R,S\},\{S,T\}\}$ and~$A' = \{\{U\}\}$ we have the variables 
				$$z^{\{\{R,S\},\{S,T\}\},\{\{U\}\}}_{N_{\{R,S\}}} \text{ and } z^{\{\{R,S\},\{S,T\}\},\{\{U\}\}}_{N_{\{S,T\}}} \text{  and } 
				z^{\{\{R,S\},\{S,T\}\},\{\{U\}\}}_{C_{\{U\}}}.$$ 
				The intuition is that we will use $z^{\{\{R,S\},\{S,T\}\},\{\{U\}\}}_{N_{\{R,S\}}}$ of the nulls in~$N_{\{R,S\}}$ and combine them with $z^{\{\{R,S\},\{S,T\}\},\{\{U\}\}}_{N_{\{S,T\}}}$ of the nulls in~$N_{\{S,T\}}$ and with $z^{\{\{R,S\},\{S,T\}\},\{\{U\}\}}_{C_{\{U\}}}$ of the constants in~$C_{\{U\}}$ in order to obtain constants in~$I_{\{R,S,T,U\}}$. Let us write~$V$ this set of variables.
				(we note here that we are using sligthly different notation than for the example in Section~\ref{subsec:ucc-naive}; this is for readability reasons only.)
			\item Now, for every~$\emptyset \neq \mathbf{s} \subseteq \sigma$ we have the constraint
				\[\sum_{z_{N_\mathbf{s}}^{A,A'} \in V} z_{N_\mathbf{s}}^{A,A'} \leq n_\mathbf{s}\]
				as well as the constraint
				\[\sum_{z_{C_\mathbf{s}}^{A,A'} \in V} z_{N_\mathbf{s}}^{A,A'} \leq c_\mathbf{s}\]
				intuitively expressing that we do not use more nulls and constants than there are available.
			\item for every~$\emptyset \neq \mathbf{s} \subseteq \sigma$ we have a constraint

				\[\sum_{\substack{A,A' \subseteq \{\emptyset \neq \mathbf{s} \subseteq \sigma\}\\A\cup A' = \mathbf{s}}} \min_{z_{*}^{A,A'} \in V} z_{*}^{A,A'}
				\geq I_\mathbf{s}
				\]
								intuitively meaning that we have allocated the groups of nulls and constants in a way that allows us to fill the set~$I_\mathbf{s}$.
		\end{itemize}
		Then this system of equations must have a solution.
		\end{enumerate}
	\end{lemma}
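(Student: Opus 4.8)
The plan is to establish the three conditions as jointly necessary and sufficient for $P \defeq P(I_{\mathbf{s}_1},\ldots,I_{\mathbf{s}_L})$ to be a completion of $D$ satisfying $q$, reasoning at the level of the ``relation fingerprint'' of each constant rather than at the level of individual facts. Since $q$ has neither the pattern $R(x,x)$ nor the pattern $R(x,y)$, it is a conjunction of basic singletons, so every fact in play is unary and a completion $D'$ is entirely determined by the map $a \mapsto \rho_{D'}(a) \defeq \{R\in\sigma \mid R(a)\in D'\}$. First I would record the elementary formula for this map in a completion $\nu(D)$: for a constant $a$ we have $\rho_{\nu(D)}(a) = \mathbf{t}_a \cup \bigcup\{\mathbf{t} \mid \exists\, \bot\in N_{\mathbf{t}} \text{ with } \nu(\bot)=a\}$, where $\mathbf{t}_a$ is the unique block with $a\in C_{\mathbf{t}_a}$, or $\emptyset$ if $a\notin C$. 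Reading off the definition of $P$ in Lemma~\ref{lem:disjoint}, the fingerprint \emph{prescribed} by $P$ is $\mathbf{s}$ for $a\in I_{\mathbf{s}}$, is $\mathbf{t}$ for $a\in C_{\mathbf{t}}$ not promoted into any $I_{\mathbf{s}'}$, and is $\emptyset$ otherwise; by ($\star$) the $I_{\mathbf{s}}$ are pairwise disjoint so these cases are unambiguous. Hence ``$\nu(D)=P$'' is equivalent to: (i) every null of $N_{\mathbf{t}}$ is sent to a constant whose prescribed fingerprint contains $\mathbf{t}$; (ii) every $a\in I_{\mathbf{s}}$ with $a\notin C$ receives, for each $R\in\mathbf{s}$, at least one null from a block $N_{\mathbf{u}}$ with $R\in\mathbf{u}$; (iii) every $a\in C_{\mathbf{t}}$ promoted into $I_{\mathbf{s}}$ receives enough nulls from blocks $N_{\mathbf{u}}$ with $\mathbf{u}\subseteq\mathbf{s}$ to cover $\mathbf{s}\setminus\mathbf{t}$.

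For \emph{necessity}: condition~(1) is immediate, since $P\models q$ holds iff for each basic singleton $C_i(x)$ of $q$ with relation set $\mathbf{s}$ some constant of $P$ has prescribed fingerprint $\supseteq\mathbf{s}$, and such constants are exactly the elements of $I_{\mathbf{s}'}$ or $C_{\mathbf{s}'}$ with $\mathbf{s}'\supseteq\mathbf{s}$ (a promoted $C_{\mathbf{s}'}$-constant lands in an even larger $I_{\mathbf{s}''}$, which only helps). Condition~(2) holds because a null of $N_{\mathbf{t}}$ must be sent to a constant whose prescribed fingerprint contains $\mathbf{t}$, and those constants are precisely the elements of $\bigcup_{\mathbf{s}'\supseteq\mathbf{t}}C_{\mathbf{s}'}\cup\bigcup_{\mathbf{s}'\supseteq\mathbf{t}}I_{\mathbf{s}'}$; if the $C$-part and the $\mathbf{s}'\supsetneq\mathbf{t}$ portion of the $I$-part are both empty, only $I_{\mathbf{t}}$ survives. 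Condition~(3) is obtained by reading the system off a witnessing valuation $\nu$: each constant $a\in I_{\mathbf{s}}$ is ``produced'' by identifying one null from each of a set $A$ of null-blocks with at most one promoted constant from a singleton $A'$ (so $A'=\{\mathbf{t}_a\}$ if $a$ was promoted from $C_{\mathbf{t}_a}$, and $A'=\emptyset$ otherwise) with $\bigcup_{\mathbf{u}\in A\cup A'}\mathbf{u}=\mathbf{s}$; setting $z_{N_{\mathbf{u}}}^{A,A'}$, $z_{C_{\mathbf{u}}}^{A,A'}$ to the corresponding counts, the ``$\le n_{\mathbf{u}}$''/``$\le c_{\mathbf{u}}$'' constraints say no block is over-drawn, and the ``$\ge |I_{\mathbf{s}}|$'' constraint records that the number of constants produced into $I_{\mathbf{s}}$ — per group the bottleneck $\min$ of the simultaneously identified quantities — covers all of $I_{\mathbf{s}}$.

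For \emph{sufficiency}, I would take a feasible integer solution of the system in~(3) and build $\nu$ explicitly: process each $\mathbf{s}$ and each group $(A,A')$ with $\bigcup_{\mathbf{u}\in A\cup A'}\mathbf{u}=\mathbf{s}$, realizing $\min$-many still-uncovered constants of $I_{\mathbf{s}}$ by identifying that many nulls from each $N_{\mathbf{u}}$ ($\mathbf{u}\in A$) and promoting that many constants from $C_{\mathbf{u}}$ ($\mathbf{u}\in A'$); groups with $|A'|\ge 2$ are never realizable, as distinct constants cannot be merged, so they contribute nothing. The ``$\ge|I_{\mathbf{s}}|$'' constraints guarantee every fresh constant of every $I_{\mathbf{s}}$ is covered, the ``$\le$'' constraints that no block is exhausted. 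The still-unused nulls of each $N_{\mathbf{t}}$ are then absorbed: by condition~(2) there is a target constant whose prescribed fingerprint contains $\mathbf{t}$ (some $C_{\mathbf{s}'}$- or $I_{\mathbf{s}'}$-constant with $\mathbf{s}'\supseteq\mathbf{t}$, or an element of $I_{\mathbf{t}}$), and sending the leftover nulls there changes no fingerprint, as $N_{\mathbf{t}}$-nulls only contribute relations in $\mathbf{t}$. One then checks $\nu(D)=P$ (no missing fact, since every prescribed fingerprint was realized; no extra fact, since every null went to a constant whose prescribed fingerprint contains its block) and $\nu(D)\models q$ by~(1). Finally, since $\sigma$ is fixed, the system of~(3) has a fixed number of variables ranging over $\{0,\ldots,d\}$, so $\checkp(\cdot)$ is decidable in \fp\ and, by inspection of (1)--(3), depends only on $|I_{\mathbf{s}_1}|,\ldots,|I_{\mathbf{s}_L}|$ and the fixed data $n_{\mathbf{s}},c_{\mathbf{s}}$ — exactly what is needed to collapse the nested sum over the $I_{\mathbf{s}}$ into the polynomially computable sum over their sizes, as in Section~\ref{subsec:ucc-naive}.

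The main obstacle is the rigorous bookkeeping in necessity and sufficiency of condition~(3): making precise the correspondence between valuations realizing $P$ and feasible integer assignments while handling simultaneously (i) nulls of one block scattered among several targets, (ii) the promotion of $C_{\mathbf{t}}$-constants into strictly larger $I_{\mathbf{s}}$-blocks, (iii) leftover nulls, and (iv) the degenerate groups with $|A'|\ge 2$. Mercifully, unlike the exact-counting arguments elsewhere in the article, here Lemma~\ref{lem:disjoint} already guarantees that no completion is counted twice, so only the $0/1$ feasibility question remains; approximate or suboptimal allocations are therefore harmless, and — as the paper does for Claim~\ref{claim:expl-conditions} — the argument can reasonably be carried out semi-formally.
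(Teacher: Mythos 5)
Your proposal is correct and follows essentially the same route as the paper: the paper's entire proof of this lemma is the sentence ``the idea is the same as in Claim~\ref{claim:expl-conditions}, plus condition~(1) for query satisfaction,'' so what you have done is reconstruct, in the general setting, exactly the argument that the paper only carries out for the two-relation example --- partition the nulls and constants into groups according to the purpose they serve, read the integer system off a witnessing valuation for necessity, and build a valuation from a feasible assignment for sufficiency. Your ``fingerprint'' formalization ($a\mapsto\{R\mid R(a)\in D'\}$) is a clean way to package the case analysis and is, if anything, more explicit than the paper's treatment. One concrete wrinkle deserves attention: you correctly observe that groups $(A,A')$ with $|A'|\geq 2$ are unrealizable because distinct constants cannot be identified, but ``they contribute nothing'' only settles the necessity direction; for sufficiency, a feasible solution could in principle meet the $\geq|I_{\mathbf{s}}|$ constraint \emph{only} through the $\min$ of such a degenerate group, and your construction would then under-cover $I_{\mathbf{s}}$. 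The fix is to delete those variables from $V$ (equivalently, force them to $0$), exactly as the two-relation example implicitly does by never introducing a group $\{C_R,C_S\}$; this is a defect of the lemma's literal statement rather than of your argument, and the paper's own one-line proof does not address it either, but your write-up should make the exclusion explicit rather than waving at it.
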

	\begin{proof}
		The idea is the same as in Claim~\ref{claim:expl-conditions}. The only difference is that we added condition (1), which ensures that the guessed completion indeed satisfies the query.
	\end{proof}

	As in the example of Section~\ref{subsec:ucc-naive}, this implies that the value of~$\checkp(I_{\mathbf{s}_1},\ldots,I_{\mathbf{s}_L})$ only depends on~$(|I_{\mathbf{s}_1}|,\ldots,|I_{\mathbf{s}_L}|)$ and can be computed in FP (by testing all assignments of the~$z_*^*$ variables; because the schema is fixed so there are only a fixed number of such variables).
	But then we can compute the result in FP by evaluating the expression~\ref{eq:final}, which finishes the proof. 
\end{toappendix}

\section{Approximating the numbers of valuations and completions}
\label{sec:approx} 
As we saw in the previous sections, counting valuations and completions of an incomplete database are usually intractable problems. 
However, this does not necessarily rule out the existence of efficient approximation algorithms for such counting problems, in particular if some source of randomization is allowed. 
In this section, we investigate this question by focusing on the well-known notion of Fully Polynomial-time Randomized Approximation Scheme (FPRAS) for counting problems~\cite{jerrum1986random}.
Formally, 
let~$\Sigma$ be a finite alphabet and $f : \Sigma^* \to \mathbb{N}$ be a counting problem. 
Then~$f$
is said to have an FPRAS if there is a randomized algorithm~$\mathcal{A} : \Sigma^* \times (0,1) \to \mathbb{N}$ and a polynomial~$p(u,v)$ such that, given~$x \in \Sigma^*$ and~$\epsilon \in (0,1)$,
algorithm~$\mathcal{A}$ runs in time~$p(|x|,\nicefrac{1}{\epsilon})$ and satisfies the following condition:
\begin{eqnarray*}
\Pr\big(|f(x) - \mathcal{A}(x,\epsilon)| \, \leq \, \epsilon f(x)\big) & \geq &\frac{3}{4}.
\end{eqnarray*}
Observe that the property of having an FPRAS is closed under polynomial-time parsimonious reductions, that is, if we have an FPRAS for a counting problem~$A$ and for counting problem~$B$ we have that~$B \pr A$, then we also have an FPRAS for~$B$.

In the following sections, we investigate the existence of FPRAS for the problems of counting valuations and completions of an incomplete database. 
The overall picture that we obtain is shown in Table~\ref{tab:fpras-count}. 
We first deal with counting valuations in Section~\ref{sec:approx-countvals}, where we show a general condition under which
this problem has an FPRAS (which will apply, in particular, to all Boolean conjunctive queries). Then, in Section~\ref{sec:approx-countcompls}, we show that the situation is quite different for counting completions, as in most cases this problem does not admit an FPRAS.

\subsection{Approximating the number of valuations}
\label{sec:approx-countvals}

To prove the main result of this section, we need to consider the 
counting complexity class \spanl~\cite{alvarez1993very}. Given a finite alphabet~$\Sigma$, an \nl-transducer~$M$ over~$\Sigma$ is a nondeterministic Turing Machine with input and output alphabet~$\Sigma$, a read-only input tape, a write-only output tape (where the head is always moved to the right once a symbol in~$\Sigma$ is written on it, so that the output cannot be read by~$M$), and a work-tape of which, on input~$x$, only the first~$c \cdot \log(|x|)$ cells can be used for a fixed constant~$c > 0$ (so that the space used by~$M$ is logarithmic). Moreover,~$y \in \Sigma^*$ is said to be an output of~$M$ with input~$x$, if there exists an accepting run of~$M$ with input~$x$ such that~$y$ is the string in the output tape when~$M$ halts. Then a function $f: \Sigma^* \to \mathbb{N}$ is said to be in \spanl\ if there exists an \nl-transducer~$M$ over~$\Sigma$ such that for every~$x \in \Sigma^*$, the value~$f(x)$ is equal to the number of distinct outputs of~$M$ with input~$x$. 
In \cite{alvarez1993very}, it was proved that $\spanl \subseteq \shp$, and also that this inclusion is strict unless~$\nl = \np$.

Very recently, the authors of~\cite{arenas2019efficient} have shown 
that every problem in \spanl\ has an~FPRAS.
\begin{theorem}[{\cite[Corollary 3]{arenas2019efficient}}]
\label{thm:spanl-fpras}
	Every problem in \spanl\ has an FPRAS.
\end{theorem}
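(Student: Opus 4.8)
The plan is to show that every problem in \spanl\ has an FPRAS in two stages: a polynomial-time parsimonious reduction of an arbitrary \spanl\ problem to the problem that, given an NFA $A$ and a length $n$ written in unary, asks for the number $\#\mathsf{NFA}(A,n)$ of distinct words of length $n$ accepted by $A$; followed by an appeal to the (highly nontrivial) fact that $\#\mathsf{NFA}$ admits an FPRAS. Since, as already observed, the property of having an FPRAS is closed under parsimonious reductions, these two ingredients combine to give the theorem, so I only need to supply the reduction and the automaton result.

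For the reduction, let $f \in \spanl$ be witnessed by an \nl-transducer $M$ over $\Sigma$, and fix an input $x$. Because $M$ uses only $O(\log|x|)$ work-tape cells it has polynomially many configurations on $x$, so any accepting run that repeats a configuration can be shortened; hence every output of $M$ on $x$ has length strictly below some polynomial bound $p(|x|)$. I build an NFA $A_{M,x}$ whose states are the configurations of $M$ on $x$, with start state the initial configuration, final states the accepting halting configurations, an $a$-labeled edge from $C$ to $C'$ whenever $M$ can move from $C$ to $C'$ writing $a\in\Sigma$ on its output tape, and an $\epsilon$-edge whenever $M$ can move from $C$ to $C'$ writing nothing; after the usual polynomial-time $\epsilon$-elimination, $L(A_{M,x})$ is exactly the set of outputs of $M$ on $x$, so $f(x) = |L(A_{M,x})|$ and all words in $L(A_{M,x})$ are shorter than $p(|x|)$. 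To pass to a single fixed length I then add a fresh letter $\#\notin\Sigma$, a fresh sink state that is made the unique accepting state and carries a $\#$-self-loop, and an $\epsilon$-edge from every old final state to the sink, and call the resulting NFA (after $\epsilon$-elimination) $A'$. Then $w \mapsto w\#^{\,p(|x|)-|w|}$ is a bijection between $L(A_{M,x})$ and the set of length-$p(|x|)$ words accepted by $A'$, where injectivity uses that no output of $M$ contains $\#$; hence $\#\mathsf{NFA}(A',p(|x|)) = f(x)$, and $x \mapsto (A',p(|x|))$ is clearly polynomial-time computable.

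The real content, and the step I expect to be by far the hardest, is the FPRAS for $\#\mathsf{NFA}$ itself, which is exactly the main result of~\cite{arenas2019efficient}; I would cite it rather than reprove it. The obstacle is overcounting: a single word may be accepted along exponentially many distinct runs, so the naive ``self-reducibility plus rejection sampling'' recipe that handles problems such as $\#\mathrm{DNF}$ fails here. The argument of~\cite{arenas2019efficient} instead sweeps the $n$ input positions layer by layer, maintaining for each state $q$ and level $\ell$ both an approximate count of, and an approximately uniform sampler over, the set of distinct length-$(n-\ell)$ suffixes accepted from $q$; at each layer such a set is the union over the outgoing transitions of $q$ of sets already handled at the previous layer, and its cardinality is estimated by a Karp--Luby-style coverage estimator fed by the samplers, the delicate point being to keep the multiplicative error bounded as it compounds over the $n$ layers. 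Combined with the reduction above, this shows that $f$ has an FPRAS, which is the statement.
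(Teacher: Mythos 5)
The paper does not prove this theorem at all---it is imported verbatim as Corollary~3 of~\cite{arenas2019efficient}---and your outline is precisely how that corollary is obtained there: the configuration graph of the \nl-transducer on input $x$ is turned into an NFA whose distinct accepted words of a padded polynomial length are in bijection with the distinct outputs of the transducer, and the main theorem of that paper (an FPRAS for counting the words of a given length accepted by an NFA) does the rest, so your proposal is correct and follows essentially the same route as the cited source. The one loose phrase is the claim that runs repeating a configuration ``can be shortened'' to bound output length (shortening a run changes its output); the clean argument is that, under the standard convention that the transducer halts on every computation path, no configuration can repeat on any run, so every run---and hence every output---already has length bounded by the polynomial number of configurations.
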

By using this result, we can give a general condition on a Boolean query~$q$ under which~$\countvals(q)$ has an FPRAS, as this condition ensures that~$\countvals(q)$ is in \spanl. More precisely,
a Boolean query~$q$ is said to be \emph{monotone} if for every pair of (complete) databases~$D$, $D'$ such that~$D \subseteq D'$, if $D \models q$, then $D' \models q$. Moreover,~$q$ is said to have \emph{bounded minimal models} if there exists a constant~$C_q$ (that depends only on~$q$) satisfying that for every (complete) database~$D$, if~$D \models q$, then there exists
$D'\subseteq D$ such that~$D' \models q$ and the number of facts in~$D'$ is at most~$C_q$.
Finally, the \emph{model checking problem} for~$q$, denoted by~$\mc(q)$, is the problem of deciding, given a (complete) database~$D$, whether~$D \models q$. Then~$q$ is said to have a model checking in a complexity class~$\CC$ if $\mc(q) \in \CC$. With this terminology, we can state the main result of this section. 
\begin{proposition}
\label{prp:countvals-in-spanl}
\begin{sloppypar}
	Assume that a Boolean query~$q$ is monotone, has model checking in nondeterministic linear space, and has bounded minimal models.
	Then~$\countvals(q)$
	is in~\spanl.
\end{sloppypar}
\end{proposition}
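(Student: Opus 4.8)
The plan is to exhibit an \nl-transducer $M$ which, on input an incomplete database $D=(T,\dom)$ over $\sig(q)$, has exactly $\countvals(q)(D)$ distinct outputs; by the definition of $\spanl$ recalled above, this proves the claim. The design is guided by the principle that an accepting run of $M$ should guess a valuation $\nu$ of $D$, write a canonical encoding of $\nu$ on the (write-only) output tape, and accept precisely when $\nu(D)\models q$ -- so that distinct ``good'' valuations produce distinct outputs and nothing else is produced. The obvious obstacle is that deciding $\nu(D)\models q$ directly costs nondeterministic \emph{linear} space in $|\nu(D)|$, which is not logarithmic in $|D|$. This is exactly where the three hypotheses on $q$ enter: by monotonicity, $\nu(D)\models q$ iff some $D'\subseteq\nu(D)$ with $|D'|\le C_q$ satisfies $q$ (bounded minimal models), and such a $D'$ is obtained by selecting at most $C_q$ facts of $D$ and substituting their nulls according to $\nu$. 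Since $q$ is fixed, the relations in $\sig(q)$ have bounded arity, and each value in a null's domain can be addressed with $O(\log|D|)$ bits, so such a $D'$ has total size $O(\log|D|)$; hence running the nondeterministic-linear-space model checker for $q$ on this constant-fact database uses only $O(\log|D|)$ work space.

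Concretely, $M$ would work in three stages. \emph{(i)} It nondeterministically guesses indices of at most $C_q$ facts $f_{j_1},\dots,f_{j_m}$ of $D$ -- the intended witness -- keeping these indices on the work tape in $O(\log|D|)$ space. \emph{(ii)} It scans the nulls $\bot_1,\dots,\bot_k$ of $D$ in a fixed canonical order; for each $\bot_i$ it guesses a value $v_i\in\dom(\bot_i)$, writes $v_i$ (with separators) to the output tape, and, whenever $\bot_i$ occurs in one of the witness facts, records $v_i$ in the corresponding position(s) of a buffer meant to hold the ground facts $\nu(f_{j_1}),\dots,\nu(f_{j_m})$. After this stage the output tape contains a canonical encoding of the guessed valuation $\nu$, and the buffer -- of size $O(\log|D|)$, since it stores at most $C_q$ facts of bounded arity whose entries are $O(\log|D|)$-bit values -- contains $D'\defeq\{\nu(f_{j_1}),\dots,\nu(f_{j_m})\}$. \emph{(iii)} $M$ runs the nondeterministic-linear-space model-checking algorithm for $q$ on $D'$ (taking the buffer as its input), using $O(\log|D|)$ further work space, and accepts iff that algorithm accepts. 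All stages keep the work tape logarithmic and the output tape is write-only, so $M$ is an \nl-transducer.

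Finally, I would verify that the number of distinct outputs of $M$ equals $\countvals(q)(D)$. Every accepting run outputs a syntactically valid encoding of some valuation of $D$, and two accepting runs output the same string iff they guessed the same valuation (the encoding is injective, by the fixed ordering of the nulls); so the distinct outputs are in bijection with the valuations $\nu$ for which \emph{some} accepting run of $M$ outputs the encoding of $\nu$. If $\nu(D)\models q$, then by monotonicity and bounded minimal models there is $D'\subseteq\nu(D)$ with $|D'|\le C_q$ and $D'\models q$; choosing in stage \emph{(i)} the facts of $D$ yielding $D'$, guessing $\nu$ in stage \emph{(ii)}, and following an accepting computation of the model checker on $D'$ in stage \emph{(iii)} gives an accepting run outputting the encoding of $\nu$. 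Conversely, any accepting run outputting the encoding of $\nu$ must have guessed $\nu$ in stage \emph{(ii)} and found $D'=\{\nu(f_{j_1}),\dots,\nu(f_{j_m})\}\subseteq\nu(D)$ with $D'\models q$ in stage \emph{(iii)}, so $\nu(D)\models q$ by monotonicity. Hence the distinct outputs of $M$ correspond exactly to the valuations $\nu$ with $\nu(D)\models q$, i.e.\ $\countvals(q)\in\spanl$. The point I expect to require the most care is the space bookkeeping of stage \emph{(iii)}: one must really use that the number and arity of the witness facts are bounded by constants depending only on $q$, so that $|D'|=O(\log|D|)$ and ``linear in $|D'|$'' stays logarithmic in $|D|$.
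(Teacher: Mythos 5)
Your proposal is correct and follows essentially the same route as the paper's proof: guess a constant-size witness subset of facts, use bounded minimal models and monotonicity to reduce the check to a database of encoding size $O(\log|D|)$ on which the nondeterministic linear-space model checker runs in logarithmic space, and output the valuation in a canonical null order so that distinct outputs biject with satisfying valuations. The only cosmetic difference is that the paper performs the model check before emitting the valuation while you emit first and check afterwards; since \spanl\ counts distinct outputs of accepting runs only, this is immaterial.
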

\begin{proof}
	Let~$D$ be the input incomplete database, with the domains for each null.
	First, the machine guesses a subset~$D'\subseteq D$ of size~$\leq C_q$, such that each fact of~$D'$ is over a relation symbol that appears in~$q$.
	Observe that~$D'$ contains at most~$|D'| \times \arity(q) \leq C_q \times \arity(q)$ distinct nulls, and that this is a constant.
	The machine then guesses and remembers a valuation~$\nu$ of~$D'$ and computes~$\nu(D')$.
	The encoding size~$||\nu(D')||$ of~$\nu(D')$ is~$O(\log |D|)$, so the machine can check in nondeterministic linear space
	whether~$\nu(D') \models q$, and stops and rejects in the branches that fail the test.
	Then, the machine reads the input tape left to right and for every occurrence of a null~$\bot$ (appearing in~$D$) that it finds, it does the following:
	\begin{itemize}
		\item It checks whether~$\bot$ appears before on the input tape and if so it simply continues;
		\item Else if~$\bot$ does not appear before on the input tape but appears in~$D'$ then the machine writes~$\nu(\bot)$ on its output tape;
		\item Else if~$\bot$ does not appear before on the input tape and does not appear in~$D'$ then it guesses a
			value for it and writes that value on the output tape (but it does not remember that value).
	\end{itemize}
	It is easy to see that this procedure can be carried out by a logspace nondeterministic transducer, so we only need to show that the distinct outputs of the machine
	correspond exactly to the distinct valuations~$\nu$ of~$D$ such that~$\nu(D) \models q$.
	Since the machine writes values for nulls in order of first appearance on the input tape, it is clear that every valuation is outputted exactly once.
	Let~$\nu$ be a valuation that is outputted, and let~$D'$ be the subdatabase such that~$\nu(D') \models q$. Since~$\nu(D') \subseteq \nu(D)$ and~$q$
	is monotone, we have~$\nu(D)\models q$. Inversely, let~$\nu$ be a valuation of~$D$ such that~$\nu(D) \models q$, and let us show that it must be outputted.
	Since~$\nu(D) \models q$ and~$q$ has bounded minimal models, there exists~$D_\nu \subseteq \nu(D)$ of size~$\leq C_q$ such that~$D_\nu \models q$.
	But~$D_\nu$ is~$\nu(D')$ for some~$D' \subseteq D$ of size~$\leq C_q$.
	Then it is clear that one of the branches of the machine has guessed~$D'$
	and then~$\nu_{|D'}$ and then has written~$\nu$ on the output tape.
\end{proof}

In particular, given that a union of Boolean of conjunctive queries satisfies the three properties of the previous proposition, we conclude from 
Theorem~\ref{thm:spanl-fpras}
that $\countvals(q)$ can be efficiently approximated by using a randomized algorithm if~$q$ is a union of BCQs.\footnote{As a matter of fact, this holds even for the larger class of unions of BCQs with {\em inequalities} (that is, atoms of the form $x \neq y$), as 
	such queries also satisfy the aforementioned three properties.}

\begin{corollary}
\label{cor:countvals-has-fpras}
	If~$q$ is a union of BCQs, then~$\countvals(q)$ has an FPRAS (and the same holds if restricted to the uniform setting and/or to Codd tables).
\end{corollary}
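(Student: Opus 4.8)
The plan is to obtain this as an immediate consequence of Proposition~\ref{prp:countvals-in-spanl} and Theorem~\ref{thm:spanl-fpras}, so the only real work is to verify that a union of BCQs~$q$ satisfies the three hypotheses of Proposition~\ref{prp:countvals-in-spanl}: that $q$ is monotone, that its model checking problem lies in nondeterministic linear space, and that $q$ has bounded minimal models. First I would check monotonicity: if $D \subseteq D'$ and some disjunct $q_i$ of $q$ holds in $D$ via a homomorphism $h$, then $h$ is still a homomorphism from $q_i$ to $D'$, so $D' \models q_i$ and hence $D' \models q$. For bounded minimal models, I would take $C_q$ to be the maximum number of atoms over the disjuncts of $q$, which is a constant since $q$ is fixed: if $D \models q$, pick a disjunct $q_i = R_1(\bar x_1) \land \cdots \land R_{m_i}(\bar x_{m_i})$ and a witnessing homomorphism $h$; then $D' := \{R_1(h(\bar x_1)), \dots, R_{m_i}(h(\bar x_{m_i}))\}$ is a subset of $D$ with at most $m_i \leq C_q$ facts that already satisfies $q_i$, hence $q$. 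For $\mc(q)$, I would note that for a fixed union of BCQs one decides $D \models q$ by nondeterministically guessing a disjunct together with a homomorphism from it to $D$ (a mapping of a fixed number of variables to constants of $D$), which is checkable in logarithmic — hence certainly linear — space, so $\mc(q)$ is even in $\nl$.

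With these three properties in hand, Proposition~\ref{prp:countvals-in-spanl} gives $\countvals(q) \in \spanl$, and then Theorem~\ref{thm:spanl-fpras} yields an FPRAS for $\countvals(q)$. For the uniform and/or Codd variants, I would observe that the $\spanl$ transducer built in the proof of Proposition~\ref{prp:countvals-in-spanl} never uses whether the input table is naive or Codd, and works verbatim on uniform inputs (reading the single domain $\dom$ instead of $\dom(\bot)$); alternatively, each of $\ucountvals(q)$, $\ccountvals(q)$, $\cucountvals(q)$ reduces to $\countvals(q)$ by the trivial polynomial-time parsimonious reduction that attaches the common domain to every null (and leaves Codd tables Codd), and the existence of an FPRAS is closed under such reductions. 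Hence all four problems have an FPRAS.

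I do not expect a genuine obstacle here; the proof is a routine verification. The only mildly delicate points are that the bound $C_q$ in the definition of bounded minimal models is allowed to depend on $q$ (it is), and — for the strengthening mentioned in the footnote — that adding inequality atoms $x \neq y$ preserves all three properties, which it does, since such queries remain monotone and homomorphism-based and a minimal satisfying database still contains at most one fact per relational atom.
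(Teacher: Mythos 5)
Your proposal is correct and follows exactly the paper's route: the paper likewise derives the corollary by noting that a union of BCQs is monotone, has model checking in nondeterministic linear space, and has bounded minimal models, then applies Proposition~\ref{prp:countvals-in-spanl} and Theorem~\ref{thm:spanl-fpras}. Your explicit verification of the three hypotheses and the handling of the uniform/Codd variants fills in details the paper leaves implicit, but introduces nothing different in substance.
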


We prove in the next section that the good properties stated in 
Proposition~\ref{prp:countvals-in-spanl} do not hold for counting completions.

\subsection{Approximating the number of completions}
\label{sec:approx-countcompls}

In this section, we prove that the problem of counting completions of an incomplete database is much harder in terms of approximation than the problem of counting valuations. In this investigation, two randomized complexity classes play a fundamental role. Recall that~$\rp$ is the class of decision problems~$L$ for which there exists a polynomial-time probabilistic Turing Machine~$M$ such that: (a) if~$x \in L$, then~$M$ accepts with probability at least~$\nicefrac{3}{4}$; and (b) if~$x \not\in L$, then~$M$ does not accept~$x$. Moreover,~$\bpp$ is defined exactly as~$\rp$ but with condition (b) replaced by: (b') if~$x \not\in L$, then~$M$ accepts with probability at most~$\nicefrac{1}{4}$. Thus,~$\bpp$ is defined as~$\rp$ but allowing errors for both the elements that are and are not in~$L$. It is easy to see that~$\rp \subseteq \bpp$. Besides, it is known that~$\rp \subseteq \np$, and this inclusion is widely believed to be strict. Finally, it is not known whether~$\bpp \subseteq \np$ or~$\np \subseteq \bpp$, but it is widely believed that~$\np$ is not included in~$\bpp$.

\paragraph{{\bf The non-uniform case}}
Recall that \sIS\ is the problem of counting the number of independent sets of a graph. This problem
will play a fundamental role when showing non-approximability of counting completions in the non-uniform case. More precisely, the following is known about the approximability of \sIS.
\begin{theorem}[{\cite[Theorem 3.1]{dyer2002counting}}]
\label{thm:IS-no-fpras}
	The problem \sIS\ does not admit an FPRAS unless $\np=\rp$. 
\end{theorem}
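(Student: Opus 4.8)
The plan is to reduce from the \np-complete problem of deciding, given a graph $G$ on $n$ vertices and an integer $k$, whether $G$ has an independent set of size at least $k$; write $P(G,k)$ for this predicate and $\alpha(G)$ for the independence number of $G$. The obstacle is that the decision version of counting independent sets is trivial (the empty set is always independent), so a bare FPRAS is useless by itself; the idea is to first \emph{amplify} the gap between graphs of different independence numbers by a blow-up gadget, then extract the answer to $P(G,k)$ from a single approximate count, and finally upgrade the resulting two-sided-error procedure to a one-sided ($\rp$) one by self-reducibility.

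First I would introduce, for a parameter $t$, the blow-up $G^{(t)}$: replace each vertex $v$ of $G$ by an independent set $V_v$ of $t$ fresh vertices, and for every edge $\{u,v\}$ of $G$ add all edges between $V_u$ and $V_v$. An independent set of $G^{(t)}$ amounts to choosing a subset $A_v \subseteq V_v$ for each $v$ subject to $\{v : A_v \neq \emptyset\}$ being independent in $G$, so $\sIS(G^{(t)}) = \sum_{j=0}^{n} i_j\,(2^t-1)^{j}$, where $i_j$ is the number of size-$j$ independent sets of $G$ (with $i_0 = 1$). Crucially, $G^{(t)}$ has $nt$ vertices, hence polynomial size whenever $t$ is polynomial in $n$.

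Next I would pick $t$ large enough that the top-degree term dominates: choosing $t$ with $2^t - 1 \geq 100 \cdot 2^{n}$ (for instance $t = n+7$), one gets $\sIS(G^{(t)}) \geq (2^t-1)^{k}$ when $\alpha(G) \geq k$, and $\sIS(G^{(t)}) \leq 2^{n}(2^t-1)^{k-1} \leq \frac{1}{100}(2^t-1)^{k}$ when $\alpha(G) \leq k-1$, a multiplicative gap of at least $100$. Running an FPRAS for $\sIS$ on $G^{(t)}$ with error parameter $\epsilon = \frac{1}{3}$ takes time polynomial in $|G^{(t)}|$, hence in $n$; boosting its confidence to $1 - \frac{1}{10n}$ by taking the median of $O(\log n)$ independent runs, and comparing the estimate against the threshold $\frac{1}{10}(2^t-1)^{k}$, yields a polynomial-time, two-sided-error decision procedure for $P(G,k)$.

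Finally, to obtain a genuine $\rp$ algorithm I would invoke the self-reducibility of independent set: using the above procedure as a subroutine, build an explicit candidate set of size $k$ by the standard search-to-decision loop --- repeatedly pick a vertex $v$, recurse on $G \setminus \{v\}$ if it still admits an independent set of size $k$, and otherwise commit $v$ to the solution, delete $v$ together with its neighbours, and continue with target size $k-1$ --- which makes $O(n)$ calls to the subroutine. With all calls amplified to error $\leq \frac{1}{10n}$, they are all simultaneously correct with probability $\geq \frac{2}{3}$ (union bound); one then \emph{deterministically} checks that the returned set is independent of size $k$ and accepts iff this check succeeds. This never accepts when $\alpha(G) < k$, and accepts with probability $\geq \frac{2}{3}$ when $\alpha(G) \geq k$, so the problem lies in $\rp$, whence $\np \subseteq \rp$ and therefore $\np = \rp$. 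I expect the genuinely delicate part to be precisely this last step --- converting two-sided error into one-sided error via the self-reduction plus a deterministic final verification, rather than hoping the FPRAS is itself one-sided --- together with the routine bookkeeping that the polynomially many amplified FPRAS calls still run in total polynomial time.
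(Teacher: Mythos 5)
Your argument is correct. Note, however, that the paper does not prove this statement at all: it is imported verbatim as a black-box citation to Dyer, Goldberg, Greenhill and Jerrum, so there is no internal proof to compare against. What you have written is a self-contained derivation of the cited fact, and it is essentially the standard argument underlying that theorem: since $\sIS$ is never zero, the usual ``an FPRAS distinguishes $0$ from $\geq 1$'' trick is unavailable, and one instead amplifies via the vertex blow-up $G^{(t)}$ so that the count $\sum_j i_j(2^t-1)^j$ is dominated by the top surviving degree, turning a constant-factor approximation into a decision procedure for $\alpha(G)\geq k$. Your accounting checks out: with $2^t-1\geq 100\cdot 2^n$ the two cases are separated by a factor $100$, an $\epsilon=\nicefrac{1}{3}$ FPRAS preserves the separation around the threshold $\frac{1}{10}(2^t-1)^k$, and $G^{(t)}$ has polynomially many vertices. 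The final step is also sound and is the one most write-ups gloss over: rather than invoking the generic fact that $\np\subseteq\bpp$ implies $\np=\rp$ (which the paper elsewhere cites as~\cite{K82}), you make the error one-sided directly by running the self-reduction to produce a witness and then verifying it deterministically, which can never accept a no-instance. The only cosmetic quibble is that you land at acceptance probability $\nicefrac{2}{3}$ while the paper's definition of $\rp$ asks for $\nicefrac{3}{4}$; one-sided error amplifies trivially by repetition (or by taking per-call error $\nicefrac{1}{100n}$), so this is not a gap.
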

In the proof of Proposition~\ref{prp:countcompls}, we considered the problem $\sVC$ of counting the number of vertex covers of a graph $G=(V,E)$, and showed that $\sVC \pr \ccountcompls(R(x))$. By observing that $S \subseteq V$ is an independent set of~$G$ if and only if~$V \setminus S$ is a vertex cover of~$G$, we can conclude that $\sIS(G) = \sVC(G)$ and, thus, the same reduction from the proof of Proposition~\ref{prp:countcompls} establishes that $\sIS \pr \ccountcompls(R(x))$.
Therefore, from the fact that 
the reduction in Lemma~\ref{lem:pattern-parsimonious-compls} is also parsimonious and preserves the property of being a Codd table, and the fact that the existence of an FPRAS is closed under polynomial-time parsimonious reductions, we obtain the following result from Theorem~\ref{thm:IS-no-fpras}.

\begin{theorem}[Dichotomy]
\label{thm:countcompls-sjfcq}
	For every \sjfbcq~$q$, it holds that $\ccountcompls(q)$ does not admit an FPRAS unless $\np = \rp$ $($and, hence, the same holds for~$\countcompls(q)$$)$.\footnote{Again, we remind the reader that, to avoid trivialities, we assume all \sjfbcqs\ to contain at least one atom and all atoms to have at least one variable.}
\end{theorem}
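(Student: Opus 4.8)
The plan is to bootstrap from the reduction already established in the proof of Proposition~\ref{prp:countcompls}. There, a parsimonious reduction $\sVC \pr \ccountcompls(R(x))$ was constructed. First I would invoke the well-known complementation bijection: a set $S \subseteq V$ is an independent set of a graph $G=(V,E)$ if and only if $V\setminus S$ is a vertex cover of $G$. This is a bijection between independent sets and vertex covers, so $\sIS(G)=\sVC(G)$ for every graph $G$. Consequently, the exact same polynomial-time construction used in the proof of Proposition~\ref{prp:countcompls} already witnesses a parsimonious reduction $\sIS \pr \ccountcompls(R(x))$, with no modification needed to the construction itself.

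Next I would lift this from the single pattern $R(x)$ to an arbitrary \sjfbcq~$q$. Since we assume (to avoid trivialities) that $q$ has at least one atom and every atom has at least one variable, the one-atom, one-variable query $R(x)$ is a pattern of $q$ in the sense of Definition~\ref{def:pattern}: delete all atoms of $q$ except one, then delete all but one occurrence of one of its variables, and rename. By Lemma~\ref{lem:pattern-parsimonious-compls} we therefore have $\ccountcompls(R(x)) \pr \ccountcompls(q)$, and this reduction is parsimonious and preserves the Codd property. Composing parsimonious reductions, we obtain $\sIS \pr \ccountcompls(q)$ for every \sjfbcq~$q$.

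Finally I would appeal to Theorem~\ref{thm:IS-no-fpras} (the Dyer--Goldberg--Greenhill--Jerrum result), which states that $\sIS$ admits no FPRAS unless $\np=\rp$, together with the observation recorded at the start of Section~\ref{sec:approx}: the property of admitting an FPRAS is closed under polynomial-time parsimonious reductions. Hence if $\ccountcompls(q)$ had an FPRAS, so would $\sIS$, forcing $\np=\rp$. This establishes the claimed dichotomy for $\ccountcompls(q)$; the statement for $\countcompls(q)$ then follows immediately, since $\countcompls(q)$ is at least as hard as $\ccountcompls(q)$ (a Codd table is a special case of a naive table, so the identity reduction works and is parsimonious). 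There is essentially no obstacle here—the work was all done in Proposition~\ref{prp:countcompls} and Lemma~\ref{lem:pattern-parsimonious-compls}; the only point requiring a moment's care is checking that the complementation identity indeed gives a parsimonious (count-preserving) reduction, which it does since it is a bijection, and that the pattern-reduction lemma's parsimoniousness and Codd-preservation (already asserted in its statement and used in Theorem~\ref{thm:countcompls-sjfcqs-complete-codd}) are what we need.
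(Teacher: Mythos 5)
Your proposal is correct and follows essentially the same route as the paper: the paper likewise observes that $\sIS(G)=\sVC(G)$ via complementation, reuses the reduction from Proposition~\ref{prp:countcompls} to get $\sIS \pr \ccountcompls(R(x))$, lifts to arbitrary \sjfbcqs\ via Lemma~\ref{lem:pattern-parsimonious-compls}, and concludes with Theorem~\ref{thm:IS-no-fpras} and closure of FPRAS existence under parsimonious reductions. No gaps.
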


\paragraph{{\bf The uniform case}}
Recall that from Theorem~\ref{thm:ucc-naive-fp}, we know that if an \sjfbcq~$q$ contains neither $R(x,x)$ nor $R(x,y)$ as a pattern, then $\ucountcompls(q)$ is in \fp. Thus, the question to answer in this section is whether~$\ucountcompls(q)$ and~$\cucountcompls(q)$ can be efficiently approximated if~$q$ contains any of these two patterns. 
For the case of naive tables, we will give a negative answer 
to this question.
Notice that, this time, our reduction from~\sIS\ in Proposition~\ref{prp:Rxx-Rxy-hard-compls-naive} is not parsimonious, so we cannot use Theorem~\ref{thm:IS-no-fpras} as we did for the non-uniform case.
Instead, we will rely on the following well-known fact: if there exists a \bpp\ algorithm for a problem that is \np-complete,
then 
$\np \subseteq \bpp$, which in turn implies that $\np = \rp$~\cite{K82}.

\begin{proposition}
	\label{prp:ucountcompls-Rxy-no-frpas}
	Neither~$\ucountcompls(R(x,x))$ nor $\ucountcompls(R(x,y))$ admits an FPRAS unless $\np=\rp$. This holds even in the restricted setting 
		in which all nulls are interpreted over the same fixed domain~$\{1,2,3\}$.
\end{proposition}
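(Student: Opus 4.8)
The proof should follow the standard template for ruling out an FPRAS via an NP-hardness argument: find a decision problem that is NP-hard, and reduce it to the \emph{positivity} question ``is the count nonzero?'' in such a way that an FPRAS for the counting problem would decide the NP-hard problem in \bpp. Here the natural candidate is to reuse the reduction from Proposition~\ref{prp:Rxx-Rxy-hard-compls-naive}: recall that from a graph $G=(V,E)$ we built a naive table $D$ over a single binary relation $R$, with all nulls ranging over $\{0,1\}$, such that every completion satisfies $R(x,x)$ (and hence $R(x,y)$), and the number of completions of $D$ equals $2^{|V|}+\#\IS(G)$. That reduction alone is not enough, however, because $\#\IS$ has no known NP-hard ``positivity'' question (every graph has at least one independent set). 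So the plan is to replace $\#\IS$ by a counting problem whose associated \emph{nonemptiness} question is NP-complete, while keeping the domain fixed at a small set of constants.

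\textbf{First step.} The key observation is that, by the same encoding style, counting completions of a single binary relation can be made to count the number of \emph{proper $3$-colorings} (or, dually, the number of objects of some NP-hard search problem), with nulls ranging over $\{1,2,3\}$. Concretely, given a graph $G=(V,E)$, introduce one null $\bot_v$ per vertex with domain $\{1,2,3\}$; for each edge $\{u,v\}$ put the facts $R(\bot_u,\bot_v)$ and $R(\bot_v,\bot_u)$; and add fixed ``gadget'' facts (e.g. $R(i,j)$ for all $i\neq j$, so that a completion satisfying the query is forced, plus a fresh-null fact $R(\bot,\bot)$ ensuring $R(x,x)$ holds always) chosen so that: (i) every completion satisfies both $R(x,x)$ and $R(x,y)$, and (ii) the completions of $D$ are in bijection — or at least in a known linear relationship, recoverable in polynomial time — with the proper $3$-colorings of $G$ together with a fixed, easily-counted ``padding'' term. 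The point is to arrange that the count is of the form $(\text{poly-computable offset}) + c\cdot\#3\text{Col}(G)$ with $c\ge 1$, so that the count strictly exceeds the offset \emph{iff} $G$ is $3$-colorable. One has to be careful that adding the facts $R(i,j)$ for $i\ne j$ does not collapse the correspondence; this may require tracking completions by which of the finitely many ``diagonal'' facts $R(1,1),R(2,2),R(3,3)$ and which off-diagonal facts they contain, exactly as in the $2^{|V|}+\#\IS(G)$ bookkeeping of Proposition~\ref{prp:Rxx-Rxy-hard-compls-naive}.

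\textbf{Second step: from the reduction to non-approximability.} Suppose $\ucountcompls(R(x,x))$ (say) admits an FPRAS $\mathcal A$. Given $G$, compute $D$ as above in polynomial time, run $\mathcal A$ on $D$ with a fixed $\epsilon<1/2$ chosen smaller than the relative gap between the offset and offset$+c$ (this gap is at least $1/(\text{offset}+c)$, which is inverse-polynomial, so picking $\epsilon$ polynomially small suffices — this is where the FPRAS's polynomial dependence on $1/\epsilon$ is used). With probability $\ge 3/4$ the estimate is within $\epsilon\cdot(\text{true count})$, which lets us distinguish ``count $=$ offset'' from ``count $\ge$ offset$+c$'', i.e.\ decide $3$-colorability. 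Standard amplification boosts $3/4$ to whatever constant is needed. Since $3$-colorability is NP-complete, this gives a \bpp\ algorithm for an NP-complete problem, hence $\np\subseteq\bpp$, hence $\np=\rp$ by the cited fact~\cite{K82}. The identical argument handles $\ucountcompls(R(x,y))$, because the same $D$ works for both queries.

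\textbf{Main obstacle.} The genuinely delicate part is designing the gadget facts so that (a) the domain stays fixed at $\{1,2,3\}$, (b) every completion satisfies the query (so that the count we approximate is really about the \emph{number of distinct completions}, not about satisfaction), and (c) the map from completions to $3$-colorings remains injective, or at least has a polynomial-time computable, color-count-independent multiplicity, despite the extra ``all $R(i,j)$ with $i\ne j$'' facts that (b) seems to force. Getting (b) and (c) to coexist — i.e.\ forcing satisfaction without destroying the ability to read off the coloring from the completion — is exactly the technical crux, and it is resolved by the same trick as in Proposition~\ref{prp:Rxx-Rxy-hard-compls-naive}: attach a private ``pointer'' fact $R(v,\bot_v)$ to each vertex so that two valuations agreeing nowhere on the $\bot_v$ produce distinct completions, and then partition completions according to which diagonal facts they contain, so the count splits as a fixed polynomial-size sum of terms, exactly one of which is (a multiple of) $\#3\text{Col}(G)$ and the rest of which are poly-computable. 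Once that bookkeeping is in place, the approximation argument above goes through verbatim.
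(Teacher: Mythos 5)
Your overall strategy --- reduce from $3$-colorability, arrange the completion count so that it reveals $3$-colorability, and turn an FPRAS into a \bpp\ algorithm for an \np-complete problem, hence $\np=\rp$ --- is exactly the paper's, and the choice of target problem is right. But there is a genuine gap in how you propose to realize the reduction. You want the count to have the form $\text{offset} + c\cdot\#3\mathrm{Col}(G)$ and you assert that the relative gap, roughly $c/(\text{offset}+c)$, is inverse-polynomial, so that a polynomially small $\epsilon$ suffices. That assertion fails for the construction you describe: the ``private pointer fact'' $R(v,\bot_v)$ attached to every vertex forces any two valuations that differ on some $\bot_v$ to yield distinct completions, so the offset is at least $3^{|V|}$ --- exponential in the input --- exactly as in the $2^{|V|}+\#\IS(G)$ bookkeeping you are imitating. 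When $G$ is $3$-colorable but has only $O(1)$ proper colorings, distinguishing $\text{offset}$ from $\text{offset}+c$ requires $\epsilon = O(c\cdot 3^{-|V|})$, and an FPRAS running in time $p(|x|,\nicefrac{1}{\epsilon})$ cannot afford that. (``Offset poly-time computable'' is not the same as ``offset polynomially bounded in magnitude''; the former only bounds its bit-length.)

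The paper's proof resolves precisely this tension by doing the opposite of what the pointer facts do: it makes the \emph{total} number of completions a constant. Besides the coding facts $R(\bot_u,\bot_v),R(\bot_v,\bot_u)$ for edges, it adds the six triangle facts $R(i,j)$ for $i\neq j$ in $\{1,2,3\}$, three auxiliary pairs $R(\bot_i,\bot_i'),R(\bot_i',\bot_i)$ on fresh nulls, and one fact $R(c,c)$ on a fresh constant (the latter forcing every completion to satisfy both queries). Every completion is then the triangle plus some subset of the self-loops $R(1,1),R(2,2),R(3,3)$, together with $R(c,c)$; the auxiliary facts guarantee that all seven configurations containing at least one self-loop are always realizable, while the loop-free triangle is realizable iff $G$ is $3$-colorable. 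So the count is $7$ or $8$, and a single FPRAS call with the \emph{fixed} accuracy $\epsilon=\nicefrac{1}{16}$ separates the two cases. To repair your argument you must abandon the pointer facts and any mechanism that makes the completion count grow with $|V|$; keeping the number of completions polynomially bounded (here, constant) is not a bookkeeping detail but the crux of the approximation-hardness proof.
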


\begin{proof}
	Let~$G=(V,E)$ be a graph. First, we explain how to construct an incomplete database~$D$ containing a single binary relation~$R$, with uniform domain~$\{1,2,3\}$, and such that (a) all completions of~$D$ satisfy both queries; (b) if~$G$ is~$3$-colorable then~$D$ has~$8$ completions; and (c) if~$G$ is not~$3$-colorable then~$D$ has~$7$ completions. For every node~$u\in V$ we have a null~$\bot_u$.
	The database~$D$ consists of the following three disjoint sets of facts:
		\begin{itemize}
			\item For every edge~$\{u,v\} \in E$, we have the two facts~$R(\bot_u,\bot_v)$ and~$R(\bot_v,\bot_u)$; we call these the \emph{coding facts}.
			\item We have the facts~$R(1,2),R(2,1),R(2,3),R(3,2),R(1,3)$, and $R(3,1)$; we call these the \emph{triangle facts};
			\item We have six fresh nulls~$\bot_1,\bot'_1,\bot_2,\bot'_2,\bot_3,\bot'_3$ and the facts $R(\bot_i,\bot'_i)$ and $R(\bot'_i,\bot_i)$ for~$1 \leq i \leq 3$; we call these the \emph{auxiliary facts};
			\item Last, we have a fact~$R(c,c)$, where~$c$ is a fresh constant.
		\end{itemize}
		It is clear that all the completions of~$D$ satisfy both queries (thanks to the fact~$R(c,c)$), so we only need to prove~(b) and~(c).
		Observe that a candidate completion of~$D$ can be equivalently seen as an undirected graph, \emph{possibly with self-loops}, over the nodes~$\{1,2,3\}$ (we omit the fact~$R(c,c)$ since it is in every completion) and that contains the triangle. Thanks to the auxiliary facts, it is easy to show that all such graphs with at least one self-loop can be obtained as a completion of~$D$. For instance, the completion that is triangle with a self-loop only on~$1$ can be obtained by assigning~$1$ to all the nulls in the coding facts, assigning~$1$ to~$\bot_1$, $\bot'_1$,~$\bot_2$ and~$\bot_3$ and assigning~$2$ to~$\bot'_2$ and~$\bot'_3$.
		There are~$7$ such completions in total. Then, the completion whose graph is the triangle with no self-loops is obtainable if and only if~$G$ is~$3$-colorable (we assign a~$3$-coloring to the nulls in the coding facts, and assign~$1$ to~$\bot_i$ and~$2$ to~$\bot'_i$ for every~$i \in \{1,2, 3\}$). This indeed proves~(b) and~(c).
		Next, we show that any FRPAS with~$\epsilon=\nicefrac{1}{16}$ for counting the number of completions of~$D$ would yield a \bpp\ algorithm to solve $3$-colorability, thus implying~$\np=\rp$ since 
		$3$-colorability
		is an \np-complete problem.

	Let~$\mathcal{A}$ be an FPRAS for~$\ucountcompls(q)$, with~$q$ being $R(x,x)$ or~$R(x,y)$, and let us define a \bpp\ algorithm~$\mathcal{B}$ for~$3$-colorability using~$\mathcal{A}$. On input graph~$G$, algorithm $\mathcal{B}$ does the following. First, it computes in polynomial time the naive table~$D$ as described above.
	Then~$\mathcal{B}$ calls~$\mathcal{A}$ with input~$(D,\nicefrac{1}{16})$, and if~$\mathcal{A}(D,\nicefrac{1}{16}) \geq 7.5$, then~$\mathcal{B}$ accepts, otherwise $\mathcal{B}$ rejects.
	We now prove that~$\mathcal{B}$ is indeed a \bpp\ algorithm for~$3$-colorability. Assume first that~$G$ is~$3$-colorable. Then by~(b) and by definition of what is an FPRAS, we have that~$\Pr\big(|8 - \mathcal{A}(D,\nicefrac{1}{16})| \leq \nicefrac{8}{16} \big) \geq \frac{3}{4}$. This implies in particular that
	$\Pr\big(\mathcal{A}(D,\nicefrac{1}{16}) \geq 8 -\nicefrac{8}{16} \big) \geq \frac{3}{4}$. Since~$8-\nicefrac{8}{16} = 7.5$ we conclude that if~$G$ is~$3$-colorable, then~$\mathcal{B}$ accepts with probability at least~$\nicefrac{3}{4}$.
	Next, assume that~$G$ is not~$3$-colorable. Then by~(c) we have that $\Pr\big(|7 - \mathcal{A}(D,\nicefrac{1}{16})| \leq \nicefrac{7}{16} \big) \geq \frac{3}{4}$. This implies in particular that~$\Pr\big(\mathcal{A}(D,\nicefrac{1}{16}) \leq 7 +\nicefrac{7}{16} \big) \geq \frac{3}{4}$. Since~$7+\nicefrac{7}{16} < 7.5$, this implies in particular that~$\Pr\big(\mathcal{A}(D,\nicefrac{1}{16}) < 7.5 \big) \geq \frac{3}{4}$. From this, we conclude that if~$G$ is not~$3$-colorable, then~$\mathcal{B}$ rejects with probability at least~$\nicefrac{3}{4}$. This concludes the proof of the~proposition.
\end{proof}

By observing again that the reduction in Lemma~\ref{lem:pattern-parsimonious-compls} is parsimonious, and that the existence of an FPRAS is closed under parsimonious reductions, we obtain that $\ucountcompls(q)$ cannot be efficiently approximated if $q$ contains $R(x,x)$ or $R(x,y)$ as a pattern.

\begin{theorem}[Dichotomy]
	Let $q$ be an \sjfbcq. If $q$ has $R(x,x)$ or $R(x,y)$ as a pattern, then $\ucountcompls(q)$ does not admit an FPRAS unless $\np=\rp$. Otherwise, 
	this problem 
	is in~\fp\ $($by Theorem~\ref{thm:ucc-naive-fp}$)$.
\end{theorem}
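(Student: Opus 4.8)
The plan is to obtain this dichotomy by assembling results already proved in the excerpt; no new machinery is required. The tractable half of the statement---that $\ucountcompls(q) \in \fp$ whenever $q$ contains neither $R(x,x)$ nor $R(x,y)$ as a pattern---is precisely the content of Theorem~\ref{thm:ucc-naive-fp}, and the theorem merely records it here for symmetry with the hardness half; so there is nothing to do on that side.

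For the hardness half, suppose $q$ has $R(x,x)$ or $R(x,y)$ as a pattern, and write $q_0$ for whichever of these two single-binary-atom queries is a pattern of $q$. First I would invoke Proposition~\ref{prp:ucountcompls-Rxy-no-frpas}, which states that $\ucountcompls(q_0)$ admits no FPRAS unless $\np = \rp$ (indeed already when all nulls range over the fixed domain $\{1,2,3\}$; its proof goes through a \bpp\ algorithm for $3$-colorability and the fact that a \bpp\ decision procedure for an \np-complete problem forces $\np = \rp$). Next I would apply Lemma~\ref{lem:pattern-parsimonious-compls}: since $q_0$ is a pattern of $q$, it supplies a polynomial-time parsimonious reduction $\ucountcompls(q_0) \pr \ucountcompls(q)$, and the lemma is explicitly stated to remain valid in the uniform setting, so the reduction stays inside the uniform variant of the problem. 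Finally, since the property of admitting an FPRAS is closed under polynomial-time parsimonious reductions (as noted at the beginning of Section~\ref{sec:approx}), an FPRAS for $\ucountcompls(q)$ would yield an FPRAS for $\ucountcompls(q_0)$, contradicting Proposition~\ref{prp:ucountcompls-Rxy-no-frpas} unless $\np = \rp$. This establishes the hardness half and completes the dichotomy.

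There is essentially no genuine obstacle in this proof: it is a bookkeeping combination of (i) the tractability classification of Theorem~\ref{thm:ucc-naive-fp}, (ii) the non-approximability of the two minimal hard patterns from Proposition~\ref{prp:ucountcompls-Rxy-no-frpas}, (iii) the pattern-reduction Lemma~\ref{lem:pattern-parsimonious-compls} in its uniform form, and (iv) closure of FPRAS-existence under parsimonious reductions. The only point that warrants a moment's attention is (iii)---one must be sure that the reduction of Lemma~\ref{lem:pattern-parsimonious-compls} does not introduce non-uniform domains---but the lemma already guarantees preservation of the uniform setting (the domains of nulls are left untouched in that reduction), so the check is immediate.
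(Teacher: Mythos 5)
Your proof is correct and follows exactly the paper's argument: the tractable case is delegated to Theorem~\ref{thm:ucc-naive-fp}, and the hardness case combines Proposition~\ref{prp:ucountcompls-Rxy-no-frpas} with the parsimonious pattern reduction of Lemma~\ref{lem:pattern-parsimonious-compls} (which preserves the uniform setting) and the closure of FPRAS-existence under parsimonious reductions. Nothing to add.
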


We do not know if this result still holds for Codd tables, or if it is possible to design an FPRAS in this setting.
We leave this question open for future research.

\section{On the general landscape: beyond~\shp}
\label{sec:misc}
Recall that, when studying the complexity of counting completions for \sjfbcqs\ in Section~\ref{sec:countcompls-sjfcqs}, we did
claim that these problems are in~\shp\ for Codd tables, but that we did not claim so for naive tables.
The goal of this section is then threefold. 
First, we want to prove that the problem of counting completions in indeed in~\shp\ for Codd tables.
Second, we want to give formal evidence that we indeed could not show membership in \shp\ for naive tables.
Third, we want to identify a counting complexity class that is more appropriate to describe the complexity of $\countcompls(q)$.
We deal with these three objectives in the next three sections.

\subsection{Membership in \shp\ of~$\ccountcompls(q)$}
	\label{subsec:countcompls-codd-sharp-p}

In this section, and as promised in the proofs of Propositions~\ref{thm:countcompls-sjfcqs-complete-codd} and~\ref{thm:ucc-codd-fp}, we show that for any Boolean query~$q$, if the model checking problem for~$q$ (denoted~$\mc(q)$, recall the definition from Section~\ref{sec:approx-countvals}) is in \ptime\ then the problem of counting completions for~$q$ under Codd tables is in~\shp. 

\begin{proposition}
	\label{prp:countcompls-codd-sharp-p}
	If a Boolean query~$q$ has the property that model checking is in \ptime, then we have that $\ccountcompls(q)$ is in~\shp.
\end{proposition}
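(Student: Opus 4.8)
The plan is to exhibit a nondeterministic polynomial-time Turing machine $M$ such that, on input a Codd table $D = (T, \dom)$ over $\sig(q)$, the number of accepting runs of $M$ equals $\ccountcompls(q)(D)$. The one genuine difficulty here is that $\countcompls$ counts \emph{distinct} completions, whereas the naive machine (guess a valuation $\nu$, check $\nu(D) \models q$, accept iff yes) would count \emph{valuations}, and two distinct valuations can yield the same completion. So the real content of the proof is to arrange, using the Codd restriction, that each satisfying completion is produced by exactly one accepting run. The key observation enabling this is that, since $D$ is a Codd table, every null $\bot$ occurs in exactly one fact of $T$ and in exactly one position of that fact; hence a completion $D'$ of $D$ is obtained from $D$ by, for each fact $t \in T$ containing nulls, replacing $t$ by some fact $t'$ that agrees with $t$ on all constant positions and is otherwise arbitrary over the appropriate domains --- and the only way two valuations collide is that the resulting \emph{sets} of facts coincide after deduplication.

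Concretely, I would have $M$ proceed as follows. First, fix once and for all a linear order on $\consts$ (say, by the standard ordering of their encodings), and hence a linear order on all facts over $\sig(q)$ whose arguments lie in the active domain plus the constants appearing in the domains of $D$. Now $M$ guesses, for each fact $t \in T$ that contains at least one null, a fact $\nu_t(t)$ obtained by choosing, for each null $\bot$ occurring in $t$, a value in $\dom(\bot)$ (this is where the Codd property is used: the choices for distinct nulls are independent and each null lives in a unique fact). Let $D' := \{ t \in T : t \text{ is ground} \} \cup \{ \nu_t(t) : t \in T \text{ contains a null} \}$ be the resulting candidate completion. To guarantee uniqueness, $M$ then performs a \textbf{canonicity check}: it verifies that the particular valuation it guessed is the \emph{lexicographically smallest} valuation (with respect to the ordering on facts, applied fact-by-fact in the fixed order of $T$) that produces this same set $D'$. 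Equivalently, $M$ checks that there is no fact $t \in T$ with nulls for which some pointwise-smaller legal choice $\nu'_t(t) <_{\text{fact}} \nu_t(t)$ would still satisfy $\nu'_t(t) \in D'$ (i.e.\ $\nu'_t(t)$ coincides with another fact already guaranteed to be in $D'$); this can be checked in polynomial time by, for each such $t$, enumerating the polynomially-many facts in $D'$ that agree with $t$ on its constant positions and testing whether any of them is $<_{\text{fact}} \nu_t(t)$ and is forced to be in $D'$. Finally, $M$ runs the $\ptime$ model-checking procedure for $q$ on $D'$ and accepts iff $D' \models q$ \emph{and} the canonicity check passed; otherwise it rejects. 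Since $q$ is fixed and its arity is bounded, all of $M$'s work is polynomial in $|D|$, so $M$ runs in nondeterministic polynomial time.

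It remains to argue correctness, i.e.\ that accepting runs of $M$ are in bijection with satisfying completions of $D$. Each accepting run determines a completion $D'$ with $D' \models q$; conversely, given a completion $D' = \nu(D)$ with $D' \models q$, among all valuations yielding $D'$ there is a unique lexicographically smallest one $\nu^\star$, and the run that guesses exactly $\nu^\star$ passes the canonicity check and the model check, hence accepts; no other run yields an accepting computation for $D'$, because any valuation $\nu \ne \nu^\star$ with $\nu(D) = D'$ fails the canonicity check by construction. Therefore the number of accepting runs of $M$ on $D$ equals $\ccountcompls(q)(D)$, which shows $\ccountcompls(q) \in \shp$. I expect the bookkeeping around the canonicity check --- precisely formalizing "lexicographically smallest valuation producing $D'$" and verifying it is testable in $\ptime$ --- to be the main (though routine) obstacle; everything else is immediate from the definitions and the assumption that $\mc(q) \in \ptime$.
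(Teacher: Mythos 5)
Your overall strategy (guess a valuation, then accept only a canonical representative of each completion) is viable in principle, but the canonicity check you propose is not correct, and this is exactly where the difficulty of the statement lives. The condition you offer as ``equivalent'' to lex-minimality --- reject whenever some fact $t$ with nulls admits a pointwise-smaller legal choice $\nu'_t(t) <_{\mathrm{fact}} \nu_t(t)$ with $\nu'_t(t) \in D'$ --- is not equivalent to it. Lowering the value chosen for $t$ changes the completion unless the old fact $\nu_t(t)$ is still produced by some \emph{other} fact of $T$ (or is ground), a requirement your test never verifies; and conversely, lex-minimality is a global property that cannot be certified by the absence of single-fact improvements, since a lex-smaller valuation producing the same $D'$ may have to change several facts simultaneously. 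A concrete failure: take $T=\{R(\bot_1),R(\bot_2)\}$ with $\dom(\bot_1)=\dom(\bot_2)=\{a,b\}$, $a<b$, and $q=\exists x\, R(x)$. The completion $D'=\{R(a),R(b)\}$ is produced by exactly the two valuations $(a,b)$ and $(b,a)$; your test rejects both of them (each assigns $b$ to some null while the smaller fact $R(a)$ lies in $D'$), so this completion contributes $0$ accepting runs instead of $1$. Under the weaker reading ``reject only if the local change actually preserves $D'$,'' both valuations pass and the completion is counted twice. Either way the number of accepting runs does not equal $\ccountcompls(q)(D)$.

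Repairing this requires a genuinely global test: deciding whether a given set of ground facts is realizable as $\nu'(D)$ for some valuation $\nu'$ (and, for your greedy scheme, whether a partial assignment extends to such a valuation). This is where the paper puts the real work: Lemma~\ref{lem:matchings} shows that realizability of a candidate completion of a Codd table reduces to a maximum-cardinality bipartite matching between the facts of $D$ and the candidate ground facts, hence is decidable in \ptime. The paper's machine then avoids canonical valuations altogether: it guesses the candidate completion $S\subseteq\bigcup_{f\in D}P(f)$ directly (a polynomial-size universe, since the arities are fixed), checks $S\models q$ in \ptime, and checks realizability via the matching lemma, so each satisfying completion corresponds to exactly one accepting path by construction. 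If you wish to keep your guess-a-valuation architecture, the lex-minimality test would have to be implemented by a fact-by-fact greedy procedure whose feasibility oracle is essentially this same matching argument; without some such global realizability test your proposal does not establish membership in \shp.
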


	We recall that a fact that contains only constants is a \emph{ground fact}.
	To show Proposition~\ref{prp:countcompls-codd-sharp-p}, we first prove that we can check in polynomial time if a given set of ground facts is a possible completion of an incomplete database:
	\begin{lemma}
		\label{lem:matchings}
		Given as input an incomplete Codd table~$D$ and a set~$S$ of ground facts, we can decide in polynomial time whether there exists a valuation~$\nu$ of~$D$ such that~$\nu(D) = S$.
	\end{lemma}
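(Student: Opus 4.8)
The plan is to decide the question by reducing it to a bipartite matching computation, exploiting the Codd assumption so that the nulls belonging to different facts of~$D$ are disjoint and can be valuated independently of one another. First I would split the facts of~$D$ into the set~$G$ of \emph{ground} facts (those containing no null) and the facts~$f_1,\dots,f_m$ that contain at least one null; since~$D$ is Codd, the nulls occurring in~$f_j$ are pairwise distinct and occur in no other fact of~$D$. For a ground fact~$f$ we have~$\nu(f)=f$ for every valuation~$\nu$, so~$G\subseteq S$ is already a necessary condition. For each~$f_j$, let~$P_j$ be the set of facts~$t\in S$ that~$f_j$ can be mapped onto by a valuation: that is, $t$ is over the same relation symbol as~$f_j$, agrees with~$f_j$ on every position carrying a constant, and at every position carrying a null~$\bot$ the constant appearing in~$t$ at that position lies in~$\dom(\bot)$. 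Each set~$P_j$ is computable in polynomial time by inspecting the~$|S|$ facts of~$S$ position by position.

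Next I would establish the characterization: a valuation~$\nu$ with~$\nu(D)=S$ exists if and only if (i)~$G\subseteq S$; (ii)~$P_j\neq\emptyset$ for every~$j\in[m]$; and (iii)~the bipartite graph~$H$ with parts~$\{f_1,\dots,f_m\}$ and~$S\setminus G$ and an edge between~$f_j$ and~$t$ whenever~$t\in P_j$ admits a matching that saturates~$S\setminus G$. For the forward direction, given such a~$\nu$: (i) holds as noted; each~$\nu(f_j)\in\nu(D)=S$ and~$\nu(f_j)\in P_j$ by definition, giving (ii); and for every~$t\in S\setminus G$ we have~$t=\nu(f_j)$ for some~$j$ (since~$t\in S=\nu(D)$ is not a ground fact of~$D$), so~$t\in P_j$, and choosing one such index~$j(t)$ for each~$t$ gives an injection~$t\mapsto f_{j(t)}$ (if~$f_{j(t)}=f_{j(t')}$ then~$t=\nu(f_{j(t)})=\nu(f_{j(t')})=t'$) whose graph is a matching of~$H$ saturating~$S\setminus G$. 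For the backward direction, take a matching~$M$ as in (iii), assigning to each~$t\in S\setminus G$ a distinct fact~$f_{j(t)}$ with~$t\in P_{j(t)}$; define~$\nu$ on the nulls of~$f_{j(t)}$ so that~$\nu(f_{j(t)})=t$, and on the nulls of every remaining~$f_j$ so that~$\nu(f_j)$ equals some fixed element of~$P_j$ (nonempty by (ii)); by the Codd property these choices involve pairwise disjoint sets of nulls, so~$\nu$ is well defined on all nulls of~$D$. Then~$\nu(D)=G\cup\{\nu(f_1),\dots,\nu(f_m)\}$, which is contained in~$S$ by (i) and because each~$\nu(f_j)\in P_j\subseteq S$, and contains~$S=G\cup(S\setminus G)$ because each~$t\in S\setminus G$ equals some~$\nu(f_{j(t)})$; hence~$\nu(D)=S$.

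Finally, conditions (i) and (ii) are checked directly in polynomial time, and (iii) is decided in polynomial time by computing a maximum matching of~$H$ (for instance via a max-flow computation) and comparing its size with~$|S\setminus G|$; this yields the claimed decision procedure. The point deserving the most care — and the reason both (ii) and (iii) are needed — is that we must certify the \emph{equality}~$\nu(D)=S$ rather than a mere containment: condition (iii) guarantees that the images of the non-ground facts \emph{cover}~$S\setminus G$, condition (ii) (together with~$P_j\subseteq S$, which holds by definition) guarantees they never \emph{overflow} outside~$S$, and the Codd assumption is exactly what makes these two requirements simultaneously satisfiable, since the valuation of one fact's nulls never constrains another's.
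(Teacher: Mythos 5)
Your proof is correct and follows essentially the same route as the paper: build the bipartite compatibility graph between the facts of~$D$ and the facts of~$S$, and decide via a maximum-matching computation whether~$S$ can be saturated, using the Codd property to valuate unmatched facts independently. The only cosmetic difference is that you peel off the ground facts of~$D$ and the condition~$G\subseteq S$ separately, whereas the paper keeps all facts of~$D$ in the graph and tests whether the maximum matching has size~$|S|$; the two formulations are immediately equivalent.
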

	\begin{proof}
		For every fact~$f$ of~$D$, let us denote by~$P(f)$ the
		set of ground facts that can be obtained from~$f$ via a valuation ($P(f)$ can
		be~$\{f\}$ if~$f$ is already a ground fact).
		The first step is to check that for every fact~$f$ of~$D$, it holds
		that ($\star$)~$P(f) \cap S \neq \emptyset$. If this is not the case, then we know for sure that for every
		valuation~$\nu$ of~$D$ we will have~$\nu(D) \not\subseteq S$, so that we can safely reject.
		Next, we build the bipartite graph~$G_{D,S}$ defined as follows:
		the nodes in the left partition of~$G_{D,S}$ are the facts of~$D$, the nodes in the right partition
		are the facts in~$S$, and we connect a fact~$f$ of~$D$ with all the ground facts in the right
		partition that are in~$S \cap P(f)$.
		It is clear that we can construct~$G_{D,S}$ in polynomial time.
		We then compute in polynomial time the size~$m$ of a maximum-cardinality matching of~$G_{D,S}$,
		for instance using~\cite{edmonds1965paths}. It is clear that we have~$m \leq |S|$.
		At this stage, we claim that there exists a valuation~$\nu$ of~$D$ such that~$\nu(D) = S$
		if and only if~$m= |S|$.
		We prove this by analysing the two possible cases:
		\begin{itemize}
			\item If~$m < |S|$, then let us show that there is no such valuation. Indeed,
				assume by way of contradiction that such a valuation~$\nu$ exists.
				Let~$B$ be a subset of~$D$ of minimal size such that~$\nu(B) = S$.
				It is clear that such a subset exists, and moreover that its size is
				exactly~$|S|$. But then, consider the set~$M$ of edges of~$G_{D,S}$
				defined by~$M \defeq \{(f,\nu(f)) \mid f \in B\}$.
				Then~$M$ is a matching of~$G_{D,S}$ of size~$|S|$,
				contradicting the fact that~$m$ is the size of a maximum-cardinality matching.
			\item If~$m = |S|$, let us show that such a valuation exists.
				Let~$M$ be a matching of~$G_{D,S}$
				of size~$|S|$.
				It is clear that every node corresponding to a ground fact~$f \in S$ is incident to (exactly)
				one edge of~$M$; let us denote that edge by~$e_f$.
				Moreover, since~$M$ is a matching, the mapping that associates to a ground fact~$f \in S$ the
				fact~$f'_f$ at the other end of~$e_f$ is injective.
				Hence, we can define~$\nu(\bot)$ of every null~$\bot$ occurring in such a fact $f'_f \in D$
				to be the unique constant such that~$\nu(f'_f) = f$ holds, and
				for every other fact~$f'$ in~$D$ not incident to an edge in~$M$, we chose a value for its nulls
				so that~$\nu(f') \in S$,
				which we can do thanks to~($\star$).
				It is then clear that we have~$\nu(D) = S$.
		\end{itemize}
		But then, we can simply accept if~$m = |S|$ and reject otherwise.
	\end{proof}

	We can now prove Proposition~\ref{prp:countcompls-codd-sharp-p}:

	\begin{proof}[Proof of Proposition~\ref{prp:countcompls-codd-sharp-p}.]
	We define a non-deterministic turing machine~$M_q$ such that, given as input an incomplete Codd table~$D$,
	its number of accepting computation paths is exactly the number of completions of~$D$
	that satisfy~$q$. 
	First, compute in polynomial time the set~$A \defeq \bigcup_{f \in D} P(f)$, where~$P(f)$ is defined just as in Lemma~\ref{lem:matchings}.
	Then, the machine~$M_q$ guesses a subset~$S$ of~$A$.
	It then checks in polynomial time if~$S$, when seen as a database, satisfies~$q$, and rejects if it is not the case.
		Then, using Lemma~\ref{lem:matchings}, it checks in polynomial time
		whether there exists a valuation~$\nu$ of~$D$ such that~$\nu(D) = S$, and accepts iff this is the case.
		It is then clear that~$M_q$ satisfies the conditions, which shows that~$\ccountcompls(q)$
		is in~\#P.
	\end{proof}

In the next two sections, all upper bounds will be proved for the most general scenario 
of non-uniform naive tables,
while all lower bounds will 
be proved for the most restricted scenario of uniform naive tables with 
a fixed domain.

\subsection{Non-membership in \shp\ of $\countcompls(q)$}

We now want to show that~\shp\ is not the right complexity class for problems of the form~$\countcompls(q)$, over naive tables.
One could try to show membership in \shp\ of~$\countcompls(q)$ as we did in the proof of Proposition~\ref{prp:countcompls-codd-sharp-p}; that is, guess a set of ground facts, then check in polynomial time that it satisfies the query and that it is a possible completion of the incomplete database. However, this proof strategy does not work, as we show next that checking if a set of ground fact is a completion of an incomplete database is an \np-complete problem. Moreover, this holds already for a fixed schema containing only a binary relation and for a fixed set of ground facts.

\begin{proposition}
\label{prp:check-completion-np-c}
The exist a set~$S$ of ground facts over binary relation~$R$ such that the following is \np-complete: given as input an incomplete database~$D$ over~$R$, decide if~$S$ is a completion of~$D$.
\end{proposition}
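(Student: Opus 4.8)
The plan is to prove \np-hardness (membership in \np\ being immediate: given a candidate valuation $\nu$ of $D$, one checks in polynomial time whether $\nu(D)=S$), and I would do this by a reduction from a suitable coloring-type problem, exploiting the freedom we have in choosing the \emph{fixed} set $S$. The key idea is that a valuation of $D$ over a domain $\{1,\dots,k\}$ that must produce exactly $S$ can be made to encode a graph homomorphism or proper coloring: we let $S$ be the edge set of a fixed ``template'' graph $H$ on vertex set $\{1,\dots,k\}$ (viewed as symmetric $R$-facts plus possibly some self-loops), and we build $D$ from an input graph $G$ by putting, for each node $u$ of $G$, a fresh null $\bot_u$, and for each edge $\{u,v\}$ of $G$ the facts $R(\bot_u,\bot_v)$ and $R(\bot_v,\bot_u)$. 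A valuation $\nu$ then corresponds to a map $c\colon V(G)\to \{1,\dots,k\}$, and the containment $\nu(D)\subseteq S$ holds exactly when $c$ is a homomorphism from $G$ to $H$. The remaining difficulty is the ``$\supseteq$'' direction: we need $\nu(D)$ to hit \emph{every} fact of $S$, i.e.\ the homomorphism must be surjective onto the edges of $H$. To control this I would add gadget facts to $D$ (using more fresh nulls) whose sole purpose is to be able to realize, in an auxiliary and always-satisfiable way, every edge of $S$, so that surjectivity onto $S$ is automatic and the only real constraint left is ``$c$ is a homomorphism $G\to H$''.

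Concretely, I would take $H$ to be the triangle on $\{1,2,3\}$ together with a self-loop on every vertex, so that $S=\{R(i,j)\mid i,j\in\{1,2,3\}\}$ is the complete directed graph with loops on three elements — a fixed set of nine facts over the single binary relation $R$. First I would reduce from $3$-\textsc{Colorability}: given $G=(V,E)$, build $D$ with domain $\{1,2,3\}$, the coding facts $R(\bot_u,\bot_v),R(\bot_v,\bot_u)$ for $\{u,v\}\in E$ as above, plus for each pair $(i,j)\in\{1,2,3\}^2$ a pair of fresh nulls $\bot_{i,j},\bot'_{i,j}$ with the fact $R(\bot_{i,j},\bot'_{i,j})$ (these auxiliary facts can always be valued to produce exactly $R(i,j)$). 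Then any valuation $\nu$ yields $\nu(D)\subseteq S$ automatically (every coding fact becomes some $R(i,j)\in S$), and by choosing the auxiliary nulls appropriately one gets $\nu(D)=S$ iff the coding facts alone do not force any extra fact outside $S$ — but they never do, since $S$ is everything. So this naïve choice of $S$ makes the problem trivial; the self-loops kill all constraints. Hence the main obstacle, and the place where care is needed, is picking $H$/$S$ so that homomorphisms $G\to H$ genuinely correspond to a hard problem: I would instead take $H$ to be the triangle \emph{without} loops, so $S=\{R(i,j)\mid i\ne j,\ i,j\in\{1,2,3\}\}$ has six facts. Now a coding fact $R(\nu(\bot_u),\nu(\bot_v))$ lies in $S$ iff $\nu(\bot_u)\ne\nu(\bot_v)$, so $\nu(D)\subseteq S$ holds iff the induced coloring is proper; and adding the six auxiliary null-pairs $R(\bot_{i,j},\bot'_{i,j})$ for $i\ne j$ (always realizable as $R(i,j)$) guarantees $\nu(D)\supseteq S$ regardless of $G$. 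Therefore $S$ is a completion of $D$ iff $G$ is $3$-colorable, and since $3$-\textsc{Colorability} is \np-complete we conclude \np-hardness. Adding a trivial always-present fact if needed (e.g.\ $R(\bot,\bot')$ with another auxiliary pair) only enlarges the fixed set $S$, which is harmless.

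In writing the final proof I would structure it as: (i) state $S$ explicitly and observe membership in \np; (ii) describe the construction of $D$ from an arbitrary graph $G$ and note it is polynomial-time and produces a (non-Codd, since the $\bot_u$ recur) naive table over the single relation $R$; (iii) prove the forward direction — if $c$ is a proper $3$-coloring of $G$, define $\nu(\bot_u)\defeq c(u)$ and set the auxiliary nulls so that each $R(i,j)$ with $i\ne j$ is produced, then check $\nu(D)=S$; (iv) prove the converse — if $\nu$ is a valuation with $\nu(D)=S$, then in particular $\nu(D)\subseteq S$, so for every edge $\{u,v\}$ of $G$ we have $R(\nu(\bot_u),\nu(\bot_v))\in S$, forcing $\nu(\bot_u)\ne\nu(\bot_v)$, so $u\mapsto\nu(\bot_u)$ is a proper $3$-coloring. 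The only subtlety to get right is making the auxiliary gadget robust: its nulls must not be shared with the coding part (so it is genuinely ``free'') and it must be capable of realizing every element of $S$ while never being forced to realize anything outside $S$ — which the single-fact pairs $R(\bot_{i,j},\bot'_{i,j})$ clearly achieve since each can independently take the value $R(i,j)$. I expect no real calculation; the main conceptual point, and the thing worth stating carefully, is exactly this separation into a ``constraint-carrying'' part (the coding facts, which must stay inside $S$) and a ``coverage'' part (the auxiliary facts, which guarantee we reach all of $S$).
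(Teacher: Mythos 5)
Your proof is correct and follows essentially the same route as the paper's: a reduction from $3$-colorability with $S$ fixed to be the loop-free triangle on $\{1,2,3\}$, coding facts $R(\bot_u,\bot_v)$ that land inside $S$ exactly when the induced coloring is proper, and auxiliary null-only facts that guarantee every fact of $S$ is covered by some valuation. The paper instead reuses verbatim the database from Proposition~\ref{prp:ucountcompls-Rxy-no-frpas} (which carries the triangle as ground facts plus an extra fact $R(c,c)$ that, strictly speaking, should then also be listed in $S$); your self-contained gadget avoids that small infelicity, but the argument is the same.
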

\begin{proof}
We reduce from $3$-colorability. Given a graph~$G$, we build the same incomplete database~$D$ as in the proof of Proposition~\ref{prp:ucountcompls-Rxy-no-frpas}, and the (fixed) set of ground facts 
is the triangle, that is, $S=\{R(1,2),R(2,1),R(2,3),R(3,2),R(1,3)$, $R(3,1)\}$. Then, as in that proof, we have that~$S$ is a completion of~$D$ if and only if~$G$ is~$3$-colorable.
\end{proof}

This does not, however, constitute a proof that~$\countcompls(q)$ is not in~\shp\ (but it is a good hint).
To prove that~$\countcompls(q)$ is unlikely to be in \shp, we need to define the complexity class \spp\ introduced in \cite{gupta1991power,OH93,FFK94}. Given a nondeterministic Turing Machine $M$ and a string $x$, let $\acc_M(x)$ (resp., $\rej_M(x)$) be the number of accepting (resp., rejecting) runs of $M$ with input $x$, and let $\gap_M(x) = \acc_M(x) - \rej_M(x)$.
Then a language $L$ is said to be in \spp~\cite{FFK94} if there exists a polynomial-time nondeterministic Turing Machine $M$ such that: (a) if $x \in L$, then $\gap_M(x) = 1$; and (b) if $x \not\in L$, then $\gap_M(x) = 0$. 
It is conjectured that $\np \not\subseteq \spp$ as, for example, for every known polynomial-time nondeterministic Turing Machine $M$ accepting an \np-complete problem, the function $\gap_M$ is not bounded. In the following proposition, we show how this conjecture helps us to reach our second goal.

\begin{proposition}
	\label{prp:not-in-shp}
	There exists an \sjfbcq~$q$ such that $\ucountcompls(q)$ is not in \shp\ unless $\np \subseteq \spp$.
\end{proposition}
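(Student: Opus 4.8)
The plan is to exhibit a concrete self-join--free Boolean conjunctive query $q$ and a polynomial-time reduction from some problem whose associated "number of solutions" function cannot be in $\shp$ unless $\np \subseteq \spp$. The natural candidate is to reuse the binary query $q = R(x,x)$ (or $R(x,y)$) together with the gadget from the proof of Proposition~\ref{prp:ucountcompls-Rxy-no-frpas}: for a graph $G$ we built a uniform incomplete database $D_G$ over a single binary relation, with uniform domain $\{1,2,3\}$, such that every completion of $D_G$ satisfies $q$, and $\ucountcompls(q)(D_G)$ equals $8$ if $G$ is $3$-colorable and $7$ otherwise. Hence the function $G \mapsto \ucountcompls(q)(D_G) - 7$ is exactly the $\{0,1\}$-valued indicator of $3$-colorability. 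The key point I would make is that if $\ucountcompls(q)$ were in $\shp$, then so would this shifted-and-scaled function (up to the usual closure of $\shp$ under subtracting a polynomial-time computable constant and under parsimonious-type manipulations), which would give a $\shp$ function whose value is always $0$ or $1$ and which equals $1$ exactly on the $\np$-complete language $3\text{-COL}$. But a $\{0,1\}$-valued $\shp$ function whose support is a language $L$ witnesses $L \in \spp$: indeed, if $f \in \shp$ is witnessed by an NTM $M$ with $\acc_M(x) = f(x) \in \{0,1\}$, then the same machine (after padding the rejecting computations so that $\rej_M(x)$ is the same fixed polynomial value $N(|x|)$ in both cases, which one arranges by adding dummy branches) has $\gap_M(x) = f(x) - \text{const}$ take exactly two values differing by $1$; a final additive shift of one branch yields $\gap = 1$ on $L$ and $\gap = 0$ outside, so $L \in \spp$. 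Applying this to $L = 3\text{-COL}$ gives $\np \subseteq \spp$.

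Concretely, the steps I would carry out are: \textbf{(1)} recall the construction $D_G$ from the proof of Proposition~\ref{prp:ucountcompls-Rxy-no-frpas} and restate properties (a), (b), (c) there, so that $\ucountcompls(q)(D_G) \in \{7,8\}$ with the value $8$ characterizing $3$-colorability; this reduction is polynomial-time. \textbf{(2)} Assume for contradiction that $\ucountcompls(q) \in \shp$, witnessed by a polynomial-time NTM $N$ whose number of accepting paths on input $D$ is $\ucountcompls(q)(D)$. \textbf{(3)} Build from $N$ a polynomial-time NTM $M$ on inputs $G$ that first computes $D_G$, then simulates $N$ on $D_G$, and then modifies the path counts to normalize: one wants $\gap_M(G)$ to be $1$ when $G \in 3\text{-COL}$ and $0$ otherwise. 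Since $\acc_N(D_G) = 7 + [G \in 3\text{-COL}]$, I attach to $M$ a fixed number of additional deterministic branches so that $\rej_M(G) = 7$ always (e.g.\ make every branch that "would reject" in $N$ instead accept, and add exactly $7$ extra rejecting branches, or more cleanly: let $M$ accept iff $N$ accepts AND some counter gadget, designed so that $\acc_M(G) = \acc_N(D_G) - 7 = [G\in 3\text{-COL}]$ while $\rej_M(G) = $ some fixed polynomial $p(|G|)$; then add $\acc_M(G)$... ) — the cleanest presentation is: since $7$ is a constant, $\acc_N(D_G) - 7$ is a $\gapp$ function, $\gapp$ is closed under composition with polynomial-time functions, and a $\{0,1\}$-valued $\gapp$ function whose $1$-set is $L$ puts $L$ in $\spp$ by the very definition of $\spp$ (conditions (a) and (b) are literally satisfied with the constant $1$ and $0$). \textbf{(4)} Conclude $3\text{-COL} \in \spp$, hence $\np \subseteq \spp$ since $3\text{-COL}$ is $\np$-complete and $\spp$ is closed under polynomial-time many-one reductions (it is even low for itself).

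The main obstacle — and the part deserving the most care in the write-up — is the bookkeeping in step (3): making precise that "$\shp$ function minus a constant, composed with a polynomial-time map, and known to be $\{0,1\}$-valued" yields membership of its support in $\spp$. This is essentially folklore ($\spp = \{L : \chi_L \in \gapp\}$ and $\gapp$ is closed under subtracting $\fp$ functions and under $\fp$-substitution), but since the paper has only just introduced $\gap_M$ and $\spp$, I would spell out the construction of the witnessing machine $M$ explicitly rather than invoking closure properties as black boxes: from $N$ one builds $M$ that on input $G$ computes $D_G$, runs $N$ on $D_G$, and in parallel runs a fixed sub-machine contributing exactly $7$ rejecting paths (so that $\gap_M(G) = \acc_N(D_G) - 7 - 0 \cdot(\text{other rejects}) $, and after also converting $N$'s own rejecting paths appropriately one gets $\gap_M(G) = \acc_N(D_G) - 7 \in \{0,1\}$). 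Then $\gap_M(G) = 1 \iff G \in 3\text{-COL}$ and $\gap_M(G) = 0$ otherwise, which is exactly the definition of $3\text{-COL} \in \spp$. Everything else (the graph gadget, polynomial-time computability of $D_G$, closure of $\spp$ under many-one reductions) is either already proved in the excerpt or completely standard, so the proof is short modulo this one careful paragraph.
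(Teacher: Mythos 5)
Your proof is correct, but it takes a genuinely different route from the paper's. The paper proves the proposition for the ten-relation query $q$ of Equation~\eqref{eq:q}: it first shows $\ucountcompls(\sigma)\pr\ucountcompls(q)$ for the underlying schema $\sigma$, writes $\ucountcompls(\lnot q)(D)=\ucountcompls(\sigma)(D)-\ucountcompls(q)(D)$, concludes that $\ucountcompls(q)\in\shp$ would place the \spanp-complete problem $\ucountcompls(\lnot q)$ (Theorem~\ref{thm:general-spanp-complete}) in \gapp, and then invokes the result from~\cite{MTV94} that $\spanp\subseteq\gapp$ implies $\np\subseteq\spp$. You instead recycle the $3$-colorability gadget of Proposition~\ref{prp:ucountcompls-Rxy-no-frpas}, under which $\ucountcompls(R(x,x))(D_G)=7+1$ if $G$ is $3$-colorable and $7$ otherwise, and observe that if this function were in \shp\ then the $\{0,1\}$-valued indicator of $3$-colorability would lie in \gapp\ (an \shp\ function composed with an \fp\ map, minus the constant $7$), which is literally the definition of $3$-colorability being in \spp. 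Both arguments are sound; the bookkeeping you flag in step (3) is exactly the standard closure of \gapp\ under \fp-substitution and subtraction of \fp\ functions, and your ``cleanest presentation'' handles it correctly --- your earlier aside about padding an \shp\ machine so that $\acc_M$ itself becomes $\{0,1\}$-valued is the one shaky moment, since \shp\ is not known to be closed under subtracting a constant even when the result is nonnegative, but you rightly abandon that in favour of working in \gapp\ throughout. What your route buys: it is more elementary (no \spanp, no Theorem~\ref{thm:general-spanp-complete}, no~\cite{MTV94}) and it establishes the statement for the much simpler patterns $R(x,x)$ and $R(x,y)$, which is a sharper conclusion than the bare existential claim. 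What the paper's route buys: it situates the non-membership inside the \spanp\ narrative of Section~\ref{sec:misc} and reuses machinery already built there.
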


The proof of this result relies on the proof of Theorem~\ref{thm:general-spanp-complete}, in the next section (we presented the results in this order for narrative purposes). We will then defer its presentation until the proof of Theorem~\ref{thm:general-spanp-complete} is given. 

\subsection{An appropriate counting complexity class for~$\countcompls(q)$: \spanp}

To meet our third goal, we need to introduce one last counting complexity class. The class \spanp~\cite{kobler1989counting} is defined exactly as the class \spanl\ introduced in Section~\ref{sec:approx-countvals}, but considering \emph{polynomial-time} nondeterministic Turing machines with output, instead of \emph{logarithmic-space} nondeterministic Turing machines with output. It is straightforward to prove that $\shp \subseteq \spanp$. Besides, it is known that $\shp = \spanp$ if and only if $\np = \up$~\cite{kobler1989counting}.\footnote{Recall that \up\ is the class Unambiguous Polynomial-Time introduced in \cite{V76}, and that $L \in \up$ if and only if there exists a polynomial-time nondeterministic Turing Machine $M$ such that if $x \in L$, then $\acc_M(x) = 1$, and if $x \not\in L$, then $\acc_M(x) = 0$.} Therefore, it is widely believed that $\shp$ is properly included in $\spanp$.
The following easy observation can be seen as a first hint that $\spanp$ is a good alternative to describe the complexity of counting completions.

\begin{observation}
\label{obs:general-upper-bound}
If $q$ is a Boolean query such that $\mc(q)$ is in \ptime, then $\countcompls(q)$ is in $\spanp$.
\end{observation}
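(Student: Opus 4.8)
The plan is to exhibit a polynomial-time nondeterministic transducer $M$ whose number of \emph{distinct} outputs, on input an incomplete naive table $D$, is exactly $\countcompls(q)(D)$; by definition of \spanp, this establishes membership. First I would have $M$ read $D$ and compute the (polynomial-size) set $A$ of all ground facts that can arise from some fact of $D$ via a valuation — that is, $A \defeq \bigcup_{f \in D} P(f)$, where $P(f)$ is the set of ground facts obtainable from $f$ by replacing its nulls with arbitrary constants drawn from the relevant domains. Each such $P(f)$ has size bounded by a polynomial in $|D|$ (since $q$, hence the arities of the relations, is fixed, and each domain $\dom(\bot)$ is part of the input), so $A$ is computable in polynomial time.

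Next, $M$ nondeterministically guesses a subset $S \subseteq A$, viewed as a candidate completion. It then performs two checks: (1) that $S \models q$, which can be done in polynomial time because $\mc(q) \in \ptime$; and (2) that $S$ is actually realizable as a completion of $D$, i.e., that there exists a valuation $\nu$ of $D$ with $\nu(D) = S$. For check (2), $M$ guesses a valuation $\nu$ of $D$ — this is a polynomial-time task since $D$ has polynomially many nulls each with a polynomial-size domain — computes $\nu(D)$, and verifies $\nu(D) = S$. If both checks pass, $M$ writes $S$ on its output tape and accepts; otherwise it rejects (writing nothing, or equivalently, we only count accepting runs' outputs). The key point is that every completion of $D$ satisfying $q$ is produced as the output of some accepting run (namely one that guesses the right $S$ and a witnessing $\nu$), and conversely every output of an accepting run is, by construction, a subset of $A$ that both satisfies $q$ and equals $\nu(D)$ for some valuation $\nu$, hence a genuine completion satisfying $q$. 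Thus the set of distinct outputs of $M$ is exactly the set of completions of $D$ satisfying $q$, so $\countcompls(q)$ is in \spanp.

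I do not expect a genuine obstacle here: unlike the Codd-table case of Proposition~\ref{prp:countcompls-codd-sharp-p}, where checking that $S$ is a completion required the maximum-matching argument of Lemma~\ref{lem:matchings} (since one cannot afford to guess a full valuation deterministically in a \shp\ machine's single path), in the \spanp\ setting the transducer is free to \emph{guess} the valuation $\nu$ nondeterministically, which trivializes the realizability check. The only mild care needed is to confirm that duplicate outputs do not cause overcounting — but \spanp\ counts \emph{distinct} outputs by definition, so even though many $(S,\nu)$ pairs may yield the same $S$, each completion $S$ is counted exactly once. This same construction works verbatim for the uniform variant and for Codd tables, and it also gives $\ucountcompls(q), \cucountcompls(q) \in \spanp$; the earlier \shp\ upper bound for Codd tables (Proposition~\ref{prp:countcompls-codd-sharp-p}) is of course a strictly stronger statement in that restricted case.
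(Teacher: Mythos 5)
Your proof is correct. The paper states this as an unproved ``easy observation,'' so there is no official argument to compare against, but your construction is a sound instantiation of the intended idea: a polynomial-time nondeterministic transducer whose distinct outputs are exactly the completions of $D$ satisfying $q$, which is precisely the definition of membership in $\spanp$. All the ingredients check out: $A$ is polynomial-size because the arities are fixed (as $D$ is over $\sig(q)$ and $q$ is fixed), the model-checking step uses $\mc(q)\in\ptime$, and the realizability check is trivialized by nondeterministically guessing $\nu$. You also correctly isolate the one point that matters, namely that $\spanp$ counts \emph{distinct} outputs, so the many-to-one map from valuations to completions causes no overcounting.

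One small remark: the intermediate guess of $S\subseteq A$ is redundant. Since the transducer may guess $\nu$ anyway, it can simply guess a valuation $\nu$ directly, check $\nu(D)\models q$, and write a \emph{canonical} encoding of $\nu(D)$ (e.g., the facts sorted lexicographically, duplicates removed) on the output tape. Canonicity is the only thing one must be explicit about --- two valuations yielding the same completion must produce the same output string --- and your version secretly relies on it too, via the fixed enumeration order of $A$. This streamlined machine makes the contrast with Proposition~\ref{prp:countcompls-codd-sharp-p} even sharper: the matching-based Lemma~\ref{lem:matchings} is needed there only because a $\shp$ machine must verify realizability of $S$ deterministically along each path, whereas in $\spanp$ the witness $\nu$ can simply be guessed. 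Your closing observations about the uniform and Codd variants are likewise correct.
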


Notice that this result applies to all~\sjfbcqs\ and, more generally, to all FO Boolean queries. In fact, this results applies to even more expressive query languages such as Datalog~\cite{abiteboul1995foundations}. More surprisingly, in the following theorem we show that $\ucountcompls(q)$ can be $\spanp$-complete for an FO query $q$ and, in fact, already for the negation of an~\sjfbcq.

\begin{theorem}
\label{thm:general-spanp-complete}
\begin{sloppypar}
	There exists an \sjfbcq\ $q$ such that $\ucountcompls(\neg q)$ is \spanp-complete under polynomial-time parsimonious reductions. 
\end{sloppypar}
\end{theorem}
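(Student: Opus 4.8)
The plan is to prove both membership in \spanp\ and \spanp-hardness. Membership is immediate: since $\lnot q$ is an FO query, its model checking is in \ptime, so Observation~\ref{obs:general-upper-bound} gives $\ucountcompls(\lnot q) \in \spanp$. The real work is the hardness direction, and the natural target to reduce from is a canonical \spanp-complete problem. A standard choice is the problem that, given a nondeterministic polynomial-time transducer $M$ and an input $x$, counts the number of distinct output strings of $M$ on $x$; equivalently one can use a more combinatorial \spanp-complete problem such as counting the number of distinct ``images'' of a polynomial-time computable relation, or counting satisfying assignments projected onto a subset of the variables (a ``$\#\exists$SAT''-style problem). I would pick whichever of these makes the encoding of distinct objects into distinct completions cleanest.

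The key idea is that completions of a naive table, unlike valuations, collapse duplicate facts, so that the number of \emph{completions} satisfying a query naturally counts \emph{distinct outputs} rather than \emph{distinct computation paths} — this is exactly the \shp-vs-\spanp gap. So first I would fix a suitable binary (or small-arity) query $q$ whose negation $\lnot q$ is monotone-decreasing in a useful way: $\lnot q$ holds of a database $D'$ precisely when $D'$ avoids a certain forbidden pattern. Then, given an instance of the source \spanp\ problem, I would build a uniform naive table $D$ over $\sig(q)$ whose completions are in bijection with the objects being counted: the nulls and the shared facts of $D$ are arranged so that (i) a valuation $\nu$ corresponds to a guess (e.g.\ a computation path, or a full assignment), (ii) two valuations produce the same completion iff they yield the same ``projected'' object (e.g.\ the same output string), and (iii) the completion $\nu(D)$ satisfies $\lnot q$ iff that object is ``valid'' (accepted / satisfying). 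Point (iii) is handled by planting, for each way the guess could be invalid, a pair of facts that together realize the forbidden pattern of $q$; point (ii) is handled by sharing the ``output'' nulls across enough facts that the completion records exactly the output and forgets the rest of the guess. I would then verify that the map is computable in polynomial time, that it is parsimonious, and that it preserves uniformity (so that the result holds already over uniform naive tables, matching the statement).

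The main obstacle I anticipate is engineering the gadget so that all three constraints hold simultaneously over a \emph{single fixed} \sjfbcq\ $q$: we need the forbidden pattern of $q$ to be expressive enough to encode ``this guess is invalid'' (which in general is an arbitrary polynomial-time check), yet $q$ must be self-join-free and of bounded arity. The way around this is the usual one for $\countcompls$-style reductions seen earlier in the paper (e.g.\ in Propositions~\ref{prp:Rxx-Rxy-hard-compls-naive} and~\ref{prp:ucountcompls-Rxy-no-frpas}): reduce not from a raw Turing-machine problem but from an intermediate \spanp-complete problem whose validity check is already ``local'' (a SAT-like or graph-like condition), so that a single fixed query with one forbidden two-fact pattern — something like $\exists x\exists y\,(R(x,y)\land S(x,y))$ or a self-loop pattern — suffices to detect a violated clause/constraint, while the projection-onto-outputs behavior falls out of set semantics. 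Once the source problem is chosen so that validity is local and the ``output'' is a designated substructure, the bijection and the polynomial-time, parsimonious, uniformity-preserving properties are routine to check, completing the proof.
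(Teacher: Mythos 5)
Your proposal is correct and follows essentially the same route as the paper: the paper reduces from the projected-SAT problem $\ksat$ (counting assignments distinct on the first $k$ variables), detects a violated clause locally via the fixed pattern $\bigwedge_{(a,b,c)\in\{0,1\}^3} C_{abc}(x,y,z)$ (an eighth tuple appears in some $C_{abc}$ exactly when a clause is falsified), and uses facts $S(i,\bot_{x_i})$ for $i\le k$ together with set-semantics collapse so that satisfying valuations yield the same completion iff they agree on the first $k$ variables. Your three requirements (i)--(iii) and your resolution of the "fixed self-join-free query" obstacle match the paper's construction exactly.
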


To prove this result, we will use the problem of counting the number of satisfying assignments of a 3-CNF formula that are distinct in the first~$k$ variables, that we denote by~$\ksat$. 
Formally, 
	the problem~$\ksat$ takes as input a 3-CNF formula~$F$ on variables~$\{x_1,\ldots,x_n\}$ and an integer
	$1 \leq k \leq n$, and outputs the number of assignments of the first~$k$ variables that can be extended
	to a satisfying assignment of~$F$. This problem is shown to be~\spanp-complete in~\cite{kobler1989counting}:

	\begin{proposition}[{\cite[Section 6]{kobler1989counting}}]
		$\ksat$ is \spanp\ complete (under polynomial-time parsimonious reductions).
	\end{proposition}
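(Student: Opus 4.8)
The plan is to establish both directions of the claim: membership $\ksat \in \spanp$ and $\spanp$-hardness of $\ksat$ under polynomial-time parsimonious reductions.

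For membership, I would exhibit a polynomial-time nondeterministic transducer $M'$ whose span is $\ksat$. On input $(F,k)$ with $F$ a $3$-CNF over variables $x_1,\dots,x_n$, the machine $M'$ nondeterministically guesses a full assignment $a \in \{0,1\}^n$, checks in polynomial time whether $a$ satisfies $F$ (halting with no accepting output on the branches where it does not), and on each satisfying branch writes the prefix $a_1\cdots a_k$ on its write-only output tape. The distinct outputs of $M'$ are then precisely the strings $b \in \{0,1\}^k$ occurring as the first $k$ bits of some satisfying assignment of $F$, which is exactly what $\ksat(F,k)$ counts. Hence $\ksat$ equals the span of $M'$, so $\ksat \in \spanp$.

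For hardness, I would take an arbitrary $f \in \spanp$ witnessed by a polynomial-time nondeterministic transducer $M$, so that $f(x)$ is the number of distinct strings $y$ appearing on the output tape over all accepting runs of $M$ on $x$. First I would normalize $M$ so that every accepting run writes an output of a fixed polynomial length $L = L(|x|)$, by padding on the right with a fresh end-marker over a three-letter alphabet $\{0,1,\square\}$ encoded with two bits per cell; this padding is injective on $\{0,1\}^{\le L}$, so it preserves distinctness and the number of distinct padded outputs still equals $f(x)$. The padded output is then described by $k := 2L$ Boolean output variables $\bar y$. I would then apply the Cook--Levin construction to the computation of $M$ on $x$, producing in polynomial time a $3$-CNF formula $\phi_x(\bar y, \bar z)$ whose computation variables $\bar z$ encode a tableau of an accepting run (with the usual auxiliary variables absorbing the conversion of bounded-width clauses into $3$-clauses), and whose output variables $\bar y$ are forced to agree with the symbols written on the output tape, so that $\exists \bar z\,\phi_x(\bar y,\bar z)$ holds iff $\bar y$ encodes an output produced by some accepting run of $M$ on $x$. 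Ordering the variables so that $\bar y$ comes first and setting the parameter to $k = 2L$, we get that $\ksat(\phi_x, k)$ counts exactly the assignments to $\bar y$ that extend to a satisfying assignment, i.e.\ the distinct padded outputs of $M$, which is $f(x)$. Thus $x \mapsto (\phi_x, k)$ is a polynomial-time parsimonious reduction $f \pr \ksat$.

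I expect the main obstacle to be the bookkeeping in the Cook--Levin step: one must cleanly separate the $k$ output variables from the auxiliary computation variables, encode variable-length outputs as fixed-length strings in a provably injective way, and verify that satisfiability in $\bar z$ exactly characterizes ``is an output of some accepting run.'' The key point making the reduction parsimonious is that $\ksat$ counts \emph{distinct} assignments to the first $k$ variables, ignoring the multiplicity of witnessing $\bar z$; this matches the span semantics, in which distinct accepting runs that produce the same output are counted once. A minor detail to handle is the empty output (ensure $k \ge 1$, e.g.\ by prepending a fixed marker symbol), but this does not affect the argument.
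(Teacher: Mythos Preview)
The paper does not prove this proposition at all: it simply cites it as a known result from \cite[Section~6]{kobler1989counting} and uses it as a black box in the reduction establishing Theorem~\ref{thm:general-spanp-complete}. Your proposal is a correct reconstruction of the standard argument behind that cited result: membership via the obvious transducer that guesses a full assignment and outputs its first $k$ bits, and hardness via a Cook--Levin encoding of an arbitrary polynomial-time NP transducer where the output-tape contents are isolated as the first block of variables, so that projecting onto those variables recovers exactly the span of the machine. The only cosmetic point to tighten is that the output alphabet $\Sigma$ of the transducer need not be $\{0,1\}$ in general, so you should fix an injective binary encoding of $\Sigma \cup \{\square\}$ (rather than assuming a three-letter alphabet) before computing $k$; this changes nothing in the argument.
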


We are ready to prove Theorem~\ref{thm:general-spanp-complete}.

\begin{proof}[Proof of Theorem~\ref{thm:general-spanp-complete}]
Notice that we only need to show hardness for a fixed \sjfbcq\ $q$.
We reduce from $\ksat$ to $\ucountcompls(\neg q)$, for a fixed \sjfbcq\ $q$ to be defined.
	Let~$F$ be a 3-CNF on variables~$\{x_1,\ldots,x_n\}$, and~$1 \leq k \leq n$.
	We first explain how we build the incomplete database~$D$, and we will define the sjfBCQ~$q$ after.
	For every variable~$x_i$, $1 \leq i \leq n$, we have a null~$\bot_{x_i}$, and the (uniform) domain
	is~$\{0,1\}$.
	For~$(a,b,c) \in \{0,1\}^3$, we have a relation~$C_{abc}$ of arity~$3$, and we fill
	it with every tuple of the form~$C_{abc}(a',b',c')$ with~$(a',b',c') \in \{0,1\}^3$
	such that~$a=a' \lor b=b' \lor c=c'$ holds; hence for every~$(a,b,c) \in \{0,1\}^3$ there are exactly~$7$ facts of this form.
	For every clause~$K=l_1 \lor l_2 \lor l_3$ of~$F$ with~$l_1,l_2,l_3$ being
	literals over variables~$y_1,y_2,y_3$,
	letting~$(a_1,a_2,a_3) \in \{0,1\}^3$ be the unique tuple such that~$a_i=1$ iff~$l_i$ is
	a positive literal, we add to~$C_{a_1 a_2 a_3}$ the
	fact~$C_{a_1 a_2 a_3}(\bot_{y_1},\bot_{y_2},\bot_{y_3})$.
	Last, we have a binary relation~$S$ that we fill with the tuples~$S(i,\bot_{x_i})$
	for~$1 \leq i \leq k$.
	The sjfBCQ~$q$ then simply says that there exists a tuple that appears in all the relations~$C_{abc}$:
	\begin{eqnarray}\label{eq:q}
	q & = & \exists x \exists y\, S(x,y) \land \exists x \exists y \exists z\, \bigg(\bigwedge_{(a,b,c)\in \{0,1\}^3} C_{abc}(x,y,z)\bigg)
	\end{eqnarray}
	Note that we added the seemingly useless query $\exists x \exists y\, S(x,y)$ to~$q$ because the set of relations in $D$ has to be a subset of the set of relations occurring in $q$ (indeed, this is how we defined our problems in Section~\ref{sec:preliminaries}).
	We now show that the number of completions of~$D$ that do not satisfy~$q$ is equal to the number of assignments of the first~$k$ variables that can be extended
	to a satisfying assignment of~$F$, thus establishing that~$\ucountcompls(\lnot q)$ is \spanp-hard (under polynomial-time parsimonious reductions).
	First, observe that the assignments of the variables are in bijection with the valuations of the nulls of~$D$.
	One can then readily observe the following:
	 \begin{itemize}
		 \item If~$q$ is falsified in a completion of~$D$, it can only be because there does not exist a tuple that occurs in all the relations; this is because the query~$\exists x \exists y\, S(x,y)$ is always satisfied by any completion of~$D$.
		 \item For every assignment of the variables, letting~$\nu$ be the corresponding valuation of the nulls, there exists a tuple that is in all relations~$C_{abc}$ of~$\nu(D)$ if and only if that assignment is not satisfying for~$F$. Indeed, this happens if and only if there exists a relation~$C_{abc}$ such that~$\nu(D)(C_{abc})$ contains exactly~$8$ facts.
		 \item For every two valuations~$\nu,\nu'$ such that the corresponding assignments are not satisfying the formula, we have that~$\nu(D) \neq \nu'(D)$ if and only if~$\nu$ and~$\nu'$ differ on the first~$k$ variables. This is because, by the previous item, each relation~$C_{abc}$ contains exactly the~$7$ ground tuples that we initially put in~$D$.
	 \end{itemize}
	By putting it all together, we obtain that the reduction works as expected.
\end{proof}

This theorem 
gives evidence that $\spanp$ is the right class to describe the complexity of counting completions for FO queries (and even for queries with model checking in polynomial time). 
It is important to notice that \spanp-hardness is proved in Theorem~\ref{thm:general-spanp-complete} by considering parsimonious reductions. This is a delicate issue, because from the main result in \cite{TW92}, it is possible to conclude that every counting problem that is \shp-hard (even under polynomial-time parsimonious reductions) is also \spanp-hard under polynomial-time Turing reductions, so a more restrictive notion of reduction has to be used when proving that a counting problem is \spanp-hard~\cite{kobler1989counting}.

Before continuing, we prove Proposition \ref{prp:not-in-shp}.  

\begin{proof}[Proof of Proposition \ref{prp:not-in-shp}]
	Let $q$ be the \sjfbcq\ defined in Equation~\eqref{eq:q} in the proof of Theorem~\ref{thm:general-spanp-complete}.
	Its schema $\sigma = \{S\} \cup \{C_{abc} \mid (a,b,c) \in \{0,1\}^3\}$ consists of~$10$ relation symbols, with~$S$ being binary and each~$C_{abc}$ being ternary.
	Let us denote by~$\ucountcompls(\sigma)$ the problem that takes as input an incomplete database over schema~$\sigma$ and outputs its number of completions.
	The first part of our proof is to reduce~$\ucountcompls(\sigma)$ to~$\ucountcompls(q)$; formally, we claim that~$\ucountcompls(\sigma) \pr \ucountcompls(q)$.
		Indeed, let~$D$ be an incomplete database over schema~$\sigma$, that is an input of~$\ucountcompls(\sigma)$. 
		We construct in polynomial time an incomplete database~$D'$ over the same schema such that~$\ucountcompls(\sigma)(D) = \ucountcompls(q)(D')$, thus establishing the parsimonious reduction.	
		Let~$f$ be a fresh constant that does occurs neither in~$D$ nor in the domain of some null.
		Then the relation~$D'(S)$ is the same as the relation~$D(S)$, plus a fact~$S(f,f)$.
		Moreover, for every~$(a,b,c) \in \{0,1\}^3$, the relation~$D'(C_{abc})$ consists of all the facts in~$D(C_{abc})$, plus a fact~$C_{abc}(f,f,f)$.
		It is easy to see that~$D$ and~$D'$ have the same number of completions. Moreover, thanks to the facts that use the constant~$f$,
		we have that every completion of~$D'$ satisfies~$q$. Therefore, we indeed have that~$\ucountcompls(\sigma)(D) = \ucountcompls(q)(D')$.
		This proves that~$\ucountcompls(\sigma) \pr \ucountcompls(q)$. 

For the second part of the proof, we need to introduce the complexity class \gapp. This class consists of function problems that can be expressed as the difference of two functions in $\shp$~\cite{FFK94,G95}. It is known that if the inclusion $\spanp \subseteq \gapp$ holds, then we have that $\np \subseteq \spp$~\cite{MTV94}.\footnote{In fact, the class \gapspanp\ is defined in \cite{MTV94}, where it is proved that a function $f$ is in \gapspanp\ if and only if $f = g - h$, where $h,g$ are functions in \spanp. Then it is shown in \cite[Corollary 3.5]{MTV94} that the inclusion $\gapspanp \subseteq \gapp$ implies that $\np \subseteq \spp$. But if we have that $\spanp \subseteq \gapp$, then we also have that $\gapspanp \subseteq \gapp$ as $\gapp$ is closed under subtraction and, therefore, we conclude that $\np \subseteq \spp$ as desired.} With this, we are able to prove the proposition.
	Assume that~$\ucountcompls(q)$ is in $\shp$. Then, by the first part of the proof we have that~$\ucountcompls(\sigma) \in \shp$ as well (because~$\shp$ is closed under polynomial-time parsimonious reductions).
	Now, observe that for every incomplete database~$D$ over~$\sigma$, the following holds:
	\begin{eqnarray*}
	\ucountcompls(\lnot q)(D) & = & \ucountcompls(\sigma)(D) - \ucountcompls(q)(D).
	\end{eqnarray*}
	But then this means that~$\ucountcompls(\lnot q)$ is in \gapp\ (since both problems in the right hand side are in~$\shp$).
	Since~$\ucountcompls(\lnot q)$ is \spanp-complete by Theorem~\ref{thm:general-spanp-complete} under polynomial-time parsimonious reductions, and since \gapp\ is closed under polynomial-time parsimonious reductions, this would indeed imply that $\spanp \subseteq \gapp$ and, hence, that $\np \subseteq \spp$.
\end{proof}

We conclude this section by considering an even more general scenario where queries have model checking in \np. Interestingly, in this case \spanp\ is again the right class to describe the complexity not only of counting completions, but also of counting valuations.

\begin{theorem}\label{theo-np-spanp}
	If~$q$ is a Boolean query with~$\mc(q) \in \np$, then both $\countvals(q)$ and $\countcompls(q)$ are in \spanp.
	Moreover, there exists such a Boolean query~$q$ for which $\ucountvals(q)$ is~\spanp-complete under polynomial-time parsimonious~reductions $($and for~$\ucountcompls(q)$, we can even take~$q$ to be the negation of an \sjfbcq, hence with model checking in~\ptime, as given by Theorem~\ref{thm:general-spanp-complete}$)$.
\end{theorem}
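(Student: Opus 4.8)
The plan is to establish three things: that $\countvals(q)\in\spanp$ and $\countcompls(q)\in\spanp$ whenever $\mc(q)\in\np$, and that there is a fixed Boolean query $q$ with $\mc(q)\in\np$ for which $\ucountvals(q)$ is \spanp-complete under parsimonious reductions; the corresponding claim for $\ucountcompls$ is already supplied by Theorem~\ref{thm:general-spanp-complete}.

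\textbf{The two upper bounds.} For $\countvals(q)$ I would describe a polynomial-time nondeterministic transducer $M$: on input an incomplete database $D$, it guesses a valuation $\nu$ of $D$, writes a fixed canonical encoding of $\nu$ onto its output tape, then computes $\nu(D)$ in polynomial time, guesses a polynomial-size certificate for ``$\nu(D)\models q$'' and runs the $\np$ verifier for $\mc(q)$, accepting iff the verifier accepts. Since the certificate is kept on the work tape and never written out, two accepting runs write the same string exactly when they guessed the same valuation; hence the number of distinct outputs of $M$ on $D$ is precisely the number of valuations $\nu$ with $\nu(D)\models q$, i.e.\ $\countvals(q)(D)$. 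For $\countcompls(q)$ the machine is the same except that, after guessing $\nu$, it writes a canonical encoding of the completion $\nu(D)$ (facts listed in a fixed order, duplicates removed); now two accepting runs write the same string exactly when they produce the same completion, so the number of distinct outputs is $\countcompls(q)(D)$. This simultaneously reproves Observation~\ref{obs:general-upper-bound} with ``$\ptime$'' weakened to ``$\np$''.

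\textbf{The lower bound for $\ucountvals$.} I would reduce parsimoniously from $\ksat$, which is \spanp-complete. Over a fixed schema with a binary relation $S$, a unary relation $V$, and one ternary relation per sign pattern $(a,b,c)\in\{0,1\}^3$, let $q$ be the fixed query stating: there is a map $g$ from the constants occurring in $V$ to truth values such that $g$ is consistent with every fact $S(m,b)$ and satisfies every clause encoded by the ternary facts. Model checking $q$ amounts to guessing $g$ and checking it in polynomial time, so $\mc(q)\in\np$. Given an instance $(F,k)$ of $\ksat$ with $F$ a $3$-CNF on $x_1,\dots,x_n$, I build the uniform incomplete database $D$ whose only nulls are $\bot_1,\dots,\bot_k$, with uniform domain $\{0,1\}$; it contains the facts $V(1),\dots,V(n)$, the facts $S(1,\bot_1),\dots,S(k,\bot_k)$, and, for each clause of $F$ on variables $x_i,x_j,x_l$ with sign pattern $(a,b,c)$, one ternary fact $(i,j,l)$ in the relation for $(a,b,c)$. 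The point is that only $x_1,\dots,x_k$ become nulls, while the whole of $F$ (including $x_{k+1},\dots,x_n$) sits in constant facts and is quantified away by $q$. Then valuations of $D$ are in bijection with assignments of $x_1,\dots,x_k$, and $\nu(D)\models q$ iff the corresponding partial assignment extends to a satisfying assignment of $F$; therefore $\ucountvals(q)(D)=\ksat(F,k)$. This reduction is polynomial time and parsimonious, so $\ucountvals(q)$ is \spanp-hard under parsimonious reductions, and it belongs to \spanp\ by the first part, hence is \spanp-complete.

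\textbf{Where the difficulty lies.} The subtle point is exactly the parsimoniousness of the $\ucountvals$ reduction: because $\countvals$ counts valuations rather than completions, a parsimonious reduction forces a bijection between the valuations of the constructed database and the objects counted by $\ksat$, which is why only the first $k$ variables may be represented by nulls while the remaining variables together with the satisfiability test must be pushed into $q$ — and this is also precisely why we need $\mc(q)\in\np$ and cannot stay within $\ptime$ model checking. A second thing to keep in mind, as stressed after Theorem~\ref{thm:general-spanp-complete}, is that \spanp-hardness must be proved through parsimonious reductions (Turing reductions would be vacuous here), so every reduction used above has to be verified to be parsimonious.
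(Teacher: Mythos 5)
Your proposal is correct, and the upper-bound argument is exactly the ``straightforward'' SpanP transducer the paper has in mind (guess a valuation, write either the valuation or the canonical completion to the output tape, keep the $\np$ certificate for $\nu(D)\models q$ on the work tape); spelling it out is a welcome addition since the paper omits it. Where you genuinely diverge is in the hardness proof for $\ucountvals(q)$: the paper reduces from $\hamsubgraphs$, building a database with edge facts $R(u,v)$, node facts $T(a_i,\bot_i)$ over domain $\{0,1\}$, and a counter relation $K$, and taking $q$ to be an $\exists$SO sentence asserting that the set $S=\{s\mid T(s,1)\}$ has the same cardinality as $K$ and induces a Hamiltonian subgraph. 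You instead reduce from $\ksat$, putting nulls only on the first $k$ variables and pushing the clauses and the remaining $n-k$ variables into constant facts that $q$ quantifies away. Both reductions are parsimonious and both source problems are \spanp-complete by the same reference, so both work; your version has the aesthetic advantage of reusing the clause-encoding machinery of Theorem~\ref{thm:general-spanp-complete} (the relations $C_{abc}$), so the two \spanp-hardness results of the paper would rest on a single gadget, whereas the paper's choice of $\hamsubgraphs$ lets it invoke Fagin's theorem wholesale for an off-the-shelf $\np$ property without designing a clause query. One small point to make explicit if you write this up: your query $q$ must refer to the constants $0$ and $1$ (to state consistency of $g$ with the facts $S(i,b)$), which is fine for a general Boolean query --- the paper's own $\psi_1(S)$ does the same with $T(s,1)$ --- but it is worth noting since the \sjfbcq{}s in the bulk of the paper are constant-free.
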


\begin{proof}
	It is straightforward to prove that these problems are in \spanp. The part in between parenthesis has been shown in theorem~\ref{thm:general-spanp-complete}. Thus, we need to prove that $\ucountvals(q)$ is \spanp-hard for a fixed Boolean query $q$ such that $\mc(q) \in \np$, under polynomial-time parsimonious reductions. To do this, we will reduce from the \spanp-complete problem $\hamsubgraphs$, defined as follows. 
	
		Let~$G=(V,E)$ be a undirected graph, and let~$S \subseteq V$.
		The \emph{subgraph of~$G$ induced by~$S$}, denoted by~$G[S]$, is the graph with set of
		nodes~$S$ and set of edges $\{\{u,v\} \in E \mid u,v\in S\}$,
		We recall that a graph~$G$ is \emph{Hamiltonian} when there exists a cycle in~$G$ that visits every node of~$G$ exactly once.
		The problem~$\hamsubgraphs$ takes as input a simple graph~$G=(V,E)$ and an integer~$k$, and outputs the number of induced subgraphs~$G[S]$ with~$|S| = k$
		such that~$G[S]$ is Hamiltonian.
	
	\begin{proposition}[{\cite[Section 6]{kobler1989counting}}]
	\label{prp:hamsubgraphs-hard}
	$\hamsubgraphs$ is \spanp-complete (under polynomial-time parsimonious reductions).
	\end{proposition}

Next we show that $\hamsubgraphs \pr \ucountvals(q)$, for a fixed Boolean query $q$ (to be defined). 
Let~$G=(V,E)$ be an undirected graph. We first explain how we construct the incomplete database~$D$, and we will then define the query~$q$.
		The schema contains two binary relation symbols~$R,T$ and one unary relation symbol~$K$. Fix a linear order~$a_1,\ldots,a_n$ of the nodes of~$G$.
		For every edge~$\{u,v\} \in E$ we have the facts~$R(u,v)$ and~$R(v,u)$.
		For~$1 \leq i \leq n$ we have a fact~$T(a_i,\bot_i)$, and the domain of the nulls is~$\{0,1\}$.
		For~$1 \leq j \leq k$ we have a fact~$K(j)$. Observe that~$D$ is a Codd table.
		We now define the Boolean query~$q$, which will be a sentence in existential second-order logic ($\exists$SO) over relational signature~$R,T,K$.
		Before doing so, we explain the main idea: intuitively,~$q$ will check that there are exactly~$k$ facts of the form~$T(a_i,1)$ in the relation~$T$ and that, letting~$S$ be the set of nodes~$v$ such that~$T(v,1)$ is in relation~$T$, the induced subgraph~$G[S]$ is Hamiltonian. This will indeed ensure
		that we have $\ucountvals(q)(D) = \hamsubgraphs(G,k)$, thus completing this reduction, which is parsimonious and can be performed in polynomial-time.
		The query is
		\[q \ = \ \exists S\, \psi_1(S) \land \psi_2(S)\]
		where~$S$ is a unary second order variable and the formula $\psi_1(S)$ states that (a) the elements~$s$ of~$S$ are exactly all the elements such that~$T(s,1)$ holds,
		and that (b) there are exactly the same number of elements in~$S$ as there are elements~$j$ for which~$K(j)$ holds.
		It is clear that (a) can be expressed in FO.
		Moreover, (b) can be expressed in $\exists$SO by asserting the existence of a binary second-order relation~$U$
		that represents a bijective function from~$S$ to the elements in~$K$.
		Then~$\psi_2(S)$ is a formula that asserts that~$G[S]$ is Hamiltonian. Since this is a property in NP,~$\psi_2(S)$ can be expressed in~$\exists$SO by Fagin's theorem (see, e.g.,~\cite{immerman2012descriptive}).
		This shows that the reduction is correct.
		Finally, the fact that~$\mc(q)$ is in NP again follows from Fagin's theorem.
		This concludes the proof.
\end{proof}

\section{Extensions to queries with constants and free variables}
\label{sec:extensions}
So far, we have only considered our counting problems for queries that are
Boolean and that do not contain constants.  In this section we explain how our
framework can be adapted to queries with constants and with free variables.  
Specifically, we will explain how one can obtain dichotomies for
self-join--free conjunctive queries with constants and
free variables.

Before that, we have to formally define our counting problems for a query with
free variables.  Let~$q(\bar{x})$ be a query with free variables~$\bar
x$. For a tuple of constants~$\bar t$ of appropriate arity, we write~$q(\bar
t)$ the Boolean query obtained by substituting the variables~$\bar x$ with the
constants~$\bar t$. The problem~$\countvals(q(\bar x))$ then takes as input an
incomplete database~$D$ over relations~$\sig(q(\bar x))$, a tuple of
constants~$\bar t$, and returns the number of
valuations~$\nu$ of~$D$ such that~$\nu(D) \models q(\bar t)$. We write this
output~$\countvals(q(\bar x))(D,\bar{t})$. The problem~$\countcompls(q(\bar x))$ 
is defined similarly.

We first explain in Section~\ref{subsec:freevars} how to obtain dichotomies
for self-join-free conjunctive
query with free variables and constants, assuming we have dichotomies
for self-join--free Boolean conjunctive queries
with constants.  In section~\ref{subsec:constants}, we then explain how to
obtain dichotomies for the later case.

\subsection{Dealing with free variables}
\label{subsec:freevars}

Suppose in this section that we have a dichotomy between \shp-hardness and \fp\
of our counting problems, for every \sjfbcq\ that is allowed to contain constants.
We then show how to obtain dichotomies, for every
self-join--free CQ that is allowed to have constants \emph{and} free variables.
To this end, we will need the following definition. Let~$q(\bar x)$ be a
self-join-free CQ with free variables~$\bar x$, and let~$\bar t$ and~$\bar{t'}$
be two tuples of constants with appropriate arity. We say that~$\bar t$
and~$\bar t'$ are \emph{equivalent with respect to~$q(\bar x)$} if one can go
from~$q(\bar t)$ to~$q(\bar{t'})$ by iteratively renaming constants into fresh
constants. For instance, if~$q(x_1,x_2)$ is~$\exists y\, R(y,c,x_1,x_2) \land
S(c',x)$, then~$(c,c'')$ is equivalent to~$(c,c''')$, and~$(c'',c''')$
and~$(c''',c'')$ are also equivalent. It is clear that, if~$q(\bar x)$ is
fixed, this defines an equivalence relation that has only finitely many classes.
Furthermore, it is also clear that if~$\bar t$ and~$\bar{t'}$ are equivalent
with respect to~$q(\bar x)$ then the problems~$\countvals(q(\bar t))$
and~$\countvals(q(\bar{t'}))$ (resp.,~$\countcompls(q(\bar t))$
and~$\countcompls(q(\bar{t'}))$) have the same complexity. We can now show what we wanted.
In what follows, hardness refers to~\shp-hardness (but it is not important for the proof).

\begin{lemma}
\label{lem:freevars}
Assume that the following is true: for every \sjfbcq~$q$ that is allowed to
have constants, the problem~$\countvals(q)$ is either hard or is tractable.
Then the following is also true: for every self-join--free conjunctive query~$q(\bar x)$, the
problem~$\countvals(q(\bar x))$ is either hard or is tractable.
This holds also for counting completions, and when restricted to Codd tables and/or to the uniform setting.
\end{lemma}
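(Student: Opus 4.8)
The strategy is a standard "guess-and-reduce" argument: given a self-join--free CQ $q(\bar x)$ with free variables, I want to reduce the question of computing $\countvals(q(\bar x))$ on input $(D,\bar t)$ to finitely many Boolean instances of the form $\countvals(q(\bar s))$ for constant tuples $\bar s$, each of which is an \sjfbcq\ with constants, and then invoke the assumed dichotomy. First I would observe that, once $\bar t$ is supplied in the input, the Boolean query $q(\bar t)$ obtained by substitution is an \sjfbcq\ with constants, and the problem "count valuations $\nu$ of $D$ with $\nu(D)\models q(\bar t)$" is exactly $\countvals(q(\bar t))$ on input $D$. The subtlety is that $\bar t$ is part of the \emph{input}, not fixed, so a priori we face an infinite family of Boolean queries $\{q(\bar s)\}_{\bar s}$. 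The remedy is the equivalence relation introduced just before the lemma: $\bar s \sim_{q} \bar s'$ iff $q(\bar s)$ can be turned into $q(\bar s')$ by renaming constants to fresh ones. Since $q(\bar x)$ is fixed and has a fixed set of variables and constants, there are only finitely many $\sim_q$-classes, and within a class all the Boolean problems $\countvals(q(\bar s))$ have identical complexity (one renames constants in $D$ consistently, which is a polynomial-time parsimonious transformation preserving the Codd and uniform properties).

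\textbf{The dichotomy for $q(\bar x)$.} With this in hand I would argue as follows. Call $q(\bar x)$ \emph{hard} if there exists a constant tuple $\bar s$ such that the Boolean query $q(\bar s)$ is hard (in the sense of the assumed dichotomy for \sjfbcqs\ with constants); by the finiteness of $\sim_q$-classes and the invariance of complexity within a class, this is a well-defined property depending only on $q(\bar x)$ (we only need to inspect one representative per class). If $q(\bar x)$ is hard, pick such a witness class, with representative $\bar s_0$; then $\countvals(q(\bar s_0))$ is hard, and since it is the restriction of $\countvals(q(\bar x))$ to inputs whose constant tuple is $\bar s_0$, we get that $\countvals(q(\bar x))$ is hard too. (More precisely: there is a trivial parsimonious reduction from $\countvals(q(\bar s_0))$ to $\countvals(q(\bar x))$ that takes $D$ to $(D,\bar s_0)$.) If $q(\bar x)$ is not hard, then for \emph{every} constant tuple $\bar s$ the Boolean query $q(\bar s)$ is tractable by the assumed dichotomy. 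On input $(D,\bar t)$, we compute $\bar t$'s $\sim_q$-class, replace $\bar t$ by the canonical representative $\bar s$ of that class (applying the same renaming to $D$, in polynomial time), and then run the polynomial-time algorithm for $\countvals(q(\bar s))$ on the renamed database. Since there are only finitely many classes, finitely many polynomial-time algorithms suffice, and the whole procedure is polynomial; hence $\countvals(q(\bar x))$ is tractable. This establishes the dichotomy for counting valuations.

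\textbf{The other cases.} For $\countcompls(q(\bar x))$ the argument is verbatim the same: the substitution $q(\bar t)$ is still an \sjfbcq\ with constants, the $\sim_q$-classes are unchanged, and the consistent-renaming transformation of $D$ is still parsimonious and bijective on completions (since renaming constants in $D$ induces a bijection on valuations that commutes with applying a valuation), so it transports both hardness and tractability. Likewise, since the renaming transformation preserves the Codd property (it does not change which nulls occur, nor their multiplicities) and preserves uniformity (it just relabels the common domain), the same reasoning applies when we restrict to Codd tables, to the uniform setting, or to both.

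\textbf{Main obstacle.} I do not expect a genuine obstacle here --- the only point requiring care is making sure that "hardness of $q(\bar x)$" is genuinely a property of $q(\bar x)$ and not of the particular input tuple, i.e.\ checking that the complexity of $\countvals(q(\bar s))$ depends only on the $\sim_q$-class of $\bar s$, and that the number of classes is finite and independent of the input size. Both follow directly from the definition of $\sim_q$ recalled in the excerpt: renaming a constant to a fresh one induces a polynomial-time parsimonious reduction in both directions (rename constants in $D$ accordingly), and the number of classes is bounded by a function of the fixed data $q(\bar x)$ alone. Everything else is bookkeeping.
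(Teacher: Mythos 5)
Your proposal is correct and follows essentially the same route as the paper's own proof: both directions are handled via the finitely many equivalence classes of constant tuples, with hardness transported by the trivial parsimonious reduction $D \mapsto (D,\bar s_0)$ and tractability obtained by dispatching each input to the algorithm for its class representative. Your explicit remark that the database must be renamed consistently with the chosen representative is a small point the paper glosses over, but the argument is the same.
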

\begin{proof}
We only deal with counting valuations for the naive and non-uniform case, as the other cases are similar.
We prove the following for any self-join--free CQ~$q(\bar x)$, which implies
the claim: if there exists a tuple of constants~$\bar t$ such
that~$\countvals(q(\bar t))$ is hard, then~$\countvals(q(\bar x))$ is hard as
well, otherwise~$\countvals(q(\bar x))$ is tractable.  We start with the “if”
direction. Let~$\bar t$ be a tuple of constants such that~$\countvals(q(\bar
t))$ is hard.  By definition, it is clear that, for any incomplete
database~$D$, we have that~$\countvals(q(\bar{t}))(D) = \countvals(q(\bar
x))(D,\bar{t})$; this shows hardness of~$\countvals(q(\bar x))$.  Now for the
“otherwise” direction.  Let~$\bar{t_1},\ldots,\bar{t_k}$ be representatives of
the finitely many equivalence classes of the equivalence relation defined
above.  We have access to oracles for~$\countvals(q(\bar{t_1})), \ldots,
\countvals(q(\bar{t_k}))$.  Let~$D,\bar{t}$ be an input of~$\countvals(q(\bar
x))$. We then simply recognize (in constant time since the query if fixed) to
which~$\bar{t_i}$ the tuple~$\bar t$ is equivalent with respect to~$q(\bar x)$,
and call the appropriate oracle.
\end{proof}

(Notice that this idea actually works for conjunctive queries (with self-joins),
or even unions of conjunctive queries.)  Hence, the problem becomes that of
obtaining dichotomies for self-join--free Boolean conjunctive queries that can
contain constants.  We explain in the next section how this can be done.

\subsection{Dealing with constants}
\label{subsec:constants}

In this section, we simply write “an \sjfbcq” to mean an \sjfbcq\ that can
contain some constants.  To the best of our knowledge, there is no general
reduction that allows us to easily obtain dichotomies for the case with
constants from the case where queries do not have constants.  Hence, the
strategy to obtain dichotomies will be the same as in
Section~\ref{sec:countvals-sjfcqs}; namely, we will again use the notion of
pattern to find a set of hard patterns, and then show that when a query does
not have any of the hard patterns then the problem is tractable.  The notion of
pattern for an \sjfbcq\ that can contain constants is the same as the one we used
in Definition~\ref{def:pattern}, but we simply add the possibility of
deleting an occurrence of a constant. For instance the query~$R(x,c)$ is a
pattern of the query~$R(x,x,c,c,c')$ (in this section all variables are
existentially quantified, since we consider only Boolean queries).
The main property of patterns that we used then extends in this setting, as shown next.
\begin{lemma}
\label{lem:pattern-parsimonious-constants}
	Let~$q,q'$ be \sjfbcqs\ such that~$q'$ is a pattern of~$q$.  Then we
have~$\countvals(q') \pr \countvals(q)$.  Moreover, the same results hold if we
restrict to Codd tables, and/or to the uniform setting, and/or to counting
completions.
\end{lemma}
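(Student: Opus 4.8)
The plan is to adapt the proof of Lemma~\ref{lem:pattern-parsimonious} to the setting with constants, essentially unchanged, and to check that the only new operation—deleting an occurrence of a constant—is handled by the same padding trick already used for deleting an occurrence of a variable. As in the original proof, I would first argue that we may assume without loss of generality that the renamings and variable reorderings are performed last, so that it suffices to treat the case where~$q'$ is obtained from~$q$ by deleting some atoms, deleting some variable occurrences, and deleting some constant occurrences (never deleting all symbols of an atom). Write~$q = R_1(\overline{x_1}) \land \ldots \land R_m(\overline{x_m})$ and let~$q'$ keep a subset of the atoms, with each kept atom~$R_{j_k}$ having some of its positions (originally occupied by a variable or a constant) deleted in~$q'$.

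Given an input incomplete database~$D'$ of~$\countvals(q')$, I would build~$D$ exactly as before: let~$A$ be the set of constants appearing in~$D'$ or in the domain of some null of~$D'$, together with all constants appearing in~$q$ (a fixed set, since~$q$ is fixed). For each kept atom~$R_{j_k}$ and each tuple~$\overline{t'}$ in~$D'(R_{j_k})$, I generate into~$D(R_{j_k})$ all tuples~$\overline{t}$ obtained by: copying, at every position of~$q$ that was \emph{not} deleted in~$q'$ (whether it held a variable or a constant), the corresponding element of~$\overline{t'}$; and filling every position of~$q$ that \emph{was} deleted in~$q'$ with every possible constant from~$A$. Atoms of~$q$ whose relation does not occur in~$q'$ are filled with all possible facts over~$A$. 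Domains of nulls are left unchanged. As before this runs in polynomial time because the arities of the relations of~$q$ are fixed.

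It then remains to verify~$\countvals(q')(D') = \countvals(q)(D)$, and since~$D$ and~$D'$ have the same nulls (hence the same valuations), it suffices to show that for every valuation~$\nu$, $\nu(D') \models q'$ iff~$\nu(D) \models q$. The "only if" direction: a homomorphism~$h'$ witnessing~$\nu(D')\models q'$ extends to~$q$ by choosing, for each deleted variable position, the constant that was placed there during the padding step—here the key point is that a \emph{deleted constant position} of~$q$ is simply one more column we filled with all of~$A$, so in particular the required constant of~$q$ is available (it lies in~$A$ by our choice of~$A$). The "if" direction: a homomorphism~$h$ witnessing~$\nu(D)\models q$ restricts to the variables of~$q'$ and still maps~$q'$ into~$\nu(D')$, because deleting positions only relaxes the constraints and because~$D(R_{j_k})$ on the kept positions is, by construction, the projection of~$D'(R_{j_k})$. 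Finally, the cases of Codd tables and of the uniform setting require no change: domains of nulls are untouched, and since~$D'$ is Codd and we only copy each null of~$D'$ to a single column of each generated tuple, $D$ is Codd as well; and for counting completions one uses, exactly as in Lemma~\ref{lem:pattern-parsimonious-compls}, that~$\nu_1(D')=\nu_2(D')$ iff~$\nu_1(D)=\nu_2(D)$ for any two valuations. I do not anticipate a real obstacle here; the only point needing a moment's care is making sure that the constants occurring in~$q$ itself are included in~$A$, so that matches of~$q$ (which may use those constants on deleted positions) are not lost.
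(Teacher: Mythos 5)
Your reduction is essentially the paper's, and for the counting claims it is correct; the one place where you deviate from the paper is in how a \emph{deleted constant occurrence} is handled, and that deviation is exactly where a problem appears. The paper's proof fills the column corresponding to a deleted occurrence of a constant~$c$ with the single constant~$c$ itself, so deleting constant occurrences never multiplies the number of generated tuples. You instead add the constants of~$q$ to~$A$ and fill such a column with \emph{every} constant of~$A$. For the parsimony argument this is harmless: the tuple with~$c$ in that column is among those you generate (so the forward direction of the homomorphism argument goes through), the extra facts cannot create spurious matches because~$q$ is self-join--free and the atom in question forces~$c$ at that position, and the backward direction only uses the projection onto the kept columns; the completion-equality argument also survives because the set of fillers is uniform across tuples.

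The gap concerns the Codd part of the statement. If an input tuple~$\overline{t'}$ contains a null~$\bot$ and its atom has a deleted constant position, your construction places~$\bot$ in~$|A|$ distinct facts of~$D$ (one per choice of filler), so~$D$ is not a Codd table even when~$D'$ is: Codd-ness requires each null to occur at most once \emph{in the table}, not once per generated tuple, so your justification ``we only copy each null of~$D'$ to a single column of each generated tuple'' does not establish it. Concretely, for~$q = R(x,c)$ and~$q' = R(x)$ with~$D' = \{R(\bot)\}$, the paper produces the Codd table~$\{R(\bot,c)\}$ whereas you produce~$\{R(\bot,a) \mid a \in A\}$. Padding with the single constant~$c$, as the paper does, introduces no duplication at all for constant deletions and is the right choice here; you should adopt it rather than reuse the all-of-$A$ padding that the original lemma employs for deleted \emph{variable} occurrences.
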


\begin{proof}
The reduction is exactly the same as that of Lemma~\ref{lem:pattern-parsimonious}, the
only difference being that, when we delete an occurrence of a constant~$c$, we
simply fill the corresponding columns of every tuple with this constant.
\end{proof}

Next, we prove dichotomies for counting valuations for the non-uniform setting,
for naive and Codd databases. For Codd databases, it turns out that there is no
new hard pattern that involves constants, so the dichotomy is the same as for
\sjfbcqs\ without constants. Formally:

\begin{theorem}[dichotomy]
\label{thm:countvals-sjfcqs-codd-constants}
Let~$q$ be an \sjfbcq.  If $R(x) \land S(x)$ is a pattern of $q$, then
$\ccountvals(q)$ is \shp-complete. Otherwise, $\ccountvals(q)$ is in \fp.
\end{theorem}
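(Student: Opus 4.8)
The plan is to mirror the structure of the proof of Theorem~\ref{thm:countvals-sjfcqs-codd} from Section~\ref{subsec:countvals-non-uniform}, showing that adding constants to the query does not create any new hard pattern in the Codd-table, non-uniform setting. Hardness is immediate: by Proposition~\ref{prp:RxSx-hard} the pattern~$R(x)\land S(x)$ is already \shp-hard, and by Lemma~\ref{lem:pattern-parsimonious-constants} (the extension of the pattern reduction to queries with constants) this hardness transfers to any \sjfbcq~$q$ that has~$R(x)\land S(x)$ as a pattern. So the whole content of the theorem is the tractability direction: if~$q$ does \emph{not} have~$R(x)\land S(x)$ as a pattern, then~$\ccountvals(q)$ is in~\fp.

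First I would normalize the query. Let~$q = R_1(\bar x_1)\land\dots\land R_m(\bar x_m)$, where each~$\bar x_i$ may now contain constants. As in the proof of Theorem~\ref{thm:countvals-sjfcqs-codd}, I would first argue that~$q$ not having~$R(x)\land S(x)$ as a pattern forces any two atoms of~$q$ to share no variable; the only subtlety is that they may now share a constant, but shared constants are irrelevant because deleting a constant occurrence is an allowed pattern operation and a constant shared between two atoms does not, by itself, yield the pattern~$R(x)\land S(x)$ (which involves a shared \emph{variable}). Hence the atoms of~$q$ are variable-disjoint, and since~$D$ is a Codd table the nulls appearing in different relations~$D(R_i)$ are distinct. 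This gives the factorization
\[
\ccountvals(q)(D) \;=\; \prod_{i=1}^m \ccountvals(R_i(\bar x_i))\big(D(R_i)\big),
\]
exactly as before. As in that proof I would also assume w.l.o.g.\ that~$D$ contains no constants in the nulls' sense is not needed here; rather I would keep the reduction "replace each constant of~$D$ by a fresh null with singleton domain", which preserves being a Codd table and does not change the output, so that we may assume~$D$ has no constants in its \emph{facts} either — but actually the cleaner route is to keep the constants in~$D$ and just handle them directly.

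So it suffices to compute~$\ccountvals(R(\bar x))(D(R))$ for a single relation~$R$ and a single atom~$R(\bar x)$ where~$\bar x = (u_1,\dots,u_r)$ is a tuple of variables and constants (with variables possibly repeated). Writing~$\bar t_1,\dots,\bar t_n$ for the tuples of~$D(R)$, the count is
\[
\ccountvals(R(\bar x))(D(R)) \;=\; \prod_{\bot\text{ appears in }D(R)} |\dom(\bot)| \;-\; \prod_{j=1}^n \rho(\bar t_j),
\]
where~$\rho(\bar t_j)$ is the number of valuations of the nulls of~$\bar t_j$ that do \emph{not} make~$\bar t_j$ match~$R(\bar x)$; since the total number of valuations of~$\bar t_j$ is computable, it is enough to count the valuations of~$\bar t_j$ that \emph{do} match. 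A valuation matches~$R(\bar x)$ iff: every position~$i$ where~$u_i$ is a constant~$c$ already carries~$c$ in~$\bar t_j$ (if some~$\bar t_j$ position holds a constant~$\neq c$, then~$\rho$ counts all valuations and we move on; if it holds a null~$\bot$, we require~$c\in\dom(\bot)$ and the null is forced to~$c$); and for every variable~$x$ occurring in~$\bar x$, all the elements of~$\bar t_j$ at the positions of~$x$ are pinned to a common value. I would proceed exactly as in Theorem~\ref{thm:countvals-sjfcqs-codd}: first resolve the constant positions (each contributes a factor~$1$ if consistent, $0$ if not, after checking domain membership for the nulls involved); then for each variable~$x$ of~$\bar x$, let the positions of~$x$ in~$\bar t_j$ carry a multiset of nulls and possibly constants — if two distinct constants appear, the factor is~$0$; if one constant~$c$ appears, then all the nulls at those positions must be set to~$c$, contributing~$\prod(\text{$1$ if $c$ is in each null's domain, else $0$})$; if only nulls~$\bot_1,\dots,\bot_\ell$ appear, the number of common values is~$|\dom(\bot_1)\cap\dots\cap\dom(\bot_\ell)|$, denoted~$s_x$. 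Multiplying these per-variable factors and the per-constant factors gives the number of matching valuations of~$\bar t_j$, hence~$\rho(\bar t_j)$, hence the whole product — all in polynomial time since~$q$, and thus~$r$ and~$m$, are fixed. I do not anticipate a genuine obstacle here: the only thing to be careful about is the bookkeeping of nulls shared between a variable position and a constant position \emph{within the same atom}, which cannot be excluded (a tuple~$R(\bot,\bot)$ with atom~$R(x,c)$ forces~$\bot=c$), so the factors for the constant positions and the variable positions must be computed jointly rather than independently — but this is a routine refinement of the argument in Theorem~\ref{thm:countvals-sjfcqs-codd}, not a new idea.
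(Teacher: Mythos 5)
Your proposal is correct and follows essentially the same route as the paper, which proves this theorem by repeating the proof of Theorem~\ref{thm:countvals-sjfcqs-codd} verbatim and merely redefining~$\rho(\bar t_j)$ as the number of valuations of the nulls of~$\bar t_j$ that do not match the corresponding atom (now possibly containing constants); your write-up just spells out the per-position bookkeeping that the paper leaves as ``clearly computable in polynomial time.'' Your closing caveat about a null occurring both at a variable position and at a constant position of the same tuple (e.g.~$R(\bot,\bot)$) is vacuous here, since in a Codd table every null occurs at most once in the entire table, so no joint computation is ever needed.
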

\begin{proof}
The proof is exactly as the proof of Theorem~\ref{thm:countvals-sjfcqs-codd}, with
the following modification: we let~$\rho({\bar t_j})$ be the number of
valuations of the nulls appearing in~${\bar t_j}$ that do not match the
corresponding atom of~$q$; and clearly, this can again be computed in polynomial time.
\end{proof}

For naive tables however, we find two new hard patterns that involve constants,
as shown next.

\begin{proposition}
\label{prp:Rcc-and-Rcc'-hard}
Let~$c,c'$ be two distinct constants.
The problems~$\countvals(R(c,c))$ and~$\countvals(R(c,c'))$ are both~\shp-hard.
\end{proposition}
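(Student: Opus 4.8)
The plan is to give two polynomial-time Turing reductions, both in the spirit of Proposition~\ref{prp:Rxx-hard} but now exploiting that the query fixes exactly which fact must appear in the completion. For $R(c,c)$ I would reduce from $\sIS$, which is \shp-complete~\cite{provan1983complexity}. Given a graph $G=(V,E)$, build the (non-uniform) incomplete database $D$ over schema $\{R\}$ with one null $\bot_v$ for each $v\in V$, setting $\dom(\bot_v)=\{c,d\}$ for a fresh constant $d\neq c$, and add the fact $R(\bot_u,\bot_v)$ for every edge $\{u,v\}\in E$. Since every null has exactly two possible values, $\nu\mapsto S_\nu\defeq\{v\mid \nu(\bot_v)=c\}$ is a bijection between valuations of $D$ and subsets of $V$, and a fact $R(\bot_u,\bot_v)$ becomes $R(c,c)$ precisely when both endpoints lie in $S_\nu$; hence $\nu(D)\models R(c,c)$ iff $S_\nu$ is \emph{not} independent in $G$. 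Therefore $\sIS(G)=2^{|V|}-\countvals(R(c,c))(D)$, and $2^{|V|}$ is computable in polynomial time, so $\sIS\tr\countvals(R(c,c))$.

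For $R(c,c')$ the same construction on a general graph fails, because now the condition ``this fact fires'' is asymmetric in the two positions of $R$. I would therefore reduce from $\#\mathrm{BIS}$, the problem of counting independent sets of a bipartite graph, which is \shp-hard~\cite{provan1983complexity} (and was already used in Proposition~\ref{prp:RxSxyTy-hard-codd}). Given $G=(X\sqcup Y,E)$, build $D$ with a null $\bot_x$ for each $x\in X$ with $\dom(\bot_x)=\{c,d_X\}$, a null $\bot_y$ for each $y\in Y$ with $\dom(\bot_y)=\{c',d_Y\}$ (where $d_X,d_Y$ are fresh and pairwise distinct from $c,c'$), and the fact $R(\bot_x,\bot_y)$ for every edge $\{x,y\}\in E$ with $x\in X$, $y\in Y$. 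Then $R(\bot_x,\bot_y)$ becomes $R(c,c')$ exactly when $\nu(\bot_x)=c$ and $\nu(\bot_y)=c'$, so writing $S_1\defeq\{x\mid\nu(\bot_x)=c\}$ and $S_2\defeq\{y\mid\nu(\bot_y)=c'\}$ we get that $\nu(D)\not\models R(c,c')$ iff $(S_1\times S_2)\cap E=\emptyset$, i.e.\ iff $(S_1,S_2)$ is an independent set of $G$. Since $\nu\mapsto(S_1,S_2)$ is again a bijection, $\#\mathrm{BIS}(G)=2^{|X|+|Y|}-\countvals(R(c,c'))(D)$, giving $\#\mathrm{BIS}\tr\countvals(R(c,c'))$.

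Both constructions are clearly polynomial time, which yields \shp-hardness of the two problems. I do not expect a genuine obstacle here; the only point that needs care is the choice of source problem: because $R(c,c)$ is symmetric in its positions, a reduction from $\sIS$ on arbitrary graphs works, whereas for $R(c,c')$ one genuinely needs a bipartite input so that the two sides of the graph can play the two distinct roles ``valued $c$'' and ``valued $c'$'', so that each node's contribution is governed by a single binary choice.
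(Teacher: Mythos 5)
Your proposal is correct and follows essentially the same route as the paper: the paper also reduces $\countvals(R(c,c'))$ from counting independent sets of bipartite graphs (assigning $U$-nulls domain $\{0,c\}$ and $V$-nulls domain $\{0,c'\}$, with one fact $R(\bot_u,\bot_v)$ per edge) and notes that dropping bipartiteness handles $R(c,c)$, exactly mirroring your two constructions. Your use of distinct fresh constants $d_X,d_Y$ in place of a shared dummy value is immaterial since one counts valuations rather than completions.
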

\begin{proof}
We only explain for~$\countvals(R(c,c'))$, as the other case is analogous.  The
reduction is similar to that used in Proposition~\ref{prp:RxSxyTy-RxySxy-hard},
but we reduce from counting the number of independent sets in \emph{bipartite}
graphs (in Proposition~\ref{prp:RxSxyTy-RxySxy-hard} we did not need the graphs
to be bipartite). Let~$G=(U\sqcup V,E)$ be a bipartite graph. We have one
null~$\bot_u$ for every node~$u \in U$ with domain~$\dom(\bot_u)=\{0,c\}$, and
one null~$\bot_v$ for every node~$u \in V$ with domain~$\dom(\bot_v)=\{0,c'\}$.
For every edge~$(u,v)\in E$ we have a fact~$R(\bot_u,\bot_v)$ in~$D$.  Then it
is clear that the number of valuations of~$D$ that do not contain~$R(c,c')$ is
equal to the number of independent sets of~$G$, thus establishing hardness.  (Note
that for~$R(c,c)$, we do not need the graph to be bipartite for the reduction
to work.)
\end{proof}

We then claim that these are the only additional patterns that are necessary
to obtain a dichotomy for~$\countvals(q)$ for \sjfbcqs\ with constants.

\begin{theorem}[dichotomy]
\label{thm:countvals-sjfcqs-constants}
Let~$q$ be an \sjfbcq.  If $R(x,x)$ or $R(x) \land S(x)$ or~$R(c,c)$
or~$R(c,c')$ for~$c\neq c'$ is a pattern of $q$, then $\countvals(q)$ is
\shp-complete. Otherwise,
$\countvals(q)$ is in \fp.
\end{theorem}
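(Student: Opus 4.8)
The plan is to follow the same two-part strategy used throughout Section~\ref{sec:countvals-sjfcqs}: derive \shp-hardness from a small set of hard patterns together with the transfer lemma, and prove membership in \fp\ for the queries that avoid all of these patterns.

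\emph{Hardness and membership in \shp.} If $R(x,x)$, $R(x)\land S(x)$, $R(c,c)$, or $R(c,c')$ (with $c\neq c'$) is a pattern of $q$, then $\countvals(q)$ is \shp-hard. The base cases are Proposition~\ref{prp:Rxx-hard} (since $\ucountvals(R(x,x)) \pr \countvals(R(x,x))$), Proposition~\ref{prp:RxSx-hard} (since $\ccountvals(R(x)\land S(x)) \pr \countvals(R(x)\land S(x))$), and Proposition~\ref{prp:Rcc-and-Rcc'-hard}; Lemma~\ref{lem:pattern-parsimonious-constants} then propagates hardness from the pattern to $q$. Membership in \shp\ is exactly as argued right after Theorem~\ref{thm:countvals-sjfcqs}: a polynomial-time nondeterministic machine guesses a valuation $\nu$ of the input $D$ and accepts iff $\nu(D)\models q$, which is decidable in polynomial time since $q$ is a fixed first-order query (the presence of constants does not change this). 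Hence $\countvals(q)$ is \shp-complete in these cases.

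\emph{Tractability.} Suppose $q=R_1(\bar x_1)\land\cdots\land R_m(\bar x_m)$ has none of the four patterns. The first step is a structural observation: avoiding $R(x,x)$ forbids a variable from occurring twice inside an atom; avoiding $R(x)\land S(x)$ forbids two atoms from sharing a variable; and avoiding both $R(c,c)$ and $R(c,c')$ forbids an atom from containing two occurrences of constants. Since every atom contains at least one variable, each $R_i(\bar x_i)$ therefore consists of pairwise distinct variables occurring nowhere else in $q$, with at most one position occupied by a constant, say $c_i$ at position $j_i$. As $q$ is self-join--free and its atoms share no variables, a homomorphism from $q$ to $\nu(D)$ can be assembled atom by atom, so $\nu(D)\models q$ iff for every $i$ the relation $\nu(D)(R_i)$ contains a fact matching $R_i(\bar x_i)$. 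For an atom without a constant this is equivalent to $D(R_i)\neq\emptyset$ (valuations never delete facts), independently of $\nu$; so if some constant-free atom has an empty relation in $D$ the answer is $0$, and otherwise $\countvals(q)(D)$ equals the number of valuations $\nu$ such that for every atom $R_i$ carrying a constant $c_i$, some fact over $R_i$ in $\nu(D)$ has $c_i$ at position $j_i$.

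Let $J$ be the (constant-size) set of indices of atoms carrying a constant. By inclusion--exclusion, $\countvals(q)(D)=\sum_{S\subseteq J}(-1)^{|S|}N_S$, where $N_S$ is the number of valuations $\nu$ such that, for every $i\in S$, no fact over $R_i$ in $\nu(D)$ has $c_i$ at position $j_i$. If for some $i\in S$ there is a ground fact over $R_i$ in $D$ with $c_i$ at position $j_i$, then $N_S=0$. Otherwise, writing $B_i$ for the set of nulls occurring at position $j_i$ in some fact of $D(R_i)$, the condition defining $N_S$ is the conjunction of the per-null constraints ``$\nu(\bot)\neq c_i$ for every $i\in S$ with $\bot\in B_i$'', so
\[
N_S \;=\; \prod_{\bot\text{ null of }D} |\,\dom(\bot)\setminus\{c_i : i\in S,\ \bot\in B_i\}\,|.
\]
Each $N_S$ is computable in polynomial time, and there are only $2^{|J|}$ summands, so $\countvals(q)(D)$ is computable in \fp. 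I expect the only point requiring genuine care to be this structural reduction to the very restricted query shape, together with the bookkeeping of ground facts and of empty relations; once that is in place, the inclusion--exclusion collapses into a product over nulls and tractability follows immediately.
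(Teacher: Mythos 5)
Your hardness and $\shp$-membership arguments coincide with the paper's (the same three base propositions plus the constant-aware pattern lemma, and the standard guess-a-valuation machine), and your structural analysis of pattern-free queries matches the paper's observation that every variable occurs exactly once and every atom carries at most one constant occurrence. For the tractability direction, however, you take a genuinely different and in fact cleaner route. The paper reduces $\countvals(q)$ to $\countvals(q')$ for $q'$ of the form $R_1(c_1)\land\cdots\land R_k(c_k)$ and then handles the correlations created by nulls shared between relations via a case analysis combined with a dynamic program over the shared nulls (worked out only for the two-atom example $R(c)\land S(c')$, with the general case declared a "tedious generalization" and omitted). Your inclusion--exclusion over the set $J$ of constant-bearing atoms sidesteps this entirely: for a fixed $S\subseteq J$ the avoidance constraints decompose per null as $\nu(\bot)\notin\{c_i : i\in S,\ \bot\in B_i\}$, so shared nulls are handled automatically by the product $\prod_{\bot}|\dom(\bot)\setminus\{c_i : i\in S,\ \bot\in B_i\}|$, and you get a complete closed-form argument for the general case where the paper only gives an example. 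On the two-atom example your $N_{\{1,2\}}$ term contributes exactly the factor $|\dom(\bot)\setminus\{c,c'\}|$ per shared null that the paper's dynamic program is designed to track, so the two computations agree.

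One small repair is needed in your computation of $N_S$: you set $N_S=0$ when some $i\in S$ has a \emph{ground} fact over $R_i$ with $c_i$ at position $j_i$, but the correct trigger is any fact (ground or not) whose position $j_i$ holds the \emph{constant} $c_i$ --- e.g.\ $R_i(c_i,\bot)$ with $j_i=1$ forces the match under every valuation, yet the offending occurrence is a constant and hence contributes to no $B_j$, so your product formula would not return $0$. Replacing "ground fact with $c_i$ at position $j_i$" by "fact whose $j_i$-th entry is the constant $c_i$" fixes this, and the rest of the argument goes through unchanged.
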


The \shp-hardness part
of the claim follows from Lemma~\ref{lem:pattern-parsimonious-constants} and
Propositions~\ref{prp:Rxx-hard} and~\ref{prp:RxSx-hard}
and~\ref{prp:Rcc-and-Rcc'-hard}. We now show the tractability claim.  First,
observe that not having any of these patterns means the following: every
variable in~$q$ has exactly one occurrence and every atom of~$q$ contains at
most one constant (but notice that the same constant can appear in multiple
atoms). But then, because the database is Codd and because~$q$ has no
self-joins, by multiplying by the appropriate factor, the
problem~$\countvals(q)$ reduces to the problem~$\countvals(q')$ where~$q'$ is an
\sjfbcq\ of the form~$R_1(c_1)\land R_2(c_2)\land \ldots \land R_k(c_k)$, where
the constants~$c_1,\ldots c_k$ are not necessarily distinct.  We give next an
example proof that the problem is tractable for such a simple query.

\begin{example}
\em{
Let~$q$ be the query~$R(c)\land S(c')$ with~$c\neq c'$, and~$D$ be an
incomplete naive database.  We explain how to compute~$\countvals(\lnot q)(D)$
(the number of valuations that do not satisfy the query) in \fp. This is
enough, since the total number of valuations can clearly be computed in \fp.
First of all, we can assume without loss of generality that~$D$ does not
contain ground atoms that already satisfy the query.  
Then, let~$B^{RS} \defeq \{\bot^{RS}_1,\ldots,\bot^{RS}_{n_{RS}}\}$ be the set
of nulls that occur in both~$D(R)$ and~$D(S)$, $B^R \defeq
\{\bot^{R}_1,\ldots,\bot^{R}_{n_R}\}$ be the set of nulls that occur in~$D(R)$
but not in~$D(S)$, and $B_S \defeq \{\bot^{S}_1,\ldots,\bot^{S}_{n_S}\}$ be the
set of nulls that occur in~$D(S)$ but not in~$D(R)$. Let~$D'$ be the database
that contains only the facts~$R(\bot)$ and~$S(\bot)$ for~$\bot \in B^{RS}$.
Notice that the number of valuations of the nulls in~$B^R$ such that no null
has value~$c$ is~$\prod_{i=1}^{n_R} |\dom(\bot^R_i)\setminus \{c\}|$, and that
the number of valuations of the nulls in~$B^R$ such that some null has
value~$c$ is then~$\prod_{i=1}^{n_R} |\dom(\bot^R_i)| - \prod_{i=1}^{n_R}
|\dom(\bot^R_i)\setminus \{c\}|$, and a similar expression can be obtained for
the valuations of the nulls in~$B^S$ and constant~$c'$.  Then, by case analysis
of whether some null in~$B^R$ has value~$c$ or not, and whether some null
in~$B^S$ has value~$c'$ or not, we obtain that~$\countvals(\lnot q)(D) = A + B
+ C$, where

\[ A = \big( \prod_{i=1}^{n_R} |\dom(\bot^R_i)| - \prod_{i=1}^{n_R} |\dom(\bot^R_i)\setminus \{c\}|\big) \times \prod_{i=1}^{n_S} |\dom{\bot^S_i}|\setminus \{c'\}| \times \prod_{i=1}^{n_{RS}} |\dom{\bot^{RS}_i}\setminus \{c'\}|\]

is the number of valuations of~$D$ that do not satisfy~$q$ and such that some
null in~$B^R$ has value~$c$,

\[B = \prod_{i=1}^{n_R} |\dom(\bot^R_i)|\setminus \{c\}| \times \big( \prod_{i=1}^{n_S} |\dom(\bot^S_i)| - \prod_{i=1}^{n_S} |\dom(\bot^S_i)\setminus \{c'\}| \big)\times \prod_{i=1}^{n_{RS}} |\dom{\bot^{RS}_i}\setminus \{c\})\]

is the number of valuations of~$D$ that do not satisfy~$q$ and such that no null
in~$B^R$ has value~$c$ and some null in~$B^S$ has value~$c'$, and

\[C = \prod_{i=1}^{n_R} |\dom(\bot^R_i)\setminus \{c\}| \times \prod_{i=1}^{n_S} |\dom(\bot^S_i)\setminus \{c'\}| \times \countvals(\lnot q)(D')\]

is the number of valuations of~$D$ that do not satisfy~$q$ and such that no null
in~$B^R$ has value~$c$ and no null in~$B^S$ has value~$c'$.  Hence, we only
have to explain how to compute~$\countvals(\lnot q)(D')$ in polynomial time.
For~$k \in \{0,\ldots,n_{RS}\}$, let~$D'_k$ be the database containing only the
facts of~$D'$ over the nulls~$\bot^{RS}_1,\ldots,\bot^{RS}_k$.  We then define
the quantities~$V(S,k)$, for~$S\in \{\{c\},\{c'\},\{c,c'\},\emptyset\}$ and $k
\in \{0,\ldots,n_{RS}\}$ to be the number of valuations of~$D_k$ that do not
satisfy~$q$ and such that no null has a value that is in~$S$.  Notice then
that~$V(\emptyset,n_{RS}) = \countvals(\lnot q)(D')$.  For an element~$a$ and
set~$A$, we write~$[a \in A]$ to mean~$1$ if~$a \in A$ and~$0$ otherwise.  But
then we can easily compute the quantities~$V(S,k)$ by dynamic programming using
the following relations:

\begin{align*}
V(\emptyset,k) =& [c \in \dom(\bot^{RS}_k)] \times V(\{c\},k-1)\\
                & + [c' \in \dom(\bot^{RS}_k)] \times V(\{c'\},k-1)\\
		& + |\dom(\bot^{RS}_k) \setminus \{c,c'\}| \times V(\emptyset,k-1)
\end{align*}

and

\begin{align*}
V(\{c,c'\},k) = |\dom(\bot^{RS}_k) \setminus \{c,c'\}| \times V(\{c,c'\},k-1)
\end{align*}

and

\begin{align*}
V(\{c\},k) =& [c' \in \dom(\bot^{RS}_k)] \times V(\{c,c'\},k-1)\\
		& + |\dom(\bot^{RS}_k) \setminus \{c,c'\}| \times V(\{c,c'\},k-1)
\end{align*}

and a similar relation for~$V(\{c'\},k)$.
}\qed
\end{example}

The proof for the general case is simply a tedious generalization of the
proof for this example, and is not very interesting, so we omit it.

Therefore, by combining Theorems~\ref{thm:countvals-sjfcqs-codd-constants}
and~\ref{thm:countvals-sjfcqs-constants} with Lemma~\ref{lem:freevars}, we
obtain dichotomies for counting valuations of self-join--free conjunctive
queries that can contain constants and free variable, for the non-uniform
setting for both naive and Codd tables.  It is likely that dichotomies can be
obtained for counting valuations in the uniform setting by using the same methodology, but we do not
pursue this further as our goal in this section is not to be exhaustive but rather to give the main ideas
to be able to handle free variables and constants.

For counting completions, the reader can check that the pattern~$R(c)$ is hard
for~$\ccountcompls$ (so that in this case again all queries are hard), and that
for~$\cucountcompls$, any query that contains an atom that is not unary is
again hard (it is evident from the proof of
Proposition~\ref{prp:Rxx-Rxy-hard-compls-codd}). By combining these observations
with Lemma~\ref{lem:freevars} and with the tractability proof
for~$\ucountcompls$ (which can be shown to extend in this case), we obtain four
dichotomies for counting completions for self-join--free conjunctive queries
that can contain constants and free variables.

Last, we point out that by using the same methodology, one also can extend our
results on approximations to the case of queries containing constants and free
variables (since the relevant reductions are parsimonious).

\section{Related Work}
\label{sec:related}
There are two main lines of work that must be compared to what we do in this article. In both cases the goal is to go beyond the traditional notion of \emph{certain answers} that so far had been used almost exclusively to deal with query answering over uncertain data.
We discuss them here, explain how they relate to our problems and what are the fundamental differences.

\paragraph{{\bf Best answers and 0-1 laws for incomplete databases}}
Libkin has recently introduced a framework that can be used to measure the certainty with which a Boolean query holds on an incomplete database, and also to compare query answers (for a non-Boolean query)~\cite{libkin2018certain}.  
For a Boolean query~$q$, incomplete database~$D$, and integer~$k$, he defines the quantity~$\mu^k(q,D)$ as~$\frac{|\mathrm{Supp}^k(q,D)|}{|V^k(D)|}$, where~$V^k(D)$ denotes the set of valuations of~$D$ with domain~$\{1,\ldots,k\}$, and $\mathrm{Supp}^k(q,D)$ denotes the set of valuations~$\nu \in V^k(D)$ such that $\nu(D) \models q$; hence,~$\mu^k(q,D)$ represents the relative frequency of valuations~$\nu$ in~$\{1,\ldots,k\}$ for which the query is satisfied. He then shows that, for a very large class of queries (namely, \emph{generic queries}), the value~$\mu^k(q,d)$ always tends to~$0$ or~$1$ as~$k$ tends to infinity (and the same results holds when considering completions instead of valuations). This means that, intuitively,  over an infinite domain the query~$q$ is either almost certainly true or almost certainly false. 

He also studies the complexity of finding best answers for a non Boolean query~$q$. As mentioned in the introduction, 
a tuple~$\overline{a}$ is a better answer than another tuple~$\overline{b}$ when for every valuation~$\nu$ of~$D$, if we have~$\overline{b} \in q(\nu(d))$ then we also have~$\overline{a} \in q(\nu(d))$. A best answer is then an answer such that there is no other answer strictly better than it (under inclusion of the sets of satisfying valuations). He studies the complexity of comparing answers under this semantics, and that of computing the set of best answers (see also \cite{GS19}). 

There are several crucial differences between this previous work and ours. First, Libkin does not study the complexity of computing~$\mu^k(q,d)$. We do this under the name~$\ucountvals(q)$; moreover, we also study the setting in which the domains are not uniform.
Second, knowing that a tuple is the best answer might not tell us anything about the size of its ``support'', i.e., the number of valuations that support it.  In particular, a best answer is not necessarily an answer which has the biggest support. 
Finally, under the semantics of better answers it does not matter if we look at the completions or at the valuations (i.e., a tuple is a best answer with respect to inclusion of valuations iff it is the best answer with respect to completions); while we have shown that it does matter for counting problems.

\paragraph{{\bf Counting problems for probabilistic databases and consistent query answering.}}
Remarkably, 
counting problems have received considerable attention in other database 
 scenarios where uncertainty issues appear.
 As mentioned in the introduction, this includes the settings of probabilistic databases and inconsistent databases. 
 In the former case, uncertainty is represented as a probability distribution on the possible states of the data~\cite{suciu2011probabilistic,dalvi2013dichotomy}.
 There, query answering 
 amounts to computing 
 a weighted sum of the probabilities of the possible states of the data that satisfy a query~$q$. We call this problem {\sf Prob}$(q)$.  
 In the case of 
inconsistent databases, 
 we are given a set~$\Sigma$ of constraints and a 
database~$D$ that does not necessarily satisfy~$\Sigma$; cf.~\cite{ABC99,2011Bertossi,Bertossi19}. Then the task is to reason about the set of all {\em repairs} of~$D$ with respect to~$\Sigma$ \cite{ABC99}.
In our context, this means that one wants to count the number of repairs of~$D$ with respect to~$\Sigma$ that satisfy a given query~$q$. 
When~$q$ and~$\Sigma$ are fixed, we call this problem~$\#${\sf Repairs}$(q,\Sigma)$. 
 
 Both {\sf Prob}$(q)$ and~$\#${\sf Repairs}$(q,\Sigma)$
have been intensively studied already.
To start with, counting complexity dichotomies have been obtained for the problem $\#${\sf Repairs}$(q,\Sigma)$; e.g., \cite{maslowski2013dichotomy} gives a dichotomy for this problem when~$q$ is an \sjfbcq\ and~$\sigma$ consists of primary keys, and~\cite{maslowski2014counting} extends this result to CQs with self-joins but only for unary keys constraints. We also mention~\cite{calautti2019counting}, where the problem of counting repairs such
that a particular input tuple is in the result of the query on the repair is studied. 
A seemingly close counting problem for probabilistic databases is the problem~{\sf Prob}$(q)$ over \emph{block independent disjoint} (BIDs) databases. We do not define it formally here, but counting repairs under primary keys can be seen as a special case of this problem,
where the tuples in a “block” all have the same probability, and where the sum of the probabilities sum to $1$ (and in BIDs this sum is allowed to be $<1$, meaning that a block
can be completely erased). Dichotomies for this problem have been obtained in~\cite{dalvi2011queries} for \sjfbcqs.
Counting complexity dichotomies for other models of probabilistic databases also exist; e.g., for {\em tuple-independent} 
probabilistic databases in which each fact is assigned an independent probability of being part of the actual dataset. 
Interestingly, dichotomies in this case hold for arbitrary unions of BCQs, and thus not just for \sjfbcqs~\cite{dalvi2013dichotomy}. 

In some cases, one can use a problem of the form~$\#${\sf Repairs}$(q,\Sigma)$ (or {\sf Prob}$(q)$) to show the hardness of a problem of the form~$\countvals(q')$. 
For instance, in 
Section \ref{subsec:countvals-non-uniform} 
 we used the \shp-hardness of $\#${\sf Repairs}$(R'(\underbar{y},x) \land S'(\underbar{z},x))$ to prove that of $\ccountvals(R(x) \land S(x))$.
In general however, the problems $\#${\sf Repairs}$(q,\Sigma)$ and~{\sf Prob}$(q)$ seem to be unrelated to our problems, for the following reasons.
		First, in our setting the nulls can appear anywhere, so there is no notion of primary keys here; hence it seems unlikely that one can design a generic reduction from the problem of counting valuations/completions to the problem of counting repairs.
In fact, it would perfectly make sense to study our counting problems where we add constraints such as functional dependencies.
		Second, in the BID and counting repairs problems, each “valuation” (repair) gives a different complete
database, while in our case we have seen that this is not necessarily the case.
In particular, problems of the form $\countcompls(q)$ have no analogues in these settings, whereas we have seen that they behave very differently in our setting.

Concerning approximation results, it is known that the problems $\#${\sf Repairs}$(q,\Sigma)$ and {\sf Prob}$(q)$
admit an FPRAS in some important settings. In particular, when~$q$ is a union of BCQs, this 
holds for $\#${\sf Repairs}$(q,\Sigma)$ when $\Sigma$ is a set of primary keys \cite{calautti2019counting}, and 
for {\sf Prob}$(q)$ over BID and tuple-independent probabilistic databases~\cite{dalvi2011queries}.
We observe here that this is reminescent of our Corollary~\ref{cor:countvals-has-fpras}, which shows that problems of the form~$\countvals(q)$ have an FPRAS for every union of BCQs.

\section{Final Remarks}
\label{sec:conclusion}
Our work aims to be a first step in the study of counting problems over
incomplete databases.  The main conclusion behind our results is that the
counting problems studied in this article are particularly hard from a
computational point of view, especially when compared to more positive results
obtained in other uncertainty scenarios; e.g., over probabilistic and
inconsistent databases.  As we have shown, a particularly difficult problem in
our context is that of counting completions, even in the uniform setting where
all nulls have the same domain.  In fact,
Proposition~\ref{prp:Rxx-Rxy-hard-compls-codd} shows that this problem is
\shp-hard even in very restricted scenarios, and
Proposition~\ref{prp:ucountcompls-Rxy-no-frpas} that it cannot be approximated
by an FPRAS. It seems then that the only way in which one could try to tackle
this problem is by developing suitable tractable heuristics, without provable
quantitative guarantees, but that work sufficiently well in practical
scenarios. An example of this could be developing algorithms that compute
``under-approximations'' for the number of completions of a naive table
satisfying a certain \sjfbcq\ $q$. Notice that a related approach has been
proposed by Console et al. for constructing under-approximations of the set of
certain answers by applying methods based on many-valued logics \cite{CGL16}. 

We plan to continue working on several interesting problems that are left open
in this article. First of all, we would like to pinpoint the  complexity of
$\countcompls(q)$ when $q$ is an \sjfbcq; in particular, whether this problem
is~\spanp-complete for at least one such a query. We also want to study whether
the non-existence of FPRAS for $\ucountcompls(q)$ established in
Proposition~\ref{prp:ucountcompls-Rxy-no-frpas} continues to hold over Codd
tables.  We would also like to develop a more thorough understanding of the
role of fixed domains in our dichotomies.  In several cases, that we have
explicitly stated, our lower bounds hold even if nulls in tables are
interpreted over a fixed domain.  Still, in some cases we do not know whether
this holds. These include, e.g., Proposition~\ref{prp:RxSxyTy-hard-codd},
Proposition~\ref{prp:countcompls}, and
Proposition~\ref{prp:Rxx-Rxy-hard-compls-codd}.  Finally, it would also be
interesting to study these counting problems under bag semantics (instead of
the set semantics used in this article), or
consider arbitrary conjunctive queries as opposed to only self-join--free ones.

\begin{acks}
  We thank Antoine Amarilli for suggesting to use \#BIS in the proof of Proposition~\ref{prp:RxSxyTy-hard-codd}, as well as the anonymous reviewers for their careful proofreading. This work was partially funded by ANID - Millennium Science Initiative Program - Code ICN17\_002. Arenas is funded by Fondecyt grant 1191337 and 
Barcel\'o by Fondecyt grant 1200967. 
\end{acks}

\bibliographystyle{ACM-Reference-Format}
\bibliography{main}


\begin{thebibliography}{56}


\ifx \showCODEN    \undefined \def \showCODEN     #1{\unskip}     \fi
\ifx \showDOI      \undefined \def \showDOI       #1{#1}\fi
\ifx \showISBNx    \undefined \def \showISBNx     #1{\unskip}     \fi
\ifx \showISBNxiii \undefined \def \showISBNxiii  #1{\unskip}     \fi
\ifx \showISSN     \undefined \def \showISSN      #1{\unskip}     \fi
\ifx \showLCCN     \undefined \def \showLCCN      #1{\unskip}     \fi
\ifx \shownote     \undefined \def \shownote      #1{#1}          \fi
\ifx \showarticletitle \undefined \def \showarticletitle #1{#1}   \fi
\ifx \showURL      \undefined \def \showURL       {\relax}        \fi
\providecommand\bibfield[2]{#2}
\providecommand\bibinfo[2]{#2}
\providecommand\natexlab[1]{#1}
\providecommand\showeprint[2][]{arXiv:#2}

\bibitem[\protect\citeauthoryear{Abiteboul, Hull, and Vianu}{Abiteboul
  et~al\mbox{.}}{1995}]%
        {abiteboul1995foundations}
\bibfield{author}{\bibinfo{person}{Serge Abiteboul}, \bibinfo{person}{Richard
  Hull}, {and} \bibinfo{person}{Victor Vianu}.}
  \bibinfo{year}{1995}\natexlab{}.
\newblock \bibinfo{booktitle}{\emph{Foundations of databases}}.
  Vol.~\bibinfo{volume}{8}.
\newblock \bibinfo{publisher}{Addison-Wesley Reading}.
\newblock
\urldef\tempurl%
\url{http://webdam.inria.fr/Alice/}
\showURL{%
\tempurl}


\bibitem[\protect\citeauthoryear{Abiteboul, Kanellakis, and Grahne}{Abiteboul
  et~al\mbox{.}}{1991}]%
        {AbiteboulKG91}
\bibfield{author}{\bibinfo{person}{Serge Abiteboul}, \bibinfo{person}{Paris
  Kanellakis}, {and} \bibinfo{person}{G{\"o}sta Grahne}.}
  \bibinfo{year}{1991}\natexlab{}.
\newblock \showarticletitle{On the representation and querying of sets of
  possible worlds}.
\newblock \bibinfo{journal}{\emph{Theoretical computer science}}
  \bibinfo{volume}{78}, \bibinfo{number}{1} (\bibinfo{year}{1991}),
  \bibinfo{pages}{159--187}.
\newblock
\urldef\tempurl%
\url{https://www.sciencedirect.com/science/article/pii/0304397551900072}
\showURL{%
\tempurl}


\bibitem[\protect\citeauthoryear{Adam}{Adam}{2013}]%
        {stackexchange_surjective}
\bibfield{author}{\bibinfo{person}{Adam}.} \bibinfo{year}{2013}\natexlab{}.
\newblock \bibinfo{title}{Number of surjective functions}.
\newblock
\newblock
\urldef\tempurl%
\url{https://math.stackexchange.com/a/615136/378365}
\showURL{%
\tempurl}


\bibitem[\protect\citeauthoryear{Agrawal, Sarma, Ullman, and Widom}{Agrawal
  et~al\mbox{.}}{2010}]%
        {ASUW10}
\bibfield{author}{\bibinfo{person}{Parag Agrawal}, \bibinfo{person}{Anish~Das
  Sarma}, \bibinfo{person}{Jeffrey Ullman}, {and} \bibinfo{person}{Jennifer
  Widom}.} \bibinfo{year}{2010}\natexlab{}.
\newblock \showarticletitle{Foundations of uncertain-data integration}.
\newblock \bibinfo{journal}{\emph{Proceedings of the VLDB Endowment}}
  \bibinfo{volume}{3}, \bibinfo{number}{1-2} (\bibinfo{year}{2010}),
  \bibinfo{pages}{1080--1090}.
\newblock
\urldef\tempurl%
\url{http://ilpubs.stanford.edu:8090/846/2/main.pdf}
\showURL{%
\tempurl}


\bibitem[\protect\citeauthoryear{{\'A}lvarez and Jenner}{{\'A}lvarez and
  Jenner}{1993}]%
        {alvarez1993very}
\bibfield{author}{\bibinfo{person}{Carme {\'A}lvarez} {and}
  \bibinfo{person}{Birgit Jenner}.} \bibinfo{year}{1993}\natexlab{}.
\newblock \showarticletitle{A very hard log-space counting class}.
\newblock \bibinfo{journal}{\emph{Theoretical Computer Science}}
  \bibinfo{volume}{107}, \bibinfo{number}{1} (\bibinfo{year}{1993}),
  \bibinfo{pages}{3--30}.
\newblock
\urldef\tempurl%
\url{https://www.sciencedirect.com/science/article/pii/030439759390252O}
\showURL{%
\tempurl}


\bibitem[\protect\citeauthoryear{Andritsos, Fuxman, and Miller}{Andritsos
  et~al\mbox{.}}{2006}]%
        {andritsos2006clean}
\bibfield{author}{\bibinfo{person}{Periklis Andritsos}, \bibinfo{person}{Ariel
  Fuxman}, {and} \bibinfo{person}{Renee~J Miller}.}
  \bibinfo{year}{2006}\natexlab{}.
\newblock \showarticletitle{Clean answers over dirty databases: A probabilistic
  approach}. In \bibinfo{booktitle}{\emph{22nd International Conference on Data
  Engineering (ICDE'06)}}. IEEE, \bibinfo{pages}{30--30}.
\newblock
\urldef\tempurl%
\url{ftp://ftp.cs.toronto.edu/csrg-technical-reports/513/tr513.pdf}
\showURL{%
\tempurl}


\bibitem[\protect\citeauthoryear{Antova, Koch, and Olteanu}{Antova
  et~al\mbox{.}}{2009}]%
        {AntovaKO09}
\bibfield{author}{\bibinfo{person}{Lyublena Antova}, \bibinfo{person}{Christoph
  Koch}, {and} \bibinfo{person}{Dan Olteanu}.} \bibinfo{year}{2009}\natexlab{}.
\newblock \showarticletitle{10\({}^{\mbox{(10\({}^{\mbox{6}}\))}}\) worlds and
  beyond: efficient representation and processing of incomplete information}.
\newblock \bibinfo{journal}{\emph{The VLDB Journal}} \bibinfo{volume}{18},
  \bibinfo{number}{5} (\bibinfo{year}{2009}), \bibinfo{pages}{1021--1040}.
\newblock
\urldef\tempurl%
\url{https://arxiv.org/abs/cs/0606075}
\showURL{%
\tempurl}


\bibitem[\protect\citeauthoryear{Arenas, Barcel{\'o}, and Monet}{Arenas
  et~al\mbox{.}}{2020}]%
        {arenas2020counting}
\bibfield{author}{\bibinfo{person}{Marcelo Arenas}, \bibinfo{person}{Pablo
  Barcel{\'o}}, {and} \bibinfo{person}{Mika{\"e}l Monet}.}
  \bibinfo{year}{2020}\natexlab{}.
\newblock \showarticletitle{Counting Problems over Incomplete Databases}. In
  \bibinfo{booktitle}{\emph{Proceedings of the 39th ACM SIGMOD-SIGACT-SIGAI
  Symposium on Principles of Database Systems}}. \bibinfo{pages}{165--177}.
\newblock
\urldef\tempurl%
\url{https://arxiv.org/abs/1912.11064}
\showURL{%
\tempurl}


\bibitem[\protect\citeauthoryear{Arenas, Bertossi, and Chomicki}{Arenas
  et~al\mbox{.}}{1999}]%
        {ABC99}
\bibfield{author}{\bibinfo{person}{Marcelo Arenas}, \bibinfo{person}{Leopoldo
  Bertossi}, {and} \bibinfo{person}{Jan Chomicki}.}
  \bibinfo{year}{1999}\natexlab{}.
\newblock \showarticletitle{Consistent query answers in inconsistent
  databases}. In \bibinfo{booktitle}{\emph{Proceedings of the eighteenth ACM
  SIGMOD-SIGACT-SIGART symposium on Principles of database systems}}.
  \bibinfo{pages}{68--79}.
\newblock
\urldef\tempurl%
\url{http://marenas.sitios.ing.uc.cl/publications/pods99.pdf}
\showURL{%
\tempurl}


\bibitem[\protect\citeauthoryear{Arenas, Croquevielle, Jayaram, and
  Riveros}{Arenas et~al\mbox{.}}{2019}]%
        {arenas2019efficient}
\bibfield{author}{\bibinfo{person}{Marcelo Arenas},
  \bibinfo{person}{Luis~Alberto Croquevielle}, \bibinfo{person}{Rajesh
  Jayaram}, {and} \bibinfo{person}{Cristian Riveros}.}
  \bibinfo{year}{2019}\natexlab{}.
\newblock \showarticletitle{Efficient logspace classes for enumeration,
  counting, and uniform generation}. In \bibinfo{booktitle}{\emph{Proceedings
  of the 38th ACM SIGMOD-SIGACT-SIGAI Symposium on Principles of Database
  Systems}}. \bibinfo{pages}{59--73}.
\newblock
\urldef\tempurl%
\url{https://arxiv.org/abs/1906.09226}
\showURL{%
\tempurl}


\bibitem[\protect\citeauthoryear{Benjelloun, Sarma, Halevy, Theobald, and
  Widom}{Benjelloun et~al\mbox{.}}{2008}]%
        {benjelloun2008databases}
\bibfield{author}{\bibinfo{person}{Omar Benjelloun}, \bibinfo{person}{Anish~Das
  Sarma}, \bibinfo{person}{Alon Halevy}, \bibinfo{person}{Martin Theobald},
  {and} \bibinfo{person}{Jennifer Widom}.} \bibinfo{year}{2008}\natexlab{}.
\newblock \showarticletitle{Databases with uncertainty and lineage}.
\newblock \bibinfo{journal}{\emph{The VLDB Journal}} \bibinfo{volume}{17},
  \bibinfo{number}{2} (\bibinfo{year}{2008}), \bibinfo{pages}{243--264}.
\newblock
\urldef\tempurl%
\url{http://ilpubs.stanford.edu:8090/811/1/2007-26.pdf}
\showURL{%
\tempurl}


\bibitem[\protect\citeauthoryear{Bertossi}{Bertossi}{2011}]%
        {2011Bertossi}
\bibfield{author}{\bibinfo{person}{Leopoldo Bertossi}.}
  \bibinfo{year}{2011}\natexlab{}.
\newblock \showarticletitle{Database repairing and consistent query answering}.
\newblock \bibinfo{journal}{\emph{Synthesis Lectures on Data Management}}
  \bibinfo{volume}{3}, \bibinfo{number}{5} (\bibinfo{year}{2011}),
  \bibinfo{pages}{1--121}.
\newblock
\urldef\tempurl%
\url{https://www.cs.ubc.ca/~laks/cpsc504/dc-leo.pdf}
\showURL{%
\tempurl}


\bibitem[\protect\citeauthoryear{Bertossi}{Bertossi}{2019}]%
        {Bertossi19}
\bibfield{author}{\bibinfo{person}{Leopoldo Bertossi}.}
  \bibinfo{year}{2019}\natexlab{}.
\newblock \showarticletitle{Database repairs and consistent query answering:
  Origins and further developments}. In \bibinfo{booktitle}{\emph{Proceedings
  of the 38th ACM SIGMOD-SIGACT-SIGAI Symposium on Principles of Database
  Systems}}. \bibinfo{pages}{48--58}.
\newblock
\urldef\tempurl%
\url{https://dl.acm.org/citation.cfm?id=3322190}
\showURL{%
\tempurl}


\bibitem[\protect\citeauthoryear{Cai, Lu, and Xia}{Cai et~al\mbox{.}}{2012}]%
        {cai2012holographic}
\bibfield{author}{\bibinfo{person}{Jin-Yi Cai}, \bibinfo{person}{Pinyan Lu},
  {and} \bibinfo{person}{Mingji Xia}.} \bibinfo{year}{2012}\natexlab{}.
\newblock \showarticletitle{Holographic reduction, interpolation and hardness}.
\newblock \bibinfo{journal}{\emph{Computational Complexity}}
  \bibinfo{volume}{21}, \bibinfo{number}{4} (\bibinfo{year}{2012}),
  \bibinfo{pages}{573--604}.
\newblock
\urldef\tempurl%
\url{http://citeseerx.ist.psu.edu/viewdoc/download?doi=10.1.1.145.8882&rep=rep1&type=pdf}
\showURL{%
\tempurl}


\bibitem[\protect\citeauthoryear{Calautti, Console, and Pieris}{Calautti
  et~al\mbox{.}}{2019}]%
        {calautti2019counting}
\bibfield{author}{\bibinfo{person}{Marco Calautti}, \bibinfo{person}{Marco
  Console}, {and} \bibinfo{person}{Andreas Pieris}.}
  \bibinfo{year}{2019}\natexlab{}.
\newblock \showarticletitle{Counting database repairs under primary keys
  revisited}. In \bibinfo{booktitle}{\emph{Proceedings of the 38th ACM
  SIGMOD-SIGACT-SIGAI Symposium on Principles of Database Systems}}.
  \bibinfo{pages}{104--118}.
\newblock
\urldef\tempurl%
\url{https://core.ac.uk/download/pdf/224804156.pdf}
\showURL{%
\tempurl}


\bibitem[\protect\citeauthoryear{Codd}{Codd}{1975}]%
        {codd1975understanding}
\bibfield{author}{\bibinfo{person}{T Codd}.} \bibinfo{year}{1975}\natexlab{}.
\newblock \showarticletitle{Understanding relations (installment\# 7)}.
\newblock \bibinfo{journal}{\emph{FDT Bull. of ACM Sigmod}}
  \bibinfo{volume}{7} (\bibinfo{year}{1975}), \bibinfo{pages}{23--28}.
\newblock


\bibitem[\protect\citeauthoryear{Console, Guagliardo, and Libkin}{Console
  et~al\mbox{.}}{2016}]%
        {CGL16}
\bibfield{author}{\bibinfo{person}{Marco Console}, \bibinfo{person}{Paolo
  Guagliardo}, {and} \bibinfo{person}{Leonid Libkin}.}
  \bibinfo{year}{2016}\natexlab{}.
\newblock \showarticletitle{Approximations and refinements of certain answers
  via many-valued logics}. In \bibinfo{booktitle}{\emph{KR}}.
  \bibinfo{pages}{349--358}.
\newblock
\urldef\tempurl%
\url{https://homepages.inf.ed.ac.uk/libkin/papers/kr16.pdf}
\showURL{%
\tempurl}


\bibitem[\protect\citeauthoryear{Dalvi, Re, and Suciu}{Dalvi
  et~al\mbox{.}}{2011}]%
        {dalvi2011queries}
\bibfield{author}{\bibinfo{person}{Nilesh Dalvi}, \bibinfo{person}{Christopher
  Re}, {and} \bibinfo{person}{Dan Suciu}.} \bibinfo{year}{2011}\natexlab{}.
\newblock \showarticletitle{Queries and materialized views on probabilistic
  databases}.
\newblock \bibinfo{journal}{\emph{J. Comput. System Sci.}}
  \bibinfo{volume}{77}, \bibinfo{number}{3} (\bibinfo{year}{2011}),
  \bibinfo{pages}{473--490}.
\newblock
\urldef\tempurl%
\url{https://www-cs.stanford.edu/~chrismre/papers/jcss-probdb.pdf}
\showURL{%
\tempurl}


\bibitem[\protect\citeauthoryear{Dalvi and Suciu}{Dalvi and Suciu}{2013}]%
        {dalvi2013dichotomy}
\bibfield{author}{\bibinfo{person}{Nilesh Dalvi} {and} \bibinfo{person}{Dan
  Suciu}.} \bibinfo{year}{2013}\natexlab{}.
\newblock \showarticletitle{The dichotomy of probabilistic inference for unions
  of conjunctive queries}.
\newblock \bibinfo{journal}{\emph{Journal of the ACM (JACM)}}
  \bibinfo{volume}{59}, \bibinfo{number}{6} (\bibinfo{year}{2013}),
  \bibinfo{pages}{1--87}.
\newblock
\urldef\tempurl%
\url{https://homes.cs.washington.edu/~suciu/jacm-dichotomy.pdf}
\showURL{%
\tempurl}


\bibitem[\protect\citeauthoryear{Dyer, Frieze, and Jerrum}{Dyer
  et~al\mbox{.}}{2002}]%
        {dyer2002counting}
\bibfield{author}{\bibinfo{person}{Martin Dyer}, \bibinfo{person}{Alan Frieze},
  {and} \bibinfo{person}{Mark Jerrum}.} \bibinfo{year}{2002}\natexlab{}.
\newblock \showarticletitle{On counting independent sets in sparse graphs}.
\newblock \bibinfo{journal}{\emph{SIAM J. on Computing}} \bibinfo{volume}{31},
  \bibinfo{number}{5} (\bibinfo{year}{2002}), \bibinfo{pages}{1527--1541}.
\newblock
\urldef\tempurl%
\url{http://yaroslavvb.com/papers/dyer-on.pdf}
\showURL{%
\tempurl}


\bibitem[\protect\citeauthoryear{Edmonds}{Edmonds}{1965}]%
        {edmonds1965paths}
\bibfield{author}{\bibinfo{person}{Jack Edmonds}.}
  \bibinfo{year}{1965}\natexlab{}.
\newblock \showarticletitle{Paths, trees, and flowers}.
\newblock \bibinfo{journal}{\emph{Can. J. of Math.}}  \bibinfo{volume}{17}
  (\bibinfo{year}{1965}), \bibinfo{pages}{449--467}.
\newblock
\urldef\tempurl%
\url{https://math.nist.gov/~JBernal/p_t_f.pdf}
\showURL{%
\tempurl}


\bibitem[\protect\citeauthoryear{Fan, Li, Song, and Yang}{Fan
  et~al\mbox{.}}{2015}]%
        {fan2015decomposing}
\bibfield{author}{\bibinfo{person}{Genghua Fan}, \bibinfo{person}{Yan Li},
  \bibinfo{person}{Ning Song}, {and} \bibinfo{person}{Daqing Yang}.}
  \bibinfo{year}{2015}\natexlab{}.
\newblock \showarticletitle{Decomposing a graph into pseudoforests with one
  having bounded degree}.
\newblock \bibinfo{journal}{\emph{Journal of Combinatorial Theory, Series B}}
  \bibinfo{volume}{115} (\bibinfo{year}{2015}), \bibinfo{pages}{72--95}.
\newblock
\urldef\tempurl%
\url{https://www.sciencedirect.com/science/article/pii/S0095895615000581}
\showURL{%
\tempurl}


\bibitem[\protect\citeauthoryear{Fan, Geerts, Jia, and Kementsietsidis}{Fan
  et~al\mbox{.}}{2008}]%
        {FGJK08}
\bibfield{author}{\bibinfo{person}{Wenfei Fan}, \bibinfo{person}{Floris
  Geerts}, \bibinfo{person}{Xibei Jia}, {and} \bibinfo{person}{Anastasios
  Kementsietsidis}.} \bibinfo{year}{2008}\natexlab{}.
\newblock \showarticletitle{Conditional functional dependencies for capturing
  data inconsistencies}.
\newblock \bibinfo{journal}{\emph{ACM Transactions on Database Systems (TODS)}}
  \bibinfo{volume}{33}, \bibinfo{number}{2} (\bibinfo{year}{2008}),
  \bibinfo{pages}{1--48}.
\newblock
\urldef\tempurl%
\url{https://www.inf.ed.ac.uk/publications/online/0949.pdf}
\showURL{%
\tempurl}


\bibitem[\protect\citeauthoryear{Fenner, Fortnow, and Kurtz}{Fenner
  et~al\mbox{.}}{1994}]%
        {FFK94}
\bibfield{author}{\bibinfo{person}{Stephen~A Fenner}, \bibinfo{person}{Lance~J
  Fortnow}, {and} \bibinfo{person}{Stuart~A Kurtz}.}
  \bibinfo{year}{1994}\natexlab{}.
\newblock \showarticletitle{Gap-definable counting classes}.
\newblock \bibinfo{journal}{\emph{J. Comput. System Sci.}}
  \bibinfo{volume}{48}, \bibinfo{number}{1} (\bibinfo{year}{1994}),
  \bibinfo{pages}{116--148}.
\newblock
\urldef\tempurl%
\url{https://www.sciencedirect.com/science/article/pii/S0022000005800248}
\showURL{%
\tempurl}


\bibitem[\protect\citeauthoryear{Gheerbrant and Sirangelo}{Gheerbrant and
  Sirangelo}{2019}]%
        {GS19}
\bibfield{author}{\bibinfo{person}{Am{\'e}lie Gheerbrant} {and}
  \bibinfo{person}{Cristina Sirangelo}.} \bibinfo{year}{2019}\natexlab{}.
\newblock \showarticletitle{Best answers over incomplete data: Complexity and
  first-order rewritings}. In \bibinfo{booktitle}{\emph{the Twenty-Eighth
  International Joint Conference on Artificial Intelligence (IJCAI 2019)}}.
\newblock
\urldef\tempurl%
\url{https://www.ijcai.org/proceedings/2019/236}
\showURL{%
\tempurl}


\bibitem[\protect\citeauthoryear{Gim{\'e}nez and Noy}{Gim{\'e}nez and
  Noy}{2006}]%
        {gimenez2006complexity}
\bibfield{author}{\bibinfo{person}{Omer Gim{\'e}nez} {and}
  \bibinfo{person}{Marc Noy}.} \bibinfo{year}{2006}\natexlab{}.
\newblock \showarticletitle{On the complexity of computing the Tutte polynomial
  of bicircular matroids}.
\newblock \bibinfo{journal}{\emph{Combinatorics, Probability and Computing}}
  \bibinfo{volume}{15}, \bibinfo{number}{3} (\bibinfo{year}{2006}),
  \bibinfo{pages}{385--395}.
\newblock
\urldef\tempurl%
\url{https://www.cambridge.org/core/journals/combinatorics-probability-and-computing/article/on-the-complexity-of-computing-the-tutte-polynomial-of-bicircular-matroids/8B45505EDDDF91337E62D45B143EC5E5}
\showURL{%
\tempurl}


\bibitem[\protect\citeauthoryear{Grout and Moore}{Grout and Moore}{2019}]%
        {grout2019decomposing}
\bibfield{author}{\bibinfo{person}{Logan Grout} {and} \bibinfo{person}{Benjamin
  Moore}.} \bibinfo{year}{2019}\natexlab{}.
\newblock \showarticletitle{On decomposing graphs into forests and
  pseudoforests}.
\newblock \bibinfo{journal}{\emph{arXiv preprint arXiv:1904.12435}}
  (\bibinfo{year}{2019}).
\newblock
\urldef\tempurl%
\url{https://arxiv.org/abs/1904.12435}
\showURL{%
\tempurl}


\bibitem[\protect\citeauthoryear{Gupta}{Gupta}{1991}]%
        {gupta1991power}
\bibfield{author}{\bibinfo{person}{Sanjay Gupta}.}
  \bibinfo{year}{1991}\natexlab{}.
\newblock \showarticletitle{The power of witness reduction}. In
  \bibinfo{booktitle}{\emph{1991 Proceedings of the Sixth Annual Structure in
  Complexity Theory Conference}}. IEEE, \bibinfo{pages}{43--59}.
\newblock
\urldef\tempurl%
\url{https://ieeexplore.ieee.org/abstract/document/160242}
\showURL{%
\tempurl}


\bibitem[\protect\citeauthoryear{Gupta}{Gupta}{1995}]%
        {G95}
\bibfield{author}{\bibinfo{person}{Sanjay Gupta}.}
  \bibinfo{year}{1995}\natexlab{}.
\newblock \showarticletitle{Closure properties and witness reduction}.
\newblock \bibinfo{journal}{\emph{J. Comput. System Sci.}}
  \bibinfo{volume}{50}, \bibinfo{number}{3} (\bibinfo{year}{1995}),
  \bibinfo{pages}{412--432}.
\newblock
\urldef\tempurl%
\url{https://www.sciencedirect.com/science/article/pii/S002200008571032X}
\showURL{%
\tempurl}


\bibitem[\protect\citeauthoryear{Imielinski and Jr.}{Imielinski and
  Jr.}{1984}]%
        {imielinski1984incomplete}
\bibfield{author}{\bibinfo{person}{Tomasz Imielinski} {and}
  \bibinfo{person}{Witold~Lipski Jr.}} \bibinfo{year}{1984}\natexlab{}.
\newblock \showarticletitle{Incomplete Information in Relational Databases}.
\newblock \bibinfo{journal}{\emph{Journal of the ACM (JACM)}}
  \bibinfo{volume}{31}, \bibinfo{number}{4} (\bibinfo{year}{1984}),
  \bibinfo{pages}{761--791}.
\newblock
\urldef\tempurl%
\url{https://cs.uwaterloo.ca/~david/cs848s14/il84.pdf}
\showURL{%
\tempurl}


\bibitem[\protect\citeauthoryear{Immerman}{Immerman}{2012}]%
        {immerman2012descriptive}
\bibfield{author}{\bibinfo{person}{Neil Immerman}.}
  \bibinfo{year}{2012}\natexlab{}.
\newblock \bibinfo{booktitle}{\emph{Descriptive complexity}}.
\newblock \bibinfo{publisher}{Springer Science \& Business Media}.
\newblock
\urldef\tempurl%
\url{https://people.cs.umass.edu/~immerman/book/ch0_1_2.pdf}
\showURL{%
\tempurl}


\bibitem[\protect\citeauthoryear{Jaeger, Vertigan, and Welsh}{Jaeger
  et~al\mbox{.}}{1990}]%
        {jaeger1990computational}
\bibfield{author}{\bibinfo{person}{Fran{\c{c}}ois Jaeger},
  \bibinfo{person}{Dirk~L Vertigan}, {and} \bibinfo{person}{Dominic~JA Welsh}.}
  \bibinfo{year}{1990}\natexlab{}.
\newblock \showarticletitle{On the computational complexity of the Jones and
  Tutte polynomials}. In \bibinfo{booktitle}{\emph{Mathematical Proc. of the
  Cambridge Phil. Soc.}}, Vol.~\bibinfo{volume}{108}. Cambridge University
  Press, \bibinfo{pages}{35--53}.
\newblock
\urldef\tempurl%
\url{https://www.cambridge.org/core/journals/mathematical-proceedings-of-the-cambridge-philosophical-society/article/on-the-computational-complexity-of-the-jones-and-tutte-polynomials/0EA052341269A2816C36B15380B8AA02}
\showURL{%
\tempurl}


\bibitem[\protect\citeauthoryear{Jerrum, Valiant, and Vazirani}{Jerrum
  et~al\mbox{.}}{1986}]%
        {jerrum1986random}
\bibfield{author}{\bibinfo{person}{Mark~R Jerrum}, \bibinfo{person}{Leslie~G
  Valiant}, {and} \bibinfo{person}{Vijay~V Vazirani}.}
  \bibinfo{year}{1986}\natexlab{}.
\newblock \showarticletitle{Random generation of combinatorial structures from
  a uniform distribution}.
\newblock \bibinfo{journal}{\emph{Theoretical computer science}}
  \bibinfo{volume}{43} (\bibinfo{year}{1986}), \bibinfo{pages}{169--188}.
\newblock
\urldef\tempurl%
\url{http://www2.stat.duke.edu/~scs/Courses/Stat376/Papers/ConvergeRates/RandomizedAlgs/JerrumValiantVaziraniTCS1986.pdf}
\showURL{%
\tempurl}


\bibitem[\protect\citeauthoryear{Ko}{Ko}{1982}]%
        {K82}
\bibfield{author}{\bibinfo{person}{Ker-I Ko}.} \bibinfo{year}{1982}\natexlab{}.
\newblock \showarticletitle{Some observations on the probabilistic algorithms
  and NP-hard problems}.
\newblock \bibinfo{journal}{\emph{Inform. Process. Lett.}}
  \bibinfo{volume}{14}, \bibinfo{number}{1} (\bibinfo{year}{1982}),
  \bibinfo{pages}{39--43}.
\newblock
\urldef\tempurl%
\url{https://www.sciencedirect.com/science/article/pii/0020019082901399}
\showURL{%
\tempurl}


\bibitem[\protect\citeauthoryear{K{\"o}bler, Sch{\"o}ning, and
  Tor{\'a}n}{K{\"o}bler et~al\mbox{.}}{1989}]%
        {kobler1989counting}
\bibfield{author}{\bibinfo{person}{Johannes K{\"o}bler}, \bibinfo{person}{Uwe
  Sch{\"o}ning}, {and} \bibinfo{person}{Jacobo Tor{\'a}n}.}
  \bibinfo{year}{1989}\natexlab{}.
\newblock \showarticletitle{On counting and approximation}.
\newblock \bibinfo{journal}{\emph{Acta Informatica}} \bibinfo{volume}{26},
  \bibinfo{number}{4} (\bibinfo{year}{1989}), \bibinfo{pages}{363--379}.
\newblock
\urldef\tempurl%
\url{https://www.researchgate.net/publication/226508658_On_counting_and_approximation}
\showURL{%
\tempurl}


\bibitem[\protect\citeauthoryear{Kowalik}{Kowalik}{2006}]%
        {kowalik2006approximation}
\bibfield{author}{\bibinfo{person}{{\L}ukasz Kowalik}.}
  \bibinfo{year}{2006}\natexlab{}.
\newblock \showarticletitle{Approximation scheme for lowest outdegree
  orientation and graph density measures}. In
  \bibinfo{booktitle}{\emph{International Symposium on Algorithms and
  Computation}}. Springer, \bibinfo{pages}{557--566}.
\newblock
\urldef\tempurl%
\url{https://www.mimuw.edu.pl/~kowalik/papers/orient.pdf}
\showURL{%
\tempurl}


\bibitem[\protect\citeauthoryear{Libkin}{Libkin}{2014}]%
        {libkin2014incomplete}
\bibfield{author}{\bibinfo{person}{Leonid Libkin}.}
  \bibinfo{year}{2014}\natexlab{}.
\newblock \showarticletitle{Incomplete data: what went wrong, and how to fix
  it}. In \bibinfo{booktitle}{\emph{Proceedings of the 33rd ACM
  SIGMOD-SIGACT-SIGART symposium on Principles of database systems}}.
  \bibinfo{pages}{1--13}.
\newblock
\urldef\tempurl%
\url{https://homepages.inf.ed.ac.uk/libkin/papers/pods14.pdf}
\showURL{%
\tempurl}


\bibitem[\protect\citeauthoryear{Libkin}{Libkin}{2018}]%
        {libkin2018certain}
\bibfield{author}{\bibinfo{person}{Leonid Libkin}.}
  \bibinfo{year}{2018}\natexlab{}.
\newblock \showarticletitle{Certain answers meet zero-one laws}. In
  \bibinfo{booktitle}{\emph{Proceedings of the 37th ACM SIGMOD-SIGACT-SIGAI
  Symposium on Principles of Database Systems}}. \bibinfo{pages}{195--207}.
\newblock
\urldef\tempurl%
\url{https://homepages.inf.ed.ac.uk/libkin/papers/pods18.pdf}
\showURL{%
\tempurl}


\bibitem[\protect\citeauthoryear{Mahajan, Thierauf, and Vinodchandran}{Mahajan
  et~al\mbox{.}}{1994}]%
        {MTV94}
\bibfield{author}{\bibinfo{person}{Meena Mahajan}, \bibinfo{person}{Thomas
  Thierauf}, {and} \bibinfo{person}{N.~V. Vinodchandran}.}
  \bibinfo{year}{1994}\natexlab{}.
\newblock \showarticletitle{A note on {SpanP} functions}.
\newblock \bibinfo{journal}{\emph{Inf. Process. Lett.}} \bibinfo{volume}{51},
  \bibinfo{number}{1} (\bibinfo{year}{1994}), \bibinfo{pages}{7--10}.
\newblock
\urldef\tempurl%
\url{http://citeseerx.ist.psu.edu/viewdoc/download?doi=10.1.1.24.9933&rep=rep1&type=pdf}
\showURL{%
\tempurl}


\bibitem[\protect\citeauthoryear{Maslowski and Wijsen}{Maslowski and
  Wijsen}{2013}]%
        {maslowski2013dichotomy}
\bibfield{author}{\bibinfo{person}{Dany Maslowski} {and} \bibinfo{person}{Jef
  Wijsen}.} \bibinfo{year}{2013}\natexlab{}.
\newblock \showarticletitle{A dichotomy in the complexity of counting database
  repairs}.
\newblock \bibinfo{journal}{\emph{J. Comput. System Sci.}}
  \bibinfo{volume}{79}, \bibinfo{number}{6} (\bibinfo{year}{2013}),
  \bibinfo{pages}{958--983}.
\newblock
\urldef\tempurl%
\url{https://www.sciencedirect.com/science/article/pii/S0022000013000214}
\showURL{%
\tempurl}


\bibitem[\protect\citeauthoryear{Maslowski and Wijsen}{Maslowski and
  Wijsen}{2014}]%
        {maslowski2014counting}
\bibfield{author}{\bibinfo{person}{Dany Maslowski} {and} \bibinfo{person}{Jef
  Wijsen}.} \bibinfo{year}{2014}\natexlab{}.
\newblock \showarticletitle{Counting database repairs that satisfy conjunctive
  queries with self-joins}. In \bibinfo{booktitle}{\emph{ICDT}}.
  \bibinfo{pages}{155--164}.
\newblock
\urldef\tempurl%
\url{http://www.openproceedings.org/ICDT/2014/paper_17.pdf}
\showURL{%
\tempurl}


\bibitem[\protect\citeauthoryear{Ogiwara and Hemachandra}{Ogiwara and
  Hemachandra}{1993}]%
        {OH93}
\bibfield{author}{\bibinfo{person}{Mitsunori Ogiwara} {and}
  \bibinfo{person}{Lane~A. Hemachandra}.} \bibinfo{year}{1993}\natexlab{}.
\newblock \showarticletitle{A complexity theory for feasible closure
  properties}.
\newblock \bibinfo{journal}{\emph{J. Comput. Syst. Sci.}} \bibinfo{volume}{46},
  \bibinfo{number}{3} (\bibinfo{year}{1993}), \bibinfo{pages}{295--325}.
\newblock
\urldef\tempurl%
\url{https://www.sciencedirect.com/science/article/pii/002200009390006I}
\showURL{%
\tempurl}


\bibitem[\protect\citeauthoryear{Oxley}{Oxley}{2003}]%
        {oxley2003matroid}
\bibfield{author}{\bibinfo{person}{James Oxley}.}
  \bibinfo{year}{2003}\natexlab{}.
\newblock \showarticletitle{What is a matroid}.
\newblock \bibinfo{journal}{\emph{Cubo Matem{\'a}tica Educacional}}
  \bibinfo{volume}{5}, \bibinfo{number}{3} (\bibinfo{year}{2003}),
  \bibinfo{pages}{179--218}.
\newblock
\urldef\tempurl%
\url{https://www.math.lsu.edu/~oxley/survey4.pdf}
\showURL{%
\tempurl}


\bibitem[\protect\citeauthoryear{Provan and Ball}{Provan and Ball}{1983}]%
        {provan1983complexity}
\bibfield{author}{\bibinfo{person}{J~Scott Provan} {and}
  \bibinfo{person}{Michael~O Ball}.} \bibinfo{year}{1983}\natexlab{}.
\newblock \showarticletitle{The complexity of counting cuts and of computing
  the probability that a graph is connected}.
\newblock \bibinfo{journal}{\emph{SIAM J. Comput.}} \bibinfo{volume}{12},
  \bibinfo{number}{4} (\bibinfo{year}{1983}), \bibinfo{pages}{777--788}.
\newblock
\urldef\tempurl%
\url{https://epubs.siam.org/doi/abs/10.1137/0212053}
\showURL{%
\tempurl}


\bibitem[\protect\citeauthoryear{Reiter}{Reiter}{1978}]%
        {reiter1978closed}
\bibfield{author}{\bibinfo{person}{Raymond Reiter}.}
  \bibinfo{year}{1978}\natexlab{}.
\newblock \bibinfo{booktitle}{\emph{On closed world data bases}}.
\newblock \bibinfo{publisher}{Springer US}, \bibinfo{pages}{55--76}.
\newblock
\urldef\tempurl%
\url{https://pdfs.semanticscholar.org/d82a/786f460a5d5c2c6d97aa60f0ead0e70dc67e.pdf}
\showURL{%
\tempurl}


\bibitem[\protect\citeauthoryear{Sarma, Benjelloun, Halevy, Nabar, and
  Widom}{Sarma et~al\mbox{.}}{2009}]%
        {SBHNW09}
\bibfield{author}{\bibinfo{person}{Anish~Das Sarma}, \bibinfo{person}{Omar
  Benjelloun}, \bibinfo{person}{Alon Halevy}, \bibinfo{person}{Shubha Nabar},
  {and} \bibinfo{person}{Jennifer Widom}.} \bibinfo{year}{2009}\natexlab{}.
\newblock \showarticletitle{Representing uncertain data: models, properties,
  and algorithms}.
\newblock \bibinfo{journal}{\emph{The VLDB Journal}} \bibinfo{volume}{18},
  \bibinfo{number}{5} (\bibinfo{year}{2009}), \bibinfo{pages}{989--1019}.
\newblock
\urldef\tempurl%
\url{http://ilpubs.stanford.edu:8090/924/1/uncertainData.pdf}
\showURL{%
\tempurl}


\bibitem[\protect\citeauthoryear{Suciu, Olteanu, R{\'e}, and Koch}{Suciu
  et~al\mbox{.}}{2011}]%
        {suciu2011probabilistic}
\bibfield{author}{\bibinfo{person}{Dan Suciu}, \bibinfo{person}{Dan Olteanu},
  \bibinfo{person}{Christopher R{\'e}}, {and} \bibinfo{person}{Christoph
  Koch}.} \bibinfo{year}{2011}\natexlab{}.
\newblock \bibinfo{booktitle}{\emph{Probabilistic databases}}.
\newblock \bibinfo{publisher}{Morgan \& Claypool}.
\newblock
\urldef\tempurl%
\url{https://www.morganclaypool.com/doi/abs/10.2200/S00362ED1V01Y201105DTM016}
\showURL{%
\tempurl}


\bibitem[\protect\citeauthoryear{Toda and Watanabe}{Toda and Watanabe}{1992}]%
        {TW92}
\bibfield{author}{\bibinfo{person}{Seinosuke Toda} {and} \bibinfo{person}{Osamu
  Watanabe}.} \bibinfo{year}{1992}\natexlab{}.
\newblock \showarticletitle{Polynomial-time 1-Turing reductions from\# PH to\#
  P}.
\newblock \bibinfo{journal}{\emph{Theoretical Computer Science}}
  \bibinfo{volume}{100}, \bibinfo{number}{1} (\bibinfo{year}{1992}),
  \bibinfo{pages}{205--221}.
\newblock
\urldef\tempurl%
\url{https://www.sciencedirect.com/science/article/pii/030439759290369Q}
\showURL{%
\tempurl}


\bibitem[\protect\citeauthoryear{Valiant}{Valiant}{1976}]%
        {V76}
\bibfield{author}{\bibinfo{person}{Leslie~G. Valiant}.}
  \bibinfo{year}{1976}\natexlab{}.
\newblock \showarticletitle{Relative complexity of checking and evaluating}.
\newblock \bibinfo{journal}{\emph{Inf. Process. Lett.}} \bibinfo{volume}{5},
  \bibinfo{number}{1} (\bibinfo{year}{1976}), \bibinfo{pages}{20--23}.
\newblock
\urldef\tempurl%
\url{https://www.sciencedirect.com/science/article/pii/0020019076900971}
\showURL{%
\tempurl}


\bibitem[\protect\citeauthoryear{Valiant}{Valiant}{1979a}]%
        {valiant1979complexity}
\bibfield{author}{\bibinfo{person}{Leslie~G Valiant}.}
  \bibinfo{year}{1979}\natexlab{a}.
\newblock \showarticletitle{The complexity of computing the permanent}.
\newblock \bibinfo{journal}{\emph{Theoretical computer science}}
  \bibinfo{volume}{8}, \bibinfo{number}{2} (\bibinfo{year}{1979}),
  \bibinfo{pages}{189--201}.
\newblock
\urldef\tempurl%
\url{https://core.ac.uk/download/pdf/82500417.pdf}
\showURL{%
\tempurl}


\bibitem[\protect\citeauthoryear{Valiant}{Valiant}{1979b}]%
        {V79}
\bibfield{author}{\bibinfo{person}{Leslie~G. Valiant}.}
  \bibinfo{year}{1979}\natexlab{b}.
\newblock \showarticletitle{The complexity of enumeration and reliability
  problems}.
\newblock \bibinfo{journal}{\emph{{SIAM} J. Comput.}} \bibinfo{volume}{8},
  \bibinfo{number}{3} (\bibinfo{year}{1979}), \bibinfo{pages}{410--421}.
\newblock
\urldef\tempurl%
\url{https://www.math.cmu.edu/~af1p/Teaching/MCC17/Papers/enumerate.pdf}
\showURL{%
\tempurl}


\bibitem[\protect\citeauthoryear{van~der Meyden}{van~der Meyden}{1998}]%
        {van1998logical}
\bibfield{author}{\bibinfo{person}{Ron van~der Meyden}.}
  \bibinfo{year}{1998}\natexlab{}.
\newblock \showarticletitle{Logical approaches to incomplete information: A
  survey}.
\newblock In \bibinfo{booktitle}{\emph{Logics for databases and information
  systems}}. \bibinfo{publisher}{Springer}, \bibinfo{pages}{307--356}.
\newblock
\urldef\tempurl%
\url{https://link.springer.com/chapter/10.1007/978-1-4615-5643-5_10}
\showURL{%
\tempurl}


\bibitem[\protect\citeauthoryear{Vardi}{Vardi}{1982}]%
        {vardi1982complexity}
\bibfield{author}{\bibinfo{person}{Moshe~Y Vardi}.}
  \bibinfo{year}{1982}\natexlab{}.
\newblock \showarticletitle{The complexity of relational query languages}. In
  \bibinfo{booktitle}{\emph{Proceedings of the fourteenth annual ACM symposium
  on Theory of computing}}. \bibinfo{pages}{137--146}.
\newblock
\urldef\tempurl%
\url{http://www.dis.uniroma1.it/~degiacom/didattica/semingsoft/SIS05-06/materiale/1-query-congiuntive/riferimenti/vardi-1982.pdf}
\showURL{%
\tempurl}


\bibitem[\protect\citeauthoryear{Welsh}{Welsh}{1999}]%
        {welsh1999tutte}
\bibfield{author}{\bibinfo{person}{Dominic Welsh}.}
  \bibinfo{year}{1999}\natexlab{}.
\newblock \showarticletitle{The Tutte polynomial}.
\newblock \bibinfo{journal}{\emph{Random Structures \& Algorithms}}
  \bibinfo{volume}{15}, \bibinfo{number}{3-4} (\bibinfo{year}{1999}),
  \bibinfo{pages}{210--228}.
\newblock
\urldef\tempurl%
\url{https://onlinelibrary.wiley.com/doi/pdf/10.1002/(SICI)1098-2418(199910/12)15:3/4\%3C210::AID-RSA2\%3E3.0.CO;2-R}
\showURL{%
\tempurl}


\bibitem[\protect\citeauthoryear{Xia, Zhang, and Zhao}{Xia
  et~al\mbox{.}}{2007}]%
        {xia2007computational}
\bibfield{author}{\bibinfo{person}{Mingji Xia}, \bibinfo{person}{Peng Zhang},
  {and} \bibinfo{person}{Wenbo Zhao}.} \bibinfo{year}{2007}\natexlab{}.
\newblock \showarticletitle{Computational complexity of counting problems on
  3-regular planar graphs}.
\newblock \bibinfo{journal}{\emph{Theoretical Computer Science}}
  \bibinfo{volume}{384}, \bibinfo{number}{1} (\bibinfo{year}{2007}),
  \bibinfo{pages}{111--125}.
\newblock
\urldef\tempurl%
\url{https://core.ac.uk/download/pdf/82063901.pdf}
\showURL{%
\tempurl}


\bibitem[\protect\citeauthoryear{Zaslavsky}{Zaslavsky}{1982}]%
        {zaslavsky1982bicircular}
\bibfield{author}{\bibinfo{person}{Thomas Zaslavsky}.}
  \bibinfo{year}{1982}\natexlab{}.
\newblock \showarticletitle{Bicircular geometry and the lattice of forests of a
  graph}.
\newblock \bibinfo{journal}{\emph{The Quarterly Journal of Mathematics}}
  \bibinfo{volume}{33}, \bibinfo{number}{4} (\bibinfo{year}{1982}),
  \bibinfo{pages}{493--511}.
\newblock
\urldef\tempurl%
\url{https://academic.oup.com/qjmath/article-abstract/33/4/493/1498307?redirectedFrom=PDF}
\showURL{%
\tempurl}


\end{thebibliography}

\appendix

\end{document}